\documentclass[11pt]{article}
\usepackage{fullpage}
\usepackage{graphicx}
\usepackage{subfigure}
\usepackage{amsmath,amssymb,amsthm,amsfonts,float}
\usepackage{frame,color}
\usepackage{hyperref}
\usepackage{framed}
\usepackage{comment}
\usepackage{indentfirst}
\usepackage{enumerate}
\usepackage{tikz}
\tikzset{elegant/.style={smooth,thick,samples=50,cyan}}
\tikzset{eaxis/.style={->,>=stealth}}

\theoremstyle{plain}
\newtheorem{theorem}{Theorem}[section]
\newtheorem{lemma}[theorem]{Lemma}

\newtheorem{corollary}[theorem]{Corollary}

\newtheorem{assumption}{Assumption}
\newtheorem{conjecture}{Hypothesis}

\theoremstyle{definition}
\newtheorem{definition}{Definition}[section]

\newcommand{\ignore}[1]{}
\newcommand{\E}{\operatorname{E}}
\newcommand{\nnz}{\operatorname{nnz}}
\newcommand{\poly}{\operatorname{poly}}
\newcommand{\Var}{\operatorname{Var}}

\newcommand{\floor}[1]{\lfloor #1 \rfloor}
\newcommand{\polylog}{\operatorname{polylog}}

\newcommand{\norm}[1]{{\| #1 \|}}
\newcommand{\cE}{\mathcal{E}}
\DeclareMathOperator{\argmin}{argmin}

\newcommand\qm{q_{\max}}
\newcommand\hm{h_{\max}}
\newcommand\hW{{\hat W}}
\newcommand\bW{{\bar W}}
\newcommand\LL{{\Lambda}}

\newcommand\cT{{\cal T}}
\newcommand{\CP}{{\mathrm{CP}}}

\DeclareMathOperator{\colspan}{\mathbf{im}}
\newcommand\cS{{\cal S}}
\newcommand\xupper{U}
\newcommand\hM{{\hat M}}
\newcommand{\cN}{\mathcal{N}}
\newcommand\alphap{2}
\newcommand\vssum[1]{\;\sum_{\mathclap{\substack{#1}}}\;}
\DeclareMathOperator{\Vol}{Vol}
\newcommand{\eps}{\varepsilon}
\newcommand{\R}{{\mathbb R}}

\newcommand{\deltastruct}{\delta_{\mathsf{struct}}}
\newcommand{\deltalewis}{\delta_{\mathsf{lewis}}}
\newcommand{\deltase}{\delta_{\mathsf{subspace}}}
\newcommand{\deltaos}{\delta_{\mathsf{o}}}

\newcommand{\niceremark}[3]{}
\newcommand{\badremark}[3]{}

\newif\ifarxiv
\arxivtrue

\bibliographystyle{plain}

\title{Dimensionality Reduction for Tukey Regression\thanks{Ruosong Wang and David P. Woodruff were supported in part by Office of Naval Research (ONR) grant N00014-18-1-2562. Part of this work was done while the authors were visiting the Simons Institute for the Theory of Computing.}}
\date{}
\author{Kenneth L. Clarkson \\ IBM Research - Almaden \\ \tt{klclarks@us.ibm.com} \\
\and
Ruosong Wang \\Carnegie Mellon University \\ \tt{ruosongw@andrew.cmu.edu}
\and 
David P. Woodruff \\Carnegie Mellon University \\ \tt{dwoodruf@cs.cmu.edu}}
\date{}
\begin{document}

\begin{titlepage}
\maketitle
\thispagestyle{empty}
\begin{abstract}
We give the first dimensionality reduction methods
for the overconstrained Tukey regression problem. 
The Tukey loss function $\norm{y}_M = \sum_i M(y_i)$
has $M(y_i)\approx |y_i|^p$
for residual errors $y_i$ smaller than a prescribed threshold $\tau$, but $M(y_i)$
becomes constant for errors $|y_i| > \tau$.

Our results depend on a new structural result, proven constructively, showing
that for any $d$-dimensional subspace  $L\subset \R^n$, there is a fixed bounded-size subset
of coordinates containing, for every $y\in L$, all the large coordinates, {\it with respect to the Tukey loss function}, of~$y$. We think of these as ``residual leverage scores'', since the coordinates in $y$ itself may have very different magnitude even though they both contribute the same value $\tau$ to the $M$-function. 

Our methods reduce a given Tukey regression problem
to a smaller weighted version, whose solution is a provably
good approximate solution to the original problem.
Our reductions are simple and easy to implement,
and we give empirical results demonstrating their practicality, using existing heuristic solvers for the small versions.

One of our reductions uses row sampling, for
an instance $\min_{x \in \mathbb{R}^d} \|Ax-b\|_M$,
where $A\in\R^{n\times d}$ and $b\in\R^n$, with $n \gg d$.
The algorithm takes $\widetilde{O}(\nnz(A) + \poly(d \log n /\varepsilon))$ time
to return a weight vector with $\poly(d \log n /\varepsilon)$ non-zero entries,
such that the solution of the resulting weighted Tukey regression problem
is a $(1 + \varepsilon)$-approximate solution.
Here  $\nnz(A)$ is the number of non-zero entries of~$A$.
Another reduction
uses a sketching matrix $S$,
chosen independently of $A$ and $b$,
such that $SA$ and $Sb$ yield an $O(\log n)$ approximation.
so that the solution for a weighted version with inputs $SA,Sb$
is an $O(\log n)$-approximate solution.
Here $S$ has $\poly(d \log n)$ rows and $SA$ and $Sb$
are computable in $O(\nnz(A))$ time.

We also give exponential-time algorithms giving provably good solutions,
and hardness results suggesting that a significant speedup in the worst case is unlikely. 
\end{abstract}

\end{titlepage}

\section{Introduction}

A number of problems in numerical linear algebra have witnessed remarkable
speedups via the technique of linear sketching. Such speedups are made possible
typically by reductions in the dimension of the input
(here the number of rows of the input matrix),
whereby a large scale optimization problem
is replaced by a much smaller optimization problem, and then a slower algorithm
is run on the small problem. It is then argued that the solution to the smaller
problem provides an approximate solution to the original problem. We refer the
reader to several recent surveys on this topic \cite{kv09,m11,w14}.

This approach has led to optimal algorithms for approximate overconstrained
least squares regression: given an $n \times d$ matrix $A$, with $n \gg d$, and
an $n \times 1$ vector $b$, output a vector $x' \in \mathbb{R}^d$ for which
$\|Ax'-b\|_2 \leq (1+\eps) \min_x \|Ax-b\|_2$. For this problem, one first
samples a random matrix $S \in \mathbb{R}^{k \times n}$ with a small number $k$
of rows, and replaces $A$ with $S \cdot A$ and $b$ with $S \cdot b$. Then one
solves (or approximately solves) the small problem $\min_x \|SAx-Sb\|_2$. The
goal of this \emph{sketch and solve} approach
is to choose a distribution $S$ so that if $x'$ is the minimizer to this
latter problem, then one has that $\|Ax'-b\|_2 \leq (1+\eps)\min_x \|Ax-b\|_2$
with high probability. Note that $x' = (SA)^+Sb$, where $(SA)^+$ denotes the
pseudoinverse of $SA$
and can be computed in $kd^2$ time, see, e.g., \cite{w14} for a survey and
further background. Consequently, the overall time to solve least squares
regression is $T + kd^2$, where $T$ is the time to compute $S \cdot A$ and $S
\cdot b$. Thus, the goal is to minimize both the time $T$ and the sketching
dimension $k$. Using this approach, S\'arlos showed \cite{s06} how to achieve
$O(nd \log n) + \poly(d/\eps)$ overall time, which was subsequently improved to
the optimal $\nnz(A) + \poly(d/\eps)$ time in \cite{CW13,MengMahoney,NN13}.

Recently, a number of works have looked at more robust versions of regression.
Sohler and Woodruff \cite{sw11}, building off of earlier work of Clarkson
\cite{c05} (see also \cite{ddhkm09}), showed how to use the sketch and solve
paradigm to obtain a $(1+\eps)$-approximation to $\ell_1$ regression, namely, to
output a vector $x' \in \mathbb{R}^d$ for which
$\|Ax'-b\|_1 \leq (1+\eps)\min_x \|Ax-b\|_1$.
This version of regression, also known as least absolute deviations,
is known to be less sensitive to outliers than least squares
regression, since one takes the absolute values of the errors in the residuals
rather than their squares, and so does not try to fit outliers as much. By now,
we also have optimal $\nnz(A) + \poly(d/\eps)$ time algorithms for $\ell_1$
regression \cite{li2013iterative,cfast,wz13,CW13,MengMahoney,wang2019tight}, for
the related quantile regression problem \cite{ymm13}, and for $\ell_p$
regression for every $p \geq 1$ \cite{ddhkm09,wz13,cp15}.

In this paper we consider more general
overconstrained regression problems: given an $n \times d$ matrix $A$ with
$n \gg d$, and an $n \times 1$ vector $b$, output a vector $x' \in \mathbb{R}^d$
for which $\|Ax'-b\|_M \leq (1+\eps)\min_x \|Ax-b\|_M$, where for an
$n$-dimensional vector $y$ and a function $M:\mathbb{R} \rightarrow
\mathbb{R}^{+}$, the notation $\|y\|_M$ denotes $\sum_{i=1}^n M(y_i)$. If
$M(x) = x^2$ then we have the least squares regression problem, while if
$M(x) = |x|$ we have the least absolute deviations problem.

Clarkson and Woodruff showed
\cite{cw15} that for any function~$M(\cdot)$ which has at least linear (with positive
slope) and at most quadratic growth, as well as some natural other properties,
there is a distribution on sketching matrices $S$ with $k = \poly(d \log n)$
rows and corresponding vector $w \in \mathbb{R}^k$, with the following
properties. The product $S \cdot A$ can be computed in $\nnz(A)$ time, and if
one solves a weighted version of the sketched problem,
$\min_x \sum_{i=1}^k w_i M((SAx-b)_i)$, then the minimizer is a constant-factor approximation to the
original problem. 
This gives an algorithm with overall $\nnz(A) + \poly(d)$ running time for the important
Huber loss function: given a parameter $\tau$, we have 
$$
M(x) = \begin{cases}
x^2/(2\tau) & |x| \leq \tau\\
|x|-\tau/2 & \text{otherwise}
\end{cases}.
$$
Unlike least absolute
deviations, the Huber loss function is differentiable at the origin, which is
important for optimization. However, like least absolute deviations, for large
values of $x$ the function is linear, and thus pays less attention to outliers.
Other loss functions similar to Huber, such as the $\ell_1-\ell_2$ and Fair
estimators, were also shown to have $\nnz(A) + \poly(d)$ time algorithms. These
results were extended to $(1+\eps)$-approximations via sampling-based techniques
in \cite{cw15b}. 

Despite the large body of $M$-functions handled by previous work, a notable
well-studied exception is the Tukey loss function \cite{fox2002robust}, with 
\begin{equation}\label{eq:tukey_loss}
M(x) = \begin{cases}
	\frac{\tau^2}{6} (1-[1-\left(\frac{x}{\tau}\right)^2]^3) & |x| \leq \tau \\
	\frac{\tau^2}{6} & \text{otherwise}
	\end{cases}.
\end{equation}
\begin{figure}
\centering
\ifarxiv
\begin{tikzpicture}
    \draw[eaxis] (-2,0) -- (2,0) node[below] {$x$};
    \draw[eaxis] (0,-0.5) -- (0,2) node[above] {$M(x)$};
    \draw[elegant,domain=-2:-1] plot(\x,1);
    \draw[elegant,domain=1:2] plot(\x,1);
   \draw[elegant,domain=-1:1] plot(\x,{1-(1-(\x)^2)^3});
\end{tikzpicture}
\else
\includegraphics[scale=0.7]{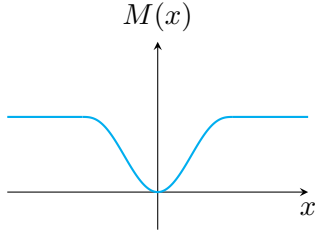}
\fi
\caption{The Tukey loss function.}
\label{fig:tukey}
\end{figure}
See Figure~\ref{fig:tukey} for a plot of the Tukey loss function.
By a simple Taylor expansion, it is easy to see that $M(x) = \Theta(x^2)$ for
$|x| \leq \tau$ and $M(x) = \Theta(\tau^2)$ otherwise. While similar to Huber in
the sense that it is quadratic near the origin, it is even less sensitive to
outliers than Huber since it is constant when one is far enough away from the
origin. Thus, it does not satisfy the linear growth (with positive slope)
requirement of \cite{cw15,cw15b}. An important consequence of this distinction
is that, while for $M$-functions with linear growth, a single outlier ``at
infinity'' can radically affect the regression output, this is not the case for
the Tukey loss function, due to the bound on its value.

Although the Tukey loss function is not convex, a local minimum of it can be
found via iteratively reweighted least squares \cite{fox2002robust}. Also, the
dimensionality reduction approach still makes sense for Tukey: here a large non-convex
problem is reduced to a smaller non-convex one. Our reduction of a non-convex
problem to a smaller one is arguably even more interesting than reducing the
size of a convex problem, since inefficient algorithms may now be efficient on the much smaller problem.                                                                    

\paragraph{Notation.}
For a matrix $A \in \mathbb{R}^{n \times d}$, we use $A_{i, *}$ to denote its
$i$-th row, $A_{*, j}$ to denote its $j$-th column, and $A_{i, j}$ to denote a
specific entry. For a set of indices $\Gamma \subseteq [n]$, we use
$A_{\Gamma, *}$ to denote the submatrix of $A$ formed by all rows in $\Gamma$. Similarly, we
use $A_{*, \Gamma}$ to denote the submatrix formed by all columns in $\Gamma$.
For a matrix $A \in \mathbb{R}^{n \times d}$ and a vector $b \in \mathbb{R}^n$, we use $[A~b] \in \mathbb{R}^{n \times (d + 1)}$ to denote the matrix whose first $d$ columns are $A$ and the last column is $b$. 
For a matrix $A \in \mathbb{R}^{n \times d}$, we use $\colspan(A) = \{Ax \mid x\in\R^d\}$ to denote the column span of $A$.

\subsection{Our Assumptions}

Before stating our results, we give the general assumptions our algorithms and analyses
need.
We need the following assumptions on the loss function.
\begin{assumption}\label{as M}
There exist real numbers $\tau \ge 0$, constants $p \ge 1$ and $0 < L_M \le 1 \le U_M$
such that the function $M: \R \rightarrow \R^+$ satisfies:
\begin{enumerate}
\item \label{as M sym} Symmetric: $M(a) = M(-a)$ for all $a$.
\item \label{as M inc} Nondecreasing: $M(a) \ge M(a')$ for $|a| \ge |a'|$.
\item \label{as M growth} Growth condition: for $|a| \ge |a'|$, 
$$
\left|\frac{a}{a'}\right|^p \ge \frac{M(a)}{M(a')}.
$$
\item \label{as M near quad} Nearly $p$ th power: for all $|a| \le \tau$, 
$$
	L_M |a|^p \le M(a) \le U_M |a|^p.
$$
\item \label{as M flat} Mostly flat: $M(a) = \tau^p$ for $|a|\ge \tau$.
\end{enumerate}
\end{assumption}
The conditions in Assumption \ref{as M} state that our loss function essentially behaves as an $\ell_p$ loss function $M(a) = |a|^p$ for $a \leq \tau$, at which point $M(a) = \tau^p$. However, the conditions are more robust in that $M(a)$ just needs to agree with $|a|^p$ up to a fixed constant factor. This is a non-trivial extension of the $\ell_p$ loss function since we will obtain $(1+\varepsilon)$-approximations in our algorithms, and consequently cannot simply replace the $M$ loss function in our problem with an $\ell_p$ loss function, as this would result in a constant factor loss. Moreover, such an extension is essential for capturing common robust loss functions, such as the Tukey loss function (\ref{eq:tukey_loss}) above: $M(a) = a^2 (1-(1-(a/\tau)^2)^3)$ if $|a| < \tau$, and $M(a) = \tau^2$ for $|a| \geq \tau$ (we  note that sometimes this loss function is divided by the number $6$, but this plays no role from a minimization perspective). Note that the Tukey loss function $M(a)$ does not coincide with the $\ell_2$ loss function for $|a| \le \tau$, though it is within a constant factor of it so our conditions can handle it. For a recent use of this loss function for regression in applications to deep learning, see, e.g., \cite{b15}.

For $\tau = \infty$, we indeed have $M(a) = |a|^p$ and $\ell_p$ loss functions are widely studied for regression. The case $p = 2$ is just ordinary least squares regression.  For $1 \leq p < 2$, the $\ell_p$ loss function is less sensitive than least squares since one is not squaring the errors, and therefore for such $p$ the loss function is considered to be more robust than ordinary least squares. For $p > 2$, and in particular large values of $p$, the $\ell_p$ regression solution approaches the $\ell_{\infty}$ regressioin solution, which minimizes the maximum error. The $\ell_p$ loss functions for every $p \geq 1$ are well-studied in the context of regression, see, e.g., \cite{c05,ddhkm09,cfast,wz13,cp15}. Loss functions with the polynomial growth condition (Assumption \ref{as M}.\ref{as M growth}) are also well-studied \cite{cw15,cw15b}. We also note that already for $\ell_p$ loss functions, all known dimensionality reduction techniques reduce the original regression problem to a problem of size at least $d^{\Omega(p)}$, which is recently shown to be necessary~\cite{li2019tight}.
Consequently, it is impossible to obtain a considerable dimensionality reduction when $p$ is allowed to grow with $n$. Since this is a special case of our more general setting of arbitrary $\tau$, we restrict our attention to $p$ that does not grow with $n$.  

For general $\tau$ and $p$, we obtain the Tukey $\ell_p$ loss function $M(a) = |a|^p$ for $a \leq \tau$, and $M(a) = \tau^p$ for $a \geq \tau$. Note that for $p = 1$, the loss function is the maximum likelihood estimator in the presence of i.i.d. Laplacian noise, while for $p = 2$, the loss function is the maximum likelihood estimator in the presence of i.i.d. Gaussian noise. Thus we obtain the first dimensionality reduction for loss functions that correspond to maximum likelihood estimators of classical noise models where the loss function completely saturates beyond some threshold $\tau$. This is particularly useful for handling large outliers. 

We will also need the following assumptions on the input, which we justify below.
\begin{assumption}\label{as main}
We assume:
\begin{enumerate}
\item \label{as main A int} For given $C_1 \le \poly(n)$, 
there is $\xupper = n^{O(d^2)}$ such that
$\norm{\hat x }_2\le \xupper$ for any $C_1$-approximate solution $\hat x$
of $\min_x \norm{Ax-b}_M$.
\item \label{as main size} The columns of $A$ and $b$ have $\ell_2$ norms in $n^{O(d)}$.
\item \label{as tau size} The threshold $\tau = \Omega(1/n^{O(d)})$.
\end{enumerate}
\end{assumption}

As we will show, Assumption~\ref{as main}.\ref{as main A int}  holds
when \ref{as main}.\ref{as main size} and \ref{as main}.\ref{as tau size} hold,
and the entries of $A$ are integers.
Alternatively, Assumption~\ref{as main}.\ref{as main A int}  might hold due
to the particular input given, or to an
additional explicit problem constraint,
or as a consequence of regularization.



We need such a bound on the magnitude of the entries of the Tukey regression optimum,
since as the following example shows, they can grow quite large, and behave quite differently from
$\ell_p$ regression solutions.
Here we use the case $p = 2$ as an example. 

Suppose $A$ is the $n\times 2$ matrix
\[
A = \left[\begin{matrix}
1 & 0\\
1 & 1\\
0 & \eps \\
0 & \eps \\
\vdots\\
0 & \eps
\end{matrix}
\right]
\]
with $n-2$ rows of $[0\,\, \eps]$ for $\eps >0$. Suppose $b\in\R^n$ is the vector of all ones.
Then the $\ell_2$ regression optimum
$x^* = \argmin_x \norm{Ax-b}_2^2$ has $\norm{Ax^*-b}_2^2$ at most $n$, since $x=0$ has that cost. So
$n \geq \norm{Ax^*-b}_2^2 >= (x^*_1 - 1)^2 + (x^*_1 + x^*_2 - 1)^2$,
from the first two rows, so $(x^*_1)^2 = O(n)$, implying that also $(x^*_2)^2 = O(n)$.

It can also be shown that when $\eps<1/n$, we have the entries of the $\ell_2$
regression optimum $x^*$ in $O(1)$: as $x_1$ and $x_2$ get larger, they
contribute to the cost, due to the first two rows, more rapidly than the
decrease in the $(n-2)(1-x_2/n)^2$ cost due to the $n-2$ last rows.

However, for the Tukey loss function $\norm{Ax-b}_M$ with parameter $\tau = 1/2$
and $p = 2$, the cost for $x=[1\,\, 1/\eps]$ is at most a constant, since the
contributions for all rows of $A$ but the second is zero, and contribution made
by the second row is at most a constant. However if $x_2 < 1/(2\eps)$, the cost
is $\Omega(n)$, since all but the first two rows contribute $\Omega(1)$ to the
cost. Thus the optimal $x$ for the Tukey regression has entries
$\Omega(1/\eps)$, and no $x$ with entries $o(1/\eps)$ is a constant-factor
approximation.

Indeed, given an upper bound on the entries of $x$, for any $n' < n-2$ there is
a large enough version of our example such that no $x$ satisfying that bound can
be within an $n'$ factor of optimal for the Tukey regression problem. 
This example is in
fact a near-optimal separation, as one can show that the $\ell_2$ regression
solution always provides an $O(n)$-approximate solution when $p = 2$.

\subsection{Our Contributions}
We show that under the assumptions mentioned above, it is possible 
to obtain dimensionality
reductions for Tukey regression. All of our results hold with
arbitrarily large constant probability, which can be amplified by independent
repetitions and taking the best solution found.

\paragraph{Row Sampling Algorithm.}
We first give a row sampling algorithm for Tukey
regression. 
\begin{theorem}
Given matrix $A \in \mathbb{R}^{n \times d}$ and
vector $b \in \mathbb{R}^n$, there is an algorithm that constructs a weight vector $w$ in
$\widetilde{O}(\nnz(A) + d^{p / 2} \cdot \poly(d \log n) / \varepsilon^2)$\footnote{Throughout the paper we use $\widetilde{O}(f)$ to denote $f \polylog f$, and $\widetilde{\Omega}(f)$ to denote $f / \polylog f$.} time
with $\|w\|_0 \le \widetilde{O}(d^{p / 2} \cdot \poly(d \log n) /
\varepsilon^2)$, for which if $x_{M,w}^*$ is the minimizer to $\min\sum_{i =1}^n w_i
M((Ax-b)_i)$, then 
\[
\|Ax_{M,w}^* -b\|_M \leq (1 + \varepsilon) \min_x \|Ax-b\|_M,
\]
where $M$ is the Tukey loss function. 
\end{theorem}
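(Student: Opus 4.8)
The plan is to combine the structural ``residual leverage score'' result with $\ell_p$ Lewis-weight row sampling. First I would set $L := \colspan([A~b])$, which has dimension at most $d+1$, and note that every residual $Ax-b$ lies in $L$. Applying the structural result to $L$ produces a fixed set $\Gamma\subseteq[n]$ with $|\Gamma|=\poly(d\log n)$, computable constructively in $\widetilde O(\nnz(A)+\poly(d))$ time, that contains, for every $y\in L$, all coordinates of $y$ that are ``large'' with respect to the Tukey loss; the consequence I will use is that for every $i\notin\Gamma$ and every $y\in L$ we have $M(y_i)=\Theta(|y_i|^p)$, i.e.\ the flat part of $M$ is never triggered on $\bar\Gamma$. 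The algorithm keeps every row in $\Gamma$ verbatim with weight $1$; these rows contribute to the weighted objective exactly as to $\|Ax-b\|_M$, with zero error.

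For the rows outside $\Gamma$ I would compute constant-factor-approximate $\ell_p$ Lewis weights $\{w_i\}_{i\notin\Gamma}$ of $[A~b]_{\bar\Gamma,*}$ in $\widetilde O(\nnz(A)+\poly(d))$ time by known algorithms, then sample $k=\widetilde O(d^{p/2}\poly(d\log n)/\varepsilon^2)$ rows i.i.d.\ from $q_i\propto\max\{w_i,\,\text{uniform}\}$ and assign each sampled row $i$ the weight $1/(k q_i)$. This gives $\|w\|_0\le|\Gamma|+k=\widetilde O(d^{p/2}\poly(d\log n)/\varepsilon^2)$ and the claimed running time. The reason sampling by $\ell_p$ Lewis weights works even though the loss is Tukey is precisely the structural result: on $\bar\Gamma$, $M$ agrees with $|\cdot|^p$ up to the factor $U_M/L_M$, so the standard $\ell_p$ Lewis-weight importance-sampling guarantee (whose sample size is $d^{\max(1,p/2)}\polylog$) transfers to the Tukey loss at the cost of only a constant factor in $k$; crucially, we never replace $M$ by $|\cdot|^p$ in the objective itself, which is what keeps a $(1+\varepsilon)$ rather than a constant-factor approximation.

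Next I would fix $x$ with $\|x\|_2\le\xupper$ and show $\sum_i w_i M((Ax-b)_i)\in(1\pm\varepsilon)\|Ax-b\|_M$. The $\Gamma$-part is exact; the sampled estimator of $\sum_{i\notin\Gamma}M((Ax-b)_i)$ is unbiased, and a Bernstein bound using the Lewis-weight property together with $M\asymp|\cdot|^p$ on $\bar\Gamma$ gives concentration to within $(1\pm\varepsilon)$ with failure probability $\exp(-\poly(d\log n))$. I then union-bound over a $\gamma$-net of $\{x:\|x\|_2\le\xupper\}$ with $\gamma=1/n^{\Theta(d^2)}$; since $\xupper=n^{O(d^2)}$ and the columns of $A$ and $b$ have norm $n^{O(d)}$ (Assumptions~\ref{as main}.\ref{as main size} and \ref{as main}.\ref{as tau size}), the net has log-size $\poly(d\log n)$, absorbed into $k$. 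Extension from the net to every $x$ in the ball is a Lipschitz argument: by Assumption~\ref{as M}, $M$ is bounded by $\tau^p$ and $O(\tau^{p-1})$-Lipschitz, so moving within $\gamma$ perturbs every term, and hence both $\|Ax-b\|_M$ and the weighted objective, by at most $n^{-\Omega(1)}\cdot\OPT$ once $\gamma$ is small enough — modulo a lower bound on $\OPT$, where the degenerate case $\OPT$ sub-inverse-polynomially small (in particular $\OPT=0$) is handled separately using that Lewis-weight sampling is also an $\ell_p$ subspace embedding, so the sampled rows together with $\Gamma$ span the column space of $[A~b]$.

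Assembling these pieces: with probability $1-o(1)$, $(1-\varepsilon)\|Ax-b\|_M\le\sum_i w_i M((Ax-b)_i)\le(1+\varepsilon)\|Ax-b\|_M$ for all $x$ in the ball. The optimum $x^*$ lies in the ball by Assumption~\ref{as main}.\ref{as main A int}, and so does $x_{M,w}^*$ (otherwise the deterministically kept $\Gamma$-rows already push its weighted cost above that of $x^*$, since the weighted instance still satisfies an approximate-minimizer norm bound); chaining the two-sided inequality at $x^*$ and $x_{M,w}^*$ yields $\|Ax_{M,w}^*-b\|_M\le\frac{1+\varepsilon}{1-\varepsilon}\OPT$, and rescaling $\varepsilon$ completes the proof. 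I expect the main obstacle to be the net-plus-Lipschitz extension over the enormous ball $\|x\|_2\le n^{O(d^2)}$ in the presence of the non-convexity and the flat regions of $M$ — in particular the $\OPT$ lower bound and the near-zero-$\OPT$ case — together with checking that ``not Tukey-large'' is exactly the property needed to make the $\ell_p$ Lewis-weight variance bound go through with only a constant-factor loss.
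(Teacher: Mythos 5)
Your high-level ingredients overlap with the paper's — a fixed set of coordinates from the structural theorem, Lewis-weight sampling outside it, and a net argument — but there is a genuine gap in the central step of your analysis. You claim the consequence of the structural result is that "for every $i\notin\Gamma$ and every $y\in L$ we have $M(y_i)=\Theta(|y_i|^p)$, i.e.\ the flat part of $M$ is never triggered on $\bar\Gamma$." This is false, and the structural theorem (Theorem~\ref{thm:struct2}/Corollary~\ref{cor:existence}) cannot give it. That theorem only guarantees $H_y\subseteq\Gamma$ for vectors $y\in L$ satisfying the \emph{bounded-cost} conditions $\|y_{L_y}\|_p^p\le\alpha\tau^p$ and $|H_y|\le\alpha$. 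For $y\in L$ with more than $\alpha$ coordinates exceeding $\tau$ (i.e.\ $\|y\|_M\gtrsim\alpha\tau^p$), the set $\Gamma$ — fixed, of size $\widetilde O(d\alpha)$ — necessarily fails to cover $H_y$, and on those uncovered heavy coordinates $M(y_i)=\tau^p$ while $|y_i|^p$ can be arbitrarily larger. There is no fixed $\poly(d\log n)$-size set that contains the heavy coordinates of \emph{every} $y$ in a $d$-dimensional subspace. Consequently the Lewis-weight variance bound does not transfer to $M$ outside $\Gamma$ the way you assert, and the Bernstein step for the sampled estimator collapses precisely on those $y$ for which the net argument still needs two-sided control (since the sampled objective must not \emph{under}-report the cost of a bad $x$).

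The paper closes exactly this gap by making the uniform component do real work rather than being a safety net: every row outside $I_j$ is kept with probability $\ge 1/2$, and Lemmas~\ref{lem:sample_heavy} and~\ref{lem:sample_light} show by Bernstein that whenever the heavy-coordinate (or light-coordinate) contribution to $\|y\|_{M,w}$ exceeds the threshold $T\approx\alpha\tau^p$, this $\ge 1/2$ uniform sampling alone preserves that contribution to within $(1\pm\varepsilon/2)$, regardless of whether the heavy coordinates lie in $I_j$. Only in the complementary regime — both contributions below $T$, hence at most $\alpha$ heavy coordinates and at most $\alpha\tau^p$ of light mass — does the structural theorem apply to certify $H_y\subseteq I_j$, after which Lewis-weight sampling handles the remaining (provably light) coordinates as in Lemma~\ref{lem:single_lp}. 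This case split is the heart of the argument, and your proposal, while listing a uniform term in the sampling probability, never argues why it suffices for the many-heavy-coordinate case; instead you lean on the incorrect "flat part never triggered on $\bar\Gamma$" claim. Secondary differences (your single-shot sampling vs.\ the paper's recursive halving, which is how the paper keeps $\|w\|_\infty\le n^2$ and manages the weight-group partition $P_j$; and your Lipschitz-net sketch vs.\ the paper's two-scale nets $\mathcal N_\varepsilon$ and $\mathcal M_\varepsilon^{\alpha,\beta}$ combined via Theorem~\ref{thm:net}) are real but repairable; the flat-part claim is not, and needs to be replaced by the contribution-threshold dichotomy.
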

Since one can directly ignore those rows
$A_{i, *}$ with $w_i = 0$ when solving $\min\sum_{i =1}^n w_i M((Ax-b)_i)$, our
row sampling algorithm actually reduces a Tukey regression instance to a
weighted version of itself with
$\widetilde{O}(d^{p / 2} \cdot \poly(d \log n) / \varepsilon^2)$ rows.
Notably, the running time of the algorithm and the number of
rows in the reduced problem match those given by Lewis weights sampling
\cite{cp15}, up to $\poly(d \log n)$ factors. However, Lewis weights sampling is
designed specifically for the $\ell_p$ norm, which is a simple special case of
the Tukey loss function where $\tau = \infty$.

Our reduction is from the Tukey regression problem to a smaller,
weighted version of itself, and since known heuristics for Tukey regression can
also handle weights, we can apply them to the reduced problem as well. 

\paragraph{Oblivious Sketch.}
While the row sampling algorithm produces a $(1 + \varepsilon)$-approximate solution,
where $\varepsilon$ can be made
arbitrarily small,  the algorithm does have some properties that can be
a disadvantage in some settings: it makes $\polylog(n)$ passes over the matrix
$A$ and vector $b$, and the rows chosen depend on the input.
In the setting of streaming algorithms and distributed computation,
on the other hand, a \emph{sketch-and-solve} approach
can be more effective.
We give such an approach to Tukey regression.
\begin{theorem}
When $1 \le p \le 2$, there is a
distribution $S \in \mathbb{R}^{r \times n}$ over sketching matrices with
$r = \poly(d \log n)$, and a corresponding weight vector $w \in \mathbb{R}^r$, for
which $S \cdot A$ and $S \cdot b$ can be computed in $O(\nnz(A))$ time and for which if $x_{S,M,w}^*$
is the minimizer to $\min\sum_{i =1}^r w_i M((SAx-Sb)_i)$, then 
\[
\|Ax_{S,M,w}^* - b\|_M \leq O(\log n) \min_x \|Ax-b\|_M,
\]
where $M$ is the Tukey loss function. 
\end{theorem}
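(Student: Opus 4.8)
The plan is to combine a layered sampling sketch in the style of the $M$-sketch of~\cite{cw15} with the new residual‑leverage‑score set. First note that Assumption~\ref{as M} gives $\|y\|_M = \Theta\!\big(\sum_i \min(|y_i|,\tau)^p\big)$, so it suffices to approximate this ``truncated $\ell_p$'' functional simultaneously over all residuals $y = Ax - b$, $x\in\R^d$. I would let $S$ be the vertical concatenation of blocks $S_h = B_h D_h$ for $h = 0,1,\dots,O(\log n)$, where $D_h$ independently keeps each coordinate with probability $2^{-h}$ and $B_h$ is a CountSketch (random signs) with $N = \poly(d\log n)$ buckets, together with one extra CountSketch block $B_{\cT}$ with $\poly(d)$ buckets; the weight vector $w$ gives weight $2^h$ to the rows of $S_h$ and weight $1$ to the rows of $B_{\cT}$. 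The $\cT$‑block is meant to perfectly hash the set $\cT$ obtained from the structural result applied to $L = \colspan([A~b])$ (a $(d{+}1)$‑dimensional subspace), so $|\cT| = \poly(d)$; this keeps $S$ oblivious, since the guarantee ``$B_{\cT}$ isolates the coordinates of a fixed set of size $\poly(d)$'' holds for \emph{every} such set, not just the unknown $\cT$.

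The heart of the argument is to show, for each fixed residual $y$, that $\sum_i w_i M((Sy)_i)$ lies between $\Omega(\|y\|_M)$ and $O(\log n)\|y\|_M$ with good probability. I would analyze this scale by scale: group the coordinates of $y$ into dyadic bands $\{i : \sigma/2 < \min(|y_i|,\tau) \le \sigma\}$ of size $m_\sigma$ over $\sigma \in \{\tau, \tau/2, \dots, \tau/\poly(n)\}$ (bands below $\tau/\poly(n)$ are negligible relative to $\OPT$ by Assumption~\ref{as main}), so $\|y\|_M = \Theta(\sum_\sigma m_\sigma\sigma^p)$. At the ``matched'' level $h(\sigma)$ with $2^{-h(\sigma)} m_\sigma \approx N$, about $N$ band‑$\sigma$ coordinates survive, are (up to constant‑probability sign cancellations) isolated in their buckets with bucket value $\approx \pm\sigma$, and thus contribute $\Omega(N\sigma^p 2^{h(\sigma)}) = \Omega(m_\sigma\sigma^p)$, giving the lower bound after summing over bands. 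For the upper bound, band $\sigma$ also deposits mass at over‑sampled levels (one isolated survivor per bucket, contributing $O(m_\sigma\sigma^p)$ in expectation per level, hence $O(\log n\, m_\sigma\sigma^p)$ over the $O(\log n)$ levels) and at under‑sampled levels (bucket values may exceed $\tau$ and be clipped by $M$ to $\tau^p$); the clipped contributions form a geometric series summing to $O((\sigma/\tau)^{2-p})\, m_\sigma\sigma^p$, and here $p \le 2$ is exactly what makes this $O(m_\sigma\sigma^p)$. Summing, $\E[\sum_i w_i M((Sy)_i)] = \Theta(\log n)\|y\|_M$; the lower‑bound event holds with high probability (a sum of $\Omega(\log n)$ nonnegative, nearly independent per‑level terms, each concentrated about its $\Theta(\|y\|_M)$ mean), and the upper bound holds with constant probability by Markov.

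To pass from a fixed $y$ to all $x$, I would use the standard split: we only need $\sum_i w_i M((SAx-Sb)_i) = \Omega(\|Ax-b\|_M)$ for \emph{all} $x$, and $\sum_i w_i M((SAx^*-Sb)_i) = O(\log n)\OPT$ for the \emph{single} optimum $x^*$. The latter is the pointwise Markov bound. For the former, I would take a net of $\{Ax - b\}$ at resolution $n^{-\poly(d)}$ (legitimate by the magnitude bounds in Assumption~\ref{as main}); since $\min(|\cdot|,\tau)$ is $1$‑Lipschitz and $M$ is $O(\tau^{p-1})$‑Lipschitz, both $\|Ax-b\|_M$ and $\sum_i w_i M((SAx-Sb)_i)$ change slowly along the subspace, so a union bound over the net (of size $n^{\poly(d)}$) suffices, and the per‑point failure probability $e^{-\Omega(N)}$ beats $n^{-\poly(d)}$ once $N = \poly(d\log n)$ --- which fixes the number of rows $r$. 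Given both inequalities, a minimizer $\hat x$ of the weighted sketched Tukey loss satisfies $\|A\hat x - b\|_M = O(\sum_i w_i M((SA\hat x - Sb)_i)) = O(\sum_i w_i M((SAx^* - Sb)_i)) = O(\log n)\OPT$, as desired.

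The main obstacle, and the reason the structural result is needed, is sign cancellation against background noise in the presence of saturation. A handful of coordinates of enormous magnitude can and do occur in Tukey optima (as the example in the introduction shows), and in the sampled blocks such a coordinate survives only at low levels, where the signed sum in its bucket can cancel it down below $\tau$, so it may fail to register at all --- whereas in $\ell_p$ regression no analog of this pathology arises. The structural result says these coordinates all lie in the fixed set $\cT$ of size $\poly(d)$, so the dedicated block $B_{\cT}$ isolates them; one then has to check that the residual noise in a $\cT$‑bucket either is below $\tau$ (so the coordinate is read essentially exactly) or is itself so large that the whole error is dominated by $\|y\|_M$ --- again via $p \le 2$. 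Making this interaction between $B_{\cT}$, the sampled blocks, and the residual‑leverage‑score bound quantitatively tight, all while keeping $S$ oblivious and $r = \poly(d\log n)$, is the technical core; and the $O(\log n)$ factor is intrinsic, since summing a plain weighted Tukey loss over the $O(\log n)$ sampling levels already overcounts $\|y\|_M$ by a $\Theta(\log n)$ factor in expectation.
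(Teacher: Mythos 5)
Your proposal follows the same overall route as the paper's Section~\ref{sec M sketch}: the sketch is the stacked subsample-then-\textsf{CountSketch} $M$-sketch of~\cite{cw15} with geometric weights, you decompose the residual by magnitude bands (the paper's $W_q$ weight classes on $M$-values are the same partition, up to a $p$-th power), you use the new structural result to bound the index set that can receive large coordinates, you get contraction pointwise then union-bound over a net of the residual subspace, and you get the $O(\log n)$ dilation in expectation for the single optimum and apply Markov. Your dedicated block $B_{\cT}$ is a valid alternative to what the paper does, but it is not actually needed: the level-$0$ block already has weight $\approx 1$ and sees all coordinates, and the paper simply takes the per-level bucket count $N$ large enough (Lemma~\ref{lem Q< good}) so that the $\poly(d)$-sized set $I\cup J$ from the structural/Lewis-weight argument is isolated there by a birthday bound --- exactly your $B_{\cT}$ argument, folded into the existing block. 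Your identification of $p\le 2$ as what controls the under-sampled (clipped) levels, and of the structural set as what replaces the pure leverage-score reasoning of~\cite{cw15} for the saturated coordinates, matches the paper.

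There is one genuine gap, and it is in the net argument. You propose a single additive net of $\{Ax-b\}$ at resolution $n^{-\poly(d)}$ and transfer from net points via Lipschitz continuity of $M$ and of the sketch. This gives a \emph{relative}-error contraction only for $x$ with $\|Ax-b\|_M$ comfortably above the net resolution. But $\min_x\|Ax-b\|_M$ has no uniform lower bound of the form $n^{-\poly(d)}$: when $c \equiv \min_x\|Ax-b\|_p$ is small relative to $\tau$ (the paper's Case~1 of Theorem~\ref{thm:net}), the optimal Tukey cost can be arbitrarily small, and an additive cover plus Lipschitz bounds cannot produce the needed multiplicative guarantee. The paper addresses this by pairing the additive net $\mathcal{N}$ (Lemma~\ref{lem:net1}) with a scale-invariant, relative-error net $\mathcal{M}^{\alpha,\beta}_\eps$ (Lemma~\ref{lem:net2}) over the $\ell_p$-ball shell $[c, c\cdot\poly(n)]$, together with the weak but uniform $\poly(n)$-factor two-sided bound of Lemma~\ref{lem msketch for all} (which comes from $\ell_2$ subspace embedding of the stacked \textsf{CountSketch} blocks), and runs the two-case argument in Theorem~\ref{thm:net}. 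In Case~1 the argument crucially uses that any near-minimizer has $\|Ax-b\|_\infty<\tau$, so that $\|\cdot\|_M$ coincides with $\|\cdot\|_p^p$ up to constants and the scale-invariant $\ell_p$ net applies; your plan never makes this reduction, so the per-point concentration $e^{-\Omega(m)}$ and the union bound do not by themselves close the argument when the residual is tiny. If you add the $\mathcal{M}^{\alpha,\beta}_\eps$ net and the uniform polynomial-factor bound, the rest of your proposal goes through.
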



Our sketching matrices $S$ are {\em oblivious},
meaning that their distribution {\em does not} depend on the data matrix
$A$ and the vector $b$. Furthermore, applying the sketching matrices requires
only one pass over these inputs, and thus can be readily
implemented in streaming and distributed settings.

We further show
that the same distribution $S$ on sketching matrices gives a fixed constant $C
\geq 1$ approximation factor if one slightly changes the regression problem
solved in the reduced space. 

We also remark that for oblivious sketches, the condition
that $p \le 2$ is necessary, as shown in \cite{braverman2010zero}.

\paragraph{Hardness Results and Provable Algorithms.}
We give a reduction from
$\mathsf{MAX}$-$\mathsf{3SAT}$ to Tukey regression, which implies the
\textsf{NP}-Hardness of Tukey regression. Under the Exponential
Time Hypothesis \cite{impagliazzo2001complexity}, using Dinur's PCP Theorem~\cite{dinur2007pcp}, we can strengthen the hardness result and show that
even solving Tukey regression with approximation ratio $1 + \eta$ requires
$2^{\widetilde{\Omega}(d)}$ time for some fixed constant $\eta > 0$.

We complement our hardness results by giving an exponential time algorithm for Tukey regression, using the polynomial system verifier \cite{renegar1992computational,basu1996combinatorial}. 
This technique has been used to solve a number of numerical linear algebra problems in previous works \cite{song2017low, razenshteyn2016weighted, clarkson2015input, arora2016computing, moitra2016almost}.
For the loss function defined in \eqref{eq:tukey_loss}, the algorithm runs in $2^{O(n \log n)} \cdot \log(1 / \varepsilon)$ time to find a $(1 + \varepsilon)$-approximate solution of an instance of size $n \times d$. 
By further applying our dimensionality reduction methods, the running time can be reduced to $2^{\poly(d \log n)}$, which is significantly faster when $n \gg d$ and comes close to the $2^{\widetilde{\Omega}(d)}$ running time lower bound. 

\paragraph{Empirical Evaluation.}
We test our dimensionality reduction methods on both synthetic datasets and real datasets.
Our empirical results quite clearly demonstrate the practicality of our methods. 


\section{Technical Overview}
In this section, we give an overview of our technical contributions. 
For convenience, we state our high-level ideas in terms of the loss function 
$$
M(x) = \begin{cases}
x^2 & |x| \le 1 \\
1 & |x| \ge 1
\end{cases}
$$
where $p = 2$ and $\tau = 1$.
We show how to generalize our ideas to other loss functions that satisfy Assumption \ref{as M} later.

\subsection{Structural Theorem and Algorithms for Finding Heavy Coordinates}
Our first main technical contribution is the following structural theorem, which is crucial for each of our 
algorithms.
\begin{theorem}[Informal]\label{thm:struct_informal}
For a given matrix $A \in \mathbb{R}^{n \times d}$ and $\alpha \ge 1$,
there exists a set of indices $I\subseteq [n]$ with size $|I| \le \widetilde{O}(d \alpha)$, 
such that for all $y \in \colspan(A)$, 
if $y$ satisfies $\|y\|_M \le \alpha$
then $\{i \in n \mid |y_i| > 1\}\subseteq I$.
\end{theorem}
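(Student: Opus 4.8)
The plan is to build $I$ by iteratively peeling off the coordinates with large \emph{leverage scores}, recomputing the leverage scores with respect to the surviving coordinates after each deletion, and stopping as soon as all remaining leverage scores are small. The crux is a deterministic lemma: once every leverage score is below $1/(2\alpha)$, no vector of the subspace with Tukey cost at most $\alpha$ can have a coordinate exceeding $1$ in absolute value. We may assume $\alpha\ge 1$, since if $\alpha<1$ then $\norm{y}_M\le\alpha<1$ already forces $M(y_i)<1$, i.e.\ $|y_i|<1$, for every $i$, and $I=\emptyset$ works. We work with the simplified loss $M(x)=\min(x^2,1)$ of the technical overview (the changes for general $p$ are noted at the end). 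For a subspace $L'\subseteq\R^m$ write $\tau_i(L')=\norm{U_{i,*}}_2^2$ for the $i$-th leverage score relative to an orthonormal basis $U$ of $L'$, and recall the standard facts $\sum_i\tau_i(L')=\dim(L')$, $\;y_i^2\le\tau_i(L')\norm{y}_2^2$ for all $y\in L'$, and that leverage scores are nondecreasing under deleting coordinates.

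The key lemma I will prove is: if every leverage score of $L'$ is at most $1/(2\alpha)$, then every $y\in L'$ with $\norm{y}_M\le\alpha$ satisfies $\norm{y}_\infty\le 1$. The argument is short: let $H=\{i:|y_i|>1\}$; each $i\in H$ contributes $M(y_i)=1$ to the cost, so $|H|\le\norm{y}_M\le\alpha$. On the complement $M(y_i)=y_i^2$, so $\sum_{i\notin H}y_i^2\le\norm{y}_M\le\alpha$; on $H$, using $\norm{y}_\infty^2\le\max_i\tau_i(L')\norm{y}_2^2\le\norm{y}_2^2/(2\alpha)$, we get $\sum_{i\in H}y_i^2\le|H|\norm{y}_\infty^2\le\alpha\cdot\norm{y}_2^2/(2\alpha)=\norm{y}_2^2/2$. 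Adding the two estimates gives $\norm{y}_2^2\le\norm{y}_2^2/2+\alpha$, hence $\norm{y}_2^2\le 2\alpha$ and therefore $\norm{y}_\infty^2\le\norm{y}_2^2/(2\alpha)\le 1$.

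With the lemma in hand, I run the peeling: set $S_0=[n]$ and, given $S_t$, let $L_t=\{y|_{S_t}:y\in\colspan(A)\}$, $P_t=\{i\in S_t:\tau_i(L_t)\ge 1/(2\alpha)\}$, and $S_{t+1}=S_t\setminus P_t$; stop at the first round $T$ with $P_T=\emptyset$, and output $I=P_0\cup\dots\cup P_{T-1}$. Since $\sum_{i\in S_t}\tau_i(L_t)=\dim(L_t)\le d$, each $|P_t|\le 2d\alpha$. Correctness is immediate: for $y\in\colspan(A)$ with $\norm{y}_M\le\alpha$, put $z=y|_{S_T}\in L_T$; deleting coordinates only decreases the $M$-cost so $\norm{z}_M\le\alpha$, and $L_T$ has all leverage scores below $1/(2\alpha)$, so the lemma gives $\norm{z}_\infty\le 1$, i.e.\ $|y_i|\le 1$ for all $i\in S_T$; hence $\{i:|y_i|>1\}\subseteq[n]\setminus S_T=I$.

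The remaining, and main, obstacle is to bound $|I|=\sum_{t<T}|P_t|$: estimating this as (number of rounds) times $\max_t|P_t|$ loses a spurious factor of $\alpha$, so instead I bound $\Sigma:=\sum_{t<T}\sum_{i\in P_t}\tau_i(L_t)$ directly and use $|I|\le 2\alpha\Sigma$, which holds because every peeled coordinate had leverage at least $1/(2\alpha)$ at the moment it was peeled. This is a standard volumetric estimate: deleting a coordinate with leverage $\tau<1$ multiplies the pseudo-determinant of the current Gram matrix by $1-\tau$, and $\tau\le\log\frac{1}{1-\tau}$, while the at most $d$ deletions with leverage exactly $1$ each drop the rank and are charged separately (contributing $O(d)$ to $\Sigma$). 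Telescoping the logarithms within each of the at most $d$ rank-constant phases, $\Sigma=O(d)+O\!\left(d\log\frac{D_{\max}}{D_{\min}}\right)$, where $D_{\max}\le(\norm{A}_F^2)^d=n^{O(d^2)}$ by Assumption~\ref{as main}.\ref{as main size}, and $D_{\min}\ge n^{-O(d^2)}$ under Assumption~\ref{as main} (indeed $D_{\min}\ge 1$ when $A$ has integer entries, since a suitable symmetric minor of an integer Gram matrix is a positive integer). Hence $\Sigma=O(d^3\log n)$ and $|I|=O(d^3\alpha\log n)=\widetilde O(d\alpha)$. Finally, for a general loss satisfying Assumption~\ref{as M} one replaces leverage scores throughout by $\ell_p$ Lewis weights of $\colspan(A)$ — which still sum to $\dim(\colspan(A))$ and control $|y_i|$ in terms of $\norm{y}_p$ — so the key lemma runs with $\norm{y}_2^2$ replaced by $\norm{y}_p^p$ at the cost of an extra $d^{p/2}$ factor, matching the $d^{p/2}$ appearing in our row-sampling theorem.
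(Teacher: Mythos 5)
Your proof is correct in outline, but it takes a genuinely different route from the paper, and it is worth spelling out the trade-offs. The paper (Sections~\ref{sec:struct1}--\ref{sec:struct2}) proves the informal structural theorem in two stages: first a ``repeat the scan $\alpha$ times'' argument (Theorem~\ref{thm:struct}) giving a set of size $O(d\alpha^2)$, then a random-partition-plus-union-bound argument (Theorem~\ref{thm:struct2}, Corollary~\ref{cor:existence}) that improves the size to $\widetilde{O}(d\alpha)$ via the probabilistic method. You instead peel coordinates deterministically, stop when all surviving leverage scores drop below $1/(2\alpha)$, and prove a clean ``stopping lemma'': if every leverage score of a subspace is at most $1/(2\alpha)$, then every $y$ in that subspace with $\|y\|_M\le\alpha$ has $\|y\|_\infty\le 1$. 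This lemma is correct and, in my view, more transparent than the paper's coordinate-by-coordinate induction. Your termination/size bound then rests on a determinant potential: peeling one coordinate at a time, deletions with leverage $<1$ multiply the Gram determinant by $(1-\tau)$, deletions with leverage $=1$ drop the rank (at most $d$ of these), and the log-determinant telescopes within each constant-rank phase. This is also sound, modulo one point of presentation: as written, the algorithm peels the whole set $P_t$ at once, but the potential argument implicitly requires one-at-a-time peeling; you should make explicit that $\sum_{i\in P_t}\tau_i(L_t)$ is at most the sum of leverages at time of deletion (since leverage scores are nondecreasing under deletions), which is the quantity the telescoping actually controls.

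There are two substantive gaps relative to the paper's result. First, your size bound requires Assumption~\ref{as main} (integrality and $n^{O(d)}$-bounded entries) in order to lower-bound the final determinant, and hence $D_{\max}/D_{\min}$. The paper's Corollary~\ref{cor:existence} is an unconditional existence statement holding for \emph{every} matrix $A$, with no size or integrality assumptions; the determinant potential has no such unconditional lower bound (one can arrange rows of $A$ so that the final Gram determinant, before the rank drops, is arbitrarily small), so your argument genuinely does not reproduce that generality. This matters because the structural theorem is applied inside the $M$-sketch analysis (Lemma~\ref{lem Q< good}) to an arbitrary basis of $\colspan(A)$, not only to the integer input. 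Second, the bound you obtain is $O(d^3\alpha\log n)$, whereas the paper obtains $d\alpha\cdot\polylog(d\alpha)$, that is, genuinely $\widetilde{O}(d\alpha)$ in the paper's notation ($\widetilde O(f) = f\,\polylog f$). Your extra factor is $\poly(d)\cdot\log n$, not $\polylog(d\alpha)$; this is harmless for the downstream row-sampling theorem, which already carries $\poly(d\log n)$ slack, but it does not match the stated bound of Theorem~\ref{thm:struct_informal}. Neither gap is fatal, but you should flag both: your potential-function approach is an attractive deterministic alternative, at the cost of needing the input-size assumptions and a few extra $\poly(d)\log n$ factors that the paper's randomized existence argument avoids.
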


Intuitively, Theorem \ref{thm:struct_informal} states that for all vectors $y$
in the column space of $A$ with small $\|y\|_M$, the heavy coordinates of $y$
(coordinates with $|y_i| \ge 1$) must lie in a set $I$ with small cardinality.
To prove Theorem \ref{thm:struct_informal}, in Figure \ref{alg:informal_struct} we give an informal description of our algorithm for finding the set $I$.
\ifarxiv
The formal description of the algorithm can be found in Section~\ref{sec:struct1}. 
\else
The formal description of the algorithm can be found in the supplementary material.
\fi

\begin{figure}[H]
\begin{framed}
\begin{enumerate}
\item Let $I = \emptyset$.
\item Repeat the following for $\alpha$ times: 
\begin{enumerate}
\item Calculate the leverage scores $\{u_i\}_{i \in [n] \setminus I, *}$ of the matrix $A_{ [n] \setminus I, *}$.
\item For each $i \in  [n] \setminus I$, if $u_i \ge \Omega(1 / \alpha)$, then add $i$ into $I$.
\end{enumerate}
\item Return $I$.
\end{enumerate}
\end{framed}
\caption{Polynomial time algorithm for finding heavy coordinates.}
\label{alg:informal_struct}
\end{figure}

For correctness, we first notice that for
a vector $y$ with $\|y\|_M \le \alpha$, the number of heavy coordinates is at
most $\alpha$, since $M(y_i) = 1$ for all $|y_i| > 1$. Now consider the
coordinate $i$ with largest $|y_i|$ and $|y_i| > 1$. We must have
$\|y\|_2^2 \le \alpha + \alpha y_i^2$, since the contribution of coordinates with $|y_i| \le 1$
to $\|y\|_2^2$ is upper bounded by $\|y\|_M \le \alpha$, and there are at most
$\alpha$ coordinates with $|y_i| > 1$, each contributing at most $y_i^2$ to
$\|y\|_2^2$. Now we claim that we must add the coordinate $i$ with largest
$|y_i|$ into the set $I$, which simply follows from
\begin{equation}\label{eq:leverage_large}
\frac{y_i^2}{\|y\|_2^2} \ge \frac{y_i^2}{\alpha + \alpha y_i^2} \ge \Omega(1 / \alpha)
\end{equation}
and thus the leverage score of the row $A_{i, *}$ is at least $\Omega(1 / \alpha)$.
(Here we use that the $i$-th leverage score is at least as large as $y_i^2/\|y\|_2^2$ for all $y\in\colspan(A)$.)
After adding $i$ into $I$, we consider the second largest $|y_i|$ with $|y_i| \ge 1$.
A similar argument shows that we will also add $i$ into $I$ in the second repetition. 
After repeating $\alpha$ times we will add all coordinates $i$ with $|y_i| > 1$ into $I$,
and all coordinates added to $I$ have leverage score $\Omega(1 / \alpha)$.

The above algorithm has two main drawbacks. First of all, it returns a set with
size $|I| \le O(d \alpha^2)$ as opposed to $\widetilde{O}(d\alpha)$. Moreover,
the algorithm runs in $O(\nnz(A) \cdot \alpha)$ time since we need to calculate
the leverage scores of $A_{[n] \setminus I, *}$ a total of $\alpha$ times. When $\alpha
= \poly(d)$, such an algorithm does not run in input-sparsity time. An
input-sparsity time algorithm for finding such a set $I$, on the other hand, is
an important subroutine for our input-sparsity time row sampling algorithm. 
\ifarxiv
In Section \ref{sec:struct2}, 
\else
In the supplementary material,
\fi
we give a randomized algorithm for finding a set $I$
with size $|I| \le \widetilde{O}(d\alpha)$ that runs in input-sparsity time,
and we give an informal description of the algorithm in Figure \ref{alg:informal_struct2}.
Notice that calculating leverage scores of the matrices $A_{\Gamma_j, *}$ can be done in $\widetilde{O}(\nnz(A) + \poly(d))$ time using existing approaches \cite{CW13, NN13}.

\begin{figure}[H]
\begin{framed}
\begin{enumerate}
\item Let $I = \emptyset$.
\item Repeat the following for $O(\log(d \alpha))$ times: 
\begin{enumerate}
\item Randomly partition $[n]$ into $\Gamma_1, \Gamma_2, \ldots, \Gamma_{\alpha}$.
\item For each $j \in [\alpha]$, calculate the leverage scores $\{u_i\}_{i \in \Gamma_j}$ of the matrix $A_{\Gamma_j, *}$.
\item For each $j \in [\alpha]$, for each $i \in \Gamma_j$, if
$
\hat{u}_i \ge \Omega(1)
$,
then add $i$ to $I$.
\end{enumerate}
\item Return $I$.
\end{enumerate}
\end{framed}
\caption{Input-sparsity time algorithm for finding heavy coordinates.}
\label{alg:informal_struct2}
\end{figure}

For correctness, recall that we only need to find those coordinates $i$
for which there exists a vector $y \in \colspan(A)$ with $\|y\|_M \le \alpha$ and $|y_i| \ge 1$.
Since $\|y\|_M \le \alpha$, there are most $\alpha$ coordinates in $y$ with absolute value at least $1$.
Thus, with constant probability, the coordinate $i$ is in a set $\Gamma_j$ such that it is the only coordinate with $|y_i| \ge 1$ in $\Gamma_j$.
Moreover, by Markov's inequality, with constant probability the squared $\ell_2$ norm of coordinates in $\Gamma_j \setminus \{i\}$ is at most a constant. 
Conditioned on these events, using an argument similar to \eqref{eq:leverage_large}, the leverage score of the row $A_{i, *}$ in $A_{\Gamma_j, *}$ is at least a constant, in which case we will add $i$ into $I$.
In order to show that we will add all such $i$ into $I$ with good probability, we repeat the whole procedure for $O(\log(d \alpha))$ times and apply a union bound over all such $i$.
$O(\log(d \alpha))$ repetitions suffice since there are at most $O(\poly(d \alpha))$ different such $i$, as implied by the existential result mentioned above. 

The above algorithm also implies the existence of a set $I$ with better upper bounds on $|I|$, by the probabilisitic method. These algorithms can be readily
generalized to general $\tau > 0$, and any $p \ge 1$ using {\em $\ell_p$ Lewis
weights} in place of leverage scores. 
We also give a brief overview of Lewis weights and related properties 
\ifarxiv
in Section \ref{sec:lewis} 
\else
in the supplementary material
\fi
for readers unfamiliar with these topics.

\subsection{The Net Argument}
Our second technical contribution is a net argument for Tukey loss functions.
Due to the lack of scale-invariance, the net size for the Tukey loss functions
need not be $n^{\poly(d)}$. While the $M$-functions in \cite{cw15} also do not
satisfy scale-invariance, the $M$-functions in \cite{cw15} have at least linear
growth and so for any value $c$, and for an orthonormal basis $U$ of $A$, the
set of $x$ for which $\|Ux\|_M = c$ satisfy $c/\poly(n) \leq \|x\|_2 \leq c
\cdot \poly(n)$, and so one could use $O(\log n)$ nested nets for the $\ell_2$
norm to obtain a net for the $M$-functions. This does not hold for the Tukey
loss function $M$, e.g., if $c = \tau$, and if the first column of $U$ is $(1,
0, \ldots, 0)^T$, then if $x_1 = \infty$ and $x_2 = x_3 = \cdots = x_d = 0$, one
has $\|Ux\|_M = c$. This motivates Assumption~\ref{as main} above.

Using Assumption \ref{as main}, we construct a net
$\mathcal{N}_\varepsilon$ with size
$|\mathcal{N}_\varepsilon| \le (n / \varepsilon)^{\poly(d)}$, such that for
any $y = Ax-b$ with $\|x\|_2 \le \xupper = n^{\poly(d)}$, there exists
$y' \in \mathcal{N}_\varepsilon$ with $\|y' - y\|_M \le \varepsilon$.
The construction is based on a standard volume argument. 
Notice that such a net only gives an {\em additive error}
guarantee. To give a {\em relative error} guarantee, we notice that for a vector
$y = Ax - b$ with sufficiently small $\|y\|_M$, we must have $\|y\|_{\infty} \le
1$, in which case the Tukey loss function $\|\cdot\|_M$ behaves similarly to the
squared $\ell_2$ norm $\|\cdot\|_2^2$, and thus we can instead use the net
construction for the $\ell_2$ norm. This idea can be easily generalized to
general $p \ge 1$ and $\tau > 0$ if the loss function satisfies $M(x) = |x|^p$
when $|x| \le \tau$. 

To cope with other loss functions that satisfy Assumption
\ref{as M} for which $M(x)$ can only be {\em approximated} by $|x|^p$ when
$|x| \le \tau$, we use the nested net construction in \cite{cw15} when $\|y\|_M$
is sufficiently small. 
Our final net argument for Tukey loss functions is a careful combination of the two net constructions mentioned above. 
\ifarxiv
The full details are given in Section \ref{sec:net_arg}.
\else
The full details are given in the supplementary material.
\fi


\subsection{The Row Sampling Algorithm}
Our row sampling algorithm proceeds in a recursive manner, and employs a combination
of {\em uniform sampling} and {\em leverage score sampling}, together with the
procedure for finding heavy coordinates. 
We give an informal description in Figure \ref{alg:informal_sample}.
\ifarxiv
See Section \ref{sec:row_sample} for the formal description and analysis.
\else
See the supplementary material for the formal description and analysis.
\fi

\begin{figure}
\begin{framed}
\begin{enumerate}
\item Use the algorithm in Figure \ref{alg:informal_struct2} to find a set $I$ with $\alpha = \poly(d \log n / \varepsilon)$.
\item Calculate the leverage scores $\{u_i\}$ of the matrix $A_{[n] \setminus I, *}$.
\item For each row $A_{i, *}$, we define its sampling probability $p_i$ to be
$$
p_i =  \begin{cases}
1 & i \in I\\
\min\{1, 1 / 2 +  u_i  \poly(d / \varepsilon)\} & i \notin I
\end{cases}.
$$
\item Sample each row with probability $p_i$.
\item Recursively call the algorithm on the resulting matrix until the number of remaining rows is at most $\poly(d \log n / \varepsilon)$.
\end{enumerate}
\end{framed}
\caption{The row sampling algorithm. }
\label{alg:informal_sample}
\end{figure}

For a vector $y = Ax - b$, we conceptually split
coordinates of $y$ into two parts: heavy coordinates (those with $|y_i| > 1$)
and light coordinates (those with $|y_i| \le 1$). Intuitively, we need to apply
uniform sampling to heavy coordinates,
 since all heavy coordinates contribute the same to $\|y\|_M$,
and leverage score sampling to light coordinates, since the Tukey loss function
behaves similarly to the squared $\ell_2$ norm for light coordinates. 

\ifarxiv
In the formal analysis given in Section \ref{sec:recursive_one_step}, we show that if
\else
In the formal analysis given in supplementary material, we show that if
\fi
either the contribution from heavy coordinates to $\|y\|_M$ or the contribution
from light coordinates to $\|y\|_M$ is at least $\Omega(\poly(d \log n / \varepsilon))$,
then with high probability, uniform sampling with sampling probability $1 / 2$
will preserve $\|y\|_M$ up to $\varepsilon$ relative error, for all vectors $y$ in the net.
The proof is based on standard concentration inequalities. 

If both the contribution from heavy coordinates and the contribution from light coordinates
is $O(\poly(d \log n / \varepsilon))$, uniform sampling will no longer be
sufficient, and we resort to the structural theorem in such cases. 
By setting
$\alpha = \poly(d \log n / \varepsilon)$ in the algorithm for finding heavy coordinates,
we can identify a set $I$ with size
$|I| = \poly(d \log n / \varepsilon)$, which includes the indices of all heavy
coordinates.
We simply keep all coordinates in $I$ by setting $p_i = 1$.
The remaining coordinates must be light, and hence behave very similarly to the
squared $\ell_2$ norm. Thus, we can use leverage score sampling to deal with the
remaining light coordinates. 
This also explains why we need to use a combination
of uniform sampling and leverage score sampling.

Our algorithm will eliminate roughly half of the
coordinates in each round, and after $O(\log n)$ rounds there are at most
$O(\poly(d \log n / \varepsilon))$ remaining coordinates, in which case we stop the sampling process and return our reduced version of the problem.
In each round we calculate the leverage scores and call the algorithm in Figure \ref{alg:informal_struct2} to find heavy coordinates.
Since both subroutines can be implemented to run in $\widetilde{O}(\nnz(A) + \poly(d \log n / \varepsilon))$ time, the overall running time of our row sampling algorithm is also $\widetilde{O}(\nnz(A) + \poly(d \log n / \varepsilon))$.

The above algorithm can be readily
generalized to any loss function $M$ that satisfies Assumption \ref{as M}.
\ifarxiv
Our formal analysis in Section \ref{sec:row_sample} is a careful
\else
Our formal analysis in supplementary material is a careful
\fi
combination of all ideas mentioned above.

\subsection{The Oblivious Sketch} 
From an algorithmic standpoint, our oblivious sketch $S$ is similar to that in \cite{cw15}. 
The distribution on matrices $S$ can be viewed roughly as a stack of $\hm=O(\log n)$
matrices, where the $i$-th such matrix is the product of a {\sf CountSketch} matrix with $\poly(d \log n)$ rows
with a diagonal matrix $D$ which samples roughly
$1/(d \log n)^i$ uniformly random coordinates of an $n$-dimensional vector.
 Thus, $S$ can be
viewed as applying {\sf CountSketch} to a subsampled set of coordinates of a vector, where
the subsampling is more aggressive as $i$ increases. The weight vector $w$ is such that
$w_j = (d \log n)^i$ for all coordinates $j$ corresponding to the $i$-th matrix in the stack. 
Our main technical contribution here is showing that this simple sketch actually works for Tukey loss functions.
One of the main ideas in \cite{cw15} is that if there is a subset of at least $\poly(d) \log n$ coordinates of a vector $y$
of similar absolute value, then in one of the levels of subsampling of $S$, with probability $1-1/ n^{\poly(d)}$
there will be $\Theta(\poly(d) \log n)$ coordinates
in this group which survive the subsampling and are \emph{isolated}, that is,
they hash to separate {\sf CountSketch} buckets. Using that the $M$-function
does not grow too quickly, which holds for Tukey loss functions as well if $p \le 2$, this suffices for estimating the contribution to $\|y\|_M$
from all large subsets of similar coordinates.

The main difference in this work is how estimates for {\it small subsets of
coordinates} of $y$ are made. In \cite{cw15} an argument based on leverage
scores sufficed, since, as one ranges over all unit vectors of the form $y =
Ax-b$, there is only a small subset of coordinates which could ever be large,
which follows from the condition that the column span of $A$ is a
low-dimensional subspace. At first glance, for Tukey loss functions this might
not be true. One may think that for any $t \leq \poly(d) \log n $, it could be
that for a vector $y = Ax-b$, any subset $T$ of $t$ of its $n$ coordinates could
have the property that $M(y_i) = 1$ for $i \in T$, and $M(y_i) < 1$ otherwise.
However, our structural theorem in fact precludes such a possibility. The
structural theorem implies that there are only $\poly(d \log n)$ coordinates for
which $M(y_i)$ could be $1$. For those coordinates with $M(y_i) < 1$, the Tukey
loss function behaves very similarly to the squared $\ell_2$ norm, and thus we
can again use the argument based on leverage scores. After considering these two
different types of coordinates, we can now apply the perfect hashing argument as
in \cite{cw15}. 

These ideas can be readily generalized to general $\tau > 0$,
and any $1 \le p \le 2$, again using $\ell_p$ Lewis weights in place of leverage
scores. 
\ifarxiv
We formalize these ideas in Lemma \ref{lem Q< good}.
\else
We formalize these ideas in the supplementary material.
\fi

\section{Preliminaries}\label{sec:pre}
For two real numbers $a$ and $b$, we use the notation $a = (1 \pm \varepsilon) b$ if $a \in [(1 - \varepsilon)b, (1 + \varepsilon)b]$.

We use $\|\cdot \|_p$ to denote the $\ell_p$ norm of a vector, and $\|\cdot\|_{p, w}$ to denote the weighted $\ell_p$ norm, i.e., 
$$
\|y\|_{p, w} = \left(\sum_{i=1}^n w_i |y_i|^p \right)^{1 / p}.
$$

For a vector $y \in \mathbb{R}^n$, a weight vector $w \in \mathbb{R}^n$ whose
entries are all non-negative and a loss function $M : \R \to \R^+$ that
satisfies Assumption \ref{as M}, $\|y\|_{M, w}$ is defined to be
$$
\|y\|_{M, w} = \sum_{i=1}^n w_i \cdot M(y_i).
$$
We also define $\|y\|_M$ to be
$$
\|y\|_{M} = \sum_{i=1}^n  M(y_i).
$$
For a vector $y \in \R^n$ and a real number $\tau \ge 0$, we define $H_y$ to be
the set $H_y = \{i \in [n] \mid |y_i| > \tau\}$, and $L_y$ to be the set
$L_y = \{i \in [n] \mid |y_i| \le \tau\}$.

\subsection{Tail Inequalities} 

\begin{lemma}[Bernstein's inequality] \label{lem:bernstein}
Suppose $X_1, X_2, \ldots, X_n$ are independent random variables taking values in $[-b, b]$. 
Let $X = \sum_{i=1}^n X_i$
and 
$\Var[X]= \sum_{i=1}^n \Var[X_i]$ be the variance of $X$. For any $t > 0$ we have
$$
\Pr[|X - \E[X]|> t] \le 2\exp\left(-\frac{t^2}{2 \Var[X] + 2bt/3}\right).
$$
\end{lemma}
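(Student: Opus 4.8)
The plan is to prove this by the classical Chernoff/moment-generating-function method. First I would reduce to the centered case: replacing each $X_i$ by $X_i - \E[X_i]$ leaves each $\Var[X_i]$ (hence $\Var[X]$) unchanged while keeping the variables bounded, so it suffices to treat independent mean-zero $X_i$ with $|X_i|\le b$ and prove $\Pr[X>t]\le \exp(-t^2/(2\Var[X]+2bt/3))$; the two-sided bound then follows by applying this to $-X$ together with a union bound, and the general statement follows from the centered one. For the one-sided tail, Markov's inequality applied to $e^{\lambda X}$ gives, for every $\lambda>0$,
\[
\Pr[X>t]\le e^{-\lambda t}\,\E[e^{\lambda X}]=e^{-\lambda t}\prod_{i=1}^n \E[e^{\lambda X_i}],
\]
using independence in the last step.

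The heart of the argument is a per-term bound on $\E[e^{\lambda X_i}]$. Expanding the exponential and using $\E[X_i]=0$ gives $\E[e^{\lambda X_i}]=1+\sum_{k\ge 2}\lambda^k\E[X_i^k]/k!$, and I would bound the moments by $\E[X_i^k]\le \E[|X_i|^k]\le b^{k-2}\E[X_i^2]=b^{k-2}\Var[X_i]$, using $|X_i|\le b$. Combining this with the elementary inequality $k!\ge 2\cdot 3^{k-2}$ for $k \ge 2$ and summing the resulting geometric series, which converges precisely when $\lambda b<3$, yields
\[
\E[e^{\lambda X_i}]\le 1+\frac{\lambda^2\Var[X_i]/2}{1-\lambda b/3}\le \exp\left(\frac{\lambda^2\Var[X_i]/2}{1-\lambda b/3}\right).
\]
Taking the product over $i$ gives $\E[e^{\lambda X}]\le \exp\left(\frac{\lambda^2\Var[X]/2}{1-\lambda b/3}\right)$, hence $\Pr[X>t]\le \exp\left(-\lambda t+\frac{\lambda^2\Var[X]/2}{1-\lambda b/3}\right)$ for all $0<\lambda<3/b$.

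It remains to optimize the free parameter $\lambda$. I would choose $\lambda=t/(\Var[X]+bt/3)$, which automatically satisfies $\lambda b<3$ since $\lambda b/3=(bt/3)/(\Var[X]+bt/3)<1$, and which makes $1-\lambda b/3=\Var[X]/(\Var[X]+bt/3)$. Substituting, the two terms in the exponent combine and collapse to exactly $-\frac{t^2/2}{\Var[X]+bt/3}=-\frac{t^2}{2\Var[X]+2bt/3}$, giving the one-sided bound; applying it to $-X$ and taking a union bound over $\{X>t\}$ and $\{X<-t\}$ produces the stated factor of $2$. I do not expect a genuine obstacle here — this is a textbook concentration inequality and the proof is entirely standard. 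The only points requiring care are mechanical: extracting the moment bound $\E[|X_i|^k]\le b^{k-2}\Var[X_i]$ in exactly the form that, after summation, produces the $\lambda b/3$ correction term, and then verifying that the particular choice $\lambda=t/(\Var[X]+bt/3)$ stays in the admissible range $\lambda<3/b$ and yields the clean closed form in the denominator.
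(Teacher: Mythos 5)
Your proof is correct, and it is the standard Chernoff/moment-generating-function derivation of Bernstein's inequality: center, apply Markov to $e^{\lambda X}$, bound each $\E[e^{\lambda X_i}]$ via the moment estimate $\E[|X_i|^k]\le b^{k-2}\Var[X_i]$ together with $k!\ge 2\cdot 3^{k-2}$, sum the geometric series, and optimize $\lambda=t/(\Var[X]+bt/3)$. The paper states this lemma as a known tail inequality and gives no proof of its own, so there is nothing to compare against; your argument is the canonical one and all the steps (including the check that the chosen $\lambda$ lies in $(0,3/b)$ and the algebraic collapse of the exponent) go through exactly as you describe.
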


\subsection{Facts Regarding the Loss Function}
\begin{lemma}\label{lem:entry_perturb}
Under Assumption \ref{as M}, there is a constant $C > 0$ that depends only on
$p$, for which for any $a, b$ with $|b| \le \varepsilon |a|$, we have
$M(a+b) = (1 \pm C \varepsilon) M(a)$.
\end{lemma}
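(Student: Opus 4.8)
The plan is to reduce the claim to two regimes depending on the size of $|a|$ relative to the threshold $\tau$, using the defining properties of $M$ in Assumption~\ref{as M}. First I would dispose of the trivial case $a = 0$, where $|b| \le \eps|a| = 0$ forces $b=0$ and the statement holds with equality. For $a \ne 0$, note that $|b| \le \eps|a|$ gives $|a+b| \in [(1-\eps)|a|, (1+\eps)|a|]$, so $a+b$ and $a$ have comparable magnitude; by symmetry (Assumption~\ref{as M}.\ref{as M sym}) we may assume $a, a+b > 0$ and it suffices to bound $M((1\pm\eps)a)/M(a)$ from above and below.

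The key step is the \emph{flat regime}: if $|a| \ge \tau$, then $|a+b| \ge (1-\eps)|a|$, and provided $\eps$ is small enough that we could worry about $|a+b|$ dropping below $\tau$, we still have $|a+b| \ge (1-\eps)\tau$; here I would simply observe that for the Tukey-type $M$ one still has $M(a+b) = (1\pm C\eps)\tau^p$ because either $|a+b| \ge \tau$ (giving exactly $\tau^p$) or $(1-\eps)\tau \le |a+b| < \tau$, in which case $M(a+b) \in [L_M(1-\eps)^p\tau^p, U_M\tau^p]$ — but this only gives constant, not $(1\pm C\eps)$, multiplicative control unless $L_M = U_M = 1$. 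So the cleaner approach is: when $|a| \ge \tau$, use the growth condition (Assumption~\ref{as M}.\ref{as M growth}) directly. For the upper bound, if $|a+b| \ge |a|$ then $M(a+b)/M(a) \le |(a+b)/a|^p \le (1+\eps)^p \le 1 + C\eps$; if $|a+b| < |a|$ then $M(a+b) \le M(a)$ by monotonicity (Assumption~\ref{as M}.\ref{as M inc}), so certainly $M(a+b) \le (1+C\eps)M(a)$. For the lower bound, symmetrically: if $|a+b| \le |a|$ then $M(a)/M(a+b) \le |a/(a+b)|^p \le (1-\eps)^{-p} \le 1 + C\eps$ (for $\eps$ bounded away from $1$, say $\eps \le 1/2$), and if $|a+b| > |a|$ then $M(a+b) \ge M(a)$ by monotonicity. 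This handles \emph{all} $a$ with only Assumptions~\ref{as M}.\ref{as M sym}, \ref{as M inc}, \ref{as M growth} — in fact the case split on $|a| \ge \tau$ versus $|a| \le \tau$ is not even needed, since growth plus monotonicity suffice everywhere. I would present it this way.

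So the streamlined argument is: assume WLOG $a > 0$ (by symmetry) and $\eps \le 1/2$ (for $\eps > 1/2$ the statement is vacuous after adjusting $C$, since $M(a+b)/M(a)$ is bounded by $(3/2)^p$ via growth and below by $(1/2)^p$, both absorbed into $C\eps \gtrsim C/2$). Write $a' = a+b$, so $a' \in [(1-\eps)a, (1+\eps)a]$ with $a' > 0$. If $a' \ge a$: monotonicity gives $M(a') \ge M(a)$, and growth gives $M(a')/M(a) \le (a'/a)^p \le (1+\eps)^p$. If $a' \le a$: monotonicity gives $M(a') \le M(a)$, and growth gives $M(a)/M(a') \le (a/a')^p \le (1-\eps)^{-p}$. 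Finally use the elementary inequalities $(1+\eps)^p \le 1 + C_1\eps$ and $(1-\eps)^{-p} \le 1 + C_1\eps$ for $\eps \le 1/2$, with $C_1$ depending only on $p$ (e.g. from convexity of $t \mapsto (1+t)^p$ on $[0,1/2]$ and of $t \mapsto (1-t)^{-p}$ on $[0,1/2]$), to conclude $M(a') = (1 \pm C\eps)M(a)$ with $C = C_1$ depending only on $p$.

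I do not expect any genuine obstacle here — the only mild subtlety is the bookkeeping to make $C$ depend \emph{only} on $p$ and not on the regime of $a$, which is why I route everything through the scale-free growth/monotonicity conditions rather than through the near-$p$th-power condition (Assumption~\ref{as M}.\ref{as M near quad}) which carries the $L_M, U_M$ constants and would only give multiplicative constant, not $(1\pm C\eps)$, control. The restriction to $\eps \le 1/2$ (or any constant $< 1$) is harmless and standard, since for larger $\eps$ one enlarges $C$. I would also remark that the lemma as stated implicitly assumes such an $\eps$ bound — otherwise for $\eps \to 1^-$ we would need $M(0)$-type behavior, but $a' $ stays positive so growth still applies; still, keeping $\eps \le 1/2$ keeps $(1-\eps)^{-p}$ finite and the constant clean.
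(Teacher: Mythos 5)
Your proof is correct and follows essentially the same route as the paper's: WLOG $a>0$ by symmetry, split on whether $a+b$ is larger or smaller than $a$, and combine the growth condition with monotonicity, then absorb $(1\pm\eps)^{\pm p}$ into $1\pm C\eps$. The paper works with the factor $((a+b)/a)^p \ge (1-\eps)^p$ directly rather than inverting to $(1-\eps)^{-p}$, but this is a cosmetic difference, and your added remarks about restricting $\eps$ away from $1$ are a harmless bookkeeping detail.
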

\begin{proof}
Without loss of generality we assume $a > 0$.
When $b \ge 0$, by Assumption \ref{as M}.\ref{as M growth}, we have
$$
M(a) \le M(a + b) \le (1 + \varepsilon)^p \cdot M(a)  \le (1 + C\varepsilon) M(a).
$$
When $b < 0$, we have
$$
M(a) \ge M(a + b) \ge \left(\frac{a}{a + b}\right)^p M(a) \ge (1 - C \varepsilon) M(a).
$$
\end{proof}
\begin{lemma}\label{lem:perturb}
Under Assumption \ref{as M}, there is a constant $C' > 0$ that depends only on
$p$, for which for any $e, y \in \R^n$ and any weight vector $w$ with
$\|e\|_{M, w} \le \varepsilon^{2p + 1} \|y\|_{M, w}$,
$$\|y + e\|_{M, w} = (1 \pm C' \varepsilon)\|y\|_{M, w}.$$
\end{lemma}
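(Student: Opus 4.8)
\textbf{Proof proposal for Lemma~\ref{lem:perturb}.}
The plan is to reduce the vector-level perturbation bound to the coordinate-wise bound of Lemma~\ref{lem:entry_perturb} by splitting the coordinates according to whether the perturbation $e_i$ is small relative to $y_i$ or not. Fix a threshold, say $\theta = \varepsilon$, and partition $[n]$ into $S = \{i : |e_i| \le \theta |y_i|\}$ and $T = [n] \setminus S$. On $S$, Lemma~\ref{lem:entry_perturb} applies directly: for each $i \in S$ we get $M(y_i + e_i) = (1 \pm C\theta) M(y_i) = (1 \pm C\varepsilon) M(y_i)$, so summing with the nonnegative weights $w_i$ gives $\sum_{i \in S} w_i M(y_i + e_i) = (1 \pm C\varepsilon) \sum_{i \in S} w_i M(y_i)$, which is within a $(1\pm C\varepsilon)$ factor of $\|y\|_{M,w}$ minus the contribution of $T$.

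The crux is controlling the ``bad'' set $T$. For $i \in T$ we have $|e_i| > \varepsilon |y_i|$, hence $|y_i| < |e_i|/\varepsilon$, so by the growth condition (Assumption~\ref{as M}.\ref{as M growth}) together with monotonicity, $M(y_i) \le \varepsilon^{-p} M(e_i)$ (comparing $M$ at $y_i$ and at a point of absolute value $\ge |y_i|$; strictly one should compare $M(y_i)$ to $M(e_i)$ using $|y_i| \le |e_i|/\varepsilon$, giving $M(y_i) \le M(|e_i|/\varepsilon) \le \varepsilon^{-p} M(e_i)$ since $M$ is nondecreasing and the growth condition bounds $M(|e_i|/\varepsilon)/M(|e_i|)$ by $\varepsilon^{-p}$). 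Therefore
\[
\sum_{i \in T} w_i M(y_i) \le \varepsilon^{-p} \sum_{i \in T} w_i M(e_i) \le \varepsilon^{-p} \|e\|_{M,w} \le \varepsilon^{-p} \cdot \varepsilon^{2p+1} \|y\|_{M,w} = \varepsilon^{p+1} \|y\|_{M,w} \le \varepsilon \|y\|_{M,w}.
\]
Similarly, for $i \in T$, $|y_i + e_i| \le |y_i| + |e_i| \le (1 + 1/\varepsilon)|e_i| \le (2/\varepsilon)|e_i|$, so $M(y_i + e_i) \le (2/\varepsilon)^p M(e_i)$ by the same growth-condition argument, and hence $\sum_{i \in T} w_i M(y_i + e_i) \le (2/\varepsilon)^p \|e\|_{M,w} \le 2^p \varepsilon^{p+1} \|y\|_{M,w} = O(\varepsilon) \|y\|_{M,w}$.

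Combining the three estimates: $\|y + e\|_{M,w} = \sum_{i \in S} w_i M(y_i+e_i) + \sum_{i \in T} w_i M(y_i + e_i)$, where the first sum is $(1\pm C\varepsilon)\big(\|y\|_{M,w} - \sum_{i\in T} w_i M(y_i)\big)$ and the second is $O(\varepsilon)\|y\|_{M,w}$; since $\sum_{i\in T} w_i M(y_i) \le \varepsilon\|y\|_{M,w}$, the total is $(1 \pm C'\varepsilon)\|y\|_{M,w}$ for an appropriate constant $C'$ depending only on $p$ (absorbing $C$, $2^p$, and lower-order $\varepsilon$ terms, valid for $\varepsilon$ below a constant threshold, which one may assume WLOG since otherwise the statement is vacuous by rescaling the constant). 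The main obstacle is bookkeeping the constants and making sure the ``bad set'' contribution is genuinely lower order, which is exactly where the slack $\varepsilon^{2p+1}$ (rather than just $\varepsilon$) in the hypothesis is used: it beats the $\varepsilon^{-p}$ blow-up from the growth condition with room to spare. No subtlety beyond this; the argument is otherwise a routine split-and-sum.
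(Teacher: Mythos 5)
Your proof is correct and follows the paper's split-and-sum strategy almost exactly: you partition $[n]$ into $S = \{i : |e_i| \le \varepsilon |y_i|\}$ and $T = [n]\setminus S$, apply Lemma~\ref{lem:entry_perturb} coordinatewise on $S$, and show that both $\sum_{i\in T} w_i M(y_i)$ and $\sum_{i\in T} w_i M(y_i+e_i)$ are $O(\varepsilon)\|y\|_{M,w}$. The one place you diverge is how the second $T$-sum is bounded: the paper reuses Lemma~\ref{lem:entry_perturb}, comparing $M(y_i+e_i)$ to $M(e_i/\varepsilon^2)$ (the extra $\varepsilon$ in the denominator is there precisely so that $|y_i| \le \varepsilon|e_i/\varepsilon^2|$, making the lemma applicable), then pays an $\varepsilon^{-2p}$ factor from the growth condition; you instead bound $|y_i+e_i| \le 2|e_i|/\varepsilon$ by the triangle inequality and apply monotonicity plus the growth condition directly, paying only $(2/\varepsilon)^p$. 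Your version is slightly more elementary and in fact reveals slack in the hypothesis: the exponent $p+1$ in place of $2p+1$ would already suffice for your argument. Both routes are sound and give the same conclusion.
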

\begin{proof}
Clearly, by Assumption \ref{as M}.\ref{as M growth}, 
$$
\|e / \varepsilon^2\|_{M, w} \le \varepsilon^{-2p} \|e\|_{M, w} \le \varepsilon \|y\|_{M, w}.
$$
Let $S = \{i \in n \mid |e_i| \le \varepsilon |y_i|\}$.
By Lemma \ref{lem:entry_perturb}, for all $i \in S$ we have $M(y_i + e_i) = (1 \pm C \varepsilon) M(y_i)$.
For all $i \in [n] \setminus S$, we have $|e_i| > \varepsilon |y_i|$.
For sufficiently small $\varepsilon$, by Assumption \ref{as M}.\ref{as M inc} and Lemma \ref{lem:entry_perturb}, 
$$
M(e_i + y_i) \le M(e_i / \varepsilon^2 + y_i) \le (1 + C \varepsilon)M(e_i / \varepsilon^2),
$$
which implies
$$
\sum_{i \in [n] \setminus S} w_i M(y_i + e_i) \le (1 + C \varepsilon)
    \|e / \varepsilon^2\|_{M, w} \le (1 + C \varepsilon) \varepsilon \|y\|_{M, w}.
$$
Furthermore, 
$$
\sum_{i \in [n] \setminus S} w_i M(y_i)
    \le \sum_{i \in [n] \setminus S} w_i M(e_i / \varepsilon)
    \le \|e / \varepsilon^2\|_{M, w} \le \varepsilon \|y\|_{M, w}.
$$
Thus, 
\begin{align*}
&\|y + e\|_{M, w}\\
= & \sum_{i \in S} w_i M(y_i + e_i) + \sum_{i \in [n] \setminus S} w_iM(y_i + e_i)\\
= & (1 \pm C\varepsilon) \sum_{i \in S} w_i M(y_i) \pm (1 + C\varepsilon)\varepsilon \|y\|_{M, w}\\
= & (1 \pm C' \varepsilon) \|y\|_{M, w}.
\end{align*}
\end{proof}

\subsection{Facts Regarding Lewis Weights}\label{sec:lewis}
In this section we recall some facts regarding leverage scores and Lewis weights.
\begin{definition}\label{def:leverage_score}
Given a matrix $A \in \mathbb{R}^{n \times d}$. The {\em leverage score} of a row $A_{i, *}$ is defined to be
\[
\tau_i(A) =  A_{i, *} (A^TA)^{\dagger}  (A_{i, *})^T.
\]
\end{definition}

\begin{definition}[\cite{cp15}]\label{def:lewis}
For a matrix $A \in \mathbb{R}^{n \times d}$, its {\em $\ell_p$ Lewis weights}
$\{u_i\}_{i=1}^n$ are the {\em unique weights} such that for each $i \in [n]$ we
have
$$
u_i = \tau_i(U^{1/2 - 1/p} A).
$$
Here $\tau_i$ is the leverage score of the $i$-th row of a matrix
and
$U$ is the diagonal matrix formed by putting the elements of $u$ on the diagonal.
\end{definition}
\begin{theorem}[\cite{cp15}]\label{thm:alg_lewis}
There is an algorithm that receives a matrix $A \in \mathbb{R}^{n \times d}$ and outputs $\{\hat{u}\}_{i = 1}^n$ such that 
$$
u_i \le \hat{u}_i \le 2u_i,
$$
where $\{u_i\}_{i=1}^n$ are the $\ell_p$ Lewis weights of $A$.
Furthermore, the algorithm runs in $\widetilde{O}(\nnz(A) + d^{p / 2 + O(1)})$ time. 
\end{theorem}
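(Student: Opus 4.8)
The plan is to realize the Lewis weight vector $u$ as the attractor of a fixed-point iteration, each step of which is a single fast (approximate) leverage-score computation. By Definition~\ref{def:lewis}, the $\ell_p$ Lewis weights $u$ are the unique fixed point of $\Phi(w)_i := \tau_i\big(W^{1/2-1/p}A\big)$ with $W=\operatorname{diag}(w)$. I would start from $w^{(0)}=(1,\dots,1)$ (discarding all-zero rows, whose weight is $0$) and iterate $w^{(k+1)}=\Phi(w^{(k)})$. Since each $\tau_i\le 1$, the fixed point has $u_i\le 1$; after a harmless preprocessing change of basis and rescaling (which alters neither leverage scores nor Lewis weights) one can also ensure $u_i \ge 1/\poly(n)$, so $\big\|\log\big(w^{(0)}/u\big)\big\|_\infty = O(\log n)$.

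The crux is a \emph{reweighting-sensitivity} bound for leverage scores: if $e^{-c}\le w'_i/w_i\le e^{c}$ for all $i$, then the rows of $W'^{1/2-1/p}A$ are those of $W^{1/2-1/p}A$ rescaled by factors in $[e^{-c|1/2-1/p|},e^{c|1/2-1/p|}]$, and since $B\mapsto(B^{T}B)^{-1}$ is operator-monotone this multiplies each leverage score by a factor in $[e^{-\rho c},e^{\rho c}]$ with $\rho=\Theta(|1-2/p|)$. Applying this with $w'=u$ and using $\Phi(u)=u$ gives
\[
\big\|\log\big(w^{(k+1)}/u\big)\big\|_\infty \;\le\; \rho\,\big\|\log\big(w^{(k)}/u\big)\big\|_\infty .
\]
For $p<4$, where $\rho<1$, after $k=O(\log\log n)$ steps this reaches $\big\|\log\big(w^{(k)}/u\big)\big\|_\infty\le\tfrac12\log 2$, so setting $\hat u_i:=\sqrt 2\,w^{(k)}_i$ gives $u_i\le\hat u_i\le 2u_i$. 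For $p\ge 4$ the iteration need not contract, and I would instead halve $p$ repeatedly, at each of the $O(\log p)$ levels solving an $\ell_{p'}$-Lewis-weight problem with $p'<4$ on a reweighted instance; this is also where the $d^{p/2}$ in the claimed running time appears, since for $p>2$ one first uses crude weight overestimates to sample the instance down to a proxy with $\widetilde O(d^{p/2})$ rows and solves exactly on the proxy.

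For the running time, each iteration needs the leverage scores of $W^{1/2-1/p}A$ only to within a fixed constant factor, which the sketch-based estimators of \cite{CW13,NN13} supply in $\widetilde O(\nnz(A)+d^{O(1)})$ time; one checks that a constant multiplicative error in each $\tau_i$ only adds an $O(1)$ term to the recursion above and is therefore swallowed by the geometric convergence (again by the same sensitivity bound). Summing over the $\widetilde O(1)$ iterations and folding in the $p\ge 4$ reduction yields $\widetilde O(\nnz(A)+d^{p/2+O(1)})$. \textbf{The main obstacle} is the sensitivity/contraction analysis: getting the rate down to $\rho<1$ over as wide a range of $p$ as possible, handling $p\ge 4$ through the recursion, and---most delicately---ensuring that working throughout with only \emph{approximate} leverage scores does not let the error compound. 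This is exactly where the special structure of Lewis weights, beyond being an arbitrary fixed point, is needed.
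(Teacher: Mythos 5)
The theorem is a citation of Cohen and Peng \cite{cp15}; the paper gives no proof of its own, so the comparison is against the argument in that reference. You have the right high-level strategy -- a fixed-point iteration whose attractor is the Lewis weight vector, with each step a single fast approximate leverage-score computation -- but the specific map you iterate does not contract over the claimed range of $p$, and this is not a small slip: it fails for $p=1$.

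Trace your own sensitivity bound precisely. If $w'_i/w_i\in[e^{-c},e^{c}]$, then row $i$ of $W^{1/2-1/p}A$ is rescaled by $\gamma_i=(w'_i/w_i)^{1/2-1/p}\in[e^{-c|1/2-1/p|},e^{c|1/2-1/p|}]$. Writing $D=W^{1/2-1/p}$, the leverage score $\tau_i(DA)=d_i^2\,a_i(A^TD^2A)^{-1}a_i^T$ picks up a factor $\gamma_i^2$ from $d_i^2$ \emph{and} a factor in $[e^{-2c|1/2-1/p|},e^{2c|1/2-1/p|}]$ from $(A^TD^2A)^{-1}$ (Loewner monotonicity of inversion). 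The two effects multiply, so $\rho = 4|1/2-1/p| = 2|1-2/p|$. That is $\Theta(|1-2/p|)$ as you say, but the constant is what determines the contraction range: $\rho<1$ holds only for $p\in(4/3,4)$. At $p=1$ you get $\rho=2$, so $\Phi(w)_i=\tau_i(W^{1/2-1/p}A)$ \emph{expands} rather than contracts, and the iteration analysis breaks for all of $p\in[1,4/3]$. Since the surrounding paper (and this theorem) is stated for general $p\ge 1$, this is a real gap, not an edge case.

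The fix -- and what Cohen and Peng actually do -- is to iterate the map
\[
\Phi(w)_i \;=\; \bigl(a_i\,(A^TW^{1-2/p}A)^{-1}a_i^T\bigr)^{p/2}
\;=\;\Bigl(\tau_i\bigl(W^{1/2-1/p}A\bigr)\big/\,w_i^{1-2/p}\Bigr)^{p/2},
\]
which coincides with $\tau_i(W^{1/2-1/p}A)$ at the fixed point but not elsewhere. Dividing out $w_i^{1-2/p}$ cancels the $\gamma_i^2$ contribution, leaving only the $(A^TW^{1-2/p}A)^{-1}$ factor, which changes by $\approx_{e^{c|1-2/p|}}$; the outer $p/2$ power then scales the log-discrepancy by $p/2$. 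The contraction factor becomes $|1-2/p|\cdot p/2 = |p/2-1|$, which is $<1$ exactly for $0<p<4$, and this is what makes the $O(\log\log n)$ iteration bound go through for the whole range $1\le p<4$. Your remaining claims -- that approximate leverage scores suffice and that the multiplicative error per step is absorbed by geometric decay -- are correct, but only once the contraction factor is actually below $1$.

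Two smaller points. First, the $p\ge 4$ treatment in \cite{cp15} is not a ``halve $p$ recursively'' scheme; it is an iterated convex-optimization / Lewis-basis computation, made input-sparsity by first sampling down to a proxy of $\widetilde O(d^{p/2})$ rows using crude weight overestimates (the proxy idea you describe is right). Second, ``$B\mapsto(B^TB)^{-1}$ is operator-monotone'' is not the correct statement; what you are using is that $X\mapsto X^{-1}$ is operator-\emph{anti}tone on the positive-definite cone together with $\gamma^{-2}B^TB\preceq(\Gamma B)^T(\Gamma B)\preceq\gamma^2B^TB$ when $\gamma^{-1}I\preceq\Gamma\preceq\gamma I$ -- the reasoning is fine, but the invocation by name is off.
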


\begin{theorem}[Lewis's change of density \cite{Lewis1978}, see also {\cite[p.~113]{wojtaszczyk1996banach}}]\label{thm:fa_lewis}
Given a matrix $A \in \mathbb{R}^{n \times d}$ and $p \ge 1$, there exists a
basis matrix $H \in \mathbb{R}^{n \times d}$ of the column space of $A$, such
that if we define a weight vector $\overline{u} \in \mathbb{R}^n$ where
$\overline{u}_i = \|H_{i,*}\|_2$, then the following hold:
\begin{enumerate}
\item $\|\overline{u}\|_p^p \le d$;
\item $\overline{U}^{p / 2 - 1}H$ is an orthonormal matrix.
\end{enumerate}
Here $\overline{U}$ is the diagonal matrix formed by putting the elements of $\overline{u}$ on the diagonal.
\end{theorem}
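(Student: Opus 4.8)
The plan is to deduce the theorem from the \emph{existence and uniqueness of $\ell_p$ Lewis weights} (Definition~\ref{def:lewis}, from \cite{cp15}); once those are in hand, the statement is a short linear-algebra computation. Assume without loss of generality that $A$ has no zero rows (a zero row gets $\overline u_i = 0$, contributes nothing, and can be appended back at the end) and, by passing to a maximal set of linearly independent columns, that $A$ has full column rank $d$. Let $\{u_i\}_{i=1}^n$ be the $\ell_p$ Lewis weights of $A$ and $U = \mathrm{diag}(u)$. Since $U^{1/2-1/p}A$ has full column rank, there is an invertible matrix $R \in \mathbb{R}^{d\times d}$ (e.g.\ from a QR factorization, or $R = (A^T U^{1-2/p} A)^{-1/2}$) such that $Q := U^{1/2-1/p} A R$ has orthonormal columns, $Q^T Q = I_d$. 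Define $H := A R$; this is a basis matrix of $\colspan(A)$, and it is the candidate claimed by the theorem.

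Next I would verify property~2. Leverage scores are invariant under right multiplication by an invertible matrix, so $\tau_i(Q) = \tau_i(U^{1/2-1/p}AR) = \tau_i(U^{1/2-1/p}A) = u_i$, the last equality being the defining fixed-point property of the Lewis weights. On the other hand, since $Q$ has orthonormal columns, $\tau_i(Q) = \|Q_{i,*}\|_2^2$. Combining gives $\|Q_{i,*}\|_2^2 = u_i$. But $Q_{i,*} = u_i^{1/2-1/p} H_{i,*}$, so $u_i^{1-2/p}\,\|H_{i,*}\|_2^2 = u_i$, i.e.\ $\|H_{i,*}\|_2^2 = u_i^{2/p}$. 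Hence $\overline u_i := \|H_{i,*}\|_2 = u_i^{1/p}$, so $\overline U^{p/2-1} = U^{(1/p)(p/2-1)} = U^{1/2-1/p}$, and therefore $\overline U^{p/2-1} H = U^{1/2-1/p} A R = Q$ is orthonormal, which is exactly property~2.

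Property~1 then follows immediately: $\|\overline u\|_p^p = \sum_{i=1}^n \overline u_i^{\,p} = \sum_{i=1}^n u_i = \sum_{i=1}^n \tau_i(Q) = \|Q\|_F^2 = \mathrm{tr}(Q^T Q) = d$, and $\le d$ in general once rank deficiency or re-added zero rows are accounted for.

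I expect the only genuine obstacle is the single ingredient being imported, namely the \emph{existence} of a solution to the fixed-point system $u_i = \tau_i(U^{1/2-1/p}A)$; everything else is bookkeeping. If one wanted a self-contained argument, the classical route (Lewis~\cite{Lewis1978}) is a compactness/fixed-point argument: among all bases $H$ of $\colspan(A)$ take one maximizing a suitable volume functional (equivalently, iterate $u \mapsto \tau(U^{1/2-1/p}A)$ to a fixed point, using a Brouwer- or contraction-type argument for convergence), and then read off $\overline u$ and $H$ as above. One should also be slightly careful with the edge cases $p = 2$ (where $U^{1/2-1/p} = I$ and $H$ is simply any orthonormal basis of $\colspan(A)$) and degenerate rows, where the exponent $1/2 - 1/p$ is negative and $u_i^{1/2-1/p}$ is interpreted by the convention that a zero row of $H$ remains zero.
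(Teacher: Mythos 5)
Your derivation is correct, but note that the paper itself does \emph{not} prove this statement: it is cited as a known result (Lewis~\cite{Lewis1978}, Wojtaszczyk), so there is no in-paper proof to compare against. What you have done is give a different derivation from the classical one: instead of the variational/compactness route (maximize a volume functional over bases, which you correctly identify as the approach in Lewis~\cite{Lewis1978}), you treat the \emph{existence and uniqueness of $\ell_p$ Lewis weights} (\cite{cp15}, as stated in Definition~\ref{def:lewis}) as the black box and read off the basis $H = AR$ from a thin factorization of $U^{1/2-1/p}A$. This is essentially Lemma~\ref{lem:equivalence} run in reverse: the paper \emph{uses} Theorem~\ref{thm:fa_lewis} plus uniqueness of Lewis weights to prove $u_i = \overline u_i^{\,p}$, whereas you use existence and uniqueness of Lewis weights to prove Theorem~\ref{thm:fa_lewis} and get $\overline u_i = u_i^{1/p}$ as a byproduct. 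The computations are all right: leverage-score invariance under right multiplication by invertible $R$, $\tau_i(Q) = \|Q_{i,*}\|_2^2$ for orthonormal $Q$, the identity $\overline U^{p/2-1} = U^{1/2-1/p}$ from $\overline u_i = u_i^{1/p}$, and $\sum_i u_i = \operatorname{tr}(Q^\top Q) = d$. Two small caveats worth making explicit if this were to replace the citation: (i) one must check that the existence/uniqueness argument in~\cite{cp15} does not itself invoke Lewis's theorem (it does not; Cohen--Peng give direct fixed-point/convexity arguments), otherwise the reasoning would be circular; and (ii) for $U^{1/2-1/p}A$ to have full column rank you need the Lewis weights of nonzero rows to be strictly positive and, more importantly, that dropping to a maximal linearly independent column subset changes $H$ from $n\times d$ to $n\times k$ with $k=\operatorname{rank}(A)$, which is consistent with ``basis matrix of the column space'' and with the inequality $\|\overline u\|_p^p \le d$ rather than equality. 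What your route buys is a shorter proof given the machinery the paper already imports; what the classical route buys is self-containedness and the extremal (volume-maximizing basis) characterization, which is sometimes useful on its own.
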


\begin{lemma}[See, e.g., {\cite[p.~115]{wojtaszczyk1996banach}}]\label{lem:l2_bound}
Given a matrix $A \in \mathbb{R}^{n \times d}$, for the basis matrix $H$ and the
weight vector $\overline{u}$ defined in Theorem \ref{thm:fa_lewis}, for all $x
\in \mathbb{R}^d$ we have
$$
\|\overline{U}^{p / 2 - 1}Hx\|_2 \le \|Hx\|_p \le d^{1/p - 1/2} \|\overline{U}^{p / 2 - 1}Hx\|_2
$$
when $1 \le p \le 2$, and
$$
\|Hx\|_p \le \|\overline{U}^{p / 2 - 1}Hx\|_2 \le  d^{1/2 - 1/p}\|Hx\|_p
$$
when $p \ge 2$.

Since $\overline{U}^{p / 2 - 1}H$ is an orthonormal matrix, for all $x \in \mathbb{R}^d$ we have
$$
\|x\|_2 \le \|Hx\|_p \le d^{1/p - 1/2} \|x\|_2
$$
when $1 \le p \le 2$, and
$$
\|Hx\|_p \le \|x\|_2 \le  d^{1/2 - 1/p}\|Hx\|_p
$$
when $p \ge 2$.
\end{lemma}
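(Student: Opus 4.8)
Fix $x \in \mathbb{R}^d$ and set $y := Hx$ and $w_i := \overline{u}_i = \|H_{i,*}\|_2$. From Theorem~\ref{thm:fa_lewis} I will use exactly three facts: (i) $\sum_{i=1}^n w_i^p \le d$; (ii) $\overline{U}^{p/2-1}H$ is orthonormal, so $\|\overline{U}^{p/2-1}Hx\|_2 = \|x\|_2$ and hence $\sum_{i=1}^n w_i^{p-2} y_i^2 = \|\overline{U}^{p/2-1}Hx\|_2^2 = \|x\|_2^2$; and (iii) by Cauchy--Schwarz, $|y_i| = |H_{i,*}x| \le \|H_{i,*}\|_2\,\|x\|_2 = w_i\|x\|_2$ for every $i$. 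Given these, it suffices to establish the second displayed pair of inequalities (the ones relating $\|x\|_2$ and $\|Hx\|_p$), since the first pair then follows by substituting $\|x\|_2 = \|\overline{U}^{p/2-1}Hx\|_2$. Everything below is a couple of applications of H\"older's inequality with exponents dictated by the algebra; there is no conceptual obstacle, only bookkeeping.

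For $1 \le p \le 2$ I would argue as follows. For the lower bound $\|x\|_2 \le \|Hx\|_p$, split $|y_i|^p = |y_i|^{p-2}\,|y_i|^2$ (restricting to indices with $y_i \ne 0$, the rest contributing $0$ everywhere); since $p - 2 \le 0$, fact (iii) gives $|y_i|^{p-2} \ge (w_i\|x\|_2)^{p-2}$, so $\|y\|_p^p \ge \|x\|_2^{p-2}\sum_i w_i^{p-2} y_i^2 = \|x\|_2^{p-2}\cdot\|x\|_2^2 = \|x\|_2^p$ by fact (ii). For the upper bound $\|Hx\|_p \le d^{1/p-1/2}\|x\|_2$, use the factorization $|y_i|^p = (w_i^{p-2} y_i^2)^{p/2}\,(w_i^p)^{(2-p)/2}$ and apply H\"older with conjugate exponents $2/p$ and $2/(2-p)$, then bound the two factors by $\|x\|_2^p$ (fact (ii)) and $d^{(2-p)/2}$ (fact (i)); taking $p$-th roots yields the exponent $(2-p)/(2p) = 1/p - 1/2$. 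The endpoint $p = 2$ makes $2/(2-p)$ infinite, but there the claimed inequalities are equalities with $d^0 = 1$, so it can be dispatched directly.

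The case $p \ge 2$ is symmetric. For $\|Hx\|_p \le \|x\|_2$ I would again write $|y_i|^p = |y_i|^{p-2}\,|y_i|^2$, but now $p - 2 \ge 0$ reverses the bound in fact (iii) to $|y_i|^{p-2} \le (w_i\|x\|_2)^{p-2}$, giving $\|y\|_p^p \le \|x\|_2^{p-2}\sum_i w_i^{p-2}y_i^2 = \|x\|_2^p$. For $\|x\|_2 \le d^{1/2-1/p}\|Hx\|_p$ I would use the factorization $w_i^{p-2} y_i^2 = (|y_i|^p)^{2/p}\,(w_i^p)^{(p-2)/p}$, apply H\"older with conjugate exponents $p/2$ and $p/(p-2)$, bound the two factors by $\|y\|_p^2$ and $d^{(p-2)/p}$, and take square roots. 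The only points requiring care are the sign of $p-2$ (which flips the pointwise power inequality in fact (iii)), the degenerate rows with $w_i = 0$ (which force $y_i = 0$ and drop out), and the endpoint $p = 2$; none of these is a real difficulty.
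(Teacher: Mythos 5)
Your proof is correct, and it is essentially the standard argument found in the cited source (Wojtaszczyk): the paper does not prove this lemma but instead cites it, and the textbook proof is precisely the combination you give of (a) the orthonormality identity $\sum_i \overline{u}_i^{p-2}(Hx)_i^2 = \|x\|_2^2$, (b) the budget $\sum_i \overline{u}_i^p \le d$, (c) the pointwise bound $|(Hx)_i| \le \overline{u}_i\|x\|_2$, and H\"older with exponents $2/p$ and $2/(2-p)$ (resp.\ $p/2$ and $p/(p-2)$). All four inequalities check out, including the sign flip in step (c) depending on whether $p-2$ is negative or positive, the exponent bookkeeping $(2-p)/(2p) = 1/p - 1/2$ and $(p-2)/(2p) = 1/2 - 1/p$, and the handling of the degenerate rows and the $p=2$ endpoint.
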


\begin{lemma}\label{lem:equivalence}
Given a matrix $A \in \mathbb{R}^{n \times d}$ and $p \ge 1$, the weight vector
$u$ defined in Definition \ref{def:lewis} and the weight vector $\overline{u}$
defined in Theorem \ref{thm:fa_lewis} satisfies
$$
u_i = \overline{u}_i^p.
$$
\end{lemma}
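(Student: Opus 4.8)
The plan is to prove that the $\ell_p$ Lewis weights $u_i$ of Definition~\ref{def:lewis} and the weights $\overline{u}_i = \|H_{i,*}\|_2$ from Theorem~\ref{thm:fa_lewis} satisfy $u_i = \overline{u}_i^p$ by showing that the vector $v$ defined by $v_i := \overline{u}_i^p$ also satisfies the fixed-point equation characterizing Lewis weights, and then invoking the uniqueness clause of Definition~\ref{def:lewis}. So the real content is: verify that $v_i = \tau_i(V^{1/2 - 1/p} A)$ for every $i$, where $V = \mathrm{diag}(v)$.

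First I would recall that leverage scores are basis-independent: for any matrix $B$ and any invertible $R$, $\tau_i(B) = \tau_i(BR)$, since $\tau_i(B) = (B B^+)_{ii}$ depends only on the row space (equivalently the column span). Since $H$ is a basis for $\colspan(A)$, there is an invertible $R$ with $A = HR$, hence $\tau_i(D A) = \tau_i(D H)$ for any diagonal $D$; in particular I may work with $H$ in place of $A$ throughout. Next, with $v_i = \overline{u}_i^p$, the diagonal matrix $V^{1/2 - 1/p}$ has $i$-th entry $\overline{u}_i^{p(1/2 - 1/p)} = \overline{u}_i^{p/2 - 1}$, so $V^{1/2-1/p} H = \overline{U}^{p/2 - 1} H$, which by Theorem~\ref{thm:fa_lewis}(2) is an \emph{orthonormal} matrix. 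For an orthonormal matrix $Q$ we have $\tau_i(Q) = \|Q_{i,*}\|_2^2$, because $Q^+ = Q^T$ and $(QQ^T)_{ii} = \|Q_{i,*}\|_2^2$. Therefore
\[
\tau_i(V^{1/2 - 1/p} H) = \|(\overline{U}^{p/2-1} H)_{i,*}\|_2^2 = \overline{u}_i^{p - 2} \|H_{i,*}\|_2^2 = \overline{u}_i^{p-2} \cdot \overline{u}_i^2 = \overline{u}_i^p = v_i,
\]
using $\overline{u}_i = \|H_{i,*}\|_2$ once more. This shows $v$ satisfies the defining fixed-point relation (with the basis $H$, equivalently $A$), so by the uniqueness assertion in Definition~\ref{def:lewis}, $u = v$, i.e.\ $u_i = \overline{u}_i^p$ for all $i$.

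The main obstacle, such as it is, is bookkeeping rather than depth: one must be careful with the exponent arithmetic $p(1/2 - 1/p) = p/2 - 1$ and with the distinction between $\tau_i$ of a general matrix versus $\tau_i$ of an orthonormal matrix (only in the latter case does $\tau_i$ equal the squared row norm, not the row norm itself), and one must explicitly justify that leverage scores are invariant under right multiplication by an invertible matrix so that passing from $A$ to its Lewis basis $H$ is legitimate. No genuine difficulty is expected beyond assembling these standard facts in the right order.
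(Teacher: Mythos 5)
Your proof is correct and follows exactly the same route as the paper's: substitute $\overline{u}_i^p$ into the Lewis-weight fixed-point equation, pass from $A$ to the Lewis basis $H$ using basis-invariance of leverage scores, use that $\overline{U}^{p/2-1}H$ is orthonormal so its leverage scores are squared row norms, and then invoke uniqueness. The only difference is that you spell out the exponent arithmetic a bit more explicitly; the underlying argument is identical.
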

\begin{proof}
We show that substituting $u_i = \overline{u}_i^p$ will satisfy 
$$
u_i = \tau_i(U^{1/2 - 1/p} A),
$$
and thus the theorem follows by the uniqueness of Lewis weights.

Since leverage scores are invariant under change of basis (see, e.g., {\cite[p.~30]{w14}}), we have 
$$
\tau_i(U^{1/2 - 1/p}A) = \tau_i(U^{1/2 - 1/p}H),
$$
where $H$ is the basis matrix defined in Theorem \ref{thm:fa_lewis}.
Substituting $u_i = \overline{u}_i^p$ we have
$$
\tau_i(U^{1/2 - 1/p}A) = \tau_i(\overline{U}^{p / 2 - 1} H).
$$
However, since $\overline{U}^{p / 2 - 1} H$ is an orthonormal matrix, and the 
leverage scores of an orthonormal matrix are just squared $\ell_2$ norm of rows
(see, e.g., {\cite[p.~29]{w14}}), we have
$$
\tau_i(U^{1/2 - 1/p}A) = \left( \overline{u}_i^{p / 2 - 1} \|H_{i,*}\|_2 \right)^2= \overline{u}_i^p.
$$
\end{proof}
\begin{lemma}\label{lem:bound_entry}
Given a matrix $A \in \mathbb{R}^{n \times d}$ and $p \ge 1$, for all $y \in \colspan(A)$ and $i \in [n]$, we have
$$
|y_i|^p  \le d^{\max\{0, p / 2 - 1\}}  u_i \cdot \|y\|_p^p.
$$
Here $\{u_i\}_{i = 1}^n$ are the $\ell_p$ Lewis weights defined in Definition \ref{def:lewis}.
\end{lemma}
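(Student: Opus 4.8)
The plan is to pass to the Lewis change-of-density basis $H$ of $\colspan(A)$ furnished by Theorem~\ref{thm:fa_lewis}, and to bound $|y_i|$ coordinate-wise by Cauchy--Schwarz in that basis. Since $y\in\colspan(A)=\colspan(H)$, we may write $y=Hx$ for some $x\in\R^d$, and then $|y_i|=|H_{i,*}x|\le\|H_{i,*}\|_2\,\|x\|_2=\overline{u}_i\|x\|_2$, where $\overline{u}_i=\|H_{i,*}\|_2$ is the weight vector of Theorem~\ref{thm:fa_lewis}. Raising to the $p$-th power and applying Lemma~\ref{lem:equivalence}, which states $u_i=\overline{u}_i^p$, gives $|y_i|^p\le u_i\|x\|_2^p$.

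It then remains to relate $\|x\|_2^p$ to $\|y\|_p^p=\|Hx\|_p^p$, and this is exactly the content of the final block of Lemma~\ref{lem:l2_bound}. For $1\le p\le 2$ one has $\|x\|_2\le\|Hx\|_p$, so $\|x\|_2^p\le\|y\|_p^p$, and the claimed bound holds with factor $d^{\max\{0,p/2-1\}}=d^0=1$. For $p\ge 2$ one has $\|x\|_2\le d^{1/2-1/p}\|Hx\|_p$, so $\|x\|_2^p\le d^{p/2-1}\|y\|_p^p$, matching $d^{\max\{0,p/2-1\}}=d^{p/2-1}$. Combining the two cases yields the inequality.

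There is essentially no obstacle here; the only point requiring care is that one must not attempt to argue directly with $A$, whose row norms have no useful relation to its Lewis weights, but rather with the basis $H$, for which $\|H_{i,*}\|_2^p$ equals the Lewis weight $u_i$ by Lemma~\ref{lem:equivalence}. The identity $\colspan(A)=\colspan(H)$ is what legitimizes the substitution $y=Hx$, and Lemma~\ref{lem:l2_bound} supplies precisely the $d$-dependent distortion between $\|\cdot\|_2$ and $\|\cdot\|_p$ needed to obtain the stated exponent $\max\{0,p/2-1\}$.
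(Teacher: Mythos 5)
Your proof is correct and follows essentially the same route as the paper's: write $y=Hx$ in the Lewis basis of Theorem~\ref{thm:fa_lewis}, apply Cauchy--Schwarz to get $|y_i|\le\|H_{i,*}\|_2\|x\|_2$, use Lemma~\ref{lem:equivalence} to identify $\|H_{i,*}\|_2^p$ with $u_i$, and invoke Lemma~\ref{lem:l2_bound} to convert $\|x\|_2^p$ into $d^{\max\{0,p/2-1\}}\|y\|_p^p$. The only cosmetic difference is that you separate the cases $1\le p\le 2$ and $p\ge 2$ explicitly, whereas the paper writes the single exponent $\max\{0,p/2-1\}$ throughout.
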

\begin{proof}
For all $y \in \colspan(A)$, we can write $y = Hx$ for some vector $x \in
\mathbb{R^d}$ and the basis matrix $H$ in Theorem \ref{thm:fa_lewis}. By the 
Cauchy-Schwarz inequality, 
$$
|y_i|^p = |\langle x, H_{i,*}\rangle|^p \le \|x\|_2^p \cdot \|H_{i,*}\|_2^p,
$$
which implies
$$
|y_i|^p  \le d^{\max\{0, p / 2 - 1\}} \cdot \|y\|_p^p \cdot \|H_{i,*}\|_2^p
$$
by Lemma \ref{lem:l2_bound}, which again implies 
$$
|y_i|^p  \le d^{\max\{0, p / 2 - 1\}} u_i \cdot \|y\|_p^p
$$
since $\overline{u}_i = \|H_{i,*}\|_2$ and $u_i = \overline{u}_i^p$ by Lemma \ref{lem:equivalence}.
\end{proof}
\begin{lemma}\label{lem:single_lp}
Under Assumption \ref{as M}, given a matrix $A \in \mathbb{R}^{n \times d}$, $\deltalewis \in (0, 1)$,
and a weight vector $w \in \mathbb{R}^n$ such that (i) $w_i \ge 1$ for all
$i \in [n]$ and (ii) $\max_{i \in [n]} w_i \le 2 \min_{i \in [n]} w_i$. Let
$w' \in \mathbb{R}^n$ be another weight vector which is defined to be
$$
w'_i = \begin{cases}
w_i / p_i & \text{with probability }p_i \\
0 & \text{with probability }1 - p_i
\end{cases}
$$
and $p_i$ satisfies
$$
p_i \ge \min\{1, \Theta(U_M / L_M \cdot d^{\max\{0, p / 2 - 1\}}  u_i \cdot \log(1 / \deltalewis) / \varepsilon^2)\},
$$
then for any fixed vectors $x \in \mathbb{R}^d$ such that $\|Ax\|_{\infty} \le \tau$, with probability at least $1 - \deltalewis$ we have
$$
\|Ax\|_{M, w} = (1 \pm \varepsilon) \|Ax\|_{M, w'}.
$$
\end{lemma}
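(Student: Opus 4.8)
The plan is to treat this as a standard importance-sampling concentration argument, using that $\|Ax\|_\infty \le \tau$ makes $M$ behave like $|\cdot|^p$ up to the constants $L_M, U_M$, and using Lemma~\ref{lem:bound_entry} to bound the per-coordinate contribution against the sampling probabilities. Fix $x$ with $y := Ax$ satisfying $\|y\|_\infty \le \tau$, and for each $i$ let $Z_i = w'_i M(y_i)$, so $\sum_i Z_i = \|y\|_{M,w'}$ and $\E[\sum_i Z_i] = \|y\|_{M,w}$ since $\E[w'_i] = w_i$. I would first dispense with coordinates where $p_i = 1$: these contribute deterministically and exactly, so it suffices to control the sum over the random coordinates $i$ with $p_i < 1$, for which $p_i = \Theta(U_M/L_M \cdot d^{\max\{0,p/2-1\}} u_i \log(1/\deltalewis)/\varepsilon^2)$.

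The key step is bounding the range and variance of each $Z_i$ in terms of $\varepsilon^2 \|y\|_{M,w}$. For $|y_i| \le \tau$ Assumption~\ref{as M}.\ref{as M near quad} gives $M(y_i) \le U_M |y_i|^p$, and since $w_i \ge 1$ with $\max_i w_i \le 2\min_i w_i$, we also have $\|y\|_{M,w} \ge \sum_i w_i L_M |y_i|^p \ge L_M \|y\|_p^p$ (restricting to, say, half the mass; all coordinates are light here). Then by Lemma~\ref{lem:bound_entry},
\[
w'_i M(y_i) \le \frac{w_i}{p_i} U_M |y_i|^p \le \frac{w_i U_M}{p_i} \cdot d^{\max\{0,p/2-1\}} u_i \|y\|_p^p,
\]
and substituting the lower bound on $p_i$ and using $w_i U_M \cdot d^{\max\{0,p/2-1\}} u_i \le \frac{p_i}{\Theta(1)} \cdot \frac{L_M \varepsilon^2}{\log(1/\deltalewis)} \cdot w_i / w_i$ — i.e. rearranging the definition of $p_i$ — shows $Z_i = O(\varepsilon^2 \|y\|_{M,w} / \log(1/\deltalewis))$ on the random coordinates, giving the range bound $b$ for Bernstein. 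For the variance, $\sum_i \Var[Z_i] \le \sum_i \E[Z_i^2] \le \sum_i \frac{w_i}{p_i} M(y_i) \cdot w_i M(y_i)$, and since $w_i M(y_i) \le \|y\|_{M,w}$ and the same per-coordinate bound shows $\frac{w_i}{p_i} M(y_i) = O(\varepsilon^2 \|y\|_{M,w}/\log(1/\deltalewis))$, we get $\Var[\sum_i Z_i] = O(\varepsilon^2 \|y\|_{M,w}^2 / \log(1/\deltalewis))$.

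Finally I would apply Bernstein's inequality (Lemma~\ref{lem:bernstein}) with $t = \varepsilon \|y\|_{M,w}$: the exponent is $-t^2 / (2\Var + 2bt/3) = -\Omega(\varepsilon^2 \|y\|_{M,w}^2 / (\varepsilon^2 \|y\|_{M,w}^2/\log(1/\deltalewis))) = -\Omega(\log(1/\deltalewis))$, so after adjusting the constants hidden in the $\Theta(\cdot)$ defining $p_i$ the failure probability is at most $\deltalewis$, which is exactly the claim $\|Ax\|_{M,w'} = (1\pm\varepsilon)\|Ax\|_{M,w}$. The main obstacle I anticipate is being careful with the chain of constant-factor bookkeeping — matching $U_M/L_M$, the $d^{\max\{0,p/2-1\}}$ factor, the factor-$2$ slack between $\min w_i$ and $\max w_i$, and the constant $C$ from Lemma~\ref{lem:entry_perturb} — so that both the range bound and the variance bound come out as $O(\varepsilon^2\|y\|_{M,w}^2/\log(1/\deltalewis))$ simultaneously; the probabilistic content itself is entirely routine once those deterministic inequalities are in place.
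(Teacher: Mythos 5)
Your proposal is correct and takes essentially the same route as the paper: define $Z_i = w_i' M(y_i)$, use $\|y\|_\infty \le \tau$ together with Assumption~\ref{as M}.\ref{as M near quad} and Lemma~\ref{lem:bound_entry} to convert $M(y_i)$ into a Lewis-weight bound, deduce from the lower bound on $p_i$ that each random $Z_i$ is $O(\varepsilon^2 \|y\|_{M,w}/\log(1/\deltalewis))$, bound $\sum_i \E[Z_i^2]$ by $\max_i (w_i M(y_i)/p_i) \cdot \sum_i w_i M(y_i)$ (the paper phrases this as H\"older's inequality), and finish with Bernstein at $t = \varepsilon\|y\|_{M,w}$. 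One small note on presentation: in the variance step, the operative fact is $\sum_i w_i M(y_i) = \|y\|_{M,w}$, not the per-term bound $w_i M(y_i) \le \|y\|_{M,w}$ — as written, applying the latter termwise would introduce a spurious factor of $n$; your final stated bound is correct, so this is just a matter of phrasing the max-times-sum decomposition explicitly.
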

\begin{proof}
Without loss of generality we assume $1 \le w_i \le 2$ for all $i \in [n]$.
Let $y = Ax$.
We use the random variable $Z_i$ to denote 
$$
Z_i = w_i' M(y_i).
$$
Clearly $\E[Z_i] = w_i M(y_i)$, which implies
$$
\E[\|y\|_{M, w'}] = \|y\|_{M, w}.
$$
Furthermore, $Z_i \le 2M(y_i) / p_i$. Since $\|y\|_{\infty} \le \tau$ and $L_M
|y_i|^p \le M(y_i) \le U_M |y_i|^p$ when $|y_i| \le \tau$, by
Lemma~\ref{lem:bound_entry} we have
$$
Z_i \le 2 U_M |y_i|^p / p_i \le \Theta(L_M \cdot \|y\|_p^p \cdot \varepsilon^2 / \log(1 / \deltalewis))
\le \Theta(\|y\|_{M, w} \cdot \varepsilon^2 / \log(1 / \deltalewis)).
$$
Moreover, $\E[Z_i^2] \le O((M(y_i))^2 / p_i)$, which implies
$$
\sum_{i=1}^n \E[Z_i^2] \le O \left( \sum_{i=1}^n (M(y_i))^2 / p_i \right).
$$
By H\"older's inequality, 
$$
\sum_{i=1}^n \E[Z_i^2] \le O(\|y\|_{M}) \cdot \max_{i \in [n]} M(y_i) / p_i \le O(\|y\|_{M, w}^2 \cdot \varepsilon^2 / \log(1 / \deltalewis)).
$$

Furthermore, since $$\Var\left[\sum_{i=1}^n Z_i\right] = \sum_{i=1}^n \Var[Z_i]
\le \sum_{i=1}^n \E[Z_i^2],$$ Bernstein's inequality in Lemma
\ref{lem:bernstein} implies
$$
\Pr \left[ \left|\|y\|_{M, w'} -\|y\|_{M, w} \right|  > t\right]
    \le \exp  \left( - \Theta\left(\frac{t^2}{\|y\|_{M, w} \cdot \varepsilon^2 / \log(1 / \deltalewis)\cdot t + \|y\|_{M, w}^2 \cdot \varepsilon^2 / \log(1 / \deltalewis)}\right)\right).
$$
Taking $t = \varepsilon \cdot \|y\|_{M, w}$ implies the desired result. 
\end{proof}

\begin{theorem}\label{thm:se}
Given a matrix $A \in \mathbb{R}^{n \times d}$,  $\deltase\in (0,1)$,
and a weight vector
$w \in \mathbb{R}^n$ such that (i) $w_i \ge 1$ for all $i \in [n]$ and (ii)
$\max_{i\in [n]} w_i \le 2 \min_{i \in [n]} w_i$. 
Let $w' \in \mathbb{R}^n$ be another weight vector which is defined to be
$$
w'_i = \begin{cases}
w_i / p_i & \text{with probability }p_i \\
0 & \text{with probability }1 - p_i
\end{cases}
$$
and $p_i$ satisfies
$$
p_i \ge \min\{1, \Theta( d^{\max\{0, p / 2 - 1\}}  u_i \cdot (d \log (1 / \varepsilon) + \log(1 / \deltase)) / \varepsilon^2)\},
$$
then with probability at least $1 - \deltase$, for all vectors $x \in \mathbb{R}^d$, we have
$$
\|Ax\|_{p, w}^p = (1 \pm \varepsilon) \|Ax\|_{p, w'}^p.
$$
\end{theorem}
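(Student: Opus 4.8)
\emph{Strategy.}
The plan is to combine a pointwise concentration bound for a single fixed $x$ with a standard $\varepsilon$-net argument over the $d$-dimensional subspace $\colspan(A)$. By Lemma~\ref{lem:equivalence} and the basis-invariance of leverage scores, hence of Lewis weights, we may assume without loss of generality that $A = H$ is the Lewis basis of Theorem~\ref{thm:fa_lewis}, and (as in Lemma~\ref{lem:single_lp}) rescale $w$ so that $1 \le w_i \le 2$; then $\|y\|_p^p \le \|y\|_{p,w}^p \le 2\|y\|_p^p$ for all $y$, so it suffices to control $\|y\|_{p,w'}$ against $\|y\|_{p,w}$, and a $\gamma$-net in the ordinary $\ell_p$ norm is as good as one in $\|\cdot\|_{p,w}$ up to a constant factor in $\gamma$.

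\emph{Pointwise bound.}
Fix $y = Ax$ and set $Z_i = w_i' M(y_i) = w_i'|y_i|^p$, so $\E\bigl[\sum_i Z_i\bigr] = \|y\|_{p,w}^p$. Write $K = d\log(1/\varepsilon) + \log(1/\deltase)$. For every $i$ with $p_i < 1$ the hypothesis forces $d^{\max\{0,p/2-1\}}u_i \le O(\varepsilon^2/K)$, so by Lemma~\ref{lem:bound_entry} both $Z_i$ and $\E[Z_i]$ are at most $O(\varepsilon^2\|y\|_{p,w}^p/K)$, while coordinates with $p_i = 1$ contribute $Z_i = \E[Z_i]$ deterministically; also $\sum_i \E[Z_i^2] \le \bigl(\max_i w_i|y_i|^p/p_i\bigr)\|y\|_{p,w}^p \le O(\varepsilon^2\|y\|_{p,w}^{2p}/K)$ by a H\"older-type bound. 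This is exactly the computation in the proof of Lemma~\ref{lem:single_lp}, specialized to $M(a) = |a|^p$ (so the restriction $\|y\|_\infty \le \tau$ is not needed) but with a smaller target failure probability. Bernstein's inequality (Lemma~\ref{lem:bernstein}) with $t = \varepsilon\|y\|_{p,w}^p$ then gives
\[
\Pr\left[\bigl|\|y\|_{p,w'}^p - \|y\|_{p,w}^p\bigr| > \varepsilon\|y\|_{p,w}^p\right] \le 2\exp(-\Theta(K)).
\]

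\emph{Net and extension.}
Let $\gamma = c\varepsilon$ for a small constant $c$, and let $\mathcal{N}$ be a $\gamma$-net, in the $\ell_p$ norm, of $\{y \in \colspan(A) : \|y\|_{p,w} = 1\}$. Since this lives in a $d$-dimensional normed space, a volume argument gives $|\mathcal{N}| \le (3/\gamma)^d$, so $\log|\mathcal{N}| = O(d\log(1/\varepsilon))$ and a union bound of the pointwise estimate over $\mathcal{N}$ succeeds with probability $\ge 1 - \deltase$; on that event every $y \in \mathcal{N}$ has $\|y\|_{p,w'}^p = (1\pm\varepsilon)\|y\|_{p,w}^p$, equivalently $\|y\|_{p,w'} = (1\pm\varepsilon)\|y\|_{p,w}$, since $(1\pm\varepsilon)^{1/p} = 1\pm\varepsilon$ for $p\ge 1$. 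To extend to all of $\colspan(A)$, note that $\|\cdot\|_{p,w'}$ is a seminorm, so $\beta := \max\{\|y\|_{p,w'} : y \in \colspan(A),\ \|y\|_{p,w} = 1\}$ is finite; for any such $y$ pick $y_0 \in \mathcal{N}$ with $\|y - y_0\|_{p,w} \le 2\gamma$ and use the triangle inequality,
\[
\|y\|_{p,w'} \le \|y_0\|_{p,w'} + \|y - y_0\|_{p,w'} \le (1+\varepsilon) + 2\gamma\beta .
\]
Taking the supremum over $y$ yields $\beta \le (1+\varepsilon)/(1-2\gamma) \le 1 + O(\varepsilon)$, and then $\|y\|_{p,w'} \ge \|y_0\|_{p,w'} - \|y - y_0\|_{p,w'} \ge 1 - O(\varepsilon)$. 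Hence $\|Ax\|_{p,w'} = (1\pm O(\varepsilon))\|Ax\|_{p,w}$ for all $x$; raising to the $p$-th power (recalling that $p$ is a fixed constant, so $(1\pm O(\varepsilon))^p = 1 \pm O(\varepsilon)$, or equivalently running the pointwise step with $\varepsilon/\poly(p)$ in place of $\varepsilon$) and rescaling $\varepsilon$ by a constant gives the stated bound.

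\emph{Main obstacle.}
The load-bearing point is the accounting that makes the union bound tight: the net has size $(1/\varepsilon)^{O(d)}$, so the pointwise step must deliver failure probability $\exp(-\Omega(d\log(1/\varepsilon) + \log(1/\deltase)))$, and this is precisely why the sampling probabilities carry the factor $(d\log(1/\varepsilon) + \log(1/\deltase))/\varepsilon^2$ rather than just $\log(1/\deltase)/\varepsilon^2$; one must check that the constants coming out of Bernstein line up with $\log|\mathcal{N}|$. A secondary subtlety is that the net-to-subspace extension has to be run with the norm $\|\cdot\|_{p,w}$ and not its $p$-th power, since $\|\cdot\|_p^p$ is only $2^{p-1}$-subadditive, which forces the detour through $(1\pm\varepsilon)^{1/p}$ and the final re-exponentiation, harmless only because $p$ does not grow.
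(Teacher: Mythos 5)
Your proposal is correct and takes essentially the same route as the paper's proof: a $\gamma$-net in $\{y \in \colspan(A) : \|y\|_{p,w}=1\}$ of size $(1/\varepsilon)^{O(d)}$, a pointwise Bernstein concentration bound (the paper simply invokes Lemma~\ref{lem:single_lp} with $\tau=\infty$, $L_M=U_M=1$ and $\deltalewis = \deltase/|\mathcal{N}|$, whereas you unfold its computation), and the standard triangle-inequality/successive-approximation extension from the net to all of $\colspan(A)$; the only cosmetic difference is that you pass explicitly through $\|\cdot\|_{p,w}$ before re-exponentiating to the $p$-th power, a step the paper sweeps into ``adjusting constants.''
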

\begin{proof}
Let $\mathcal{N}$ be an $\varepsilon$-net for $\{Ax \mid \|Ax\|_{p, w} = 1\}$. 
Standard facts (see, e.g., {\cite[p.~48]{w14}}) imply that
$\log |\mathcal{N}| \le O(d \log(1 / \varepsilon))$.
Now we invoke Lemma \ref{lem:single_lp} with $\deltalewis = \deltase / |\mathcal{N}|$.
Notice that $f(x) = |x|^p$ is also a loss function that satisfies
Assumption~\ref{as M}, with $L_M = U_M = 1$ and $\tau = \infty$.
Thus, if $p_i$ satisfies
$$
p_i \ge \Theta( d^{\max\{0, p / 2 - 1\}}  u_i \cdot (d \log (1 / \varepsilon) + \log(1 / \deltase)) / \varepsilon^2),
$$
then with probability $1 - \deltase$, simultaneously for all $x \in \mathcal{N}$ we have
$$
\|Ax\|_{p, w}^p = (1 \pm \varepsilon) \|Ax\|_{p, w'}^p.
$$
Now we can invoke the standard successive approximation argument (see, e.g.,
{\cite[p.~47]{w14}}) to show that with probability $1 - \deltase$, simultaneously
for all $x \in \mathbb{R}^d$ we have
$$
\|Ax\|_{p, w}^p = (1 \pm O(\varepsilon))\|Ax\|_{p, w'}^p.
$$
Adjusting constants implies the desired result. 
\end{proof}

\section{Finding Heavy Coordinates}\label{sec:heavy}
\subsection{A Polynomial Time Algorithm}\label{sec:struct1}

\begin{figure}[H]
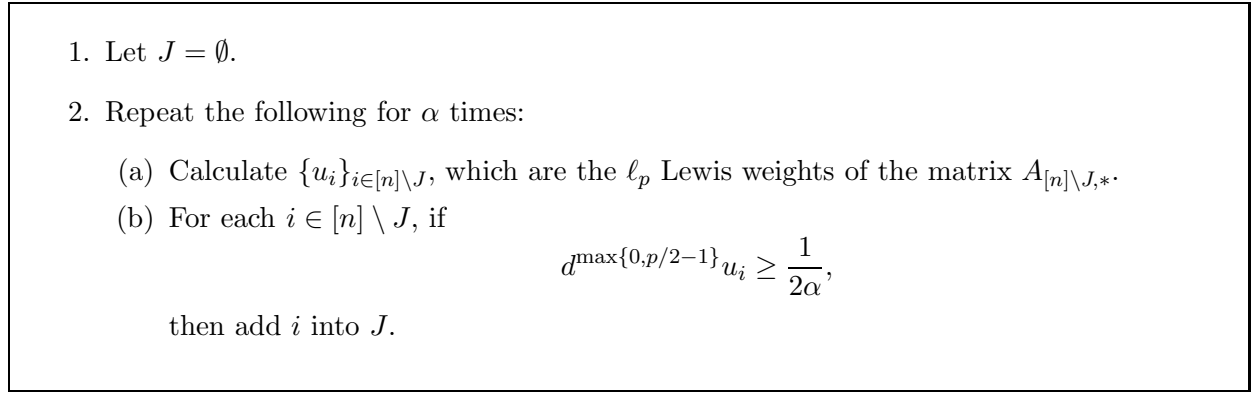

\begin{framed}
\begin{enumerate}
\item Let $J = \emptyset$.
\item \label{alg:main_loop}Repeat the following for $\alpha$ times: 
\begin{enumerate}
\item Calculate $\{u_i\}_{i \in [n] \setminus J}$, which are the $\ell_p$ Lewis weights of the matrix $A_{ [n] \setminus J, *}$.
\item For each $i \in  [n] \setminus J$, if
$$
d^{\max\{0, p / 2 - 1\}} u_i \ge \frac{1}{2\alpha},
$$
then add $i$ into $J$.
\end{enumerate}
\end{enumerate}
\end{framed}
\caption{Algorithm for finding the set $J$.}
\label{alg:struct}
\end{figure}

\begin{theorem}\label{thm:struct}
For a given matrix $A \in \mathbb{R}^{n \times d}$, $\tau \ge 0$ and $p \ge 1$, 
the algorithm in Figure \ref{alg:struct} returns a set of indices $J \subseteq [n]$ with size $|J| \le O(d^{\max\{p / 2, 1\}} \cdot \alpha^2)$, 
such that for all $y \in \colspan(A)$, 
if $y$ satisfies 
	(i) $\|y_{L_y}\|_p^p \le \alpha \cdot \tau^p$ and 
	(ii) $|H_y| \le \alpha$, 
then $H_y \subseteq J$.
\end{theorem}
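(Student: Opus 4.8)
The plan is to prove the two assertions separately: the size bound $|J| \le O(d^{\max\{p/2,1\}}\alpha^2)$ follows directly from the $\ell_1$-normalization of $\ell_p$ Lewis weights, while the covering property $H_y \subseteq J$ follows by an induction over the $\alpha$ iterations of the main loop, with Lemma~\ref{lem:bound_entry} doing the work in the inductive step.

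For the size bound, I would use that for any matrix $B$, its $\ell_p$ Lewis weights $\{u_i\}$ satisfy $\sum_i u_i \le d$: by Lemma~\ref{lem:equivalence} we have $u_i = \overline{u}_i^p$, and by Theorem~\ref{thm:fa_lewis} item~1, $\|\overline{u}\|_p^p \le d$. Hence in any single iteration, the number of indices $i$ with $d^{\max\{0,p/2-1\}}u_i \ge \tfrac1{2\alpha}$ is at most $2\alpha\, d^{\max\{0,p/2-1\}}\sum_i u_i \le 2\alpha\, d^{1+\max\{0,p/2-1\}} = 2\alpha\, d^{\max\{1,p/2\}}$. Summing over the $\alpha$ iterations gives $|J| \le 2\alpha^2\, d^{\max\{1,p/2\}}$, as claimed.

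For correctness, fix $y \in \colspan(A)$ satisfying (i) and (ii), and order the elements of $H_y$ as $i_1, i_2, \ldots, i_k$ with $|y_{i_1}| \ge |y_{i_2}| \ge \cdots \ge |y_{i_k}|$, where $k = |H_y| \le \alpha$. Let $J_t$ denote the contents of $J$ after iteration $t$, with $J_0 = \emptyset$. I claim, by induction on $t$, that $\{i_1,\ldots,i_t\} \subseteq J_t$ for all $0 \le t \le k$; then $t=k\le\alpha$ yields $H_y \subseteq J$. The base case is trivial. For the inductive step, assume $\{i_1,\ldots,i_{t-1}\}\subseteq J_{t-1}$. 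If $i_t\in J_{t-1}$ we are done; otherwise set $\tilde y := y_{[n]\setminus J_{t-1}}$, which lies in $\colspan(A_{[n]\setminus J_{t-1},*})$. Splitting the coordinates of $\tilde y$, those in $L_y$ contribute at most $\|y_{L_y}\|_p^p \le \alpha\tau^p$ to $\|\tilde y\|_p^p$, while those in $H_y\setminus J_{t-1}\subseteq\{i_t,\ldots,i_k\}$ number at most $\alpha$ and each contributes at most $|y_{i_t}|^p$ by the non-increasing ordering; using $|y_{i_t}|>\tau$ this gives $\|\tilde y\|_p^p \le \alpha\tau^p + \alpha|y_{i_t}|^p \le 2\alpha|y_{i_t}|^p$. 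On the other hand, Lemma~\ref{lem:bound_entry} applied to $A_{[n]\setminus J_{t-1},*}$ gives $|y_{i_t}|^p \le d^{\max\{0,p/2-1\}} u_{i_t}\|\tilde y\|_p^p$, where $u$ are the $\ell_p$ Lewis weights computed in iteration $t$. Combining, $|y_{i_t}|^p \le 2\alpha\, d^{\max\{0,p/2-1\}} u_{i_t}|y_{i_t}|^p$, hence $d^{\max\{0,p/2-1\}}u_{i_t}\ge\tfrac1{2\alpha}$, so $i_t$ is added to $J$ in iteration $t$ and $i_t\in J_t$. This closes the induction.

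I do not expect a serious obstacle here: given Lemma~\ref{lem:bound_entry} and the normalization of Lewis weights, the argument is essentially forced. The only point requiring care is setting up the induction so that at each step the surviving heavy coordinates of $y$ form a suffix (or subset of a suffix) $\{i_t,\ldots,i_k\}$ of the sorted list — this is exactly what simultaneously bounds their number by $\alpha$ and bounds each of their $p$-th powers by $|y_{i_t}|^p$. The passage from the $p=2$, $\tau=1$ intuition of the technical overview to general $p$ and $\tau$ is absorbed entirely into the $d^{\max\{0,p/2-1\}}$ factor of Lemma~\ref{lem:bound_entry} and the threshold $\tau^p$ in hypothesis~(i).
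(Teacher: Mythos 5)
Your proof is correct and takes essentially the same approach as the paper's: you order the heavy coordinates by decreasing magnitude, induct on the claim that the top $t$ of them are in $J$ after $t$ iterations, bound $\|y_{[n]\setminus J_{t-1}}\|_p^p \le 2\alpha|y_{i_t}|^p$ using hypotheses (i) and (ii) plus $|y_{i_t}|>\tau$, and apply Lemma~\ref{lem:bound_entry} to conclude the Lewis weight of $i_t$ exceeds the threshold; the size bound via $\sum_i u_i \le d$ is also identical. The only (cosmetic) difference is that you set up the induction a bit more explicitly with $J_t$ and a "done if already in $J_{t-1}$" case, whereas the paper absorbs this into the phrasing of its inductive step.
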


\begin{proof}
Consider a fixed vector $y \in \colspan(A)$ that satisfies
	(i) $\|y_{L_y}\|_p^p \le \alpha \cdot \tau^p$ and 
	(ii) $|H_y| \le \alpha$.
For ease of notation, we assume $|y_1| \ge |y_2| \ge \cdots \ge |y_n|$.
Of course, this order is unknown and is not used by our algorithm.
Under this assumption, $H_y = \{1, 2, \ldots, |H_y|\}$.

We prove $H_y \subseteq J$ by induction. 
For any $i < |H_y|$, suppose $[i] \subseteq J$ and $i + 1 \notin J$ after the
$i$-th repetition of Step \ref{alg:main_loop}, we show that we will add $i + 1$
into $J$ in the $(i + 1)$-th repetition of Step \ref{alg:main_loop}. 
Since, $[i] \subseteq J$ and $|y_1| \ge |y_2| \ge \cdots \ge |y_n|$, 
$$
\|y_{[n] \setminus J}\|_p^p \le \|y_{L_y}\|_p^p + \alpha |y_{i + 1}|^p \le \alpha \tau^p + \alpha |y_{i + 1}|^p.
$$
Since $i + 1 \in H_y$, we must have $|y_{i + 1}| \ge \tau$, which implies
$$
\frac{|y_{i + 1}|^p}{\|y_{[n] \setminus J}\|_p^p} \ge \frac{1}{2\alpha}.
$$
By Lemma \ref{lem:bound_entry}, this implies
$$
d^{\max\{0, p / 2 - 1\}} u_{i + 1}  \ge \frac{1}{2\alpha},
$$
where $u_{i + 1}$ is the $\ell_p$ Lewis weight of the row $A_{i + 1, *}$ in $A_{[n] \setminus J, *}$, in which case we will add $i + 1$ into~$J$.
Thus, $H_y \subseteq J$ since $|H_y| \le \alpha$.

Now we analyze the size of $J$. 
For the algorithm in Figure \ref{alg:struct}, we repeat the whole procedure $\alpha$ times. 
Each time, an index $i$ will be added into $I$ if and only if 
$$
d^{\max\{0, p / 2 - 1\}} u_i \ge  \frac{1}{2\alpha}.
$$
However, since 
$$ 
\sum_{i \in [n] \setminus J}  u_i =  \sum_{i \in [n] \setminus J} \overline{u}_i^p \le d
$$ by Theorem \ref{thm:fa_lewis}, there are at most $O(d^{\max\{p / 2, 1\}}  \cdot \alpha)$ such indices~$i$.
Thus, the total size of $J$ is upper bounded by $O(d^{\max\{p / 2, 1\}} \cdot \alpha^2)$.
\end{proof}

The above algorithm also implies the following existential result. 
\begin{corollary} \label{cor:existence_structural}
For a given matrix $A \in \mathbb{R}^{n \times d}$, $\tau \ge 0$ and $p \ge 1$, 
there exists a set of indices $J \subseteq [n]$ with size $|J| \le O(d^{\max\{p / 2, 1\}} \cdot \alpha^2)$, 
such that for all $y \in \colspan(A)$, 
if $y$ satisfies 
	(i) $\|y_{L_y}\|_p^p \le \alpha \cdot \tau^p$ and 
	(ii) $|H_y| \le \alpha$, 
then $H_y \subseteq J$.
\end{corollary}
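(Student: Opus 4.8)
The plan is to derive this as an immediate consequence of Theorem~\ref{thm:struct}. That theorem is stated and proved constructively: the algorithm of Figure~\ref{alg:struct}, run on the given inputs $A$, $\tau$, $p$ with parameter $\alpha$, produces a particular set $J \subseteq [n]$, and the proof of Theorem~\ref{thm:struct} establishes for this specific output both the cardinality bound $|J| \le O(d^{\max\{p/2,1\}} \cdot \alpha^2)$ and the containment $H_y \subseteq J$ for every $y \in \colspan(A)$ satisfying (i) $\|y_{L_y}\|_p^p \le \alpha \tau^p$ and (ii) $|H_y| \le \alpha$. Hence a set with exactly the two properties demanded by Corollary~\ref{cor:existence_structural} exists, and the proof reduces to: apply Theorem~\ref{thm:struct} and take $J$ to be the set it returns.

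First I would check that the hypotheses line up verbatim --- the two conditions (i) and (ii), the quantifier ``for all $y \in \colspan(A)$'', the size bound $O(d^{\max\{p/2,1\}} \cdot \alpha^2)$, and the meaning of the parameter $\alpha$ --- are identical in the theorem and the corollary; they are, so no adaptation is needed. If one instead wanted a proof not referencing the algorithm, one could rerun the inductive argument from the proof of Theorem~\ref{thm:struct}: order the coordinates of $y$ by decreasing magnitude, and show by induction on $i$ that, once the top $i$ heavy coordinates lie in $J$, the $(i+1)$-st must have $\ell_p$-Lewis weight at least $1/(2\alpha)$ in the remaining submatrix (using Lemma~\ref{lem:bound_entry} together with the bound $\|y_{[n]\setminus J}\|_p^p \le \alpha\tau^p + \alpha|y_{i+1}|^p$), so it gets added; the size bound then follows from $\sum_i u_i \le d$ (Theorem~\ref{thm:fa_lewis}) applied over the $\alpha$ repetitions. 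Since the constructive proof is already available, there is no reason to repeat this.

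There is essentially no obstacle here: the only real work was proving Theorem~\ref{thm:struct} itself, which is assumed. The corollary is recorded separately merely because the later analyses --- the oblivious sketch and the row-sampling algorithm --- only need to know that a bounded-size set $J$ of this kind exists, not how it is computed.
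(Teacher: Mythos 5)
Your proposal is correct and matches the paper exactly: the paper derives Corollary~\ref{cor:existence_structural} from Theorem~\ref{thm:struct} with only the remark that ``the above algorithm also implies the following existential result,'' which is precisely your argument of taking $J$ to be the set the deterministic algorithm returns. Your optional sketch of re-running the inductive argument is consistent with the proof of Theorem~\ref{thm:struct} but is not needed.
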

\subsection{An Input-sparsity Time Algorithm}\label{sec:struct2}
To find a set of heavy coordinates, the algorithm in Theorem \ref{thm:struct} runs in polynomial time. 
In this section we present an algorithm for finding heavy coordinates that runs in input-sparsity time. 
The algorithm is described in Figure \ref{alg:struct2}.
\begin{figure}
\begin{framed}
\begin{enumerate}
\item Let $|J| =  O(d^{\max\{p / 2, 1\}} \cdot \alpha^2)$ as in Corollary \ref{cor:existence_structural}.
\item Repeat the following for $O(\log( |J| / \deltastruct))$ times: 
\begin{enumerate}
\item Randomly partition $[n]$ into $\Gamma_1, \Gamma_2, \ldots, \Gamma_{\alpha}$.
\item For each $j \in [\alpha]$, use the algorithm in Theorem \ref{def:lewis} to obtain
weights $\{\hat{u}_i\}_{i \in \Gamma_j}$ such that $u_i \le \hat{u}_i \le 2u_i$,
where $\{u_i\}_{i \in \Gamma_j}$ are the $\ell_p$ Lewis weights of the matrix $A_{\Gamma_j, *}$.
\item For each $j \in [\alpha]$, for each $i \in \Gamma_j$, if
$$
d^{\max\{0, p / 2 - 1\}} \hat{u}_i \ge \frac{1}{6},
$$
then add $i$ to $I$.
\end{enumerate}
\end{enumerate}
\end{framed}
\caption{Algorithm for finding the set $I$.}
\label{alg:struct2}
\end{figure}

\begin{theorem}\label{thm:struct2}
For a given matrix $A \in \mathbb{R}^{n \times d}$, $\tau \ge 0$,  $\deltastruct\in (0,1)$, and $p \ge 1$, 
the algorithm in Figure \ref{alg:struct2} returns a set of indices
$I\subseteq [n]$ with size $|I| \le \widetilde{O}(d^{\max\{p / 2, 1\}}  \alpha  \cdot \log(1 / \deltastruct))$, 
such that with probability at least $1 - \deltastruct$, 
simultaneously for all $y \in \colspan(A)$, 
if $y$ satisfies 
	(i) $\|y_{L_y}\|_p^p \le \alpha \cdot \tau^p$ and 
	(ii) $|H_y| \le \alpha$, 
then $H_y \subseteq I$.
Furthermore, the algorithm runs in $\widetilde{O}\left(\left(\nnz(A) + d^{p / 2 + O(1)}  \cdot \alpha  \right)\cdot \log(1 / \deltastruct)\right)$ time.
\end{theorem}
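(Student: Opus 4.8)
The plan is to reduce the ``simultaneously for all $y$'' guarantee to a union bound over a polynomially-sized set of candidate coordinates, using the existential statement of Corollary~\ref{cor:existence_structural}. Let $J^{\star}\subseteq[n]$ be the union of $H_y$ over all $y\in\colspan(A)$ satisfying $\|y_{L_y}\|_p^p\le\alpha\tau^p$ and $|H_y|\le\alpha$. By Corollary~\ref{cor:existence_structural}, $J^{\star}$ is contained in some set of size $O(d^{\max\{p/2,1\}}\alpha^2)$, so $|J^{\star}|\le|J|$. It then suffices to show that with probability at least $1-\deltastruct$ every $i\in J^{\star}$ ends up in $I$, since for any valid $y$ we have $H_y\subseteq J^{\star}$. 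I would prove this by showing that a fixed $i\in J^{\star}$ is added to $I$ in each of the $O(\log(|J|/\deltastruct))$ repetitions with probability at least an absolute constant $c_0>0$; independence of the repetitions then makes the failure probability for $i$ at most $\deltastruct/|J|\le\deltastruct/|J^{\star}|$, and a union bound over $J^{\star}$ finishes it.

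For the per-repetition claim, fix $i\in J^{\star}$ and a witness $y$ for it, so $y\in\colspan(A)$, $\|y_{L_y}\|_p^p\le\alpha\tau^p$, $|H_y|\le\alpha$, and $|y_i|>\tau$. In a repetition, each coordinate is assigned independently and uniformly to one of $\alpha$ parts; let $j$ be the part containing $i$. I would introduce the events $A_1=\{H_y\cap\Gamma_j=\{i\}\}$ and $A_2=\{\sum_{k\in L_y\cap\Gamma_j}|y_k|^p\le 5\tau^p\}$. Conditioned on the value of $j$, these depend on disjoint blocks of the random assignment and are hence independent; a balls-into-bins count gives $\Pr[A_1]\ge(1-1/\alpha)^{|H_y|-1}\ge 1/e$, and since $\E[\sum_{k\in L_y\cap\Gamma_j}|y_k|^p]=\frac{1}{\alpha}\|y_{L_y}\|_p^p\le\tau^p$, Markov's inequality gives $\Pr[A_2]\ge 4/5$, so $\Pr[A_1\cap A_2]\ge 4/(5e)=:c_0$.

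On $A_1\cap A_2$, part $\Gamma_j$ contains $i$ and only light coordinates of $y$, carrying total $\ell_p^p$-mass at most $5\tau^p<5|y_i|^p$; hence $\|y_{\Gamma_j}\|_p^p=|y_i|^p+\|y_{\Gamma_j\setminus\{i\}}\|_p^p<6|y_i|^p$. Since $y_{\Gamma_j}\in\colspan(A_{\Gamma_j,*})$, Lemma~\ref{lem:bound_entry} applied to $A_{\Gamma_j,*}$ gives $|y_i|^p\le d^{\max\{0,p/2-1\}}u_i\|y_{\Gamma_j}\|_p^p$, where $u_i$ is the $\ell_p$ Lewis weight of row $i$ in $A_{\Gamma_j,*}$, so $d^{\max\{0,p/2-1\}}u_i>1/6$; as the algorithm uses $\hat u_i\ge u_i$, its test passes and $i$ is added. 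This ``clean part'' step --- arguing that $i$ lands, with constant probability, in a part whose entire $\ell_p^p$-mass is within a constant factor of $|y_i|^p$, which is exactly where the conjunction of the isolation event $A_1$, the Markov event $A_2$, and the inequality $|y_i|^p>\tau^p$ are all used together --- is the crux of the argument; the remaining steps are routine.

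Finally, for the size and running-time bounds: in any repetition and any part $\Gamma_j$, Theorem~\ref{thm:fa_lewis} together with Lemma~\ref{lem:equivalence} gives $\sum_{i\in\Gamma_j}u_i\le d$, and $i$ is added only when $d^{\max\{0,p/2-1\}}\hat u_i\ge 1/6$, i.e.\ $u_i\ge 1/(12\,d^{\max\{0,p/2-1\}})$, so at most $12\,d^{\max\{p/2,1\}}$ indices are added per part; over $\alpha$ parts and $O(\log(|J|/\deltastruct))$ repetitions this yields $|I|=\widetilde{O}(d^{\max\{p/2,1\}}\alpha\log(1/\deltastruct))$. Each repetition computes $2$-approximate $\ell_p$ Lewis weights of the submatrices $A_{\Gamma_1,*},\dots,A_{\Gamma_\alpha,*}$ via Theorem~\ref{thm:alg_lewis}; since the parts partition the rows, the $\nnz$ contributions sum to $\nnz(A)$, giving $\widetilde{O}(\nnz(A)+\alpha d^{p/2+O(1)})$ per repetition and $\widetilde{O}((\nnz(A)+d^{p/2+O(1)}\alpha)\log(1/\deltastruct))$ overall.
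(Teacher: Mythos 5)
Your proof is correct and follows essentially the same route as the paper's: reduce to a union bound over the finite set of candidate heavy coordinates guaranteed by Corollary~\ref{cor:existence_structural}, then show that in any single repetition a fixed such coordinate $i$ lands in a part that is both isolated from $H_y$ and has small light $\ell_p^p$-mass, at which point Lemma~\ref{lem:bound_entry} forces a large Lewis weight. The one cosmetic difference is that you observe that, conditioned on where $i$ lands, the isolation event and the Markov event depend on disjoint blocks of coordinate assignments and are therefore independent, giving $\Pr[A_1\cap A_2]\ge 4/(5e)$; the paper simply applies a union bound to get $\ge 1/e - 0.2 > 0.1$. Both yield an absolute constant, so this does not change the argument or the final bounds.
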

\begin{proof}
Let $J$ be the set with size $|J| \le O(d^{\max\{p / 2, 1\}} \cdot \alpha^2)$ whose existence is proved in Corollary \ref{cor:existence_structural}.
For all $y \in \colspan(A)$, 
if $y$ satisfies 
	(i) $\|y_{L_y}\|_p^p \le \alpha \cdot \tau^p$ and 
	(ii) $|H_y| \le \alpha$, 
then $H_y \subseteq J$.
We only consider those $c \in J$ for which there exists $y \in \colspan(A)$
such that  
	(i) $\|y_{L_y}\|_p^p \le \alpha \cdot \tau^p$, 
	(ii) $|H_y| \le \alpha$ and
	(iii) $c \in H_y$,
since we can remove other $c$ from $J$ and the properties of $J$ still hold. For
such $c \in H_y$ and the corresponding $y \in \colspan(A)$, suppose for some
$j \in [\alpha]$ we have $c \in \Gamma_j$. Since $|H_y| \le \alpha$, with
probability $(1 - 1 / \alpha)^{|H_y| - 1} \ge 1 / e$, we have
$\Gamma_j \cap H_y = \{c\}$. Furthermore,
$\E[\|y_{L_y \cap \Gamma_j}\|_p^p] = \|y_{L_y}\|_p^p / \alpha \le \tau^p$.
By Markov's inequality, with probability at least $0.8$, we
have $\|y_{L_y \cap \Gamma_j}\|_p^p \le 5\tau^p$. Thus, by a union bound, with
probability at least $1 / e - 0.2 > 0.1$, we have $\|y_{L_y \cap \Gamma_j}\|_p^p
\le 5\tau^p$ and $\Gamma_j \cap H_y = \{c\}$. By repeating $O(\log( |J| /
\deltastruct))$ times, the success probability is at least
$1 - \deltastruct / |J|$. Applying a union bound over all $c \in J$, with probability $1 -
\deltastruct$, the stated conditions hold for all $c \in J$.
We condition on this event in the rest of the proof. 

Consider any $c \in J$ and $y \in \colspan(A)$ with the properties
stated above. Since $|y_c| \ge \tau$, we have 
$$
\frac{|y_c|^p}{\|y_{\Gamma_j}\|_p^p}  \ge \frac{|y_c|^p}{\|y_{\Gamma_j \cap L_y}\|_p^p + |y_c|^p} \ge \frac{1}{6}.
$$
By Lemma \ref{lem:bound_entry}, we must have
$$
d^{\max\{0, p / 2 - 1\}} u_c \ge  \frac{1}{6},
$$
where $u_c$ is the $\ell_p$ Lewis weight of the row $A_{c, *}$ in the matrix $A_{\Gamma_j, *}$, 
which also implies
$$
d^{\max\{0, p / 2 - 1\}} \hat{u}_c \ge   \frac{1}{6}
$$
since $\hat{u}_c \ge u_c$, in which case we will add $c$ to $I$.

Now we analyze the size of $I$.
For each $j \in [\alpha]$, we have
$$
\sum_{i \in \Gamma_j} \hat{u}_i \le 2 \sum_{i \in \Gamma_j} u_i = 2  \sum_{i \in \Gamma_j} \overline{u}_i^p \le 2d
$$ by Theorem \ref{thm:fa_lewis}.
For each $j \in [\alpha]$, there are at most $O(d^{\max\{p / 2, 1\}})$ indices $i$ which satisfy 
$$
d^{\max\{0, p / 2 - 1\}} \hat{u}_i \ge \frac{1}{6},
$$
which implies we will add at most
$O\left(\alpha \cdot d^{\max\{p / 2, 1\}} \right)$ elements into $I$
during each repetition. The bound on the size of $I$ follows
since there are only $O(\log( |J| / \deltastruct)) = O(\log d + \log \alpha + \log(1 / \deltastruct))$ repetitions. 

For the running time of the algorithm, since we invoke
the algorithm in Theorem \ref{thm:alg_lewis} for $O(\log( |J| / \deltastruct))$ times, 
and each time we estimate the $\ell_p$ Lewis weights
of $A_{\Gamma_1, *}, A_{\Gamma_2, *}, \ldots, A_{\Gamma_{|\alpha|}, *}$,
which implies the running time for each repetition is upper bounded by
$$
\sum_{j = 1}^{|\alpha|} \widetilde{O}\left(\nnz(A_{\Gamma_j, *}) + d^{p / 2 + O(1)}\right) = \widetilde{O}\left(\nnz(A) + d^{p / 2 + O(1)}  \cdot \alpha \right).
$$
The bound on the running time follows since we repeat for $O(\log( |J| / \deltastruct))$ times. 
\end{proof}

The above algorithm and the probabilisitic method also imply the following existential result. 
\begin{corollary} \label{cor:existence}
For a given matrix $A \in \mathbb{R}^{n \times d}$, $\tau \ge 0$ and $p \ge 1$, 
there exists a set of indices $I \subseteq [n]$ with size $|I| \le \widetilde{O}(d^{\max\{p / 2, 1\}} \cdot \alpha)$, 
such that for all $y \in \colspan(A)$, 
if $y$ satisfies 
	(i) $\|y_{L_y}\|_p^p \le \alpha \cdot \tau^p$ and 
	(ii) $|H_y| \le \alpha$, 
then $H_y \subseteq I$.
\end{corollary}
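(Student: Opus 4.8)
The plan is to derive this purely from Theorem~\ref{thm:struct2} via the probabilistic method; no new combinatorial argument is needed. I would instantiate Theorem~\ref{thm:struct2} with the failure probability $\deltastruct$ set to a fixed constant, say $\deltastruct = 1/2$. That theorem guarantees that the randomized algorithm of Figure~\ref{alg:struct2} outputs a set $I \subseteq [n]$ of size $\widetilde{O}(d^{\max\{p/2,1\}}\,\alpha\,\log(1/\deltastruct))$ such that, with probability at least $1 - \deltastruct$ over the algorithm's internal randomness, \emph{every} $y \in \colspan(A)$ satisfying $\|y_{L_y}\|_p^p \le \alpha \tau^p$ and $|H_y| \le \alpha$ has $H_y \subseteq I$.

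With $\deltastruct$ a constant, $\log(1/\deltastruct) = O(1)$ is absorbed into the $\widetilde{O}(\cdot)$, so the output size is $\widetilde{O}(d^{\max\{p/2,1\}}\,\alpha)$ as claimed. Since the success probability $1 - \deltastruct = 1/2$ is strictly positive, there exists at least one realization of the random bits for which the algorithm's output $I$ simultaneously satisfies the stated containment $H_y \subseteq I$ for all admissible $y$; fix $I$ to be that output. This establishes the existential statement.

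I do not anticipate any real obstacle. The one point to be careful about is that the ``simultaneously for all $y$'' guarantee in Theorem~\ref{thm:struct2} is precisely what allows a single fixed set $I$ to handle every admissible $y$ at once, so that conditioning on the good event yields a genuinely deterministic object. The other mild point is that choosing $\deltastruct$ bounded away from $0$ (rather than $1/\poly(n)$) is what keeps the bound at $\widetilde{O}(d^{\max\{p/2,1\}}\,\alpha)$ without an extra $\log n$ factor, which is legitimate here because an existence claim requires no high-probability guarantee.
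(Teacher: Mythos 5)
Your proposal is correct and is exactly what the paper intends: the paper states Corollary~\ref{cor:existence} with only the one-line remark that it follows from ``the above algorithm and the probabilistic method,'' which is precisely your instantiation of Theorem~\ref{thm:struct2} with a constant $\deltastruct$ and extraction of a good realization. Your two points of care (the ``simultaneously for all $y$'' guarantee, and taking $\deltastruct$ bounded away from zero so that $\log(1/\deltastruct) = O(1)$) are exactly the right ones.
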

\section{The Net Argument}\label{sec:net}
\subsection{Bounding the Norm}\label{sec:norm_x}

We will generally assume that for product $Ax$, the $x$ involved
is in  $\colspan(A^\top)$, which is the orthogonal complement
of the nullspace of $A$; any nullspace component of $x$ would not affect $Ax$ or $SAx$,
and so can be neglected for our purposes.

\begin{lemma}\label{lem A+b size}
When the entries of $A$ are integral,
for any nonempty $\cS\subset [n]$,
$\norm{A_{\cS,*}^+}_2 \le \norm{A}_2^d\CP(A)\sqrt{d} $,
and under also Assumption~\ref{as main}.\ref{as main size},
$\norm{A_{\cS,*}^+}_2 \le n^{O(d^2)}$.
\end{lemma}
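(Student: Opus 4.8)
The plan is to bound $\norm{A_{\cS,*}^+}_2$ in terms of the largest and smallest nonzero singular values of $A_{\cS,*}$, and then control each of these separately. First I would note that $\norm{A_{\cS,*}^+}_2 = 1/\sigma_{\min}(A_{\cS,*})$, where $\sigma_{\min}$ denotes the smallest \emph{nonzero} singular value of $A_{\cS,*}$ (restricting to $x \in \colspan(A^\top)$ as stated, so the relevant singular values are those on the row space, and this restriction is harmless for our application). So the whole task reduces to showing $\sigma_{\min}(A_{\cS,*})$ is not too small.

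The key step is a lower bound on $\sigma_{\min}(A_{\cS,*})$ via the product of singular values. Let $r$ be the rank of $A_{\cS,*}$, and let $B$ be an $r \times r$ submatrix of $A_{\cS,*}$ of full rank $r$ (selected rows and columns); then $|\det B|$ is a nonzero integer, hence $|\det B| \ge 1$, since the entries of $A$ — and therefore of $B$ — are integers. On the other hand, $|\det B| = \prod_{i=1}^r \sigma_i(B) \le \sigma_{\min}(B) \cdot \sigma_{\max}(B)^{r-1} \le \sigma_{\min}(B) \cdot \norm{A}_2^{r-1}$, using that the singular values of a submatrix are at most the largest singular value of the whole matrix (by interlacing / the Courant--Fischer characterization) and that $\sigma_{\max}(A_{\cS,*}) \le \sigma_{\max}(A) = \norm{A}_2$. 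This gives $\sigma_{\min}(B) \ge \norm{A}_2^{-(r-1)} \ge \norm{A}_2^{-(d-1)}$ (assuming $\norm{A}_2 \ge 1$, which holds since $A$ is integral and nonzero; the degenerate all-zero case can be dispatched separately). Finally $\sigma_{\min}(A_{\cS,*}) \ge \sigma_{\min}(B)$ again by interlacing, so $\norm{A_{\cS,*}^+}_2 = 1/\sigma_{\min}(A_{\cS,*}) \le \norm{A}_2^{d-1} \le \norm{A}_2^d$. (The factors $\CP(A)$ and $\sqrt d$ in the statement presumably come from a slightly different bookkeeping — e.g.\ bounding via the comatrix/adjugate and Cramer's rule — but any route giving $\norm{A}_2^{O(d)}\poly(d)\CP(A)$ suffices; I would pick whichever is cleanest, and $\CP(A)$ here denotes a condition-number-type quantity whose precise definition I would take from its first use.)

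For the second assertion, I would simply substitute the size bound from Assumption~\ref{as main}.\ref{as main size}: since the columns of $A$ have $\ell_2$ norm $n^{O(d)}$, we get $\norm{A}_2 \le \norm{A}_F \le \sqrt d \cdot n^{O(d)} = n^{O(d)}$, hence $\norm{A}_2^d \le n^{O(d^2)}$, and the remaining polynomial factors $\CP(A)\sqrt d$ are absorbed into $n^{O(d^2)}$. This yields $\norm{A_{\cS,*}^+}_2 \le n^{O(d^2)}$ as claimed.

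The main obstacle is handling the rank-deficient case cleanly: $A_{\cS,*}$ need not have full column rank, so I must be careful to work with the pseudoinverse's operator norm as the reciprocal of the smallest \emph{nonzero} singular value, and to verify that the integrality argument still produces a full-rank integer minor of the correct size $r = \mathrm{rank}(A_{\cS,*})$. The interlacing inequalities $\sigma_{\min}(B) \le \sigma_{\min}(A_{\cS,*})$ (for $B$ a maximal full-rank submatrix) and $\sigma_{\max}(A_{\cS,*}) \le \norm{A}_2$ need to be invoked in the right direction; these are standard but worth stating precisely. Everything else is routine.
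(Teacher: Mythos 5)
Your proof is correct, but it takes a genuinely different route from the paper. You reduce $\norm{A_{\cS,*}^+}_2$ directly to the smallest nonzero singular value $\sigma_{\min}(A_{\cS,*})$, lower bound it via a full-rank $r\times r$ integer submatrix $B$ (using $|\det B|\ge 1$ together with $|\det B|=\prod_i\sigma_i(B)\le\sigma_{\min}(B)\,\norm{A}_2^{r-1}$), and then transfer this to $A_{\cS,*}$ via the singular-value interlacing inequality $\sigma_r(B)\le\sigma_r(A_{\cS,*})$. The paper instead works through the normal equations: it observes that $A^\top A$ has integer entries so $|\det A^\top A|\ge 1$, applies Cramer's rule to the system $A^\top A x = A^\top y$, and controls the cofactor numerators with Hadamard's determinant inequality — which is where the extra $\CP(A)\sqrt{d}$ factors in the stated bound come from ($\CP$ being a column-norm-product quantity). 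The paper also handles rank deficiency by first passing to a full-column-rank subset of columns, whereas your argument works with the pseudoinverse and interlacing uniformly across ranks. Both routes are sound and of comparable length; yours in fact yields a marginally cleaner bound $\norm{A}_2^{d-1}$ without the $\CP(A)\sqrt d$ factor, while the paper's Cramer's-rule bookkeeping is what produces exactly the form stated in the lemma. You were right to flag this discrepancy yourself, and correct that any bound of the form $\norm{A}_2^{O(d)}\poly(d)\,\CP(A)$ suffices for the downstream application, where everything gets absorbed into $n^{O(d^2)}$ under Assumption~\ref{as main}.\ref{as main size}.
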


\begin{proof}
When $\cS$ is a nonempty proper subset of $[n]$,  then
since $\norm{A_{\cS,*}}_2 \le \norm{A}_2$
and $\CP(A_{\cS,*})\le \CP(A)$, we have that if
$\norm{A_{\cS,*}^+}_2 \le \norm{A_{\cS,*}}_2^d\CP(A_{\cS,*})\sqrt{d}$,
then the lemma follows. So we can assume $S=[n]$.

First suppose $A$ has full column rank, so that
$A^\top A$ is invertible. For any $y\in\R^n$,
$A^+y$ is the unique solution $x^*$ of $A^\top A x = A^\top y$.
Applying Cramer's rule, the entries of $x^*$ have the form
$x_i = \frac{\det B_i}{\det A^\top A}$, where $B_i$ is the same as $A^\top A$,
except that the $i$'th column of $B_i$ is $A^\top y$.
The integrality of $A$ implies $|\det A^\top A| \ge 1$; using that together with Hadamard's determinant inequality
and the definition of the spectral norm, we have
$\norm{x^*}_2\le \norm{A}_2^d\CP(A)\norm{y}_2\sqrt{d}$. Since this holds for any $y$,
we have $\norm{A^+}_2 \le \norm{A}_2^d\CP(A)\sqrt{d}$ as claimed.

Now suppose $A$ has rank $k < d$.
Then there is $\cT\subset [d]$ of size $k$
whose members are indices of a set of $k$ linearly independent
columns of $A$. Moreover, if $x^* = A^+y$ is a solution to $\min_x\norm{Ax-y}_2$,
then there is another solution where the entries with indices
in $[d]\setminus \cT$ are zero, since a given column not in $\cT$
is a linear combination of columns in $\cT$. That is,
the solution to $\min_{x\in\R^k} \norm{A_{*,\cT}x - y}_2$
can be mapped directly to a solution $x^*$ in $\R^k$
with the same Euclidean norm. Since $A_{*,\cT}$
has full column rank, the analysis above implies
that
\[
\norm{x^*}_2 \le  \norm{A_{*,\cT}}_2^k \CP(A_{*,\cT}) \norm{y}_2\sqrt{k} \le \norm{A}_2^d\CP(A)\norm{y}_2\sqrt{d},
\]
so the bound on $\norm{A^+}_2$ holds also when $A$ has less than full rank.

The last statement of the lemma follows directly, using the definitions of $\norm{A}_2$, $\CP(A)$,
and Assumption~\ref{as main}.\ref{as main size}.
\end{proof}

\begin{lemma}\label{lem small x}
If $A$ has integral entries, and if Assumptions~\ref{as M}, \ref{as main}.\ref{as main size},
\ref{as main}.\ref{as tau size} hold, then Assumption~\ref{as main}.\ref{as main A int} holds.
\end{lemma}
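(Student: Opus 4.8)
The plan is to show that any $C_1$-approximate solution $\hat x$ has bounded norm, by bounding $\norm{A \hat x - b}_M$ from above via $\OPT$, and then using the structure of $M$ to convert a small $M$-cost into a small residual on enough rows to pin down $\hat x$. First I would bound $\OPT = \min_x \norm{Ax - b}_M \le \norm{b}_M \le n \tau^p$, since $M(a) \le \tau^p$ for all $a$ and there are $n$ coordinates; hence $\norm{A\hat x - b}_M \le C_1 n \tau^p \le \poly(n) \cdot \tau^p$ using Assumption~\ref{as main}.\ref{as main size} (which bounds $\norm{b}_2$, hence each $|b_i|$, hence each $M(b_i)$ — actually simpler is just the flat bound $M \le \tau^p$). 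So the residual $y = A\hat x - b$ satisfies $\norm{y}_M \le \poly(n)\,\tau^p$.

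Next I would argue that, since each coordinate with $|y_i| \ge \tau$ contributes exactly $\tau^p$, at most $\poly(n)$ coordinates can be ``heavy''; in particular, since $A$ has at most $n$ rows, there is a subset $\cS \subseteq [n]$ of size $\ge n - \poly(n)$ of ``light'' rows with $|y_i| < \tau$ — but that is vacuous if $\poly(n) \ge n$. The right move instead: on the light coordinates $L_y$, $M(y_i) \ge L_M |y_i|^p$, so $\sum_{i \in L_y} |y_i|^p \le \norm{y}_M / L_M \le \poly(n)\,\tau^p$, giving $|y_i| \le \poly(n)^{1/p}\,\tau$ for every light $i$, and trivially $|y_i| < \tau$ for... wait, no: on $L_y$ we already have $|y_i| \le \tau$ by definition, and on $H_y$ we have no control on $|y_i|$ at all. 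This is exactly the obstacle flagged by the example in the paper. So the key realization is: I must show $|H_y|$ is small enough — at most $n - d$, in fact strictly, so that at least $d$ rows are light — and this is where the argument genuinely uses that $\hat x$ is \emph{approximately optimal} and that $\OPT$ is not too large, combined with $\tau = \Omega(1/n^{O(d)})$.

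Here is the cleaner route I would actually take. Take $\cS$ to be any set of $d$ rows on which $y = A\hat x - b$ is light, i.e. $|y_i| \le \tau$ for $i \in \cS$; provided $A_{\cS,*}$ is invertible, we get $\hat x = A_{\cS,*}^{-1}(b_{\cS} + y_{\cS})$, so $\norm{\hat x}_2 \le \norm{A_{\cS,*}^{-1}}_2 (\norm{b_{\cS}}_2 + \sqrt d\,\tau) \le n^{O(d^2)}$ by Lemma~\ref{lem A+b size}, Assumption~\ref{as main}.\ref{as main size}, and $\tau \le \poly(n)$ (which follows from $\norm{b}_2 \le n^{O(d)}$ since otherwise $M(a) = \tau^p$ never bites). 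The crux is producing such a set $\cS$: I claim that if $\hat x$ is a $C_1$-approximation then $y$ has at least $d$ light coordinates among a fixed $d$ linearly independent rows of $A$. Suppose not — suppose every set of $d$ independent rows contains a heavy coordinate of $y$. I would derive a contradiction by showing $\OPT$ itself is achieved (up to $C_1$) by some $x$ whose residual has $\ge n - (\text{few})$ light coordinates: indeed consider $x = A_{\cT,*}^{-1} b_{\cT}$ for a well-chosen set $\cT$ of $d$ independent rows; its residual is $0$ on $\cT$, so $\OPT \le \norm{b}_M$ trivially, but more to the point, the number of heavy coordinates of \emph{any} $C_1$-approximate residual is at most $C_1 \OPT / \tau^p \le C_1 n$, which is useless.

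Given that, the honest plan is different: argue directly that $\norm{\hat x}_2$ large forces many heavy rows. If $\norm{\hat x}_2 > \xupper = n^{\omega(d^2)}$, then for \emph{every} size-$d$ invertible submatrix $A_{\cS,*}$ we have $\norm{A_{\cS,*}\hat x - b_{\cS}}_2 \ge \norm{\hat x}_2 / \norm{A_{\cS,*}^{-1}}_2 - \norm{b_{\cS}}_2 \ge \xupper / n^{O(d^2)} - n^{O(d)} \gg \tau \sqrt d$ using Lemma~\ref{lem A+b size} and Assumption~\ref{as main}.\ref{as tau size}; hence $A_{\cS,*}\hat x - b_{\cS}$ has a coordinate of absolute value $> \tau$, i.e. every size-$d$ invertible row set meets $H_y$. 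By a simple matroid/covering argument (a transversal-type statement: if every basis of the column-space-spanning row set meets $H_y$, then $[n]\setminus H_y$ has rank $< d$, so its complement together with... ) I would conclude $H_y$ must contain a ``hitting set'' for all bases, which by matroid duality means $\mathrm{rank}(A_{[n]\setminus H_y, *}) \le d - 1$. But then $\colspan(A)$ restricted to $[n]\setminus H_y$ is $(d-1)$-dimensional, and I can pick the free coordinate of $\hat x$ in the remaining direction to be $0$ without increasing more than $d-1$ residuals... this reduces the dimension and I recurse. Carrying the recursion down, I get that a genuinely optimal-up-to-$C_1$ solution cannot have $\norm{\hat x}_2 > n^{O(d^2)}$: at each of $\le d$ levels the norm bound picks up a factor $n^{O(d)}$ (from $\norm{A_{\cS,*}^{-1}}_2 \le n^{O(d^2)}$ and $\norm{b}_2 \le n^{O(d)}$), so $\xupper = (n^{O(d^2)})^{O(1)} = n^{O(d^2)}$, matching Assumption~\ref{as main}.\ref{as main A int}.

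The main obstacle I expect is making the covering/matroid step rigorous and keeping the exponent at $O(d^2)$ rather than something like $d^{O(d)}$: one must be careful that the recursion on rank only runs $\le d$ times and that at each step the ``how large can $\hat x$ be before forcing a heavy coordinate'' threshold is $n^{O(d^2)}$ (driven by Cramer's rule / Hadamard, i.e. Lemma~\ref{lem A+b size}), not worse. A secondary nuisance is the degenerate cases — $\tau$ possibly so large that $M(b_i) < \tau^p$, or $A$ rank-deficient from the start — both handled by the same reductions already used in Lemma~\ref{lem A+b size} (restrict to a maximal independent set of columns, and note $x$ may be taken in $\colspan(A^\top)$ as stated at the top of Section~\ref{sec:norm_x}).
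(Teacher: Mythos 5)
You correctly identify the right ingredients — the set $\cS$ of ``light'' rows where $|A_{i,*}\hat x - b_i|\le\tau$, Lemma~\ref{lem A+b size} for bounding $\norm{A_{\cS,*}^{+}}_2$, and Assumption~\ref{as main}.\ref{as main size} for bounding $\norm{b}_2$ — but there is a genuine gap: you get stuck trying to find a size-$d$ \emph{invertible} light submatrix, then propose a matroid/rank-recursion argument that you explicitly flag as an unresolved obstacle. The paper avoids this entirely. The missing idea is the pseudo-inverse projection trick: you do not need $A_{\cS,*}$ to have rank $d$ or contain an invertible $d\times d$ block. Instead, observe that WLOG $\hat x \in \colspan(A_{\cS,*}^\top)$: any component of $\hat x$ in the nullspace of $A_{\cS,*}$ leaves $A_{\cS,*}\hat x$ unchanged (so the contribution from rows in $\cS$ is unchanged) and cannot increase the contribution from rows outside $\cS$, since each such row already contributes the maximum $\tau^p$ (Assumption~\ref{as M}.\ref{as M flat}). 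Once $\hat x\in\colspan(A_{\cS,*}^\top)$, you have $\hat x = A_{\cS,*}^{+}A_{\cS,*}\hat x$, whence $\norm{\hat x}_2 \le \norm{A_{\cS,*}^{+}}_2\norm{A_{\cS,*}\hat x}_2$. The second factor is bounded by $\norm{A_{\cS,*}\hat x - b_\cS}_2 + \norm{b_\cS}_2$, and the first term is controlled because (with $\hM(a)=\min\{\tau^p,|a|^p\}$) the $\hM$-cost of a $C_1$-approximate solution is within $C_1 U_M/L_M$ of $\min_x\norm{Ax-b}_\hM$, which is at most $\norm{b}_\hM\le\norm{b}_p^p$. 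This gives the $n^{O(d^2)}$ bound in one shot, with no recursion, no rank considerations, and no worry about the exponent compounding.

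The degenerate case $\cS=\emptyset$ is also handled trivially in the paper: then $\hat x$ can be taken to be zero (every row already contributes $\tau^p$, which cannot be increased). Your matroid-duality observation (every basis meeting $H_y$ iff the light rows have rank $<d$) is correct in isolation, but the recursion you sketch on top of it is where your argument would need real work; the pseudo-inverse route makes that work unnecessary, and it also cleanly subsumes the rank-deficient cases you were flagging at the end. In short: same ingredients, but you are missing the one-line reduction ``WLOG $\hat x\in\colspan(A_{\cS,*}^\top)$'' that lets Lemma~\ref{lem A+b size} apply directly.
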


\begin{proof}
Let $x_{M}^{C_1}$ be a $C_1$-approximate solution of $\min_x \norm{Ax-b}_M$,
which Assumption~\ref{as main}.\ref{as main A int} requires to have bounded Euclidean norm.
Let $\hM(a) \equiv \min\{\tau^p, |a|^p\}$, so that
Assumptions~\ref{as M}.\ref{as M near quad} and \ref{as M}.\ref{as M flat} imply that
$L_M \hM(a) \le M(a) \le U_M \hM(a)$ for all $a$.
Letting $x^*_M\equiv \argmin_x \norm{Ax-b}_M$, and similarly
defining~$x^*_\hM$, this condition implies that 
\begin{align}
\norm{Ax_{M}^{C_1}-b}_\hM
	  & \le \frac{1}{L_M}\norm{A x_{M}^{C_1} - b}_M \nonumber
	\\ & \le \frac{C_1}{L_M}\norm{A x^*_M - b}_M \nonumber
	\\ & \le C_2 \norm{A x^*_M - b}_\hM \nonumber
	\\ & \le C_2\norm{A x^*_\hM - b}_\hM, \qquad \label{eq hx upper}
\end{align}
where $C_2 \equiv C_1 U_M / L_M$.

Let $\cS$ denote the set of indices at which $|A_{i,*}x_{M}^{C_1} - b_i| \le \tau$.
If $\cS$ is empty, then $x_{M}^{C_1}$ can be assumed to be zero.

Similarly to our general assumption that $x_{M}^{C_1}\in\colspan(A^\top)$,
we can assume that $x_{M}^{C_1}\in\colspan(A_{\cS,*}^\top)$,
since any component of $x_{M}^{C_1}$ in the nullspace of $A_{\cS,*}$ can be removed
without changing $A_{\cS,*}x_{M}^{C_1}$, and without increasing the $n-|S|$ contributions
of $\tau^p$ from the remaining summands in $\norm{Ax_{M}^{C_1} - b}_M$.
(Here we used Assumption~\ref{as M}.\ref{as M flat} that $M(a) = \tau^p$ for $|a| \ge \tau$.)

From $x_{M}^{C_1}\in\colspan(A^\top)$ it follows that
$\norm{x_{M}^{C_1}}_2 = \norm{A_{\cS,*}^+ A_{\cS,*}x_{M}^{C_1}}_2 \le  \norm{A_{\cS,*}^+}_2 \norm{A_{\cS,*}x_{M}^{C_1}}_2$,
and since
\begin{align*}
\norm{A_{\cS,*}x_{M}^{C_1}}_2 & \le \sqrt{n} \norm{A_{\cS,*}x_{M}^{C_1}}_p\\
	     & \le \sqrt{n} (\norm{A_{\cS,*}x_{M}^{C_1} - b_S}_p + \norm{b_S}_p)
	   \\ & \le C_2 \sqrt{n} (\norm{A x^*_\hM - b}_\hM^{1/p}+ \norm{b_S}_p) \qquad (\mathrm{by\ }\eqref{eq hx upper})
	   \\ & \le 2C_2 \sqrt{n}\norm{b}_p,
\end{align*}
we have
$\norm{x_{M}^{C_1}}_2 \le \norm{A_{\cS,*}^+}_2 \norm{A_{\cS,*}x_{M}^{C_1}}_2 \le \norm{A_{\cS,*}^+}_2 2C_2 \sqrt{n}\norm{b}_p$,
and so from Lemma~\ref{lem A+b size} and Assumption~\ref{as main}.\ref{as main size}, the bound
on $\norm{x_{M}^{C_1}}_2$ of Assumption~\ref{as main}.\ref{as main A int} follows.
\end{proof}

\subsection{Net Constructions}


\begin{lemma}\label{lem:net1}
Under the given assumptions, for $\xupper$ as in Assumption~\ref{as main}.\ref{as main A int},
there exists a set $\mathcal{N}_{\varepsilon} \subseteq \colspan([A~b])$ 
with size $|\mathcal{N}_{\varepsilon}| \le n^{O(d^3)} \cdot (1 / \varepsilon)^{O(d)}$, 
such that for any $x$ satisfying $\|x\|_2 \le U$, 
there exists $y' \in \mathcal{N}_{\varepsilon}$ such that 
\[
\|(Ax - b) - y'\|_M \le \eps^p.
\]
\end{lemma}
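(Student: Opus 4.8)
The plan is to build $\mathcal{N}_\varepsilon$ by a standard volumetric net argument in the space of parameters $x$, then push it forward through the map $x \mapsto Ax - b$, controlling how multiplicative perturbations in $x$ translate into additive $\|\cdot\|_M$ perturbations in the residual. First I would fix an orthonormal basis for $\colspan(A^\top)$ (we may assume $x$ lies there, as noted in Section~\ref{sec:norm_x}), so that effectively $x$ ranges over a $d'$-dimensional Euclidean ball of radius $\xupper$ with $d' \le d$. The key geometric fact is that a Euclidean ball of radius $\xupper$ admits a $\gamma$-net of size at most $(1 + 2\xupper/\gamma)^{d}$; I would choose $\gamma$ small enough (polynomially small in $n, 1/\varepsilon$, and the relevant norm bounds from Assumptions~\ref{as main}.\ref{as main size} and \ref{as tau size}) that the induced net $\{Ax' - b : x' \text{ in the } \gamma\text{-net}\}$ satisfies the claimed approximation. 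Since $\xupper = n^{O(d^2)}$ and $\gamma = 1/n^{O(d)} \cdot \poly(\varepsilon)$, the net size is $(n^{O(d^2)}/\gamma)^{O(d)} = n^{O(d^3)} (1/\varepsilon)^{O(d)}$, matching the stated bound.

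The main work is the error analysis: given arbitrary $x$ with $\|x\|_2 \le \xupper$, pick the nearest net point $x'$, so $\|x - x'\|_2 \le \gamma$, and set $y = Ax - b$, $y' = Ax' - b$, $e = y - y' = A(x - x')$. I need $\|e\|_M \le \varepsilon^p$. Entrywise, $|e_i| = |A_{i,*}(x - x')| \le \|A_{i,*}\|_2 \gamma$, which is at most $\poly(n) \cdot \gamma$ by Assumption~\ref{as main}.\ref{as main size}. Then I split coordinates: for each $i$, $M(e_i) \le \min\{\tau^p, U_M |e_i|^p\}$ by Assumptions~\ref{as M}.\ref{as M near quad} and \ref{as M}.\ref{as M flat}, so $\|e\|_M \le \sum_i U_M |e_i|^p \le U_M n (\poly(n)\gamma)^p$. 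Choosing $\gamma$ small enough — concretely $\gamma \le \varepsilon / (n^{c d} )$ for a suitable constant $c$ absorbing $U_M$, the $\poly(n)$ column-norm bound, and the factor $n^{1/p}$ — forces $\|e\|_M \le \varepsilon^p$. This is all elementary once the parameter regime is set up; the only subtlety is bookkeeping the exponents of $n$ and $d$ so the final net size comes out as $n^{O(d^3)}(1/\varepsilon)^{O(d)}$ rather than something larger, and making sure $\gamma$ is not so small that the $\log(1/\gamma)$ overhead in $\log|\mathcal{N}_\varepsilon|$ blows the $O(d)$-in-$\log(1/\varepsilon)$ accounting — here it does not, since $\log(1/\gamma) = O(d^2 \log n + \log(1/\varepsilon))$ and it is multiplied by $d$.

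I would also address the mild technical point that we want $\mathcal{N}_\varepsilon \subseteq \colspan([A~b])$: each net element $Ax' - b$ indeed lies in $\colspan([A~b])$ since it is $[A~b]$ applied to $(x', -1)$, so this is automatic. The step I expect to be most delicate is not conceptually hard but requires care: verifying that the net radius $\gamma$ can be taken polynomially bounded from below (so $\log(1/\gamma)$ is polynomial and the size bound holds) while still being small enough to drive $\|e\|_M$ below $\varepsilon^p$ — this hinges crucially on the input-norm bounds in Assumption~\ref{as main}, which is exactly why Assumption~\ref{as main} was introduced (cf.\ the scale-invariance discussion in the Technical Overview). Note that this lemma only gives an \emph{additive} $\varepsilon^p$ guarantee; converting it to the relative-error net used later will be handled separately (via the observation that residuals with small $\|\cdot\|_M$ have $\ell_\infty$ norm below $\tau$, where $M$ behaves like $|\cdot|^p$), but that is outside the scope of this statement.
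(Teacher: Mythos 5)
Your proof is correct, and it takes a genuinely different route than the paper's. The paper constructs the cover directly in the $(d{+}1)$-dimensional residual space $\colspan([A\ b])$: it sets $\hM(a) = \min\{\tau^p,|a|^p\}$, observes that for $\eps \le \tau/2$ the set $\{Ax{-}b : \|Ax{-}b\|_{\hM} \le \eps^p\}$ is exactly the $\ell_p$-ball $B_\eps$ of radius $\eps$, and runs a volume-packing argument — at most $\bigl(\sqrt{n}(\|A\|_2\xupper + \|b\|_2)/\eps\bigr)^d$ disjoint translates of $B_\eps$ fit inside $B_{\xupper}$, so the centers of a maximal packing are an $\eps^p$-cover (with a small separate case for $\eps > \tau/2$). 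You instead take an $\ell_2$ $\gamma$-net of the parameter ball $\{x : \|x\|_2 \le \xupper\}$ and push it forward through $x \mapsto Ax - b$, doing an explicit entrywise error analysis $|e_i| \le \|A_{i,*}\|_2\gamma$ and $M(e_i) \le U_M|e_i|^p$ to choose $\gamma$. Both land at $n^{O(d^3)}(1/\eps)^{O(d)}$. The paper's $\ell_p$ volume comparison is cleaner in that the cover is adapted directly to the target norm and no error-propagation step is needed; your push-forward argument is more elementary, avoids the $\eps$-vs-$\tau/2$ case split entirely (since $M(a) \le U_M|a|^p$ holds for all $a$, not just $|a| \le \tau$), and makes the role of the column-norm bounds in Assumption~\ref{as main}.\ref{as main size} more transparent. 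Both are acceptable; your exponent bookkeeping ($\gamma = \eps/n^{O(d)}$, net size $(\xupper/\gamma)^{O(d)} = n^{O(d^3)}(1/\eps)^{O(d)}$) checks out.
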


\begin{proof}
Let $\hM(a) \equiv \min\{\tau^p, |a|^p\}$.
Assume for now that $\eps\le \tau / 2$, so that if $\norm{Ax}_\hM\le\eps^p$, then every entry of $Ax$ is no more than $\tau$
in magnitude, and so $\norm{Ax}_\hM = \norm{Ax}_p^p$.

Let 
\[
B_{\eps} \equiv \{Ax-b \mid \norm{Ax-b}_\hM \le \eps^p \} = \{Ax-b \mid \norm{Ax-b}_p \le \eps \}
\]
and
\[
B_{U} \equiv \{Ax-b \mid \|x\|_2 \le U\} \subseteq \{Ax-b \mid \|Ax - b\|_p \le \sqrt{n} \cdot ( \norm{A}_2\xupper+\norm{b}_2)\} .
\]

From the scale invariance of the $\ell_p$ norm, and the volume in at-most $d$
dimensions, $\Vol(B_{\eps} ) \ge (\eps/(\sqrt{n} \cdot (\norm{A}_2\xupper+\norm{b}_2)))^d \Vol(B_\xupper)$,
so that at most $(\sqrt{n} \cdot (\norm{A}_2\xupper+\norm{b}_2)/\eps)^d$ translates of $B_{\eps} $ can
be packed into $B_\xupper$ without intersecting. Thus the set $\cN_{\varepsilon}$ of centers
of such a maximal packing of translates is an $\eps^p$-cover of
$B_\xupper$, that is, for any point $y\in B_\xupper$,
there is some $y'\in\cN$ such that $\norm{y'-y}_p \le \eps$, so that
$\norm{y'-y}_\hM\le\eps^p$.

If $\eps > \tau / 2$, we just note that a $(\tau / 2)^p$-cover is also an $\eps^p$-cover, and so there is an $\eps^p$-cover
of size $(\sqrt{n} \cdot (\norm{A}_2\xupper+\norm{b}_2)/\min\{\tau / 2, \eps\})^d$.

Plugging in the bounds for $\xupper$ from Assumption~\ref{as main}.\ref{as main A int},
and for $\tau$, $\norm{b}_2$,
and $\norm{A}_2\le \max_{i\in [d]} \norm{A_{*,i}}_2$
from Assumptions~\ref{as main}.\ref{as main size} and \ref{as main}.\ref{as tau size},
the cardinality bound of the lemma follows.

This argument is readily adapted to more general $\norm{\cdot}_M$, 
by noticing that $\|y - y'\|_{M} \le U_M \cdot \|y - y'\|_{\hat{M}}$
using Assumption~\ref{as M}.\ref{as M near quad}
and adjusting constants. 
\end{proof}

\begin{lemma}\label{lem:net2}
Under the given assumptions, there exists a set
$\mathcal{M}_{\varepsilon}^{\alpha, \beta}\subseteq \colspan([A~b])$ with size 
$|\mathcal{M}_{\varepsilon}^{\alpha, \beta}| \le  O\left( \frac{\beta / \alpha}{\varepsilon} \right) \cdot n^{O(d^2)} \cdot (1 / \varepsilon)^{O(d)}$,
such that for any $x$ satisfying $\alpha \le \|Ax - b\|_p \le \beta \le \tau$,
there exists $y' \in \mathcal{M}_{\varepsilon}^{\alpha, \beta}$ such that
$$
\|(Ax - b) - y'\|_M \le \varepsilon^p \cdot \|Ax - b\|_M.
$$ 
\end{lemma}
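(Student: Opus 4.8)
The plan is to reduce the relative-error $M$-net to a relative-error $\ell_p$-net, and then build the latter by stacking additive $\ell_p$-nets at a geometric sequence of scales between $\alpha$ and $\beta$, exactly as in the volume/packing argument used for Lemma~\ref{lem:net1}.

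The key observation is that, because $\beta \le \tau$, every vector $y = Ax-b$ with $\norm{Ax-b}_p \le \beta$ has $\norm{y}_\infty \le \norm{y}_p \le \tau$, and moreover any $z$ with $\norm{z}_p \le \norm{y}_p$ also satisfies $\norm{z}_\infty \le \tau$. Hence Assumption~\ref{as M}.\ref{as M near quad} applies coordinatewise to every such $y$ \emph{and} to any error $y-y'$ with $\norm{y-y'}_p \le \norm{y}_p$, giving the two-sided sandwich $L_M|a|^p \le M(a) \le U_M|a|^p$ on all relevant entries. Consequently $\norm{y}_M \ge L_M\norm{y}_p^p$, and if $\norm{y-y'}_p \le \varepsilon'\norm{y}_p$ with $\varepsilon' := \varepsilon\,(L_M/U_M)^{1/p}$, then
\[
\norm{y-y'}_M \le U_M\norm{y-y'}_p^p \le U_M(\varepsilon')^p\norm{y}_p^p = L_M\varepsilon^p\norm{y}_p^p \le \varepsilon^p\norm{y}_M .
\]
So it suffices to produce a set $\mathcal{M} \subseteq \colspan([A~b])$ of the claimed cardinality such that every $y=Ax-b$ with $\alpha \le \norm{y}_p \le \beta$ has some $y' \in \mathcal{M}$ with $\norm{y-y'}_p \le \varepsilon'\norm{y}_p$.

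To build such a multiplicative $\ell_p$-net I would let $s$ range over the scales $\alpha, (1+\varepsilon')\alpha, (1+\varepsilon')^2\alpha, \dots$ up to $\beta$ — there are $O(\log(\beta/\alpha)/\varepsilon') = O((\beta/\alpha)/\varepsilon)$ of them — and, for each scale $s$, take a maximal packing of disjoint $\ell_p$-balls of radius $\varepsilon' s/2$ whose centers lie in $\{Ax-b : \norm{Ax-b}_p \le (1+\varepsilon')s\}$. Since this set is a bounded at-most-$d$-dimensional affine region, the scale-invariance of the $\ell_p$-volume (as in Lemma~\ref{lem:net1}) bounds the number of centers by $(O(1)/\varepsilon')^{O(d)} = (1/\varepsilon)^{O(d)}$, and the centers form an $\varepsilon' s$-cover of that region. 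Let $\mathcal{M}$ be the union of these center sets over all scales; then $|\mathcal{M}| \le O((\beta/\alpha)/\varepsilon)\cdot n^{O(d^2)}\cdot(1/\varepsilon)^{O(d)}$, where the $n^{O(d^2)}$ absorbs the coarse accounting (and, if one insists that all points of $\mathcal{M}$ lie in $\colspan([A~b])$, one simply adjoins the net of Lemma~\ref{lem:net1}). Given $y = Ax-b$ with $\alpha \le \norm{y}_p \le \beta$, pick the scale $s$ with $s \le \norm{y}_p \le (1+\varepsilon')s$; then $y$ lies in the region covered at scale $s$, so that cover supplies $y'$ with $\norm{y-y'}_p \le \varepsilon' s \le \varepsilon'\norm{y}_p$, and combining with the previous paragraph gives $\norm{(Ax-b)-y'}_M \le \varepsilon^p\norm{Ax-b}_M$.

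I expect the main subtlety to be the \emph{relative} error: since $M$ agrees with $|\cdot|^p$ only up to the constants $L_M,U_M$, one cannot simply rescale the accuracy of a fixed $\ell_2$- or $\ell_p$-net, and must (i) verify that every coordinate of both $y$ and of the error $y-y'$ stays $\le\tau$, so that the two-sided bound $L_M|a|^p \le M(a)\le U_M|a|^p$ is in force on all of them — this is exactly where $\beta\le\tau$ enters — and (ii) pay for the resulting constant-factor slack by contracting the $\ell_p$-accuracy to $\varepsilon' = \varepsilon(L_M/U_M)^{1/p}$. A secondary, purely bookkeeping issue is organizing the scales and packings so that the total count matches the stated $O((\beta/\alpha)/\varepsilon)\cdot n^{O(d^2)}\cdot(1/\varepsilon)^{O(d)}$ bound rather than a larger one.
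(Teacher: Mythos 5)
Your proposal is correct and follows essentially the same route as the paper's proof: stack $\ell_p$-nets at geometric scales between $\alpha$ and $\beta$, bound each scale's net by a volume/packing argument, and use $\beta\le\tau$ to ensure both $y$ and $y-y'$ have all entries at most $\tau$ in magnitude so that Assumption~\ref{as M}.\ref{as M near quad} gives a two-sided comparison between $\norm{\cdot}_M$ and $\norm{\cdot}_p^p$. The only cosmetic difference is that you pre-shrink the $\ell_p$-accuracy to $\varepsilon'=\varepsilon(L_M/U_M)^{1/p}$ to land exactly on the stated bound, whereas the paper carries the $U_M/L_M$ slack through and then ``adjusts constants'' at the end.
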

\begin{proof}
We assume $\varepsilon \le \tau$, since otherwise we can take $\varepsilon$ to be $\tau$.
By standard constructions (see, e.g., {\cite[p.~48]{w14}}), there exists a set
$\mathcal{M}_{\gamma} \subseteq \colspan([A~b])$ with size $|\mathcal{M}_{\gamma}| \le (1 / \varepsilon)^{O(d)}$,
such that for any $y = Ax - b$ with $\|y\|_p = \gamma$, there exists
$y' \in \mathcal{M}_{\gamma}$ such that $\|y- y'\|_p \le \gamma \cdot \varepsilon$.

Let $\mathcal{M}_{\varepsilon}^{\alpha, \beta}$ be 
$$
\mathcal{M}_{\varepsilon}^{\alpha, \beta}= \mathcal{M}_{\alpha} \cup \mathcal{M}_{(1 + \varepsilon) \alpha} \cup \mathcal{M}_{(1 + \varepsilon)^2 \alpha} \cup \cdots \cup \mathcal{M}_{\beta}.
$$
Clearly, by Assumption \ref{as main},
$$
|\mathcal{M}_{\varepsilon}^{\alpha, \beta}| \le \log_{1 + \varepsilon}(\beta / \alpha)  \cdot n^{O(d^2)} \cdot (1 / \varepsilon)^{O(d)} \le  O\left( \frac{\beta / \alpha}{\varepsilon} \right) \cdot n^{O(d^2)} \cdot (1 / \varepsilon)^{O(d)}.
$$
Now we show that $\mathcal{M}_{\varepsilon}^{\alpha, \beta}$ satisfies the desired properties. 
For any $x \in \R^d$ such that $y = Ax - b$ satisfies $\alpha \le \|y\|_p \le \beta \le \tau$,
we must have $|y_i| \le \tau$ for all entries of $y$.
By normalization, there exists $\hat{y}$ such that $\|y - \hat{y}\|_p \le \varepsilon \cdot \|y\|_p$
and $\|\hat{y}\|_p = (1 + \varepsilon)^i \cdot \alpha$ for some $i \in \mathbb{N}$.
Furthermore, by the property of $\mathcal{M}_{(1 + \varepsilon)^i \alpha}$,
there exists $y' \in \mathcal{M}_{(1 + \varepsilon)^i \alpha} \subseteq \mathcal{M}_{\varepsilon}^{\alpha, \beta}$
such that $\|\hat{y} - y'\|_p \le \varepsilon \cdot \|y'\|_p \le 2 \varepsilon \cdot \|y\|_p$.
Thus, by triangle inequality, we have $\|y - y'\|_p \le 3\varepsilon  \|y\|_p$.
For sufficiently small $\varepsilon$, since $\|y\|_p \le \tau$, we also have
$\|y - y'\|_p \le \tau$, which implies $\|y - y'\|_{\infty} \le \tau$.
Thus, using Assumption \ref{as M}.\ref{as M near quad}, we have
$$
\|y - y'\|_M \le U_M \|y - y'\|_p^p \le U_M \cdot (3 \varepsilon)^p  \cdot \|y\|_p^p \le U_M / L_M (3 \varepsilon)^p \|y\|_M.
$$
Adjusting constants implies the desired properties. 
\end{proof}

\subsection{The Net Argument} \label{sec:net_arg}
\begin{theorem}\label{thm:net}
For any $A \in \R^{n \times d}$ and $b \in \R^n$, 
given a matrix $S \in \mathbb{R}^{r \times n}$ and a weight vector $w \in \mathbb{R}^n$ such that $w_i \ge 0$ for all $i \in [n]$.
Let $c = \min_{x} \|Ax - b\|_p$.
If there exist $U_O, U_A, L_A, L_N \le \poly(n)$ such that
\begin{enumerate}
\item $\|S(Ax_{M}^* - b)\|_{M, w} \le U_{O} \|Ax_{M}^* - b\|_M$, where $x_{M}^* = \argmin_x \|Ax - b\|_M$;
\item $L_A \|Ax - b\|_M \le \|S(Ax - b)\|_{M, w} \le U_A \|Ax - b\|_M$ for all $x \in \mathbb{R}^d$;
\item $\|Sy\|_{M, w} \ge L_{N} \|y\|_M$ for all $y \in \mathcal{N}_{\poly(\varepsilon \cdot \tau / n)} \cup \mathcal{M}_{\poly(\varepsilon / n)}^{c, c \cdot \poly(n)}$,
\end{enumerate}
then, any $C$-approximate solution of $\min_x \|S(Ax - b)\|_{M, w}$ with $C \le \poly(n)$
is a $C \cdot (1 + O(\varepsilon)) \cdot U_O / L_N$-approximate solution of $\min_x \|Ax - b\|_M$.
Here $\mathcal{N}_{\poly(\varepsilon \cdot \tau / n)}$ and
$\mathcal{M}_{\poly(\varepsilon / n)}^{c, c \cdot \poly(n)}$ are as defined
in Lemma~\ref{lem:net1} and Lemma \ref{lem:net2}, respectively.  
\end{theorem}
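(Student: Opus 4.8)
The plan is to show that any $C$-approximate solution $\hat x$ of the sketched problem is localized, find a net point close to $y\equiv A\hat x-b$ in $\|\cdot\|_M$, and then push the two-sided distortion of $S$ through this closeness so as to upgrade the weak global lower bound $L_A$ into the sharp lower bound $L_N$ on the net. Write $y^*\equiv Ax_M^*-b$ and $\OPT\equiv\|y^*\|_M$, and dispose of the degenerate case $\OPT=0$ first: there $\|Sy\|_{M,w}\le C\|Sy^*\|_{M,w}\le CU_O\OPT=0$, so the lower bound in condition 2 forces $\|y\|_M=0$ and $\hat x$ is exactly optimal. Assume $\OPT>0$. By optimality of $\hat x$ for the sketched problem and condition 1, $\|Sy\|_{M,w}\le C\|Sy^*\|_{M,w}\le CU_O\OPT$; feeding this into the lower bound of condition 2 gives $\|y\|_M\le CU_O\OPT/L_A=\poly(n)\OPT$, so $\hat x$ is already a $\poly(n)$-approximate solution of $\min_x\|Ax-b\|_M$, whence Assumption~\ref{as main}.\ref{as main A int} supplies $\|\hat x\|_2\le\xupper$ and $y$ is covered by the additive-error net $\mathcal{N}$ of Lemma~\ref{lem:net1}. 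Record also the trivial bound $\|y\|_M\ge\OPT$.

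Next I would select a net point $y'$ with $\|y-y'\|_M\le\varepsilon^{2p+1}\|y\|_M/\poly(n)$, splitting on whether $\|y\|_M\le\tau^p/Q(n)$ for a polynomial $Q$ fixed later. In the \emph{small} case this forces $\|y\|_\infty\le\tau$, and since $\OPT\le\|y\|_M<\tau^p$ one checks $\OPT=\Theta(c^p)$ (using $\|y^*\|_\infty\le\tau$ and that $c$ is the $\ell_p$-optimum), so $\|y\|_p^p=\Theta(\|y\|_M)$ lies in $[c^p,\poly(n)\,c^p]$ and $\|y\|_p\le\tau$; hence $y$ falls in the range covered by the relative-error net $\mathcal{M}^{c,\,c\,\poly(n)}_{\poly(\varepsilon/n)}$ of Lemma~\ref{lem:net2}, delivering $\|y-y'\|_M\le(\poly(\varepsilon/n))^p\|y\|_M$. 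In the \emph{large} case $\|y\|_M>\tau^p/Q(n)$, take $y'$ from $\mathcal{N}_{\poly(\varepsilon\tau/n)}$ as in the previous step; its additive error $(\poly(\varepsilon\tau/n))^p=(\poly(\varepsilon/n))^p\tau^p$ is at most $(\poly(\varepsilon/n))^p Q(n)\,\|y\|_M$. Choosing the polynomial hidden in $\poly(\varepsilon/n)$ small enough relative to $Q$ gives $\|y-y'\|_M\le\varepsilon^{2p+1}\|y\|_M/\poly(n)$ in both cases, and Lemma~\ref{lem:perturb} with unit weights then gives $\|y'\|_M=(1\pm O(\varepsilon))\|y\|_M$.

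To finish I would transfer the closeness through $S$. Since $\|y-y'\|_M$ is small one has $\|y-y'\|_\infty\le\tau$, so $\|y-y'\|_M=\Theta(\|y-y'\|_p^p)$, and a crude bound (operator norm of $S$ times $\max_i w_i$, both $\poly(n)$ in the applications; equivalently the upper bound of condition 2 read on $\colspan([A~b])$, which covers the difference $y-y'$) gives $\|S(y-y')\|_{M,w}\le\poly(n)\,\|y-y'\|_M\le\varepsilon^{2p+1}\|y\|_M/\poly(n)$. Since $\|Sy'\|_{M,w}\ge L_N\|y'\|_M=\Omega(L_N\|y\|_M)$ by condition 3, letting the $\poly(n)$ slack from the previous step absorb $1/L_N$ makes $\|S(y-y')\|_{M,w}\le\varepsilon^{2p+1}\|Sy'\|_{M,w}$, so Lemma~\ref{lem:perturb} with weights $w$ gives $\|Sy\|_{M,w}=(1\pm O(\varepsilon))\|Sy'\|_{M,w}\ge(1-O(\varepsilon))L_N\|y\|_M$. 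Comparing with $\|Sy\|_{M,w}\le CU_O\OPT$ from the first step yields $\|A\hat x-b\|_M=\|y\|_M\le(1+O(\varepsilon))CU_O\OPT/L_N$, which is the claimed bound.

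The main obstacle is the second step, together with the bookkeeping of polynomial factors in the third. The additive net $\mathcal{N}$ is useless exactly when $\|y\|_M$ is comparable to or below its additive error $(\poly(\varepsilon\tau/n))^p$ — precisely the regime where $\hat x$ is nearly optimal — and there one must fall back on the relative-error net $\mathcal{M}$, which only covers vectors of $\ell_p$-norm within a $\poly(n)$ factor of $c$ and below $\tau$; verifying that the ``small $\|y\|_M$'' regime always lands inside that window relies on the chain $\|y\|_M\in[\OPT,\poly(n)\OPT]$ together with $\OPT=\Theta(c^p)$, which in turn rests on the size and threshold clauses of Assumption~\ref{as main}. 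One must also choose the polynomials in the two net parameters and the threshold $Q$ consistently so that the factors $U_A/L_A$, $1/L_N$, and the various slacks all compose and only the clean factor $C(1+O(\varepsilon))U_O/L_N$ survives; the remainder is the net-plus-successive-approximation pattern standard in this line of work.
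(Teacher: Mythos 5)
Your proof is correct and takes essentially the same route as the paper's: use the sketched optimality and the global lower bound $L_A$ to get a coarse $\poly(n)$-approximation (which invokes Assumption~\ref{as main}.\ref{as main A int} to bound $\|\hat x\|_2$), split into a ``near-threshold'' regime handled by the relative-error net $\mathcal{M}$ and a ``far-from-threshold'' regime handled by the additive net $\mathcal{N}$, pick a nearby net point $y'$, push the closeness through $S$ via the $U_A$ upper bound and Lemma~\ref{lem:perturb}, and then upgrade $L_A$ to $L_N$ on the net point to close the argument.

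The only cosmetic differences: the paper's case split is on the data quantity $(C\,U_M U_A/(L_M L_A))\,c^p$ versus $\tau^p$, which immediately localizes $\|A\hat x-b\|_p$ in $[c,\,c\,\poly(n)]\subseteq[c,\tau]$ in Case~1 and gives $\|Ax-b\|_M\ge\tau^p/\poly(n)$ for all $x$ in Case~2; you instead split on the solution quantity $\|y\|_M$ versus $\tau^p/Q(n)$, and recover the same window via the extra observation $\OPT=\Theta(c^p)$. Both parameterizations work and require the same choice-of-polynomials bookkeeping. You also make the degenerate $\OPT=0$ case explicit, which the paper elides.
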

\begin{proof}
We distinguish two cases in the proof.

\paragraph{Case 1: $(C \cdot U_M \cdot U_A / (L_M \cdot L_A)) \cdot c^p \le \tau^p$.} 
In this case, we prove that any $C$-approximate solution $x_{S,M,w}^C$ of $\min_x \|S(Ax - b)\|_{M, w}$
satisfies $c \le \|Ax_{S,M,w}^C - b\|_p \le (C \cdot U_M \cdot U_A / (L_M \cdot L_A))^{1 / p} \cdot c \le \tau$.
Let $x_{p}^* = \argmin_x \|Ax - b\|_p$, we have 
\begin{align*}
&\|A x_{S,M,w}^C - b\|_M\\
 \le & \|S(Ax_{S,M,w}^C - b)\|_{M, w} / L_A \\
 \le & C \cdot \|S(Ax_{p}^* - b)\|_{M, w} / L_A \\
 \le & C \cdot \|Ax_{p}^* - b\|_M \cdot U_A / L_A\\
 \le & C \cdot \|Ax_{p}^* - b\|_p^p \cdot (U_M \cdot U_A) / L_A\\
 =   & C \cdot c^p \cdot (U_M \cdot U_A) / L_A.
\end{align*}
Since $L_M \le 1$, this implies $\|Ax_{S,M,w}^C - b\|_M \le \tau^p$, which implies $\|Ax_{S,M,w}^C - b\|_{\infty} \le \tau$.
Thus, $\|Ax_{S,M,w}^C - b\|_p^p \le \|Ax_{S,M,w}^C - b\|_M / L_M \le (C \cdot U_M \cdot U_A / (L_M \cdot L_A)) \cdot c^p$, which implies
$
\|Ax_{S,M,w}^C - b\|_p \le (C \cdot U_M \cdot U_A / (L_M \cdot L_A))^{1 / p} \cdot c.
$
Moreover, by the definition of  $c$ we have $\|Ax_{S,M,w}^C - b\|_p \ge c$.

Since $(C \cdot U_M \cdot U_A / (L_M \cdot L_A))^{1 / p} \le \poly(n)$,
by Lemma \ref{lem:net2}, there exists $y' \in \mathcal{M}_{\poly(\varepsilon / n)}^{c, c \cdot \poly(n)}$
such that $\|(Ax_{S,M,w}^C - b) - y'\|_{M} \le \poly(\varepsilon / n) \cdot \|Ax_{S,M,w}^C - b\|_M$.
Notice that
$$
\|S(Ax_{S,M,w}^C  - b)\|_{M, w} = \|Sy' + S((Ax_{S,M,w}^C  - b) - y')\|_{M, w}.
$$
For $Sy'$, since $y' \in \mathcal{M}_{\poly(\varepsilon / n)}^{c, c \cdot \poly(n)}$, we have
$$
\|Sy'\|_{M, w} \ge L_N \|y'\|_M = L_N \|Ax_{S,M,w}^C  - b + (y' - (Ax_{S,M,w}^C  - b))\|_M.
$$
Since $\|y' - (Ax_{S,M,w}^C  - b)\|_M \le \poly(\varepsilon / n) \cdot \|Ax_{S,M,w}^C  - b\|_M$,
by Lemma \ref{lem:perturb}, we have $\|Ax_{S,M,w}^C  - b + (y' - (Ax_{S,M,w}^C  - b))\|_M \ge (1 - \varepsilon) \|Ax_{S,M,w}^C  - b\|_M$, which implies
$\|Sy'\|_{M, w} \ge L_N (1 - \varepsilon) \|Ax_{S,M,w}^C  - b\|_M$.
On the other hand, $\|S((Ax_{S,M,w}^C  - b) - y')\|_{M, w} \le U_A \|(Ax_{S,M,w}^C  - b) - y'\|_{M} \le \poly(\varepsilon / n)  \cdot \|Ax_{S,M,w}^C  - b\|_M$.
Again by Lemma \ref{lem:perturb}, we have $\|S(Ax_{S,M,w}^C  - b)\|_{M, w} \ge (1 - \varepsilon) \|Sy'\|_{M, w} \ge L_N (1 - O(\varepsilon))\|Ax_{S,M,w}^C  - b\|_M$.
Furthermore, since $x_{S,M,w}^C $ is a $C$-approximate solution of $\min_x \|S(Ax - b)\|_{M, w}$, we must have
\begin{align*}
\|Ax_{S,M,w}^C -b\|_M& \le (1 + O(\varepsilon)) / L_N \cdot \|S(Ax_{S,M,w}^C  -b)\|_{M, w} \\
&\le C \cdot (1 + O(\varepsilon)) / L_N \cdot \|S(Ax_{M}^* -b)\|_{M, w} \\
& \le C \cdot (1 + O(\varepsilon)) \cdot U_O / L_N \cdot \|Ax_{M}^*-b\|_M.
\end{align*}

\paragraph{Case 2: $(C \cdot U_M \cdot U_A / (L_M \cdot L_A)) \cdot c^p \ge \tau^p$.} 
In this case, we first prove that any $C$-approximate solution $x_{S,M,w}^C$ of
$\min_x \|S(Ax - b)\|_{M, w}$ is a $\poly(n)$-approximate solution of $\min_x \|Ax -
b\|_M$. By Assumption~\ref{as main}.\ref{as main A int}, this implies all
$C$-approximate solution $x_{S,M,w}^C$ of $\min_x \|S(Ax - b)\|_{M, w}$ satisfies ${\|x_{S,M,w}^C\|_2 \le U}$.

Consider any $C$-approximate solution $x_{S,M,w}^C$ of $\min_x \|S(Ax - b)\|_{M, w}$, we have
\begin{align*}
\|Ax_{S,M,w}^C - b\|_M \le \|S(Ax_{S,M,w}^C - b)\|_{M, w} / L_A
		   & \le C \cdot \|S(Ax_{M}^* - b)\|_{M, w} / L_A
		\\ & \le  C \cdot U_A / L_A \cdot \|Ax_{M}^* - b\|_{M}
		\le \poly(n) \cdot \|Ax_{M}^* - b\|_{M}.
\end{align*}

We further show that $\|Ax - b\|_M \ge  \tau^p / \poly(n)$ for all $x \in \R^d$.
If $\|Ax-b\|_{\infty} \ge \tau$, then the statement clearly holds.
Otherwise, $\|Ax-b\|_M \ge L_M \cdot \|Ax-b\|_p^p \ge L_M c^p \ge L_M^2 L_A / (C \cdot U_M \cdot U_A) \cdot \tau^p \ge \tau^p / \poly(n)$. 
Thus, for any $C$-approximate solution $x_{S,M,w}^C$ of $\min_x \|S(Ax - b)\|_{M, w}$,
there exists $y' \in \mathcal{N}_{\poly(\varepsilon \cdot \tau / n)}$ such that
$$
\|y' - (Ax_{S,M,w}^C - b)\|_M \le  \poly(\varepsilon \cdot \tau / n) \le \poly(\varepsilon / n) \cdot \|Ax_{S,M,w}^C - b\|_M.
$$
The rest of the proof is exactly the same as that of Case 1. 
\end{proof}


\section{A Row Sampling Algorithm for Tukey Loss Functions} \label{sec:row_sample}
In this section we present the row sampling algorithm. The row sampling
algorithm proceeds in a recursive manner. We describe a single recursive step in
Section \ref{sec:recursive_one_step} and the overall algorithm in Section
\ref{sec:recur_alg}.


\subsection{One Recursive Step}\label{sec:recursive_one_step}
The goal of this section is to design one recursive step of the row sampling
algorithm. For a weight vector $w \in \R^n$, the recursive step outputs a
sparser weight vector $w' \in \R^n$ such that for any set $\mathcal{N} \subseteq
\colspan(A)$ with size $|\mathcal{N}|$, with probability at least $1 -
\deltaos$, simultaneously for all $y \in \mathcal{N}$, 
$$
\|y\|_{M, w'} = (1 \pm \varepsilon) \|y\|_{M, w}.
$$

We maintain that if $w_i \neq 0$, then $w_i \ge 1$ and $\|w\|_{\infty} \le n^2$
as an invariant in the recursion. These conditions imply that we can partition
the positive coordinates of $w$ into $2 \log n$ groups $P_j$, for which $P_j =
\{i \mid 2^{j - 1} \le w_i < 2^j\}$.

Now we define one recursive step of our sampling procedure. We split the matrix
$A$ into $A_{P_1, *}, A_{P_2, *}, \ldots, A_{P_{2 \log n}, *}$, and deal with
each of them separately. 
For each $1 \le j \le 2 \log n$, we invoke the algorithm in Theorem
\ref{thm:struct2} to identify a set $I_j$ for the matrix $A_{P_j, *}$, for some
parameter $\alpha$ and $\deltastruct$ to be determined. 
For each $1 \le j \le 2 \log n$, we also use the algorithm in Theorem
\ref{thm:alg_lewis} to calculate $\{\hat{u}_i\}_{i \in P_j}$ such that
$u_i \le \hat{u}_i \le 2 u_i$ 
where  $\{u_i\}_{i \in P_j}$ are the $\ell_p$ Lewis weights of the matrix $A_{P_j, *}$.

Now for each $i \in P_j$, we define its sampling probability $p_i$ to be
$$
p_i =  \begin{cases}
1 & i \in I_j\\
\min\{1, 1 / 2 + \Theta( 
	d^{\max\{0, p / 2 - 1\}}  \hat{u}_i \cdot Y)\} & i \notin I_j
\end{cases},
$$
where $Y\equiv d \log(1 / \varepsilon)+ \log(\log n / \deltaos) + U_M / L_M \log (|\mathcal{N}| \cdot \log n / \deltaos) / \varepsilon^2$.

For each $i \in [n]$, we set $w_i' = 0$ with probability $1 - p_i$, and set
$w_i' = w_i / p_i$ with probabliity $p_i$. The finishes the definition of one
step of the sampling procedure.

Let 
$$
F \equiv \sum_{1 \le j \le 2 \log n} |I_j| + \sum_{1 \le j \le 2 \log n}\sum_{i \in P_j \setminus I_j}
\Theta( 
	d^{\max\{0, p / 2 - 1\}}  \hat{u}_i \cdot Y).
$$

Our first lemma shows that with probability at least $1 - \deltaos$, the number of non-zero entries in $w'$
is at most $\frac23 \|w\|_0$, provided $\|w\|_0$ is large enough. 
\begin{lemma}\label{lem:size_decrease}
When $\|w\|_0 \ge 10F$, with probability at least $1 - \deltaos$, 
$$
\|w'\|_0 \le \frac23 \|w\|_0.
$$
\end{lemma}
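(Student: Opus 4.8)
The plan is to treat $\|w'\|_0$ as a sum of independent indicator variables, bound its expectation strictly below $\tfrac23\|w\|_0$ using the hypothesis $\|w\|_0\ge 10F$, and then apply Bernstein's inequality (Lemma~\ref{lem:bernstein}) to upgrade this to a high-probability statement.

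First I would write $\|w'\|_0=\sum_{i:\,w_i\neq 0}X_i$, where $X_i=\mathbf 1[w_i'\neq 0]$ is an independent Bernoulli random variable with $\Pr[X_i=1]=p_i$ (in particular $X_i\equiv 1$ for every $i\in\bigcup_j I_j$). Then $\E[\|w'\|_0]=\sum_j\sum_{i\in P_j}p_i$, and splitting each inner sum into $i\in I_j$ (contributing $1$ each) and $i\in P_j\setminus I_j$ (where $p_i\le\tfrac12+\Theta(d^{\max\{0,p/2-1\}}\hat u_iY)$) gives $\E[\|w'\|_0]\le \sum_j|I_j|+\tfrac12\sum_j|P_j\setminus I_j|+\sum_j\sum_{i\in P_j\setminus I_j}\Theta(d^{\max\{0,p/2-1\}}\hat u_iY)\le \tfrac12\|w\|_0+F$, using $\sum_j|P_j|=\|w\|_0$ and the definition of $F$. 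The hypothesis $\|w\|_0\ge 10F$ then yields $\E[\|w'\|_0]\le \tfrac12\|w\|_0+\tfrac1{10}\|w\|_0=\tfrac35\|w\|_0$.

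Next I would apply concentration. Since the $X_i$ are independent and take values in $[0,1]$ (the deterministic ones with $p_i=1$ contributing zero variance), $\Var[\|w'\|_0]=\sum_i p_i(1-p_i)\le\sum_i p_i=\E[\|w'\|_0]\le\tfrac35\|w\|_0$. Because $\tfrac23\|w\|_0$ exceeds $\E[\|w'\|_0]$ by at least $t:=\tfrac1{15}\|w\|_0$, Bernstein's inequality with $b=1$ gives $\Pr[\|w'\|_0>\tfrac23\|w\|_0]\le\Pr[\|w'\|_0-\E[\|w'\|_0]>t]\le 2\exp\!\bigl(-t^2/(2\Var[\|w'\|_0]+2t/3)\bigr)=2\exp(-\Omega(\|w\|_0))$. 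This is at most $\deltaos$ whenever $\|w\|_0=\Omega(\log(1/\deltaos))$, which holds in every call the row sampling algorithm makes, since the recursion is terminated once $\|w\|_0$ falls below $\poly(d\log n/\varepsilon)\gg\log(1/\deltaos)$.

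The step I expect to be the main (though still mild) obstacle is the bookkeeping in the expectation bound: one must check that the $|I_j|$ terms and the Lewis-weight terms reassemble exactly into $F$, and that the uniform-sampling contributions sum to at most $\tfrac12\|w\|_0$, so that the hypothesis $\|w\|_0\ge 10F$ leaves a genuine multiplicative gap ($\tfrac35<\tfrac23$). That gap is precisely what makes the Bernstein step succeed, and it also forces us to carry along the (harmless) requirement that $\|w\|_0$ be large compared with $\log(1/\deltaos)$, which is guaranteed by the recursion's stopping rule.
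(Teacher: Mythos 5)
Your proof is correct and takes essentially the same route as the paper: bound $\E[\|w'\|_0]$ by $\tfrac12\|w\|_0 + F$, invoke the hypothesis $\|w\|_0 \ge 10F$, and apply Bernstein's inequality. Your bookkeeping is in fact slightly more careful than the paper's: you take the deviation $t=\tfrac1{15}\|w\|_0$, giving $\E[\|w'\|_0]+t \le \tfrac35\|w\|_0 + \tfrac1{15}\|w\|_0 = \tfrac23\|w\|_0$, whereas the paper's displayed chain $\|w\|_0/2 + F + \|w\|_0/10 \le \tfrac23\|w\|_0$ requires $F\le \|w\|_0/15$, which is a bit stronger than the stated hypothesis $F\le \|w\|_0/10$ (a harmless slip, fixable by shrinking $t$ exactly as you do).
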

\begin{proof}
Notice that
$$
\E[\|w'\|_{0}] \le \|w\|_0 / 2 + F.
$$
By Bernstein's inequality in Lemma \ref{lem:bernstein}, since
$F \ge \Omega(\log (1 / \deltaos))$, with probability at least
$1 - \exp(-\Omega(\|w\|_0)) \ge 1 - \exp(-\Omega(F)) \ge 1 - \deltaos$, we have
$$
\|w'\|_{0} \le \|w\|_0 / 2 + F + \|w\|_0 / 10 \le \frac23 \|w\|_0.
$$
\end{proof}

Our second lemma shows that $\|w'\|_{\infty}$ is upper bounded by $2\|w\|_{\infty}$.
\begin{lemma}\label{lem:weight_increase}
$\|w'\|_{\infty} \le 2\|w\|_{\infty}$.
\end{lemma}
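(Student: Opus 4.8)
The plan is to observe that every sampling probability used in the recursive step is at least $1/2$, and then to bound each surviving coordinate of $w'$ coordinatewise. Recall that for $i \in I_j$ we set $p_i = 1$, while for $i \notin I_j$ we set $p_i = \min\{1, 1/2 + \Theta(d^{\max\{0, p/2 - 1\}}\hat{u}_i \cdot Y)\}$. Since $d^{\max\{0,p/2-1\}} \ge 1$, $\hat{u}_i \ge 0$ (being an estimate of a Lewis weight/leverage score), and $Y \ge 0$, the additive term $\Theta(d^{\max\{0,p/2-1\}}\hat{u}_i Y)$ is nonnegative, so in every case $p_i \ge 1/2$.

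Now fix any $i \in [n]$. If $w'_i = 0$ then $w'_i \le 2\|w\|_\infty$ trivially. Otherwise $w'_i = w_i / p_i \le w_i/(1/2) = 2 w_i \le 2\|w\|_\infty$, using $p_i \ge 1/2$ and the definition of $\|w\|_\infty$. Taking the maximum over $i$ yields $\|w'\|_\infty \le 2\|w\|_\infty$; note this holds with probability $1$, not merely with high probability. There is essentially no obstacle here: the only ingredient is that the uniform ``floor'' of $1/2$ built into $p_i$ keeps the reweighting factor $1/p_i$ bounded by $2$, which is exactly what lets the uniform downsampling and the leverage-score oversampling compose without blowing up the weights (and is what preserves the recursion invariant $\|w\|_\infty \le n^2$ across $O(\log n)$ levels).
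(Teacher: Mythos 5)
Your proof is correct and follows exactly the same reasoning as the paper: each $p_i$ has a built-in floor of $1/2$ (either $p_i = 1$ for $i \in I_j$, or $p_i \ge 1/2$ from the explicit $1/2$ term), so the nonzero entries satisfy $w'_i = w_i/p_i \le 2w_i$. The paper states this in a single line; your version just spells out the same observation in more detail.
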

\begin{proof}
Since $p_i \ge 1/2$ for all $i \in [n]$, we have $\|w'\|_{\infty} \le 2\|w\|_{\infty}$.
\end{proof}


We show that for sufficiently large constant $C$,
if we set $$\alpha = C \cdot U_M / L_M \cdot \log( |\mathcal{N}| \cdot \log n / \deltaos) / \varepsilon^2$$
and $\deltastruct = \deltaos / (4 \log n)$, 
then with probability at least $1 - \deltaos$, simultaneously
for all $y \in \mathcal{N}$ we have
$$
\|y\|_{M, w'} = (1 \pm \varepsilon) \|y\|_{M, w}.
$$
By Theorem \ref{thm:struct2} and Theorem \ref{thm:fa_lewis}, since 
$$
\sum_{1 \le j \le 2 \log n}\sum_{i \in P_j \setminus I_j} \hat{u}_i \le O(d \log n),
$$
this also implies 
$$
F =  \widetilde{O}(d^{\max\{1, p / 2\}}  \log n  \cdot (  \log(  |\mathcal{N}|/ \deltaos ) \cdot \log (1/ \deltaos)  + d) / \varepsilon^2).
$$

Furthermore, for each $1 \le j \le 2 \log n$, we invoke the algorithm in
Theorem \ref{thm:alg_lewis} and the algorithm in Theorem \ref{thm:struct2}
on $A_{P_1, *}, A_{P_2, *}, \ldots, A_{P_{2 \log n}, *}$, and thus
the running time of each recursive step is thus upper bounded by 
$$
\widetilde{O}((\nnz(A) + d^{p / 2 + O(1)} \cdot \alpha) \cdot \log(1 / \deltastruct))
	= \widetilde{O}((\nnz(A) + d^{p / 2 + O(1)}  \cdot \log( |\mathcal{N}|  / \deltaos)  \cdot / \varepsilon^2) \cdot \log(1 / \deltaos) ).
$$

Now we consider a fixed vector $y \in \colspan(A)$.
We use the following two lemmas in our analysis.

\begin{lemma}\label{lem:sample_heavy}
With probability $1 - \deltaos / O(|\mathcal{N}| \cdot \log n)$, the following holds:
\begin{itemize}
\item If $\|y_{H_y \cap P_j}\|_{M, w} \ge C \cdot U_M  \cdot \tau^p \cdot 2^{j - 1} \cdot \log(|\mathcal{N}| \cdot \log n / \deltaos) / \varepsilon^2$,
then 
$$
\|y_{H_y \cap P_j}\|_{M, w'} = (1 \pm \varepsilon / 2) \|y_{H_y \cap P_j}\|_{M, w};
$$
\item If $\|y_{H_y \cap P_j}\|_{M, w}  < C \cdot U_M \cdot \tau^p \cdot 2^{j - 1}  \cdot \log(|\mathcal{N}| \cdot \log n / \deltaos) / \varepsilon^2$, then 
$$
|\|y_{H_y \cap P_j}\|_{M, w'} -  \|y_{H_y \cap P_j}\|_{M, w}| \le C \cdot U_M \cdot \tau^p \cdot 2^{j - 2}\cdot  \log(|\mathcal{N}| \cdot \log n / \deltaos) / \varepsilon.
$$
\end{itemize}
\end{lemma}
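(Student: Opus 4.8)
The plan is to exploit that on $H_y$ the Tukey loss is constant: by Assumption~\ref{as M}.\ref{as M flat}, $M(y_i)=\tau^p$ for every $i\in H_y$, so that $\|y_{H_y\cap P_j}\|_{M,w}=\tau^p\sum_{i\in H_y\cap P_j}w_i$ and $\|y_{H_y\cap P_j}\|_{M,w'}=\tau^p\sum_{i\in H_y\cap P_j}w'_i$; the quantity to control is thus a purely ``weighted counting'' statistic, and the only randomness lies in which coordinates of $H_y\cap P_j$ survive the sampling. First I would separate the coordinates of $H_y\cap P_j$ that are kept deterministically — every $i$ with $p_i=1$, which in particular contains $I_j$ — from the set $T$ of those with $1/2\le p_i<1$. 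Since $w'_i=w_i$ for $i\notin T$, we get
$$\Delta\;:=\;\|y_{H_y\cap P_j}\|_{M,w'}-\|y_{H_y\cap P_j}\|_{M,w}\;=\;\tau^p\sum_{i\in T}(w'_i-w_i),$$
a sum of independent mean-zero variables. Using $w_i<2^j$ (as $i\in P_j$) and $p_i\ge 1/2$ gives $|\tau^p(w'_i-w_i)|\le b:=2^j\tau^p$ and $\Var[\tau^p w'_i]\le\tau^{2p}w_i^2/p_i\le 2^j\tau^p\cdot\tau^p w_i$, hence $\Var[\Delta]\le\sigma^2:=2^j\tau^p\cdot\|y_{H_y\cap P_j}\|_{M,w}$.

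Next I would apply Bernstein's inequality (Lemma~\ref{lem:bernstein}) to $\Delta$ with these $b,\sigma^2$, choosing the deviation parameter $t$ by regime. Put $L:=|\mathcal{N}|\cdot\log n/\deltaos$ and $V:=\|y_{H_y\cap P_j}\|_{M,w}$. If $V\ge C U_M\tau^p 2^{j-1}\log L/\varepsilon^2$, take $t=\varepsilon V/2$; then $2\sigma^2+\tfrac23 bt=O(2^j\tau^p V)$, so the exponent in Bernstein's bound is $\Omega\!\big(\varepsilon^2 V/(2^j\tau^p)\big)=\Omega(C U_M\log L)=\Omega(C\log L)$, using $U_M\ge 1$, and the failure probability is at most $2\exp(-\Omega(C\log L))\le\deltaos/O(|\mathcal{N}|\log n)$ once $C$ is a large enough absolute constant. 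If instead $V<C U_M\tau^p 2^{j-1}\log L/\varepsilon^2$, take $t=C U_M\tau^p 2^{j-2}\log L/\varepsilon$; substituting the upper bound on $V$ into $\sigma^2$ and into $bt$ and using $\varepsilon\le 1$ shows that $t^2/(2\sigma^2)$ and $t^2/(\tfrac23 bt)$ are both $\Omega(C U_M\log L)=\Omega(C\log L)$, so the exponent is again $\Omega(C\log L)$ and the failure probability is again at most $\deltaos/O(|\mathcal{N}|\log n)$. A union bound over the $2\log n$ groups $P_j$, absorbed into the $O(\cdot)$, gives the stated conclusion for the fixed~$y$.

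The only place needing genuine care is the second regime: there is no relative-error handle, because the heavy contribution of $P_j$ to $\|y\|_{M,w}$ can be as small as a single coordinate's worth, $\Theta(2^j\tau^p)$, far below the $\Theta(2^j\tau^p\log L/\varepsilon^2)$ resolution of the sampling. One therefore has to verify that the \emph{additive} tolerance $C U_M\tau^p 2^{j-2}\log L/\varepsilon$ is at once $\Omega(\sqrt{\sigma^2\log L})$ and $\Omega(b\log L)$, which is exactly the computation indicated above and is where the $\log L/\varepsilon$ factor in the tolerance (as opposed to $\log L/\varepsilon^2$) comes from. The downgrade to an additive bound is harmless downstream: whenever the heavy part of $P_j$ is that small, either the light part dominates $\|y\|_{M,w}$ — and is preserved to relative error by the Lewis-weight sampling — or the structural theorem (Theorem~\ref{thm:struct2}) forces $H_y\cap P_j\subseteq I_j$ and $\Delta=0$; that reconciliation, however, is carried out outside this lemma.
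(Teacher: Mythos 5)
Your proof is correct and takes essentially the same route as the paper's: write $\|y_{H_y\cap P_j}\|_{M,w'}$ as a sum of independent terms (the paper's $Z_i=w_i'M(y_i)$ is exactly your $\tau^p w_i'$ on $H_y$, by Assumption~\ref{as M}.\ref{as M flat}), bound each term and the total variance using $p_i\ge 1/2$ and $w_i<2^j$, and apply Bernstein with $t=\varepsilon V/2$ in the large-$V$ regime and $t=C U_M\tau^p 2^{j-2}\log L/\varepsilon$ in the small-$V$ regime. The only cosmetic differences are that you observe the mean-zero centering explicitly and that your closing remark about a union bound over the $2\log n$ groups is unnecessary for this lemma (which is stated for a single fixed $j$; the union over $j$ is taken later in Lemma~\ref{lem:os_one}).
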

\begin{proof}
For each $i \in H_y \cap P_j$, we use $Z_i$ to denote the random variable 
$$
Z_i = \begin{cases}
w_i M(y_i) / p_i & \text{with probability $p_i$} \\
0 & \text{with probability $1 - p_i$}
\end{cases}.
$$
Since $Z_i = w_i' M(y_i)$, we have
$$
\|y_{H_y \cap P_j}\|_{M, w'} = \sum_{i \in H_y \cap P_j} Z_i.
$$
It is clear that $Z_i \le 2^{j + 1} \cdot U_M \cdot \tau^p$
since $p_i \ge 1 / 2$ and $w_i \le 2^j$, $\E[Z_i] = w_i M(y_i)$ and $\E[Z_i^2] = w_i^2 (M(y_i))^2 / p_i$.
By H\"older's inequality, 
$$
\sum_{i \in H_y \cap P_j} \E[Z_i^2] \le 2^{j + 1} \cdot \|y_{H_y \cap P_j}\|_{M, w}\cdot U_M \cdot \tau^p.
$$
Thus by Bernstein's inequality in Lemma \ref{lem:bernstein}, we have
$$
\Pr\left[\left|\sum_{i \in H_y \cap P_j} Z_i -  \|y_{H_y \cap P_j}\|_{M, w}\right|
	\ge  t\right] \le 2 \exp \left( -\frac{t^2}{2^{j + 2} \cdot U_M \cdot  \tau^p\cdot t / 3 + 2^{j + 2} \cdot \|y_{H_y \cap P_j}\|_{M, w} \cdot  U_M \cdot \tau^p}\right).
$$

When 
$$
\|y_{H_y \cap P_j}\|_{M, w}
	\ge C \cdot U_M  \cdot \tau^p \cdot 2^{j - 1} \cdot \log(|\mathcal{N}| \cdot \log n / \deltaos) / \varepsilon^2,
$$
we take 
$$
t = \varepsilon / 2 \cdot \|y_{H_y \cap P_j}\|_{M, w}
	\ge C \cdot U_M  \cdot \tau^p \cdot 2^{j - 2} \cdot \log( |\mathcal{N}| \cdot \log n / \deltaos) / \varepsilon.
$$
By taking $C$ to be some sufficiently large constant,
with probability at least $1 - \deltaos / O(|\mathcal{N}| \cdot \log n)$, 
$$
\|y_{H_y \cap P_j}\|_{M, w'} = (1 \pm \varepsilon / 2) \|y_{H_y \cap P_j}\|_{M, w}.
$$

When
$$
\|y_{H_y \cap P_j}\|_{M, w} < C \cdot U_M  \cdot \tau^p \cdot 2^{j - 1} \cdot \log(|\mathcal{N}| \cdot \log n / \deltaos) / \varepsilon^2,
$$
we take 
$$
t =  C \cdot U_M \cdot \tau^p \cdot 2^{j - 2}\cdot  \log(|\mathcal{N}| \cdot \log n / \deltaos) / \varepsilon.
$$
By taking $C$ to be some sufficiently large constant, with probability at least $1 - \deltaos / O(|\mathcal{N}| \cdot \log n)$, 
$$
|\|y_{H_y \cap P_j}\|_{M, w'} -  \|y_{H_y \cap P_j}\|_{M, w}|
	\le C \cdot U_M \cdot \tau^p \cdot 2^{j - 2}\cdot  \log(|\mathcal{N}| \cdot \log n / \deltaos) / \varepsilon.
$$
\end{proof}

The proof of the following lemma is exactly the same as Lemma \ref{lem:sample_heavy}.
\begin{lemma}\label{lem:sample_light}
With probability $1 - \deltaos / O(|\mathcal{N}| \cdot \log n)$, the following holds:
\begin{itemize}
\item If $\|y_{L_y \cap P_j}\|_{M, w} \ge C \cdot U_M  \cdot \tau^p \cdot 2^{j - 1} \cdot \log(|\mathcal{N}| \cdot \log n / \deltaos) / \varepsilon^2$, then 
$$
\|y_{L_y \cap P_j}\|_{M, w'} = (1 \pm \varepsilon / 2) \|y_{L_y \cap P_j}\|_{M, w};
$$
\item If $\|y_{L_y \cap P_j}\|_{M, w}  < C \cdot U_M \cdot \tau^p \cdot 2^{j - 1}  \cdot \log(|\mathcal{N}| \cdot \log n / \deltaos) / \varepsilon^2$, then 
$$
|\|y_{L_y \cap P_j}\|_{M, w'} -  \|y_{L_y \cap P_j}\|_{M, w}| \le C \cdot U_M \cdot \tau^p \cdot 2^{j - 2}\cdot  \log(|\mathcal{N}| \cdot \log n / \deltaos) / \varepsilon.
$$
\end{itemize}
\end{lemma}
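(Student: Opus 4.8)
The plan is to reuse the argument for Lemma~\ref{lem:sample_heavy} almost verbatim, replacing $H_y$ by $L_y$ everywhere; the only real task is to verify that the two quantitative inputs to Bernstein's inequality are unaffected by this substitution. First I fix $y\in\colspan(A)$ and a block index $j$, and for each $i\in L_y\cap P_j$ I set $Z_i = w_i'M(y_i)$, so that $Z_i = w_iM(y_i)/p_i$ with probability $p_i$ and $Z_i=0$ otherwise. Then $\E[Z_i]=w_iM(y_i)$, hence $\sum_{i\in L_y\cap P_j}Z_i = \norm{y_{L_y\cap P_j}}_{M,w'}$ with $\E\bigl[\sum_i Z_i\bigr]=\norm{y_{L_y\cap P_j}}_{M,w}$, exactly as in the heavy case.

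The key point is that both the range bound and the second-moment bound survive unchanged, and that they rely only on the \emph{uniform} part $p_i\ge 1/2$ of the sampling probability rather than on its Lewis-weight component. Indeed, for $i\in L_y$ we have $|y_i|\le\tau$, so $M(y_i)\le U_M|y_i|^p\le U_M\tau^p$ by Assumption~\ref{as M}.\ref{as M near quad}; combined with $w_i<2^j$ on $P_j$ and $p_i\ge1/2$ this gives $Z_i\le (w_i/p_i)M(y_i)\le 2^{j+1}U_M\tau^p$, precisely the bound used in Lemma~\ref{lem:sample_heavy}. For the variance, $\E[Z_i^2]=w_i^2M(y_i)^2/p_i\le 2w_i^2M(y_i)^2$, and bounding a sum of products by the maximum times the sum (the H\"older-type step of Lemma~\ref{lem:sample_heavy}), $\sum_{i\in L_y\cap P_j}\E[Z_i^2]\le 2\bigl(\max_{i} w_iM(y_i)\bigr)\norm{y_{L_y\cap P_j}}_{M,w}\le 2^{j+1}U_M\tau^p\,\norm{y_{L_y\cap P_j}}_{M,w}$, again matching the heavy case since $w_iM(y_i)\le 2^jU_M\tau^p$ on $L_y\cap P_j$. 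Thus $\Var\bigl[\sum_i Z_i\bigr]\le\sum_i\E[Z_i^2]$ obeys the same estimate, with $H_y$ replaced by $L_y$.

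It then remains to apply Bernstein's inequality (Lemma~\ref{lem:bernstein}) with $b=2^{j+1}U_M\tau^p$ and the variance bound above. In the regime $\norm{y_{L_y\cap P_j}}_{M,w}\ge C\cdot U_M\tau^p 2^{j-1}\log(|\mathcal{N}|\log n/\deltaos)/\varepsilon^2$ I would take $t=(\varepsilon/2)\norm{y_{L_y\cap P_j}}_{M,w}$; in the complementary regime I would take $t=C\cdot U_M\tau^p 2^{j-2}\log(|\mathcal{N}|\log n/\deltaos)/\varepsilon$. In both cases the exponent in Bernstein's bound works out to $\Omega\bigl(C\log(|\mathcal{N}|\log n/\deltaos)\bigr)$, so for a large enough absolute constant $C$ the failure probability is at most $\deltaos/O(|\mathcal{N}|\log n)$, which is exactly the two stated conclusions (the hypothesis of each bullet selects which value of $t$ is used).

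Finally, a word on where the difficulty is \emph{not}: there is no genuinely new obstacle, because the only facts the uniform-sampling concentration argument uses about a coordinate are $M(y_i)\le U_M\tau^p$ and $p_i\ge 1/2$, and both hold for light coordinates exactly as $M(y_i)=\tau^p$ and $p_i\ge1/2$ hold for heavy ones. The Lewis-weight component of $p_i$, which is what genuinely distinguishes light coordinates, plays no role in \emph{this} lemma; it is exploited separately (via Lemma~\ref{lem:single_lp} and Theorem~\ref{thm:se}) to control the light coordinates in aggregate through their $\ell_p$ mass. The one thing to be careful about when transcribing the proof is to replace each appeal to the identity $M(y_i)=\tau^p$, valid only on $H_y$, by the inequality $M(y_i)\le U_M\tau^p$ that holds throughout $L_y$.
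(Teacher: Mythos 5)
Your proposal is correct and matches the paper's intent: the paper's stated proof of Lemma~\ref{lem:sample_light} is simply ``exactly the same as Lemma~\ref{lem:sample_heavy},'' and you have carried out that substitution of $L_y$ for $H_y$ and verified that the two inputs to Bernstein's inequality (the range bound $Z_i\le 2^{j+1}U_M\tau^p$ and the variance bound) survive unchanged because $M(y_i)\le U_M\tau^p$ also holds on $L_y$ by Assumptions~\ref{as M}.\ref{as M inc} and \ref{as M}.\ref{as M near quad}. Your closing remark that only the uniform $p_i\ge 1/2$ part of the sampling probability is used here, with the Lewis-weight part reserved for the $\ell_p$ control of light coordinates elsewhere, is also consistent with how the paper deploys this lemma in the proof of Lemma~\ref{lem:os_one}.
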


Now we use Lemma \ref{lem:sample_heavy} and Lemma \ref{lem:sample_light}
to analyze the sampling procedure.
\begin{lemma}\label{lem:os_one}
If we set $\alpha = C  \cdot U_M / L_M \cdot \log(|\mathcal{N}| \cdot \log n / \deltaos) / \varepsilon^2$, $\deltastruct = \deltaos / (4 \log n)$,
then for each $1 \le j \le 2 \log n$, with probability at least $1 - \deltaos / (2 \log n)$, simultaneously for all $y \in \mathcal{N}$,
$$
\|y_{P_j}\|_{M, w'} = (1 \pm \varepsilon) \|y_{P_j}\|_{M, w}.
$$
Applying a union bound over all $1 \le j \le 2 \log n$,
with probability at least $1 - \deltaos$, simultaneously for all $y \in \mathcal{N}$,
$$
\|y\|_{M, w'} = (1 \pm \varepsilon) \|y\|_{M, w}.
$$
\end{lemma}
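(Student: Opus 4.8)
The plan is to fix a level $j$ and a vector $y\in\mathcal{N}$, and decompose $\|y_{P_j}\|_{M,w}$ into its heavy part $\|y_{H_y\cap P_j}\|_{M,w}$ and its light part $\|y_{L_y\cap P_j}\|_{M,w}$; then control each via Lemma~\ref{lem:sample_heavy} and Lemma~\ref{lem:sample_light} respectively, and finally take a union bound over $y\in\mathcal{N}$ and $j\in[2\log n]$. The key quantitative observation is that the ``threshold'' appearing in both lemmas, namely $C\cdot U_M\cdot\tau^p\cdot 2^{j-1}\cdot\log(|\mathcal{N}|\cdot\log n/\deltaos)/\varepsilon^2$, is comparable to the mass a single coordinate in $P_j$ can carry: any $i\in H_y\cap P_j$ contributes $w_iM(y_i)=w_i\tau^p\le 2^j\tau^p$, and similarly for light coordinates capped by $\tau^p$ (using Assumption~\ref{as M}.\ref{as M near quad}, \ref{as M flat}). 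So if, say, the heavy part is \emph{below} threshold, the additive error $C\cdot U_M\cdot\tau^p\cdot 2^{j-2}\cdot\log(\cdots)/\varepsilon$ is at most $\varepsilon/2$ times that threshold, and we will charge it against the light part or against the total $\|y\|_{M,w}$.

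First I would split into four cases according to whether each of $\|y_{H_y\cap P_j}\|_{M,w}$ and $\|y_{L_y\cap P_j}\|_{M,w}$ is above or below its threshold $T_j \equiv C\cdot U_M\cdot\tau^p\cdot 2^{j-1}\cdot\log(|\mathcal{N}|\cdot\log n/\deltaos)/\varepsilon^2$. When both parts are above threshold, Lemmas~\ref{lem:sample_heavy} and~\ref{lem:sample_light} give $(1\pm\varepsilon/2)$ multiplicative control on each, so adding gives $(1\pm\varepsilon/2)\|y_{P_j}\|_{M,w}$, which is stronger than needed. When exactly one part (say the heavy one) is below threshold, the below-threshold part incurs additive error at most $\varepsilon T_j/2$, while the above-threshold (light) part has $\|y_{L_y\cap P_j}\|_{M,w}\ge T_j$, so the additive error is at most $(\varepsilon/2)\|y_{L_y\cap P_j}\|_{M,w}\le(\varepsilon/2)\|y_{P_j}\|_{M,w}$; combined with the $(1\pm\varepsilon/2)$ control on the large part this yields $(1\pm\varepsilon)\|y_{P_j}\|_{M,w}$. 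The remaining case is when \emph{both} parts are below threshold; here I need the structural theorem.

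The crux is this both-small case: here I claim $I_j\supseteq H_y\cap P_j$, so that every heavy coordinate in $P_j$ is kept deterministically ($p_i=1$), contributing zero sampling error, and the light coordinates are handled by the Lewis-weight (leverage-score) sampling exactly as in Lemma~\ref{lem:single_lp}/Theorem~\ref{thm:se}. To invoke Theorem~\ref{thm:struct2} I must verify its two hypotheses for the relevant $y$ restricted to $P_j$ with the chosen $\alpha = C\cdot U_M/L_M\cdot\log(|\mathcal{N}|\cdot\log n/\deltaos)/\varepsilon^2$: (i) $\|y_{L_y\cap P_j}\|_p^p\le\alpha\tau^p$, which follows from $\|y_{L_y\cap P_j}\|_{M,w}<T_j$ together with $w_i\ge 2^{j-1}$ on $P_j$ and $M(y_i)\ge L_M|y_i|^p$ for light coordinates, giving $\|y_{L_y\cap P_j}\|_p^p\le T_j/(L_M 2^{j-1}) = O(\alpha)\tau^p$; and (ii) $|H_y\cap P_j|\le\alpha$, which follows from $\|y_{H_y\cap P_j}\|_{M,w}<T_j$ and $w_iM(y_i)=w_i\tau^p\ge 2^{j-1}\tau^p$ on heavy coordinates, giving $|H_y\cap P_j|\le T_j/(2^{j-1}\tau^p) = O(\alpha/U_M)\le\alpha$. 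With $\deltastruct=\deltaos/(4\log n)$, Theorem~\ref{thm:struct2} guarantees, with failure probability $\deltaos/(4\log n)$ per level, that $H_y\cap P_j\subseteq I_j$ simultaneously for \emph{all} such $y\in\colspan(A)$ — in particular all $y\in\mathcal{N}$ in this case. Given that event, the light coordinates $P_j\setminus I_j$ all satisfy $|y_i|\le\tau$ and their sampling probabilities $p_i\ge 1/2+\Theta(d^{\max\{0,p/2-1\}}\hat u_i Y)$ dominate the Lewis-weight requirement of Lemma~\ref{lem:single_lp} with the slack term $Y$ absorbing $U_M/L_M\cdot\log(|\mathcal{N}|\log n/\deltaos)/\varepsilon^2$; a union bound over the $|\mathcal{N}|$ vectors (each at failure probability $\deltaos/O(|\mathcal{N}|\log n)$) gives $\|y_{L_y\cap P_j}\|_{M,w'}=(1\pm\varepsilon)\|y_{L_y\cap P_j}\|_{M,w}$, hence $\|y_{P_j}\|_{M,w'}=(1\pm\varepsilon)\|y_{P_j}\|_{M,w}$.

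Finally, combining all cases, for each fixed $j$ the total failure probability (structural failure $\deltaos/(4\log n)$ plus Bernstein/Lewis failures summed over $y\in\mathcal{N}$ and over the two parts, each $\deltaos/O(|\mathcal{N}|\log n)$) is at most $\deltaos/(2\log n)$, establishing the per-level claim simultaneously for all $y\in\mathcal{N}$. A union bound over $1\le j\le 2\log n$ costs another factor $2\log n$, so with probability at least $1-\deltaos$ we have $\|y_{P_j}\|_{M,w'}=(1\pm\varepsilon)\|y_{P_j}\|_{M,w}$ for every $j$ and every $y\in\mathcal{N}$; summing over the $2\log n$ disjoint pieces $P_j$ (and noting coordinates outside $\bigcup_j P_j$ have $w_i=0=w'_i$) gives $\|y\|_{M,w'}=(1\pm\varepsilon)\|y\|_{M,w}$. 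The main obstacle is the both-small case: getting the parameter bookkeeping right so that the threshold $T_j$ simultaneously (a) is small enough that the structural hypotheses (i)--(ii) hold with the \emph{same} $\alpha$ used in the algorithm, and (b) is large enough that the above-threshold additive errors in the mixed cases are absorbed into a $(1\pm\varepsilon)$ multiplicative bound — this is exactly why $\alpha$ and $Y$ are defined the way they are.
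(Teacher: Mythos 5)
Your proposal matches the paper's proof almost exactly: the same threshold $T_j$, the same four-case decomposition by whether each of the heavy/light parts of $\|y_{P_j}\|_{M,w}$ exceeds $T_j$, the same use of Lemmas~\ref{lem:sample_heavy} and~\ref{lem:sample_light} for the above-threshold cases, and the same invocation of Theorem~\ref{thm:struct2} plus Lemma~\ref{lem:single_lp} in the both-small case, with the identical parameter bookkeeping to verify the structural hypotheses. (One cosmetic slip: $T_j/(2^{j-1}\tau^p)$ equals $L_M\alpha$, not $O(\alpha/U_M)$ — both are $\le\alpha$, so the conclusion is unaffected.)
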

\begin{proof}
By Theorem \ref{thm:struct2}, 
for each $1 \le j \le 2 \log n$,
with probability $1 - \deltaos / (4 \log n)$, 
simultaneously for all $y \in \mathcal{N} \subseteq \colspan(A)$,
if $y$ satisfies 
	(i) $\|y_{L_y \cap P_j}\|_p^p \le \alpha \cdot \tau^p$ and 
	(ii) $|H_y \cap P_j| \le \alpha$, 
then we have $H_y \cap P_j \subseteq I_j$. 
We condition on this event in the remaining part of the proof. 

Now we consider a fixed $y \in \mathcal{N}$.
We show that $\|y_{P_j}\|_{M, w'} = (1 \pm \varepsilon) \|y_{P_j}\|_{M, w}$
with probability at least $1 - \deltaos / O(|\mathcal{N}| \cdot \log n)$.
The desired bound follows by applying a union bound over all $y \in \mathcal{N}$.

We distinguish four cases in our analysis.
We use $T$ to denote a fixed threshold
$$
T = C \cdot U_M  \cdot \tau^p \cdot 2^{j - 1} \cdot \log(|\mathcal{N}| \cdot \log n / \deltaos) / \varepsilon^2.
$$
\begin{description}
\item[Case (i): $\|y_{H_y \cap P_j}\|_{M, w} < T$ and $\|y_{L_y \cap P_j}\|_{M, w}< T$. ] 
Since $\|y_{H_y \cap P_j}\|_{M, w} < T$, we must have 
$$
|H_y \cap P_j| < C \cdot U_M / L_M \cdot \log(|\mathcal{N}| \cdot \log n / \deltaos) / \varepsilon^2 = \alpha.
$$
Furthermore, we also have
$$
\|y_{L_y \cap P_j}\|_p^p < C \cdot U_M / L_M \cdot \tau^p \cdot \log(|\mathcal{N}| \cdot \log n / \deltaos) / \varepsilon^2= \alpha \cdot \tau^p.
$$
By Lemma \ref{lem:single_lp}, with probability at least $1 - \deltaos / O(|\mathcal{N}| \cdot \log n)$, we have
$$
\|y_{P_j \setminus I_j}\|_{M, w'} = (1 \pm \varepsilon)\|y_{P_j \setminus I_j}\|_{M, w},
$$
since $H_y \cap P_j \subseteq I_j$.
Moreover, $\|y_{I_j}\|_{M, w} = \|y_{I_j}\|_{M, w'}$ since $w_i = w_i'$ for all $i \in I_j$.
Thus, we have $\|y_{P_j}\|_{M, w'} = (1 \pm \varepsilon) \|y_{P_j}\|_{M, w}$.
\item[Case (ii): $\|y_{H_y \cap P_j}\|_{M, w} \ge T$ and $\|y_{L_y \cap P_j}\|_{M, w} \ge T$. ] 
By Lemma \ref{lem:sample_heavy} and Lemma \ref{lem:sample_light}, with probability at least $1 - \deltaos / O(|\mathcal{N}| \cdot \log n)$, 
$$
\|y_{H_y \cap P_j}\|_{M, w'} = (1 \pm \varepsilon / 2) \|y_{H_y \cap P_j}\|_{M, w}
$$
and
$$
\|y_{L_y \cap P_j}\|_{M, w'} = (1 \pm \varepsilon / 2)  \|y_{L_y \cap P_j}\|_{M, w},
$$
which implies
$$
\|y_{P_j}\|_{M, w'} = (1 \pm \varepsilon / 2) \|y_{P_j}\|_{M, w}.
$$
\item[Case (iii): $\|y_{H_y \cap P_j}\|_{M, w} \ge T$ and $\|y_{L_y \cap P_j}\|_{M, w}< T$. ] 
By Lemma \ref{lem:sample_heavy} and Lemma \ref{lem:sample_light}, with probability at least $1 - \deltaos / O(|\mathcal{N}| \cdot \log n)$, 
$$
\|y_{H_y \cap P_j}\|_{M, w'} = (1 \pm \varepsilon / 2) \|y_{H_y \cap P_j}\|_{M, w}
$$
and
$$
\left|\|y_{L_y \cap P_j}\|_{M, w'} - \|y_{L_y \cap P_j}\|_{M, w}\right| \le C \cdot U_M \cdot \tau^p \cdot 2^{j - 2}\cdot  \log(|\mathcal{N}| \cdot \log n / \deltaos) / \varepsilon. 
$$
Since
$$
\|y_{P_j}\|_{M, w} \ge \|y_{H_y \cap P_j}\|_{M, w} \ge T \ge C \cdot U_M  \cdot \tau^p \cdot 2^{j - 1} \cdot \log(|\mathcal{N}| \cdot \log n / \deltaos) / \varepsilon^2,
$$
we have
$$
\left|\|y_{L_y \cap P_j}\|_{M, w'} - \|y_{L_y \cap P_j}\|_{M, w}\right| \le \varepsilon / 2 \cdot \|y_{P_j}\|_{M, w},
$$
which implies
$$
\|y_{P_j}\|_{M, w'} = (1 \pm \varepsilon) \|y_{P_j}\|_{M, w}.
$$
\item[Case (iv): $\|y_{H_y \cap P_j}\|_{M, w} < T$ and $\|y_{L_y \cap P_j}\|_{M, w} \ge T$. ] 
By Lemma \ref{lem:sample_heavy} and Lemma \ref{lem:sample_light}, with probability at least $1 - \deltaos / O(|\mathcal{N}| \cdot \log n)$, 
$$
\|y_{L_y \cap P_j}\|_{M, w'} = (1 \pm \varepsilon / 2) \|y_{L_y \cap P_j}\|_{M, w}
$$
and
$$
\left|\|y_{H_y \cap P_j}\|_{M, w'} - \|y_{H_y \cap P_j}\|_{M, w}\right|
	\le C \cdot U_M \cdot \tau^p \cdot 2^{j - 2}\cdot  \log(|\mathcal{N}| \cdot \log n / \deltaos) / \varepsilon. 
$$
Since
$$
\|y_{P_j}\|_{M, w}
	\ge \|y_{L_y \cap P_j}\|_{M, w}
	\ge T \ge C \cdot U_M  \cdot \tau^p \cdot 2^{j - 1} \cdot \log(|\mathcal{N}| \cdot \log n / \deltaos) / \varepsilon^2,
$$
we have
$$
\left|\|y_{H_y \cap P_j}\|_{M, w'} - \|y_{H_y \cap P_j}\|_{M, w}\right| \le \varepsilon / 2 \cdot \|y_{P_j}\|_{M, w},
$$
which implies
$$
\|y_{P_j}\|_{M, w'} = (1 \pm \varepsilon) \|y_{P_j}\|_{M, w}.
$$
\end{description}
\end{proof}

Now we show that with probability $1 - \deltaos$, simultaneously for all $x \in \mathbb{R}^d$, $\|Ax\|_{p, w'}^p = (1 \pm \varepsilon) \|Ax\|_{p, w}^p$.
\begin{lemma}\label{lem:se_one}
For any $1 \le j \le 2 \log n$, with with probability at least $1 - \deltaos / (2 \log n)$, simultaneously for all $y = Ax$, 
$$
\|y_{P_j}\|_{p, w'}^p = (1 \pm \varepsilon) \|y_{P_j}\|_{p, w}^p.
$$

Applying a union bound over all $1 \le j \le 2 \log n$, this implies with probability at least $1 - \deltaos$, 
$$
\|y\|_{p, w'}^p = (1 \pm \varepsilon) \|y\|_{p, w}^p.
$$
\end{lemma}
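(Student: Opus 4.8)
The plan is to reduce the statement to a single weight group and apply the $\ell_p$ subspace-embedding result of Theorem~\ref{thm:se}. Recall that the positive coordinates of $w$ are partitioned into the $2\log n$ groups $P_j = \{i \mid 2^{j-1}\le w_i < 2^j\}$, and that $w'$ agrees with $w$ on the coordinates where $w$ vanishes. Hence $\|y\|_{p,w}^p = \sum_{j} \|y_{P_j}\|_{p,w}^p$ and $\|y\|_{p,w'}^p = \sum_{j}\|y_{P_j}\|_{p,w'}^p$, and it suffices to prove the per-group statement: for each fixed $1\le j\le 2\log n$, with probability at least $1-\deltaos/(2\log n)$, simultaneously for all $y = Ax$ one has $\|y_{P_j}\|_{p,w'}^p = (1\pm\varepsilon)\|y_{P_j}\|_{p,w}^p$. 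A union bound over the $2\log n$ groups and a constant rescaling of $\varepsilon$ then yield the lemma.

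To handle a fixed $j$, I would rescale so that the weights land in the window required by Theorem~\ref{thm:se}. Put $\bar w := w_{P_j}/2^{j-1}$ and let $\bar w'$ be the vector obtained from $\bar w$ by the same Bernoulli sampling with the probabilities $\{p_i\}_{i\in P_j}$ used in the recursive step, so that $\bar w' = w'_{P_j}/2^{j-1}$. By the definition of $P_j$ every entry of $\bar w$ lies in $[1,2)$, so $\bar w$ satisfies hypotheses (i) $\bar w_i\ge 1$ and (ii) $\max_i\bar w_i\le 2\min_i\bar w_i$ of Theorem~\ref{thm:se} applied to the matrix $A_{P_j,*}$. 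It remains to verify that the $p_i$ dominate the sampling-probability threshold of that theorem with $\deltase := \deltaos/(2\log n)$. For $i\in I_j$ this is immediate since $p_i = 1$. For $i\in P_j\setminus I_j$ we have $p_i \ge \Theta\big(d^{\max\{0,p/2-1\}}\,\hat u_i\, Y\big)$ with $\hat u_i\ge u_i$, where $\{u_i\}_{i\in P_j}$ are precisely the $\ell_p$ Lewis weights of $A_{P_j,*}$ entering Theorem~\ref{thm:se}. Since $U_M/L_M\ge 1$ and $\log|\mathcal{N}| = \Omega(d\log(1/\varepsilon))$ for the nets $\mathcal{N}$ constructed in Lemma~\ref{lem:net1} and Lemma~\ref{lem:net2}, the last summand $\frac{U_M}{L_M}\cdot\frac{\log(|\mathcal{N}|\cdot\log n/\deltaos)}{\varepsilon^2}$ of $Y$ already exceeds $\big(d\log(1/\varepsilon)+\log(1/\deltase)\big)/\varepsilon^2$ up to a constant, so $p_i\ge\min\{1,\Theta(d^{\max\{0,p/2-1\}}u_i(d\log(1/\varepsilon)+\log(1/\deltase))/\varepsilon^2)\}$ after adjusting the hidden constant.

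Granting these checks, Theorem~\ref{thm:se} gives, with probability at least $1-\deltase = 1-\deltaos/(2\log n)$, that $\|A_{P_j,*}x\|_{p,\bar w}^p = (1\pm\varepsilon)\|A_{P_j,*}x\|_{p,\bar w'}^p$ for all $x\in\mathbb{R}^d$; multiplying by $2^{j-1}$ and using $\|y_{P_j}\|_{p,w}^p = 2^{j-1}\|y_{P_j}\|_{p,\bar w}^p$ and the analogous identity for $w'$ gives the per-group claim (the rows $i\in I_j$ are kept deterministically, $p_i=1$ and $w_i'=w_i$, which only helps). Summing the per-group guarantees and union-bounding over $j$ finishes the argument.

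I expect the only real obstacle to be the bookkeeping in the previous paragraph: one must confirm that the composite quantity $Y$ in the definition of $p_i$ genuinely contains an $\varepsilon^{-2}$-scaled term that simultaneously absorbs the net-size contribution $d\log(1/\varepsilon)$ of the $\ell_p$ subspace embedding and the $\log(1/\deltase)$ cost of splitting the failure probability over the $2\log n$ groups. This is where the explicit form of $Y$ and the size bound $\log|\mathcal{N}| = \Omega(d\log(1/\varepsilon))$ for the Tukey nets are used; everything else (rescaling, disjointness of the $P_j$, and the deterministically retained rows) is routine.
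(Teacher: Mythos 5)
Your proof is correct and follows the same route as the paper's: for each fixed weight group $P_j$, invoke the $\ell_p$ subspace-embedding Theorem~\ref{thm:se} with $\deltase = \deltaos/(2\log n)$ (the rows of $I_j$ have $p_i=1$ so contribute equally to both sides), then union bound over the $2\log n$ groups. Your rescaling of $w$ to $[1,2)$ is unnecessary since the window $[2^{j-1},2^j)$ already satisfies hypotheses (i)--(ii) of that theorem, and your explicit verification that the $p_i$ dominate the theorem's threshold (via the observation that the $\log(|\mathcal{N}|\log n/\deltaos)/\varepsilon^2$ summand of $Y$ absorbs $d\log(1/\varepsilon)/\varepsilon^2$) is bookkeeping the paper leaves implicit but does not change the argument.
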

\begin{proof}
For any fixed $1 \le j \le 2 \log n$, by Theorem \ref{thm:se},
if we take $\deltase = \deltaos / (2 \log n)$, with probability
at least $1 - \deltaos / (2 \log n)$, simultaneously for all $y = Ax$, we have
$$
\|y_{P_j \setminus I_j}\|_{p, w'}^p = (1 \pm \varepsilon)\|y_{P_j \setminus I_j}\|_{p, w}^p.
$$
Moreover, $\|y_{I_j}\|_{p, w}^p = \|y_{I_j}\|_{p, w'}^p$ since $w_i = w_i'$ for all $i \in I_j$.
Thus, we have $\|y_{P_j}\|_{p, w'}^p = (1 \pm \varepsilon) \|y_{P_j}\|_{p, w}^p$.
\end{proof}
\subsection{The Recursive Algorithm}\label{sec:recur_alg}
We start by setting $w = 1^n$.
In each recursive step, we use the sampling procedure defined in Section
\ref{sec:recursive_one_step} to obtain $w'$, by setting
$\deltaos = \delta / O(\log n)$ and $\varepsilon = \varepsilon' / O(\log n)$
for some $\varepsilon' > 0$. By Lemma \ref{lem:size_decrease}, for each recursive step, with probability
at least $1 - \delta / (10 \log n)$, we have $\|w'\|_{0} \le 2/3 \|w\|_0$. We
repeat the recursive step until $\|w\|_0 \le 10 F$.

By applying a union bound over all recursive steps, with probability $1 - \delta
/ 10$, the recursive depth is at most $\log_{3/2} n$. By Lemma
\ref{lem:weight_increase}, this also implies with probability $1 - \delta / 10$,
during the whole recursive algorithm, the weight vector $w$ always satisfies
$\|w\|_{\infty} \le 2^{\log_{1.5} n} \le n^2$. If we use $w_{\mathsf{final}}$ to
denote the final weight vector, then we have 
$$
\|w_{\mathsf{final}}\|_0 \le 10
F =  \widetilde{O}(d^{\max\{1, p / 2\}}  \log n  \cdot (  \log(|\mathcal{N}|/ \deltaos ) \cdot \log (1 / \deltaos)  + d) / \varepsilon^2).
$$
By Lemma \ref{lem:os_one}, and a union bound over all the $\log_{1.5} n$
recursive depths, with probability $1 - \delta$, simultaneously for all $y \in \mathcal{N}$, we have
$$
\|Ax\|_{M, w_{\mathsf{final}}} = (1 \pm O(\varepsilon \cdot \log n)) \|Ax\|_{M} = (1 \pm O(\varepsilon')) \|Ax\|_{M}.
$$

Moreover, by Lemma \ref{lem:se_one} and a union bound over all
the  $\log_{1.5} n$ recursive depths, with probability $1 - \delta / 10$, simultaneously for all $y = Ax$ we have
$$
\|Ax\|_{p, w_{\mathsf{final}}}^p = (1 \pm O(\varepsilon \cdot \log n)) \|Ax\|_{p, w}^p = (1 \pm O(\varepsilon')) \|Ax\|_{p, w}^p.
$$
We further show that conditioned on this event, simultaneously for all $x \in \mathbb{R}^d$, 
$$
\|Ax\|_{M, w_{\mathsf{final}}} \ge \frac{L_M}{U_M \cdot n} \cdot \|Ax\|_{M} .
$$
Consider a fixed vector $x \in \mathbb{R}^d$, if there exists a
coordinate $i \in H_{Ax}$ such that $w_i > 0$, since $w_i \ge 1$ if $w_i > 0$, we must have
$$
\|Ax\|_{M, w_{\mathsf{final}}}  \ge w_i M((Ax)_i) \ge M((Ax)_i) \ge L_M \cdot \tau^p.
$$
On the other hand, 
$$
\|Ax\|_M \le n \cdot U_M \cdot \tau^p, 
$$
which implies
$$
\|Ax\|_{M, w_{\mathsf{final}}}  \ge  \frac{L_M}{U_M \cdot n} \cdot \|Ax\|_{M} .
$$
Otherwise, $i \in L_{Ax}$ for all $i \in [n]$, which implies
$$
\|Ax\|_{M, w_{\mathsf{final}}}  \ge L_M \cdot \|Ax\|_{p, w_{\mathsf{final}}}^p \ge (1 - O(\varepsilon')) L_M \|Ax\|_{p, w}^p \ge \frac{(1 - O(\varepsilon')) L_M}{U_M} \|Ax\|_M.
$$

Finally, since each recursive step runs in
$\widetilde{O}((\nnz(A) + d^{p / 2 + O(1)}  \cdot \log( |\mathcal{N}|  / \delta)  \cdot / \varepsilon^2) \cdot \log(1 / \delta) )$ time,
and the number of recursive steps is upper bounded by $\log_{1.5}n$
with probability $1 - \delta / 10$, the total running time is
also upper bounded $\widetilde{O}((\nnz(A) + d^{p / 2 + O(1)}  \cdot \log( |\mathcal{N}|  / \delta)  \cdot / \varepsilon^2) \cdot \log(1 / \delta) )$
with probability $1 - \delta / 10$.

The following lemma can be proved by applying a union bound
over all observations above, 
changing $\varepsilon'$ to $\varepsilon$
and changing $A$ to $[A~b]$.
\begin{lemma}\label{lem:sample}
The algorithm outputs a vector $w_{\mathsf{final}} \in \R^n$, such that for any set $\mathcal{N} \subseteq \colspan([A~b])$ with size $|\mathcal{N}|$, with probability $1 - \delta$, the algorithm runs in $\widetilde{O}((\nnz(A) + d^{p / 2 + O(1)}  \cdot \log( |\mathcal{N}|  / \delta)  \cdot / \varepsilon^2) \cdot \log(1 / \delta) )$ time and the following holds:
\begin{enumerate}
\item $\|w_{\mathsf{final}}\|_0 \le \widetilde{O}(d^{\max\{1, p / 2\}}  \log^3 n  \cdot (  \log( |\mathcal{N}| / \delta ) \cdot \log (1 / \delta)  + d)  / \varepsilon^2 )$;
\item $\|w_{\mathsf{final}}\|_{\infty} \le n^2$;
\item For all $x \in \R^d$, $\|Ax - b\|_{M, w_{\mathsf{final}}} \ge \frac{L_M}{U_M \cdot n} \cdot \|Ax - b\|_{M}$.
\item For all $x \in \mathcal{N}$, $\|Ax- b\|_{M, w_{\mathsf{final}}} = (1 \pm \varepsilon) \|Ax - b\|_{M}$.
\end{enumerate}
\end{lemma}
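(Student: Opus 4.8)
The plan is to read the lemma off the analysis of the recursive algorithm from Sections~\ref{sec:recursive_one_step}--\ref{sec:recur_alg} by taking a single union bound over all bad events and rescaling the accuracy and failure parameters. Concretely, I would run the one-step sampling procedure of Section~\ref{sec:recursive_one_step} repeatedly, but applied to the matrix $[A~b] \in \R^{n \times (d+1)}$ so that $Ax - b \in \colspan([A~b])$ (the ``dimension'' $d+1$ is $O(d)$ and changes none of the bounds), starting from $w = 1^n$, with internal parameters $\deltaos = \delta/\Theta(\log n)$ and internal accuracy $\varepsilon' = \varepsilon/\Theta(\log n)$, and stopping once $\|w\|_0 \le 10F$. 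Before anything else I would check the two recursion invariants used throughout: since a step sets $w'_i \in \{0, w_i/p_i\}$ with $p_i \ge 1/2$, we keep ``$w'_i \neq 0 \Rightarrow w'_i = w_i/p_i \ge w_i \ge 1$''; and $\|w'\|_\infty \le 2\|w\|_\infty$ by Lemma~\ref{lem:weight_increase}, so the partition of the nonzero coordinates into the $2\log n$ buckets $P_j$ used by Lemma~\ref{lem:os_one} remains legitimate at every level.

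I would then bound the recursion depth: by Lemma~\ref{lem:size_decrease}, as long as $\|w\|_0 \ge 10F$ one step shrinks $\|w\|_0$ to at most $\tfrac23\|w\|_0$ with probability $1 - \deltaos$, so a union bound over the (at most $\log_{3/2}n$) steps shows that with probability $1 - \delta/10$ the recursion halts within $\log_{3/2}n$ steps. Combining this with $\|w'\|_\infty \le 2\|w\|_\infty$ and $\|w\|_\infty = 1$ initially gives $\|w_{\mathsf{final}}\|_\infty \le 2^{\log_{1.5}n} \le n^2$ (item~2), and at halting $\|w_{\mathsf{final}}\|_0 \le 10F$ where $F = \widetilde{O}(d^{\max\{1,p/2\}}\log n\,(\log(|\mathcal{N}|/\deltaos)\log(1/\deltaos) + d)/(\varepsilon')^2)$ --- the bound on $F$ itself coming from Theorem~\ref{thm:struct2} and the Lewis-weight inequality $\sum_j\sum_{i \in P_j\setminus I_j}\hat u_i = O(d\log n)$ implied by Theorem~\ref{thm:fa_lewis}. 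Substituting $\deltaos = \delta/\Theta(\log n)$ and $\varepsilon' = \varepsilon/\Theta(\log n)$ turns this into the claimed $\widetilde{O}(d^{\max\{1,p/2\}}\log^3 n\,(\log(|\mathcal{N}|/\delta)\log(1/\delta)+d)/\varepsilon^2)$ (item~1): the extra $\log^2 n$ is from squaring the rescaled accuracy, the remaining $\log n$ from $F$.

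The two norm-preservation claims come from union-bounding the per-step guarantees over depth. Lemma~\ref{lem:os_one} gives, per step, $\|y\|_{M,w'} = (1\pm\varepsilon')\|y\|_{M,w}$ simultaneously for all $y \in \mathcal{N}$ with failure probability $\deltaos$; multiplying these estimates over the $\le\log_{1.5}n$ levels and using $(1\pm\varepsilon')^{\log_{1.5}n} = 1\pm O(\varepsilon'\log n) = 1\pm O(\varepsilon)$ yields item~4 after renaming. Likewise Lemma~\ref{lem:se_one} (which invokes the $\ell_p$ subspace-embedding Theorem~\ref{thm:se}) gives, per step with failure probability $\deltaos$, $\|Ax\|_{p,w'}^p = (1\pm\varepsilon')\|Ax\|_{p,w}^p$ for \emph{all} $x$, and the same telescoping gives $\|Ax-b\|_{p,w_{\mathsf{final}}}^p = (1\pm O(\varepsilon))\|Ax-b\|_p^p$ for all $x$. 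Item~3 then follows by the case split already sketched in Section~\ref{sec:recur_alg}: fix $x$; if some heavy coordinate $i \in H_{Ax-b}$ has $w_{\mathsf{final},i} > 0$ then $\|Ax-b\|_{M,w_{\mathsf{final}}} \ge w_{\mathsf{final},i}M((Ax-b)_i) \ge L_M\tau^p \ge \tfrac{L_M}{U_M n}\|Ax-b\|_M$ using $\|Ax-b\|_M \le nU_M\tau^p$; otherwise every surviving coordinate is light, so $M((Ax-b)_i) \ge L_M|(Ax-b)_i|^p$ coordinatewise and $\|Ax-b\|_{M,w_{\mathsf{final}}} \ge L_M\|Ax-b\|_{p,w_{\mathsf{final}}}^p \ge (1-O(\varepsilon))\tfrac{L_M}{U_M}\|Ax-b\|_M$. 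Finally, each step runs in $\widetilde{O}((\nnz(A) + d^{p/2+O(1)}\log(|\mathcal{N}|/\deltaos)/(\varepsilon')^2)\log(1/\deltaos))$ time by Theorems~\ref{thm:alg_lewis} and~\ref{thm:struct2}; multiplying by the $O(\log n)$ depth and substituting parameters gives the stated running time, and one last union bound over the constantly many $O(\delta)$-probability failure events, followed by rescaling $\delta$, finishes the proof.

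The step I expect to require the most care is the parameter bookkeeping rather than any single inequality: one must track how the two rescalings $\varepsilon \mapsto \varepsilon/\Theta(\log n)$ and $\delta \mapsto \delta/\Theta(\log n)$ propagate through both the size bound $F$ and the per-step running time, so that the final powers of $\log n$ and the dependence on $\log(|\mathcal{N}|/\delta)$ and $\log(1/\delta)$ match the statement exactly, while simultaneously checking that every event being union-bounded genuinely has probability $O(\delta/\log n)$. The only place that needs a different argument than the net machinery is item~3, where the lower bound must hold for \emph{all} $x$, forcing the switch to the $\ell_p$ subspace-embedding guarantee together with the heavy/light case analysis.
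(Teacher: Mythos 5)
Your proposal is essentially the paper's own argument and is correct. You run the one-step procedure recursively on $[A~b]$ with rescaled internal parameters $\deltaos = \delta/\Theta(\log n)$, $\varepsilon' = \varepsilon/\Theta(\log n)$, bound the recursion depth by $\log_{3/2} n$ via Lemma~\ref{lem:size_decrease} and a union bound, derive items~1 and~2 from $F$ and Lemma~\ref{lem:weight_increase}, telescope Lemma~\ref{lem:os_one} for item~4 and Lemma~\ref{lem:se_one} for the subspace-embedding needed in item~3's light case, and use the heavy/light split to conclude item~3 — which is precisely the sequence of moves in Section~\ref{sec:recur_alg}. Two small remarks: your notation swaps the roles of $\varepsilon$ and $\varepsilon'$ relative to the paper, which is fine; and in the second case of item~3 you correctly phrase the split as ``every surviving coordinate is light'' (i.e., every $i$ with $w_{\mathsf{final},i}>0$ lies in $L_{Ax-b}$), which is in fact more precise than the paper's wording ``$i \in L_{Ax}$ for all $i\in[n]$'' and is exactly what the subsequent inequality uses.
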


Combining Lemma \ref{lem:sample} with the net argument in Theorem \ref{thm:net}, we have the following theorem.
\begin{theorem}\label{thm:sample}
By setting $|\mathcal{N}| = n^{O(d^3)} \cdot (1 / \varepsilon)^{O(d)}$, 
the algorithm outputs a vector $w_{\mathsf{final}} \in \R^n$, such that with probability $1 - \delta$, 
the algorithm runs in $\widetilde{O}((\nnz(A) + d^{p / 2 + O(1)} / \varepsilon^2  \cdot \log( 1  / \delta)) \cdot \log(1 / \delta) )$ time, 
$\|w_{\mathsf{final}}\|_0 \le \widetilde{O}(d^{p / 2 + O(1)}  \log ^4 n   \cdot \log^2(1 / \delta)/ \varepsilon^2)$
and any $C$-approximate solution of $\min_x \|Ax - b\|_{M, w_{\mathsf{final}}}$ with
$C \le \poly(n)$ is a $C \cdot (1 + \varepsilon)$-approximate solution of $\min_x \|Ax - b\|_M$. 
\end{theorem}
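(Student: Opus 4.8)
The plan is to apply Lemma~\ref{lem:sample} with the net
$\mathcal{N} = \mathcal{N}_{\poly(\varepsilon\tau/n)} \cup \mathcal{M}_{\poly(\varepsilon/n)}^{c,\,c\cdot\poly(n)}$
of Lemmas~\ref{lem:net1} and~\ref{lem:net2} --- exactly the net appearing in Condition~3 of Theorem~\ref{thm:net} --- and then to invoke Theorem~\ref{thm:net} with $S$ equal to the $n\times n$ identity matrix and $w=w_{\mathsf{final}}$, so that $\|S(Ax-b)\|_{M,w}=\|Ax-b\|_{M,w_{\mathsf{final}}}$. Since Assumption~\ref{as main}.\ref{as tau size} gives $1/\tau\le n^{O(d)}$, both net parameters have reciprocal at most $n^{O(d)}/\varepsilon$, so the cardinality bounds of Lemmas~\ref{lem:net1} and~\ref{lem:net2} give $|\mathcal{N}|\le n^{O(d^3)}(1/\varepsilon)^{O(d)}$. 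Plugging this into the bounds of Lemma~\ref{lem:sample} and absorbing $\log|\mathcal{N}|=O(d^3\log n+d\log(1/\varepsilon))$ into the $d^{O(1)}$ and $\polylog$ factors yields both the stated running time and the stated bound on $\|w_{\mathsf{final}}\|_0$.

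It then remains to check, on the probability-$(1-\delta)$ event on which all four conclusions of Lemma~\ref{lem:sample} hold, the three hypotheses of Theorem~\ref{thm:net}; the theorem does the rest. Two of these are immediate. Hypothesis~2 holds with $U_A=n^2$, since $\|y\|_{M,w_{\mathsf{final}}}\le\|w_{\mathsf{final}}\|_\infty\|y\|_M\le n^2\|y\|_M$ by the bound $\|w_{\mathsf{final}}\|_\infty\le n^2$, and with $L_A=L_M/(U_M n)$ by the third conclusion of Lemma~\ref{lem:sample}. Hypothesis~3 holds with $L_N=1-\varepsilon$: the fourth conclusion of Lemma~\ref{lem:sample} gives $\|y\|_{M,w_{\mathsf{final}}}\ge(1-\varepsilon)\|y\|_M$ for every $y\in\mathcal{N}$, and the net there \emph{is} $\mathcal{N}$.

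The real work is Hypothesis~1, which I would establish with $U_O=1+O(\varepsilon)$ --- anything weaker, such as the trivial $U_O=n^2$, would inflate the final factor --- namely $\|Ax_M^*-b\|_{M,w_{\mathsf{final}}}\le(1+O(\varepsilon))\|Ax_M^*-b\|_M$ for $x_M^*=\argmin_x\|Ax-b\|_M$. The point is that $Ax_M^*-b$, though unknown when $\mathcal{N}$ is fixed, is still close to a net point \emph{in relative error}, so the fourth conclusion of Lemma~\ref{lem:sample} transfers to it via Lemma~\ref{lem:perturb}. Since $x_M^*$ is an exact (hence $1$-approximate) solution, Assumption~\ref{as main}.\ref{as main A int} gives $\|x_M^*\|_2\le\xupper$. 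Split into the two cases from the proof of Theorem~\ref{thm:net}. When $(CU_MU_A/(L_ML_A))c^p\le\tau^p$, one has $\|Ax_M^*-b\|_M\le U_Mc^p<\tau^p$, so every residual coordinate lies below $\tau$ and $\|Ax_M^*-b\|_p\le(U_M/L_M)^{1/p}c$, placing $Ax_M^*-b$ in the range covered by $\mathcal{M}_{\poly(\varepsilon/n)}^{c,\,c\cdot\poly(n)}$ with relative error $\poly(\varepsilon/n)$. Otherwise $\|Ax_M^*-b\|_M\ge L_M\min(c^p,\tau^p)\ge\tau^p/\poly(n)$ (using $\sum_i\min(|y_i|^p,\tau^p)\ge\min(\|y\|_p^p,\tau^p)$ together with $\|Ax_M^*-b\|_p\ge c$), so the additive net $\mathcal{N}_{\poly(\varepsilon\tau/n)}$ covers $Ax_M^*-b$ with error $\poly(\varepsilon\tau/n)\le\poly(\varepsilon/n)\|Ax_M^*-b\|_M$ once the exponent hidden in $\poly(\varepsilon\tau/n)$ is large enough. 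In either case there is $y^*\in\mathcal{N}$ with $\|(Ax_M^*-b)-y^*\|_M\le\poly(\varepsilon/n)\|Ax_M^*-b\|_M$; then Lemma~\ref{lem:perturb} (unweighted) gives $\|y^*\|_M=(1\pm O(\varepsilon))\|Ax_M^*-b\|_M$, the fourth conclusion of Lemma~\ref{lem:sample} gives $\|y^*\|_{M,w_{\mathsf{final}}}=(1\pm\varepsilon)\|y^*\|_M$, and since $\|(Ax_M^*-b)-y^*\|_{M,w_{\mathsf{final}}}\le n^2\poly(\varepsilon/n)\|Ax_M^*-b\|_M\le\varepsilon^{2p+1}\|y^*\|_{M,w_{\mathsf{final}}}$, Lemma~\ref{lem:perturb} (weighted) gives $\|Ax_M^*-b\|_{M,w_{\mathsf{final}}}=(1\pm O(\varepsilon))\|Ax_M^*-b\|_M$, as required.

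With the three hypotheses verified, Theorem~\ref{thm:net} shows any $C$-approximate solution of $\min_x\|Ax-b\|_{M,w_{\mathsf{final}}}$ with $C\le\poly(n)$ is a $C(1+O(\varepsilon))\cdot U_O/L_N=C(1+O(\varepsilon))$-approximate solution of $\min_x\|Ax-b\|_M$; rescaling $\varepsilon$ by a small constant turns this into $C(1+\varepsilon)$ and changes the resource bounds only by constant factors, giving the theorem. I expect Hypothesis~1 to be the only genuine obstacle: it is the one step where the additive net, the relative net, the case split on the size of the optimum, and the two perturbation estimates all have to be combined carefully, whereas everything else is bookkeeping on top of Lemma~\ref{lem:sample} and Theorem~\ref{thm:net}.
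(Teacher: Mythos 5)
Your proof is correct, and it differs in one genuine respect from the paper's (very terse) argument for this theorem. The paper simply observes that Lemma~\ref{lem:sample} holds for \emph{any} set $\mathcal{N}\subseteq\colspan([A~b])$ of the given cardinality --- the algorithm uses only the number $|\mathcal{N}|$, not the set itself --- so one is free to throw the single point $Ax_M^*-b$ into $\mathcal{N}$ alongside the two nets. Conclusion~4 of Lemma~\ref{lem:sample} then gives $\|Ax_M^*-b\|_{M,w_{\mathsf{final}}}=(1\pm\varepsilon)\|Ax_M^*-b\|_M$ directly, i.e.\ $U_O=1+\varepsilon$, with no further work; the paper states this in one line. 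You instead keep $\mathcal{N}$ equal to the union of the two nets and show by hand that $Ax_M^*-b$ lies within relative $M$-error $\poly(\varepsilon/n)$ of some net point, using the same case split on $(CU_MU_A/(L_ML_A))c^p$ versus $\tau^p$ that appears in the proof of Theorem~\ref{thm:net}, the bound $\|x_M^*\|_2\le\xupper$ from Assumption~\ref{as main}.\ref{as main A int}, and then two applications of Lemma~\ref{lem:perturb} (unweighted and weighted) to transfer the net guarantee; this correctly yields $U_O=1+O(\varepsilon)$. Your route is noticeably longer --- it re-derives, for $x_M^*$, the same covering argument that Theorem~\ref{thm:net} already carries out for $x_{S,M,w}^C$ --- but it is sound and has the pedagogical merit of making explicit what the paper's one-line ``Lemma~\ref{lem:sample} implies $U_O=1+\varepsilon$'' is hiding. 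The verifications you give for $L_A,U_A,L_N$ and the cardinality/running-time bookkeeping agree with the paper.

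One small point to keep in mind: in your Case~1, after showing $\|Ax_M^*-b\|_p\le(U_M/L_M)^{1/p}c$, you should note that $(U_M/L_M)^{1/p}$ is dominated by the $\poly(n)$ factor $(CU_MU_A/(L_ML_A))^{1/p}$ used to define the upper endpoint of $\mathcal{M}_{\poly(\varepsilon/n)}^{c,\,c\cdot\poly(n)}$ (and that this endpoint is $\le\tau$ under the Case~1 hypothesis), so $Ax_M^*-b$ really does land in the range Lemma~\ref{lem:net2} covers. You effectively assume this, and it is true since $U_M/L_M$ is a constant, but it is worth stating since Lemma~\ref{lem:net2} requires $\beta\le\tau$.
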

\begin{proof}
Lemma \ref{lem:sample} implies that $U_O = 1 + \varepsilon$,
$L_N = 1 - \varepsilon$, $L_A = \frac{L_M}{U_M \cdot n}$ and $U_A \le \|w_{\mathsf{final}}\|_{\infty} \le n^2$.
Adjusting constants and applying Theorem \ref{thm:net} imply the desired result. 
\end{proof}


\section{The $M$-sketch}\label{sec M sketch}
In this section we give an oblivious sketch for Tukey loss functions.
Throughout this section we assume $1 \le p \le 2$ in Assumption \ref{as M}.

For convenience and to set up notation, we first describe the construction.
\paragraph*{The sketch.}
Each coordinate $z_p$ of a vector $z$ to be sketched
is mapped to a \emph{level} $h_p$, and the number of coordinates
mapped to level $h$ is exponentially
small in $h$: for an integer branching factor $b>1$, we expect the number of coordinates
at level $h$ to be about a $b^{-h}$ fraction of the coordinates. The number of
buckets at a given level is $N=bcm$, where integers $m,c>1$ are parameters
to be determined later.

Our sketching matrix is $S\in\R^{N\hm\times n}$, where $\hm \equiv \floor{\log_b (n/m)}$.
Our weight vector $w\in\R^{N\hm}$ has entries $w_{i+1}\gets \beta b^h$, for $i\in [Nh,N(h+1))$ and integer $h= 0,1, \ldots, \hm$, and
$\beta\equiv (b-b^{-\hm})/(b-1)$.
Our sketch is reminiscent of sketches in the data stream literature, 
where we hash into buckets at multiple levels of subsampling 
\cite{iw05,VZ12}. However, the estimation performed in the sketch space needs to
be the same as in the original space, 
which necessitates a new analysis. 

The entries of $S$ are $S_{j,p} \gets \Lambda_p$, where $p\in [n]$ and $j \gets g_p + N h_p$ and
\begin{equation}\label{eq S}
\begin{split}
  \Lambda_p  & \gets \pm 1 \mathrm{\ with\ equal\ probability}
\\ g_p & \in [N]  \mathrm{\ chosen\ with\ equal\ probability}	
\\ h_p & \gets h \mathrm{\ with\ probability\ } 1/\beta b^{h} \mathrm{\ for\ integer\ }h\in [0,\hm],\end{split}
\end{equation}
all independently. Let $L_h$ be the multiset $\{z_p\mid h_p = h\}$, and $L_{h,i}$ the multiset
$\{z_p\mid h_p = h, g_p = i\}$; that is, $L_h$ is multiset of values at a given level,
$L_{h,i}$ is the multiset of values in a bucket.  We can write
$\norm{S z}_{M,w}$ as $\sum_{h\in [0,\hm], i\in [N]} \beta b^h M(\norm{L_{h,i}}_\LL)$,
where $\norm{L}_\Lambda$ denotes $ | \sum_{z_p\in L} \Lambda_p z_p|$.

\subsection{Accuracy Bounds for Sketching One Vector}\label{sec one vec}

We will show that our sketching construction has the property that for a given vector $z\in\R^n$,
with high
probability, $\norm{Sz}_{M,w}$ is not too much smaller than $\norm{z}_M$.
We assume that $\norm{z}_M=1$, for notational convenience.

Define $y\in\R^n$ by $y_p = M(z_p)$, so that $\norm{y}_1 = \norm{z}_M = 1$.
Let $Z$ denote the multiset comprising the coordinates of $z$, and
let $Y$ denote the multiset comprising the coordinates of $y$. 
For $\hat Z\subset Z$, let $M(\hat Z)\subset Y$
denote $\{M(z_p)\mid z_p\in \hat Z\}$.
Let $\norm{Y}_k$ denote $\left(\sum_{y\in Y} |y|^k\right)^{1/k}$, so $\norm{Y}_1 = \norm{y}_1$.
Hereafter multisets will just be called ``sets''.

\paragraph*{Weight classes.}
Fix a value $\gamma > 1$, and for integer $q\ge 1$,  let $W_q$ denote the multiset comprising
\emph{weight class} $\{y_p\in Y \mid \gamma^{-q}\le y_p \le \gamma^{1-q}\}$.
We have $\beta b^h \E[\norm{M(L_h)\cap W_q}_1]  = \norm{W_q}_1$.
For a set of integers $Q$, let $W_Q$ denote $\cup_{q\in Q} W_q$.

\paragraph*{Defining $\qm$ and $h(q)$.}
For given $\eps>0$,
consider $y'\in\R^n$ with $y'_i\gets y_i$ when $y_i > \eps/n$, and $y'_i\gets 0$ otherwise.
Then $\norm{y'}_1 \ge 1- n(\eps/n) = 1-\eps$. 
We can neglect $W_q$ for $q>\qm \equiv \log_\gamma (n/\eps)$,
up to error $\eps$. 
Moreover,
we can assume that
$\norm{W_q}_1\ge \eps/\qm$, since the contribution to $\norm{y}_1$ of weight classes $W_q$
of smaller total weight, added up for $q\le\qm$, is at most $\eps$.

Let $h(q)$ denote $ \floor{\log_b (|W_q|/\beta m)}$ for $|W_q|\ge \beta m$,
and zero otherwise, so that
\[m\le \E[|M(L_{h(q)})\cap W_q|] \le bm\]
for all $W_q$ except those with $|W_q| < \beta m$,
for which the lower bound does not hold.

Since $|W_q|\le n$ for all $q$, we have $h(q)\le \floor{\log_b (n/\beta m)} \le \hm$.

\subsection{Contraction Bounds}\label{subsec contract G}
Here we will show that $\norm{Sz}_{M,w}$ is not too much smaller than $\norm{z}_M$.
We will need some weak conditions among the parameters.
Recall that $N=bcm$.

\begin{assumption}\label{as params}
We will assume $b\ge m$, $b>c$,
$m=\Omega(\log\log (n/\eps))$, $\log b = \Omega(\log \log(n/\eps))$,
$\gamma\ge 2\ge \beta$, an error parameter $\eps\in [1/10, 1/3]$, and $\log N\le \eps^2m$.
We will consider $\gamma$ to be fixed throughout, that is, not dependent on the other parameters.
\end{assumption}

We need lemmas that allow lower bounds on the contributions of
the weight classes. First, some notation. For $h=0, 1, \ldots, \hm$, let
\begin{equation}\label{eq Q< defs}
\begin{split}
M_< & \equiv  \log_\gamma(m/\eps) = O(\log_\gamma(b / \eps)) \\
Q_< & \equiv \{q\mid |W_q| < \beta m, q\le M_<\} \\
\hat{Q}_h & \equiv \{q\mid h(q)=h, |W_q|\ge \beta m\} \\
M_\ge & \equiv \log_\gamma(2(1+3\eps) b/\eps) \\
Q_h & \equiv \{q\in \hat{Q}_h \mid q\le M_\ge + \min_{q\in \hat Q_h} q\} \\
Q^*  & \equiv Q_< \cup [\cup_h Q_h].
\end{split}
\end{equation}
Here $Q_<$ is the set of indices of weight classes that have relatively few members, but contain relatively large
weights.
$\hat Q_h$ gives the indices of $W_q$ that are ``large'' and have
$h$ as the level at which between $m$ and $bm$ members of $W_q$ are
expected in $L_h$. The set $Q_h$ cuts out the weight classes that can
be regarded as negligible at level $h$.

\begin{lemma}\label{lem Q< good}
If $N\ge \max\{ O(|M_<|  dm^3  \varepsilon), \widetilde{O}(d^2 m ^2 / \varepsilon^2)\}$, then
with constant probability, 
for all $z \in \colspan(A)$ and all $q \in Q_<$, the following event $\cE_v$ holds: there are sets
$W_q^*\subset W_q$, with $|W_q^*|\ge (1-\eps)|W_q|$, such that for all $y\in W_q^*$,
\begin{enumerate}
\item they are isolated: they are the sole members of $W_{Q_<}$ in their bucket;
\item their buckets are low-weight: the set $L$ of other entries in bucket containing $y\in W_q^*$ has $\norm{L}_1\le 1/\eps^2 m^3$.
\end{enumerate}
\end{lemma}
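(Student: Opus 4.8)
The plan is to eliminate the ``for all $z\in\colspan(A)$'' quantifier by using the structural theorem to confine \emph{every} coordinate that could ever lie in $W_{Q_<}$ to one fixed, small set $I_0$, and then to analyze how the sketch treats $I_0$; the randomness of $S$ enters only through the behaviour of a single fixed set.

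First I would dispose of a degenerate case: we may assume $\tau^p\ge\gamma^{-M_<}$, since otherwise $U_M\le 1$ forces $M(z_p)\le\tau^p<\gamma^{-M_<}$ for all $p$, so $W_{Q_<}=\emptyset$ and $\cE_v$ is the sure event. Under this assumption, recall the standing normalization $\norm{z}_M=1$. Any $p$ that can belong to $W_{Q_<}$ has $M(z_p)\ge\gamma^{-M_<}=\varepsilon/m$; split these into heavy coordinates ($|z_p|>\tau$) and non-heavy coordinates ($|z_p|\le\tau$). There are at most $1/\tau^p\le\gamma^{M_<}=O(m/\varepsilon)$ heavy coordinates, and $\norm{z_{L_z}}_p^p\le\norm{z}_M/L_M=1/L_M$, so Corollary~\ref{cor:existence} applied to $[A~b]$ with threshold $\tau$ and $\alpha=\Theta(m/(\varepsilon L_M))$ places all of them in a fixed set $I_h$ with $|I_h|\le\widetilde O(d^{\max\{p/2,1\}}m/\varepsilon)$. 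On the submatrix $[A~b]_{[n]\setminus I_h,*}$ every admissible $z$ restricts to a vector all of whose entries are $\le\tau$ in magnitude and whose $p$-th power sum is $\le 1/L_M$; hence by Lemma~\ref{lem:bound_entry} any non-heavy coordinate with $M(z_p)\ge\gamma^{-M_<}$ has $\ell_p$ Lewis weight at least $\Omega(L_M\varepsilon/(U_M m\,d^{\max\{0,p/2-1\}}))$, and since these Lewis weights sum to at most $d+1$ there are only $\widetilde O(d^{\max\{p/2,1\}}m/\varepsilon)$ of them, forming another fixed set $I_m$. Put $I_0:=I_h\cup I_m$, so $|I_0|\le\widetilde O(d^{\max\{p/2,1\}}m/\varepsilon)$ ($=\widetilde O(dm/\varepsilon)$ when $1\le p\le 2$) and $W_{Q_<}\subseteq I_0$ for every admissible $z$.

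Next I would condition on a good hashing of $I_0$. For a fixed pair $p\ne p'$ in $I_0$, $\Pr[h_p=h_{p'}\wedge g_p=g_{p'}]=\bigl(\sum_{h}(\beta b^{h})^{-2}\bigr)/N\le 2/(\beta^2 N)$, so a union bound over the $\binom{|I_0|}{2}$ pairs shows that, as long as $N=\widetilde\Omega(|I_0|^2)=\widetilde\Omega(d^2m^2/\varepsilon^2)$ with a large enough constant, the event $\cE_{\mathrm{iso}}$ that all members of $I_0$ occupy distinct $(\text{level},\text{bucket})$ cells has constant probability. Under $\cE_{\mathrm{iso}}$ Condition~1 holds outright for every admissible $z$ and every $q\in Q_<$: since $W_{Q_<}\subseteq I_0$ and no two elements of $I_0$ collide, each member of $W_q$ is the unique member of $W_{Q_<}$ in its bucket.

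Finally, for Condition~2, under $\cE_{\mathrm{iso}}$ the ``other entries'' $T_p$ in the cell of $p\in I_0$ come only from $[n]\setminus I_0$, and each such coordinate lies in at most one $T_p$; moreover for every admissible $z$ these coordinates are light and have $M(z_j)<\gamma^{-M_<}$. Two complementary bounds are available. Deterministically, $\sum_{p\in I_0}\norm{z_{T_p}}_M\le\norm{z}_M=1$, and by Lemma~\ref{lem:bound_entry} $\norm{z_{T_p}}_M\le (U_M/L_M)\,d^{\max\{0,p/2-1\}}\sum_{j\in T_p}u_j$ with $\sum_{p\in I_0}\sum_{j\in T_p}u_j\le d+1$. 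Probabilistically, taking $N\ge O(|M_<|\,d\,m^3\varepsilon)$ makes $\E[\norm{z_{T_p}}_M]\le 1/(\beta b^{h_p}N)$ smaller than $1/(\varepsilon^2 m^3)$ by a factor $\Omega(d)$. Combining them — a Markov bound applied separately to each of the at most $|Q_<|=O(M_<)$ weight classes, charged against the uniform total-$M$-mass and total-Lewis-weight budgets so that the discarded count remains valid simultaneously for all $z$ — should give, with constant probability, that for every admissible $z$ and every $q\in Q_<$ all but an $\varepsilon$-fraction of the level-$h(q)$ members of $W_q$ sit in cells whose other-mass is at most $1/(\varepsilon^2 m^3)$; discarding the rest defines $W_q^*$, and intersecting this event with $\cE_{\mathrm{iso}}$ gives $\cE_v$. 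I expect this last step to be the main obstacle: obtaining a \emph{per-weight-class} bound on the discarded coordinates that holds uniformly over the whole subspace $\colspan(A)$ — rather than merely a global count, or a bound for one fixed $z$ (for which concentration is too weak here) — is precisely what forces the argument to route everything through the fixed set $I_0$ together with the total-mass/total-Lewis-weight charging, and it is also why both lower bounds on $N$ are required.
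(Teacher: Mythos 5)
Your proposal follows essentially the same route as the paper's proof. You confine all potential members of $W_{Q_<}$ (for every admissible $z$) to a fixed, small set $I_0=I_h\cup I_m$ — the paper's $I\cup J$, where $I$ comes from Corollary~\ref{cor:existence} and $J$ is the top-$O(dm U_M/(\varepsilon L_M))$ Lewis weights on the complement — then establish isolation by a collision union bound over pairs of $I_0$, and finally route the bucket-lightness bound through the \emph{Lewis weight} mass $\|B_{h,i}\|_1$ of the non-$I_0$ entries in each bucket, so that ``goodness'' of a bucket is determined by the sketch alone (not by $z$), and apply Markov per weight class followed by a union bound over $q\in Q_<$. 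Two small slips are worth fixing: in the degenerate case you invoke ``$U_M\le 1$'', but $U_M\ge 1$ by Assumption~\ref{as M}; the correct reason $M(z_p)\le\tau^p$ is monotonicity (Assumption~\ref{as M}.\ref{as M inc}) together with flatness at $\tau$ (Assumption~\ref{as M}.\ref{as M flat}). And the quantity whose expectation you want to bound and then apply Markov to is the Lewis mass $\sum_{j\in T_p}u_j$ (i.e.\ $\|B_{h,i}\|_1$), not $\E[\|z_{T_p}\|_M]$ — the latter depends on $z$, which defeats the whole point of routing through $I_0$; your deterministic bound $\|z_{T_p}\|_M\le(U_M/L_M)d^{\max\{0,p/2-1\}}\sum_{j\in T_p}u_j$ is exactly what converts a Lewis-mass bound on a good bucket into a bound on $\|L\|_1$ valid for all $z$ simultaneously, which is what the paper does. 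With those corrections, your ``last step'' that you flag as the main obstacle is the paper's final Markov chain: $\E[\|B_{h,i}\|_1]\le d/N$, Markov to make each bucket good with probability $1-\varepsilon\kappa/|M_<|$, Markov again on the count of bad buckets among the (isolated) members of $W_q$, and a union bound over $q\in Q_<$.
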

\begin{proof}
Without loss of generality we assume $h(q)$ are the same for all $q \in M_<$, since otherwise we can deal with each $h(q)$ separately. 

Let $\alpha = m / (L_M \cdot \varepsilon)$.
By Lemma \ref{cor:existence}, there exists a set $I \subseteq [n]$ with size $|I| = \widetilde{O}(d \cdot \alpha) = \widetilde{O}(d \cdot m / \varepsilon)$ such that for any $z \in \colspan(A)$, if $z$ satisfies 
	(i) $\|z_{L_z}\|_p^p \le \alpha \cdot \tau^p$ and 
	(ii) $|H_z| \le \alpha$, 
then $H_z \subseteq I$.
Let $\{u\}_{i \in [n] \setminus I}$ be the $\ell_p$ Lewis weights of $A_{[n] \setminus I, *}$
and let $J \subseteq [n] \setminus I$ be the set of indices of the $d \cdot m / \varepsilon \cdot U_M / L_M$ largest coordinates of $u$.
Thus, $|J| \le O(d \cdot m / \varepsilon)$.
Since $J$ contains the $d \cdot m / \varepsilon \cdot U_M / L_M$ largest coordinates of $u$ and $$ 
\sum_{i \in [n] \setminus I}  u_i =  \sum_{i \in [n] \setminus I} \overline{u}_i^p \le d
$$ by Theorem \ref{thm:fa_lewis}, for each $i \in [n] \setminus \left(I \cup J\right)$ , we have $u_i \le d / (d \cdot m / \varepsilon \cdot U_M / L_M) \le \varepsilon /  m \cdot L_M / U_M $.

If $\tau^p < \|z\|_M \cdot \varepsilon / m$, by Assumption \ref{as M}.\ref{as M inc}, we have $M(z_i) \le \tau^p < \|z\|_M \cdot \varepsilon / m$ for all $i \in [n]$. 
In this case, we have $W_{Q <} = \emptyset$.
Thus we assume $\tau^p \ge \|z\|_M \cdot \varepsilon / m$ in the remaining part of the analysis.

Since $\|z\|_M \ge |H_z| \cdot \tau^p$, we have $|H_z| \le m / \varepsilon$. 
Furthermore, by Assumption \ref{as M}.\ref{as M near quad}, $\|z_{L_z}\|_p^p \le \|z_{L_z}\|_M / L_M \le \|z\|_M  / L_M\le \tau^p \cdot m / (L_M \cdot \varepsilon)$.
Thus by setting $\alpha = m / (L_M \cdot \varepsilon)$ we have $H_z \subseteq I$.
For each $i \in [n] \setminus I$, we have $|z_i|\le \tau$.
By Lemma \ref{lem:bound_entry} and Assumption \ref{as M}.\ref{as M near quad}, for each $i \in [n] \setminus I$, $M(z_i) \le |z_i|^p / L_M \le u_i \cdot \|z_{[n] \setminus I}\|_p^p / L_M \le u_i \cdot \|z_{[n] \setminus I}\|_M \cdot U_M / L_M < u_i \cdot \|z\|_M \cdot U_M / L_M$.
Thus for each entry $i \in [n] \setminus \left(I \cup J\right)$, we have $M(z_i) < \varepsilon / m \cdot \|z\|_M$.

Thus, the indices of all members of $W_{Q_<}$ are in $I \cup J$. 
By setting $N \ge  |I \cup J|^2 / \kappa = \widetilde{O}(d^2 m^2 / \varepsilon^2) / \kappa$, 
the expected number of total collisions in $I \cup J$ is $|I \cup J|^2 / N \le \kappa$.
Thus, by Markov's inequality, with probability $1 - 2 \kappa$, the total number of collisions is upper bounded by $1 / 2$, i.e., there is no collision. 
This implies the first condition.

For the second condition, we use $\{u_i\}_{i \in [n] \setminus (I \cup J)}$ to denote the $\ell_p$ Lewis weights of $A_{i \in [n] \setminus (I \cup J), *}$. 
Consider a fixed $q \in M_<$.
By the first condition, all elements in $W_q$ are the sole members of $W_{Q_<}$ in their buckets. 
For each bucket we define $B_{h,i}$ to be the multiset $\{u_p\mid h_p = h, g_p = i, p \in [n] \setminus (I \cup J)\}$.
By setting $N \ge \frac{U_M \cdot |M_< | \cdot dm^3 \varepsilon}{L_M \cdot \kappa} $, for each $y \in W_q$, $\E[\|B_{h, i}\|_1] \le d / N \le  \frac{L_M}{U_M} \cdot \frac{1}{\eps^2 m^3} \cdot \frac{\varepsilon \cdot \kappa}{|M_<|}$ where $L_{h, i}$ is the bucket that contains $y$.
This is simply because $\sum_{i \in N} B_{h, i} \le \sum_{i \in [n] \setminus (I \cup J)} u_i \le d$ by Theorem \ref{thm:fa_lewis}.
We say a bucket is {\em good} if $\|B_{h, i}\|_1 \le \frac{L_M}{U_M} \cdot \frac{1}{\eps^2 m^3}$.
Notice that for $y \in W_q$, if $y$ is in a good bucket $B_{h, i}$, then the set $L$ of other entries in that bucket satisfies
\begin{align*}
&\|L\|_1=  \sum_{y \in L} y  \\
= &\sum_{p \in [n] \setminus (I \cup J) \mid h_p = h, g_p = i} M(z_p) \\
\le  &\sum_{p \in [n] \setminus (I \cup J) \mid h_p = h, g_p = i} U_M \cdot |z_p|^p \tag{Assumption \ref{as M}.\ref{as M near quad}}\\
\le  &\sum_{p \in [n] \setminus (I \cup J) \mid h_p = h, g_p = i} U_M \cdot u_p \cdot \|z_{[n] \setminus (I \cup J)}\|_p^p \tag{Lemma \ref{lem:bound_entry}}\\
\le  &\sum_{p \in [n] \setminus (I \cup J) \mid h_p = h, g_p = i} U_M / L_M \cdot u_p \cdot \|z_{[n] \setminus (I \cup J)}\|_M  \tag{Assumption \ref{as M}.\ref{as M near quad}}\\
\le  & \|B_{h, i}\|_1 \cdot U_M/L_M \cdot \|z\|_M \\
\le &  \frac{1}{\eps^2 m^3} \cdot \|z\|_M.
\end{align*}
Thus, it suffices to show that at least $(1 - \varepsilon)|W_q|$ buckets associated with $y \in W_q$ are good.

By Markov's inequality, for each $y \in W_q$, with probability $1 - \varepsilon \cdot \kappa / |M_<|$, the bucket that contains $y$ is good.
Thus, for the $|W_q|$ buckets associated with $y \in W_q$, the expected number of good buckets is at least $(1 - \varepsilon \cdot \kappa / M_<) |W_q|$.
Again, by Markov's inequality, with probability at least $1 - \kappa / |M_<|$, at least $(1 - \varepsilon)|W_q|$ buckets associated with $y \in W_q$ are good, and we just take these $(1 - \varepsilon)|W_q|$ good buckets to be $W_{q}^*$.
By applying a union bound over all $q \in M_<$, the second condition holds with probability at least $1 - \kappa$.
The lemma follows by applying a union bound over the two conditions and setting $\kappa$ to be a small constant. 

\end{proof}

%
%



\def\prooflemWqisolates{
\begin{proof}
We will show that for $q\in Q$, with high probability it will hold that
$a_q\ge (1-\eps))\beta^{-1} b^{-h} |W_q|$, where $a_q$ is the number of
buckets $M(L_{h,i})$, over $i\in [N]$,
containing a member of $W_q$, and no other members of $\hW$.

Consider each $q\in Q$ in turn, and the members of $W_q$ in turn,
for $k=1,2,\ldots s\equiv |W_q|$, and let $Z_k$ denote the number of bins occupied by the
first $k$ members of $W_q$.
The probability that $Z_{k+1}>Z_k$ is at least
$\beta^{-1} b^{-h}(1 - |M(L_h)\cap \hW|/N) \ge \beta^{-1} b^{-h}(1-\eps) $.
We have $a_q\ge (1-\eps)\beta^{-1} b^{-h} |W_q|$
in expectation.

To show that this holds with high probability,
let $\hat Z_k \equiv \E[Z_s\mid Z_k]$. Then $\hat Z_1, \hat Z_2, \ldots$ is a Martingale
with increments bounded by 1, and with the second moment
of each increment at most $\beta^{-1} b^{-h}$.
Applying Freedman's inequality gives a concentration for $a_q$ similar to the above
application of Bernstein's inequality, yielding a failure probability $2\exp(-\eps^2 m/3)$,

Applying a union bound over all $|Q|$ yields that with
probability at least $1- 2|Q| \exp(-\eps^2 m/3)$, for each $W_q$ there is $W_q^*$
of size at least
$(1-\eps)\beta^{-1} b^{-h}|W_q|$ such that each member of $W_q^*$
is in a bucket containing no other member of
$\hW$.

For the last claim, we compare the at least $(1-\eps)X$ entries of $W_q^*$, 
where $X \equiv \beta^{-1} b^{-h}|W_q|$, with the at most $(1+\eps)X - |W_q^*|$ entries
of $M(L_h)\cap W_q$ not in $W_q^*$, using condition $\cE$; we have
\begin{align*}
\frac{\norm{W_q^*}_1}{\norm{M(L_h)\cap W_q}_1}
	   & \ge \frac{(1-\eps)X\gamma^{-q}}{(1-\eps)X\gamma^{-q} + 2\eps X \gamma^{1-q}}
	\\ & \ge 1 - 2\gamma\eps/(1-\eps).
\end{align*}
Using condition $\cE$ again to make the comparison with $\norm{W_q}_1$,
the claim follows.
\end{proof}
}

%
%


\begin{lemma}[Lemma 3.8 of \cite{cw15}]\label{lem W_q^* G}
Let $Q'_h\equiv \{q\mid q\le M'_h\}$, where $M'_h \equiv \log_\gamma(\beta b^{h+1}m^2\qm)$.
Then for large enough $N= O(m^2 b \eps^{-1}\qm)$,
with probability at least $1-C^{-\eps^2 m}$ for a constant $C>1$,
for each $q\in \cup_h Q_h$, there is $W_q^*\subset L_{h(q)}\cap W_q$ such that:
\begin{enumerate}
\item $|W_q^*| \ge (1-\eps) \beta^{-1}b^{-h(q)}|W_q|$.
\item each $x\in W_q^*$ is in a bucket with no other member of $W_{Q^*}$.
\item \label{it |W*|} $\norm{W_q^*}_1 \ge (1-4\gamma\eps)\beta^{-1}b^{-h}\norm{W_q}_1$.
\item \label{it Q'} each $x\in W_q^*$ is in a bucket with no member of $W_{Q'_h}$.
\end{enumerate}
\end{lemma}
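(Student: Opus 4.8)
The plan is to follow the argument of Clarkson and Woodruff for their Lemma~3.8 \cite{cw15}, adapted to the present notation; note that Lemma~\ref{lem W_q^* G} is a statement purely about the random sketch $S$ applied to the fixed vector $z$ (equivalently to $y=M(z)$), so unlike Lemma~\ref{lem Q< good} it uses no structural facts about $A$. I would fix a level $h$, argue about the weight classes indexed by $\hat Q_h$, and union bound over all $q\in\cup_h Q_h$ at the end. The three ingredients are: (a)~concentration of the number of members of each $W_q$ that survive subsampling to its level $h(q)$; (b)~an upper bound on the number of ``interfering'' coordinates that land at level $h$, namely those in $W_{Q'_h}$ and in $W_{Q^*}$; and (c)~a balls-into-bins computation, turned into a high-probability statement by a martingale argument, showing that most surviving members of $W_q$ land alone and avoid all interferers.

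First, for the subsampling concentration: for $q\in\hat Q_h$ the definition of $h(q)$ gives $m\le\E[|M(L_{h(q)})\cap W_q|]=\beta^{-1}b^{-h(q)}|W_q|\le bm$, and membership of each coordinate of $W_q$ in $L_{h(q)}$ is an independent Bernoulli event, so Bernstein's inequality (Lemma~\ref{lem:bernstein}) gives $|M(L_{h(q)})\cap W_q|=(1\pm\eps)\beta^{-1}b^{-h(q)}|W_q|$ except with probability $2\exp(-\Omega(\eps^2 m))$. The same bound applied to the weighted sum $\sum_{y_p\in W_q}y_p\,\mathbf{1}[h_p=h(q)]$, whose summands lie in $[0,\gamma^{1-q}]$ and whose mean is $\beta^{-1}b^{-h(q)}\norm{W_q}_1\ge\beta^{-1}b^{-h(q)}|W_q|\gamma^{-q}$, yields $\norm{M(L_{h(q)})\cap W_q}_1=(1\pm\eps)\beta^{-1}b^{-h(q)}\norm{W_q}_1$ with the same failure probability.

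Next, for the interferers and the isolation: since $\norm{y}_1=1$, each weight class has $|W_{q'}|\le\gamma^{q'}$, so $|W_{Q'_h}|=\sum_{q'\le M'_h}|W_{q'}|\le 2\gamma^{M'_h}=O(\beta b^{h+1}m^2\qm)$, and hence $\E[|M(L_h)\cap W_{Q'_h}|]=\beta^{-1}b^{-h}|W_{Q'_h}|=O(bm^2\qm)$; a Chernoff bound makes this $O(bm^2\qm)=O(\eps N)$ once $N=O(m^2 b\eps^{-1}\qm)$ with a large enough constant, except with probability $\exp(-\Omega(bm^2\qm))$. The same bookkeeping should bound the number of members of $W_{Q^*}$ that can land at level $h$ and collide with a member of $W_q$ by $O(bm^2\qm)=O(\eps N)$ (this uses that only a bounded number of weight classes of $Q^*$ have non-negligible presence at level $h$). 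Writing $\hW$ for the union of these two interfering sets, so $|\hW|\le\eps N$, I would then process the members of $W_q$ for $q\in Q_h$ one at a time, letting $Z_k$ be the number of level-$h$ buckets occupied by the first $k$ of them together with $\hW$: conditioned on $Z_k$, the $(k{+}1)$-st member lands at level $h$ in a fresh bucket with probability at least $\beta^{-1}b^{-h}(1-|\hW|/N-Z_k/N)\ge\beta^{-1}b^{-h}(1-O(\eps))$, using $|\hW|\le\eps N$, $Z_k\le O(bm)+\eps N$, and $N=bcm$ with $c$ a large enough constant. Thus $\E[a_q]\ge(1-\eps)\beta^{-1}b^{-h}|W_q|$ (after rescaling $\eps$), where $a_q$ is the number of level-$h$ buckets containing exactly one member of $W_q$ and no member of $\hW$, and $\hat Z_k\equiv\E[Z_{|W_q|}\mid Z_k]$ is a bounded-increment martingale whose increments have second moment $\le\beta^{-1}b^{-h}$, so Freedman's inequality gives $a_q\ge(1-\eps)\beta^{-1}b^{-h}|W_q|$ except with probability $2\exp(-\Omega(\eps^2 m))$. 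Taking $W_q^*$ to be these isolated members gives properties~1, 2, and~4. For property~3 I would compare $\norm{W_q^*}_1$ with $\norm{M(L_{h(q)})\cap W_q}_1$: at most an $\eps$-fraction of the members of $M(L_{h(q)})\cap W_q$ are dropped, each of weight $\le\gamma^{1-q}$, while $\norm{M(L_{h(q)})\cap W_q}_1\ge|M(L_{h(q)})\cap W_q|\gamma^{-q}$, so $\norm{W_q^*}_1\ge(1-\gamma\eps)\norm{M(L_{h(q)})\cap W_q}_1$, which combined with the previous paragraph gives $\norm{W_q^*}_1\ge(1-4\gamma\eps)\beta^{-1}b^{-h(q)}\norm{W_q}_1$.

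Finally, $|\cup_h Q_h|\le(\hm+1)(M_\ge+1)=O(\qm\log_\gamma(b/\eps))$, so a union bound over all $q$ multiplies the failure probability by $O(\qm\log b)$, still at most $C^{-\eps^2 m}$ for a suitable constant $C>1$ given $m=\Omega(\log\log(n/\eps))$ and $\log b=\Omega(\log\log(n/\eps))$. The hard part will be the isolation step: setting up the Freedman martingale cleanly, and --- more delicately --- the accounting in ingredient~(b) showing that the interfering set has size $O(\eps N)$ with the stated $N=O(m^2 b\eps^{-1}\qm)$, i.e.\ pinning down exactly which members of $W_{Q^*}$ can reach level $h(q)$ and collide. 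The remaining $\eps$- and $\gamma\eps$-tracking and the Bernstein/Chernoff estimates are routine.
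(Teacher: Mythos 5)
Your reconstruction matches the proof that \cite{cw15} gives for its Lemma~3.8 (the present paper cites that result without reproducing the argument, though a sketch is present in the source): you use the same three ingredients — concentration of the level-$h(q)$ subsampling of $W_q$, a bound showing the interfering set $\hW$ (members of $W_{Q'_h}\cup W_{Q^*}$ at level $h$) has size $O(\eps N)$, and a martingale/Freedman argument for the isolated count $a_q$, followed by the $\gamma^{-q}$ vs.\ $\gamma^{1-q}$ weight comparison to pass from the count bound (property~1) to the $\ell_1$ bound (property~3). The only stylistic difference is that you derive the subsampling concentration directly from Bernstein rather than citing the pre-established ``condition $\cE$'' of \cite{cw15}, which is a cosmetic repackaging, not a different route.
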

%
%
%

For $v\in T\subset Z$, let $T-v$ denote $T\setminus\{v\}$.
\begin{lemma}[Lemma~3.6 of \cite{cw15}]\label{lem G bucket}
For $v\in T\subset Z$,
\[
M(\norm{T}_\Lambda) \ge \left(1-\frac{\norm{T-v}_\Lambda}{ |v|}\right)^2 M(v),
\]
and if $M(v)\ge \eps^{-1}\norm{T-v}_M$, then
\begin{equation}\label{eq T-v}
\frac{\norm{T-v}_2}{|v|}\le \eps^{1/\alphap},
\end{equation}
and for a constant $C$,
$\E_\Lambda[M(\norm{T}_\Lambda)]
	\ge (1-C\eps^{1/\alphap})M(v).$
\end{lemma}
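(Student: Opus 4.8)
\emph{The pointwise inequality.} Write $x\equiv\norm{T-v}_\Lambda/|v|$ (we may assume $v\neq 0$), and recall that we are in the regime $1\le p\le 2$ and $\alphap=2$. Since $\norm{T}_\Lambda=|\pm v+\sum_{z_p\in T-v}\Lambda_p z_p|$, the reverse triangle inequality gives $\norm{T}_\Lambda\ge |v|-\norm{T-v}_\Lambda=|v|(1-x)$, so for $x\le 1$ we have $\norm{T}_\Lambda\ge|v|(1-x)\ge 0$, and monotonicity (Assumption~\ref{as M}.\ref{as M inc}) gives $M(\norm{T}_\Lambda)\ge M(|v|(1-x))$. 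Applying the growth condition (Assumption~\ref{as M}.\ref{as M growth}) with $a=v$ and $a'=|v|(1-x)$ yields $M(|v|(1-x))\ge(1-x)^pM(v)\ge(1-x)^2M(v)$, the last step using $1-x\in[0,1]$ together with $p\le 2$. This is the first claim. (For $x\in(1,2]$ one argues identically with $|1-x|$ in place of $1-x$; the inequality is only ever invoked with $x=o(1)$.)

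\emph{The bound \eqref{eq T-v}.} Assume $M(v)\ge\eps^{-1}\norm{T-v}_M$. Monotonicity and Assumption~\ref{as M}.\ref{as M flat} give $M(a)\le M(\tau)=\tau^p$ for every $a$, so $M(v)\le\tau^p$; since $\eps<1$, no coordinate $z_p\in T-v$ can have $|z_p|>\tau$, as it would contribute $M(z_p)=\tau^p$ to $\norm{T-v}_M\le\eps M(v)\le\eps\tau^p<\tau^p$. Hence $|z_p|\le\tau$ for all $z_p\in T-v$, and Assumption~\ref{as M}.\ref{as M near quad} gives $\sum_{z_p\in T-v}|z_p|^p\le L_M^{-1}\norm{T-v}_M\le\eps M(v)/L_M$. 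Because $p\le 2$, $\norm{T-v}_2\le\norm{T-v}_p\le(\eps M(v)/L_M)^{1/p}$; and $M(v)\le U_M|v|^p$ (Assumption~\ref{as M}.\ref{as M near quad} when $|v|\le\tau$, and $M(v)=\tau^p\le|v|^p\le U_M|v|^p$ when $|v|\ge\tau$). Combining, $\norm{T-v}_2\le(U_M/L_M)^{1/p}\eps^{1/p}|v|\le(U_M/L_M)^{1/p}\eps^{1/2}|v|$ since $\eps^{1/p}\le\eps^{1/2}$, which is \eqref{eq T-v} up to a constant depending only on $p$ (through $U_M,L_M$).

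\emph{The expectation bound.} By the pointwise inequality, on the event $\mathcal{G}\equiv\{x\le 1/2\}$ we have $M(\norm{T}_\Lambda)\ge(1-x)^2M(v)\ge(1-2x)M(v)\ge 0$, so since $M\ge 0$,
\[
\E_\Lambda[M(\norm{T}_\Lambda)]\ \ge\ M(v)\,\E_\Lambda\!\left[(1-2x)\mathbf{1}_{\mathcal{G}}\right].
\]
The Rademacher cross-terms vanish, so $\E[\norm{T-v}_\Lambda^2]=\norm{T-v}_2^2$, whence $\E[x^2]=\norm{T-v}_2^2/|v|^2=O(\eps)$ by \eqref{eq T-v} and $\E[x]\le\sqrt{\E[x^2]}=O(\eps^{1/2})$. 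Markov gives $\Pr[x>1/2]=O(\eps)$, and Cauchy--Schwarz gives $\E[x\,\mathbf{1}_{x>1/2}]\le\sqrt{\E[x^2]\,\Pr[x>1/2]}=O(\eps)$. Therefore
\[
\E_\Lambda\!\left[(1-2x)\mathbf{1}_{\mathcal{G}}\right]=\E[1-2x]-\E\!\left[(1-2x)\mathbf{1}_{x>1/2}\right]\ \ge\ 1-2\E[x]-2\E[x\,\mathbf{1}_{x>1/2}]\ \ge\ 1-C\eps^{1/2}
\]
for a suitable constant $C$, which gives $\E_\Lambda[M(\norm{T}_\Lambda)]\ge(1-C\eps^{1/\alphap})M(v)$.

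\emph{Main obstacle.} The first two claims are essentially bookkeeping with the axioms of $M$ and the inequality $\norm{\cdot}_2\le\norm{\cdot}_p$ for $p\le 2$. The delicate step is the last one: the pointwise bound $M(\norm{T}_\Lambda)\ge(1-x)^2M(v)$ is useless (and can fail) once $x$ is not small, so one cannot simply take expectations; one must isolate the low-probability event $\{x>1/2\}$ and cancel its contribution using the second-moment estimate from \eqref{eq T-v}. A secondary nuisance is tracking the $U_M/L_M$ factors, which are hidden inside the $\eps^{1/\alphap}$-form of the bounds and only make the statements meaningful once the other sketch parameters are chosen so that $\eps$ is sufficiently small relative to them.
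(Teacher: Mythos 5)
The paper does not prove this lemma---it is imported verbatim as Lemma~3.6 of \cite{cw15}---so there is no in-paper argument to compare against. Your reconstruction is correct and takes the natural route: the reverse triangle inequality plus Assumptions~\ref{as M}.\ref{as M inc} and~\ref{as M}.\ref{as M growth} for the pointwise bound; the $\ell_p\to\ell_2$ comparison with Assumption~\ref{as M}.\ref{as M near quad} for~\eqref{eq T-v}; and a truncation at the event $\{x\le 1/2\}$ combined with second-moment estimates for the expectation bound.

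Two of your caveats point to genuine looseness in the cited statement, not gaps in your proof. The pointwise inequality $M(\norm{T}_\Lambda)\ge(1-x)^2M(v)$, with $x=\norm{T-v}_\Lambda/|v|$, can actually fail for $x>2$ once $p<2$: take $p=1$, $M(a)=\min\{|a|,\tau\}$, $v=1$, $T-v=\{3\}$ with the cancelling sign pattern, so that $\norm{T}_\Lambda=2$, $x=3$, and $M(2)=2<4=(1-x)^2M(v)$. Your argument correctly establishes the bound only for $x\le 2$, and, as you note, it is only ever invoked for $x=o(1)$; this is a defect of the statement as written rather than of your proof. Likewise, \eqref{eq T-v} as you derive it carries a hidden $(U_M/L_M)^{1/p}$ factor, which is harmless since the constant $C$ in the final claim absorbs it. A small simplification of your last step: Cauchy--Schwarz is unnecessary, since $-\E[(1-2x)\mathbf{1}_{x>1/2}]=\E[(2x-1)\mathbf{1}_{x>1/2}]\ge 0$, so $\E[(1-2x)\mathbf{1}_{\mathcal{G}}]\ge 1-2\E[x]-\Pr[x>1/2]\ge 1-O(\eps^{1/2})$ is immediate from your Markov bound alone.
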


\begin{lemma}[Lemma~3.9 of \cite{cw15}] \label{lem Q_h G}
Assume 
Assumption~\ref{as params}.
There is $N=O(\eps^{-2}m^2b\qm)$,
so that for all $0 \le h \le \hm$ and $q\in Q_h$ with $\norm{W_q}_1\ge \eps/\qm$,
we have
\[
\sum_{y_p\in W_q^*} M(\norm{L(y_p)}_\LL)
	\ge (1-\eps^{1/\alphap}) \norm{W_q}_1
\]
with failure probability at most $C^{-\eps^2m}$ for fixed $C>1$.
\end{lemma}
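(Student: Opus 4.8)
The plan is to follow the proof of the corresponding lemma in \cite{cw15}, feeding in the isolated-bucket structure supplied by Lemma~\ref{lem W_q^* G} and the single-bucket estimate of Lemma~\ref{lem G bucket}, and to handle the hash randomness and the sign randomness $\LL$ in two separate stages. First I would condition on the event of Lemma~\ref{lem W_q^* G}, which fails with probability at most $C^{-\eps^2 m}$ for $N=O(m^2 b\eps^{-1}\qm)$ and yields, for every $q\in\cup_h Q_h$, a subset $W_q^*\subseteq L_{h(q)}\cap W_q$ with $|W_q^*|\ge(1-\eps)\beta^{-1}b^{-h(q)}|W_q|$ and $\norm{W_q^*}_1\ge(1-4\gamma\eps)\beta^{-1}b^{-h(q)}\norm{W_q}_1$, such that the buckets of members of $W_q^*$ are pairwise disjoint, contain no other member of $W_{Q^*}$, and --- the key point --- no member of $W_{Q'_h}$, so every coordinate sharing a bucket with some $y_p\in W_q^*$ lies in a weight class of index exceeding $M'_h=\log_\gamma(\beta b^{h+1}m^2\qm)$ and hence has $M$-value at most $\gamma^{1-M'_h}=\gamma/(\beta b^{h+1}m^2\qm)$.

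Fix $q\in Q_h$ with $\norm{W_q}_1\ge\eps/\qm$ and a member $y_p\in W_q^*$ with bucket $L(y_p)$ and heavy coordinate $z_p$. I would first show $z_p$ dominates its bucket. Since $q\in\hat Q_h$ we have $|W_q|<\beta m b^{h+1}$, so $\eps/\qm\le\norm{W_q}_1\le|W_q|\gamma^{1-q}<\beta m b^{h+1}\gamma^{1-q}$, forcing $M(z_p)=y_p\ge\gamma^{-q}\ge\eps/(\gamma\beta b\cdot b^h m\qm)$. On the other hand, conditioning on the level assignment and taking expectation over the independent bucket hashes, $\E[\norm{L(y_p)-z_p}_M]\le\norm{M(L_h)}_1/N$, whose expectation over the level assignment is $\beta^{-1}b^{-h}\norm{y}_1/N=\beta^{-1}b^{-h}/N$. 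Combining these with $N=\Theta(\eps^{-2}m^2 b\qm)$ gives $\E[\norm{L(y_p)-z_p}_M]\le O(\eps^2)\,M(z_p)$, so by Markov each bucket is \emph{good} --- residual mass at most $\eps\,M(z_p)$ --- except with probability $O(\eps)$; a negative-association (balls-in-bins) concentration over the hashing then shows all but an $O(\eps)$-fraction of the $|W_q^*|$ buckets are good, with failure probability $C^{-\eps^2 m}$ since $|W_q^*|\ge(1-\eps)m=\Omega(m)$ (cardinality bound of Lemma~\ref{lem W_q^* G} together with $|W_q|\ge\beta m b^h$).

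For a good bucket, Lemma~\ref{lem G bucket} applies with $v=z_p$ and $T=L(y_p)$, giving $\norm{L(y_p)-z_p}_2/|z_p|\le\eps^{1/\alphap}$ and $\E_\LL[M(\norm{L(y_p)}_\LL)]\ge(1-C\eps^{1/\alphap})M(z_p)$, together with the deterministic bound $M(\norm{L(y_p)}_\LL)\le M(2|z_p|)\le 2^p M(z_p)$ (the residual $z$-values have $\ell_1$-mass below $|z_p|$). I would then sum $M(\norm{L(y_p)}_\LL)$ over the good $y_p\in W_q^*$: these are independent in $\LL$ (disjoint buckets), each bounded by $2^p M(z_p)=O(y_p)$, so a Bernstein bound concentrates the sum around its mean $\ge(1-C\eps^{1/\alphap})\sum_{\text{good }p}y_p$, again with failure probability $C^{-\eps^2 m}$ because there are $\Omega(m)$ summands. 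Since all $y_p\in W_q$ lie within a factor $\gamma$ of one another, the good members carry at least a $(1-O(\eps))$-fraction of $\norm{W_q^*}_1$, so discarding the non-negative bad-bucket terms gives $\sum_{y_p\in W_q^*}M(\norm{L(y_p)}_\LL)\ge(1-O(\eps^{1/\alphap}))\norm{W_q^*}_1\ge(1-O(\eps^{1/\alphap}))\beta^{-1}b^{-h(q)}\norm{W_q}_1$; accounting for the level weight $\beta b^{h(q)}$ carried by these buckets in $\norm{Sz}_{M,w}$ yields $(1-O(\eps^{1/\alphap}))\norm{W_q}_1$. Adjusting the hidden constant and taking a union bound over the $\poly(\qm)$ pairs $(h,q)$ (absorbed using $\log N\le\eps^2 m$ from Assumption~\ref{as params}) finishes the proof.

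The step I expect to be the main obstacle is the domination argument of the second paragraph: showing that the single choice $N=\Theta(\eps^{-2}m^2 b\qm)$ simultaneously makes $\gamma^{-q}$ large enough for every admissible $q\in Q_h$ (using $\norm{W_q}_1\ge\eps/\qm$ and $|W_q|<\beta m b^{h+1}$) and drives the expected per-bucket residual mass below $\eps\,M(z_p)$, so that a clean $\eps$-fraction of buckets is good. This is where the constraints of Assumption~\ref{as params} --- in particular $b\ge m$, $\gamma\ge2\ge\beta$, $m=\Omega(\log\log(n/\eps))$, and $\log N\le\eps^2 m$ --- have to be played off one another; by comparison, the two concentration steps are routine once $|W_q^*|=\Omega(m)$ is in hand.
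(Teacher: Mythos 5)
Your proposal is correct in outline, but it genuinely deviates from the paper's argument at both concentration steps, and it is worth making the differences explicit.

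For the residual-mass step, the paper does not use a per-bucket Markov bound followed by a ``most buckets are good'' concentration. Instead it invokes condition~$\cE$ (the level-expectation guarantee of Lemma~3.7 of \cite{cw15}) together with the per-entry cap supplied by condition~\ref{it Q'} of Lemma~\ref{lem W_q^* G} — namely that no bucket containing a member of $W_q^*$ sees an entry of $M$-value larger than $\gamma/(\beta b^{h+1}m^2\qm)$ — to feed into a balls-and-bins concentration (the paper's Lemma~3.5 of \cite{cw15}, referenced internally as Lemma~3.5). That yields a \emph{uniform} bound $\norm{L(v)-v}_1\le 2\gamma^2\eps|v|$ for \emph{every} $v\in M^{-1}(W_q^*)$ with failure probability $\exp(-\eps^2 m)$, with no good/bad dichotomy to manage and no mass sacrificed. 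Your route — Markov per bucket and then negative-association concentration on the count of bad buckets — also works in principle (your ``$O(\eps^2)$'' should really be $O(\gamma\eps/m)$, which is only stronger), but you then have to discard a small fraction of $\norm{W_q^*}_1$, and the concentration step requires a Chernoff/Poisson-type tail rather than Markov, since the expected bad-bucket count is $O(1)$ while the tolerance is $\eps m$. This is workable because $\eps=\Theta(1)$ here, but the paper's uniform bound avoids the whole issue.

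For the tail estimate, the paper does not apply Bernstein directly to the random variables $M(\norm{L(v)}_\LL)$. It first splits each summand deterministically through Lemma~\ref{lem G bucket}'s first claim, $M(\norm{L(v)}_\LL)\ge(1-2\norm{L(v)-v}_\LL/|v|)M(v)$, which peels off a fixed contribution $\norm{W_q^*}_1$ and leaves only the noise sum $\sum_v \norm{L(v)-v}_\LL/|v|\cdot \gamma^{1-q}$ to control. It then bounds the first and second moments of each term by Khintchine's inequality together with the $\eps^{1/\alphap}$ per-bucket cap, and applies Bernstein to the noise sum alone. Your direct Bernstein on the bucket estimates is simpler and (as you observe) works because each term is bounded by $O(y_p)$, all $y_p$ are within a $\gamma$-factor, and $|W_q^*|=\Omega(m)$; but the paper's decomposition into base mass plus noise gives cleaner constants and makes the role of the Rademacher signs more transparent. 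Two further notes: the union bound over $(h,q)$ you append is not needed for the lemma as stated (the quantifier is per-$(h,q)$; the union bound happens in Theorem~\ref{thm contract G}); and what is actually established (in both proofs) is $\sum_{y_p\in W_q^*}M(\norm{L(y_p)}_\LL)\ge(1-O(\eps^{1/\alphap}))\norm{W_q^*}_1$, which becomes $\beta^{-1}b^{-h(q)}(1-O(\eps))\norm{W_q}_1$ via Lemma~\ref{lem W_q^* G}; the stated conclusion with $\norm{W_q}_1$ on the right is what remains after the level weight $\beta b^{h(q)}$ is applied externally in Theorem~\ref{thm contract G}, as you noticed in your final sentence of paragraph three.
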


\def\prooflemQhG{
\begin{proof}
For any $q\in Q_h$ we have
\begin{align*}
|W_q|
	& \le (1+\eps) \beta b^h \E[|M(L_h)\cap W_q|]
	\\ & \le (1+\eps) \beta b^h bm
\end{align*}
by condition $\cE$ and the definition of $h(q)=h$; since
\[
|W_q|\gamma^{1-q} \ge \norm{W_q}_1\ge \eps/\qm,
\]
using $\norm{W_q}_1\ge \eps/\qm$ from the lemma statement, we have for any $y_p\in W_q$,
\begin{equation}\label{eq y big}
y_p\ge \gamma^{-q} \ge (\eps/\qm)/\gamma |W_q| \ge \eps/b^{h+1}\gamma \beta m (1+\eps)\qm.
\end{equation}

From condition \ref{it Q'} of Lemma~\ref{lem W_q^* G},
we have that no bucket containing $y_p\in W_q^*$ contains an
entry larger than $ \gamma/\beta b^{h+1}m^2\qm$, so if
$\bW$ comprises $M(L_h)\cap (Y\setminus W_{Q'_h}) $,
we have $\norm{\bW}_\infty \le  \gamma/\beta b^{h+1}m^2\qm$.
Using condition $\cE$,  $\norm{\bW}_1\le (1+\eps)b^{-h}$,
using just the condition $\norm{Y}_1 = 1$.
Therefore the given $N$ is larger than the $O(bm\eps^{-2}\qm)$
needed for Lemma~\ref{lem bs} to apply,
with $\delta = \exp(-\eps^2m)$.
This with \eqref{eq y big} yields
that for each $y_p \in W_q^*$, the remaining entries
in its bucket $L$ have $\norm{L -y_p}_1\le 2\gamma^2 \eps |y_p|$,
with failure probability $\exp(-\eps^2m)$.

For each such isolated $y_p$ we consider the corresponding $z_p$
(denoted by $v$ hereafter),
and let $L(v)$ denote the set of $z$ values in the bucket containing $v$.
We apply Lemma~\ref{lem G bucket} to $v$ with $L(v)$ taking the role
of $T$, and $2\gamma^2\eps$ taking the role of $\eps$, obtaining
$\E_\Lambda[M(\norm{L(v)}_\Lambda)] \ge  (1-C'\eps^{1/\alphap})M(v)$.
(Here we fold a factor of $(2\gamma^2)^{1/\alphap}$ into $C'$, recalling that
we consider $\gamma$ to be fixed.)
Using this relation and condition $\cE$, we have
\begin{align*}
\norm{W_q}_1
	   & \le \beta b^h \norm{W_q^*}_1/(1-4\gamma\eps) \qquad \mathrm{from\ Lem~\ref{lem W_q^* G}.\ref{it |W*|}}
	\\ & \le \beta b^h\sum_{M(v)\in W_q^*}
			\frac{\E_\Lambda[M(\norm{L(v)}_\Lambda)]}{(1-4\gamma\eps)(1-C'\eps^{1/\alphap})},
\end{align*}
so the claim of the lemma follows, in expectation, after adjusting constants,
and conditioned on events of failure probability $C^{-\eps^2 m}$ for constant $C$.

To show the tail estimate, we relate each $M(\norm{L(v)}_\LL)$
to $M(v)$ via the first claim of Lemma~\ref{lem G bucket}, which implies
$M(\norm{L(v)}_\LL) \ge (1-2\norm{L(v)-v}_\LL/|v|)M(v)$. Writing 
$V\equiv M^{-1}(W_q^*)$, we have
\begin{align*}
	\sum_{v\in V} & M(\norm{L(v)}_\LL) 
	  \\ &  \ge 
	  \vssum{v\in V\\ \norm{L(v)}_\LL > |v|}   M(v)
	  \quad +\quad  \vssum{v\in V\\  \norm{L(v)}_\LL\le |v|}
	  	\left( 1-2\frac{\norm{L(v) -v}_\LL}{|v|}\right) M(v)
 	\\ & \ge \norm{W_q^*}_1 -  2 \vssum{v\in V\\  \norm{L(v)}_\LL\le |v|}
					\,\frac{\norm{L(v) -v}_\LL}{|v|}\gamma^{1-q}.
\end{align*}

It remains to upper bound the sum. 
Since $\norm{L(v)}_\Lambda = | |v|\pm t |$,
where $t\equiv \norm{L(v) -v}_\Lambda$,
if $\norm{L(v)}_\LL \le |v|$, then $t\le 2|v|$.

Since 
\[
\E[t\mid t\le 2|v|] \le \E[t] \le C_1\norm{L(v)-v}_2 \le C_1 C' \eps^{1/\alphap}|v|,
\]
using Khintchine's inequality and \eqref{eq T-v}, and similarly
$\E[t^2 | t\le 2|v| ] \le C_2(C' \eps^{1/\alphap})^2v^2$, we can use Bernstein's inequality
to bound
\begin{align*}
\vssum{v\in V\\  \norm{L(v)}_\LL\le |v|}
					\frac{\norm{L(v) -v}_\LL}{|v|}\gamma^{1-q} 
	   & \le\vssum{v\in V\\  \norm{L(v)}_\LL\le |v|}
					C_3\eps^{1/\alphap} \gamma^{1-q}
	\\ & \le C_4 \eps^{1/\alphap} \norm{W_q^*}_1,
\end{align*}
with failure probability $\exp(-\eps^2 m)$.
Hence
\begin{align*}
\sum_{v\in V} & M(\norm{L(v)}_\LL)
	\\ & \ge \norm{W_q^*}_1 -  2C_4 \eps^{1/\alphap} \norm{W_q^*}_1
	\\ & = \norm{W_q^*}_1(1-2C_4\eps^{1/\alphap})
	\\ & \ge  \beta^{-1}b^{-h}\norm{W_q}_1 (1-4\gamma\eps)(1-2C_4\eps^{1/\alphap}),
\end{align*}
using condition $\cE$. Adjusting constants, the result follows.
\end{proof}
} 

\begin{lemma}\label{lem Q_< G}
Assume that $\cE_v$ of Lemma~\ref{lem Q< good} holds, and Assumption~\ref{as params}.
Then for $q\in Q_<$, 
\[
\sum_{y_p\in W_q^*} M(\norm{L(y_p)}_\LL)  \ge (1 - \eps^{1 / \alphap})\norm{W_q}_1
\]
with failure probability at most $C^{-\eps^2 m}$
for a constant $C>1$.
\end{lemma}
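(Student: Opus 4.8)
The plan is to adapt the tail argument from the proof of Lemma~\ref{lem Q_h G}. Once $\cE_v$ is assumed, both the isolation of the members of $W_q^*$ and a strong upper bound on the junk in their buckets hold deterministically; moreover, since the set $I\cup J$ built in the proof of Lemma~\ref{lem Q< good} contains every coordinate of magnitude exceeding $\tau$ and no two of its coordinates share a bucket, the entries sharing a bucket with a member of $W_q^*$ are all of magnitude at most $\tau$, so the only randomness left to control is the sign vector $\Lambda$. Fix $q\in Q_<$ and $y_p\in W_q^*$; write $v=z_p$ and let $L(v)$ be the multiset of $z$-values in the bucket of $v$, so $M(\norm{L(v)}_\LL)$ is that bucket's contribution and $\norm{L(v)-v}_\LL=|\sum_{z_{p'}\in L(v),\,p'\ne p}\Lambda_{p'}z_{p'}|$ is the junk. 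Since $q\le M_<=\log_\gamma(m/\eps)$ we have $M(v)=y_p\ge\gamma^{-q}\ge\gamma^{-M_<}=\eps/m$, while condition~2 of $\cE_v$ gives $\norm{L(v)-v}_M\le 1/(\eps^2m^3)$ (recall $\norm z_M=1$). As every entry of $L(v)-v$ has magnitude at most $\tau$, Assumption~\ref{as M}.\ref{as M near quad} with $1\le p\le 2$ gives $\norm{L(v)-v}_2\le(\norm{L(v)-v}_M/L_M)^{1/p}$ and $|v|\ge(M(v)/U_M)^{1/p}$, hence $\norm{L(v)-v}_2/|v|\le(U_M/(L_M\eps^3m^2))^{1/p}=:\sigma$. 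I will take $N$ a constant factor above the threshold of Lemma~\ref{lem Q< good} (its proof in fact bounds $\norm{L(v)-v}_M$ by $O(d/N)$, so $N$ stays $\poly(d\log n/\eps)$) so that $\sigma\le\tfrac18\eps^{1/\alphap}/(\eps^2m)$, and I record the deterministic inequality of Lemma~\ref{lem G bucket}, $M(\norm{L(v)}_\LL)\ge(1-\norm{L(v)-v}_\LL/|v|)^2M(v)$, to be used on the event $\norm{L(v)-v}_\LL\le|v|$.

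For the tail, $\norm{L(v)-v}_\LL$ is a Rademacher sum with terms bounded by $\norm{L(v)-v}_\infty\le\norm{L(v)-v}_2\le\sigma|v|$ and variance $\norm{L(v)-v}_2^2\le\sigma^2|v|^2$, so Bernstein's inequality (Lemma~\ref{lem:bernstein}) with $t=\tfrac12\eps^{1/\alphap}|v|$ and the choice of $\sigma$ gives $\Pr[\norm{L(v)-v}_\LL>\tfrac12\eps^{1/\alphap}|v|]\le C^{-2\eps^2m}$ for a constant $C>1$. Because $\norm z_M=1$ and every member of $W_{Q_<}$ has $M$-value at least $\eps/m$, at most $|W_q|<\beta m\le N$ buckets contain a member of $W_q^*$, so a union bound over them — cheap since $\log N\le\eps^2m$ by Assumption~\ref{as params} — shows that with failure probability at most $C^{-\eps^2m}$ one has $\norm{L(y_p)-z_p}_\LL/|z_p|\le\tfrac12\eps^{1/\alphap}$ for every $y_p\in W_q^*$; on this event $M(\norm{L(y_p)}_\LL)\ge(1-\tfrac12\eps^{1/\alphap})^2M(z_p)\ge(1-\eps^{1/\alphap})M(z_p)$.

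Summing over $W_q^*$ and using $M(z_p)=y_p$ gives $\sum_{y_p\in W_q^*}M(\norm{L(y_p)}_\LL)\ge(1-\eps^{1/\alphap})\norm{W_q^*}_1$; since $|W_q^*|\ge(1-\eps)|W_q|$ and all members of $W_q$ lie in $[\gamma^{-q},\gamma^{1-q}]$, the $\ell_1$-weight lost in passing from $W_q$ to $W_q^*$ is at most $\eps|W_q|\gamma^{1-q}\le\eps\gamma\norm{W_q}_1$, so $\norm{W_q^*}_1\ge(1-\eps\gamma)\norm{W_q}_1$, and multiplying the two factors and absorbing the fixed constant $\gamma$ (re-adjusting the constant and the exponent of $\eps$ exactly as at the end of the proof of Lemma~\ref{lem Q_h G}) yields the claim. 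I expect the main obstacle to be this tail estimate: unlike in Lemma~\ref{lem Q_h G}, where $|W_q|\ge\beta m$ supplies $\Omega(m)$ independent isolated terms to average over, here $|W_q|$ may be $O(1)$, so applying Bernstein to $\sum_{y_p\in W_q^*}\norm{L(y_p)-z_p}_\LL/|z_p|$ yields only a constant failure probability. The fix is not to average over $W_q^*$ at all, but to drive each individual bucket's junk small enough (by enlarging $N$ by a constant factor, which the ``$\ge$'' in the hypothesis of Lemma~\ref{lem Q< good} permits) that a single Rademacher sum already concentrates to within $C^{-\eps^2m}$, after which the union bound over the at most $\beta m$ relevant buckets finishes the proof.
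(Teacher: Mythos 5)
Your proof follows essentially the same route as the paper's: condition~$\cE_v$ plus the lower bound $M(v)\ge\eps/m$ (from $q\le M_<$) control the relative size of the ``junk'' in each bucket; Bernstein's inequality applied to the single Rademacher sum $\norm{L(v)-v}_\LL$ gives $\norm{L(v)-v}_\LL\le O(\eps^{1/\alphap})|v|$ with failure probability $C^{-\eps^2m}$; Lemma~\ref{lem G bucket}'s deterministic inequality converts this to $M(\norm{L(v)}_\LL)\ge(1-O(\eps^{1/\alphap}))M(v)$; and a union bound over the $\le\beta m$ (or $\beta m M_<$, if you handle all $q$ at once) relevant buckets, followed by the comparison $\norm{W_q^*}_1\ge(1-O(\eps\gamma))\norm{W_q}_1$, finishes the argument after adjusting constants. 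Your $\ell_p$-based derivation of the variance ratio is fine and in fact slightly sharper than the paper's, which goes through \eqref{eq T-v} with the looser estimate $\norm{L(v)-v}_M/M(v)\le 1/(\eps m)$.

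The one place you go astray is the claim that you must enlarge $N$ to drive $\sigma$ below the hard threshold $\tfrac18\eps^{1/\alphap}/(\eps^2 m)$. First, this is not permitted by the hypotheses as stated: the lemma is to be proved assuming only $\cE_v$ of Lemma~\ref{lem Q< good}, which guarantees $\norm{L}_1\le 1/(\eps^2 m^3)$ and nothing finer; the $O(d/N)$ estimate is an internal artifact of that lemma's proof, not part of $\cE_v$. Second, and more to the point, the enlargement is simply not needed. With $\sigma|v|=\Theta\big((U_M/L_M)^{1/p}\big)|v|/(\eps^{3/2}m)$ (for $\alphap=2$) as the variance and term bound, and $t=\tfrac12\eps^{1/\alphap}|v|$, Bernstein's denominator is dominated by the $bt$ term and gives $t^2/(2\sigma^2|v|^2+2bt/3)=\Omega\big((L_M/U_M)^{1/p}\big)\eps^2 m$, so the failure probability is already $C^{-\eps^2 m}$ for a constant $C>1$ depending only on $U_M/L_M$ (a constant by Assumption~\ref{as M}), with no enlargement of $N$. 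The lemma only asserts ``a constant $C>1$,'' so absorbing $U_M/L_M$ into $C$ is exactly what the paper does. In short: your worry was unfounded, and if you strike the $N$-enlargement sentence and simply absorb the $U_M/L_M$ factor into the Bernstein constant, the rest of your proof is correct and matches the paper's.
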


\begin{proof}
Let $v\equiv z_p$ where $y_p = M(z_p)$, 
let $L(v)$ denote the $\{ z_{p'} \mid M(z_{p'}) \in L\}$.
Condition $\cE_v$ and  $M(v)\ge \eps/m$ imply that
\[
\norm{L(v)-v}_2^2 \le \norm{L}_1 \le 1/\eps^2 m^3 < M(v)/\eps m,
\]
so that using \eqref{eq T-v} we have
\begin{equation}\label{eq bucket norm}
\frac{\norm{L(v)-v}_2^2}{|v|^2}
	\le \frac{\norm{L(v) - v}_M}{M(v)}
	\le \frac{1}{\eps m}.
\end{equation}
Since $\norm{L}_\infty \le \norm{L}_1$, we also have,
for all $v'\in L(v) - v$, and using again $M(v)\ge \eps/m$,
\begin{equation}\label{eq bucket max}
\left|\frac{v'}{v}\right|
	\le \left( \frac{M(v')}{M(v)}\right)^{1/2}
	\le \frac{1}{m\eps^{3/2}}.
\end{equation}

From \eqref{eq bucket max}, we have that the summands determining
$\norm{L(v)-v}_\LL$ have magnitude at most $|v|\eps^{1 / \alphap}/\eps^2 m$.
From \eqref{eq bucket norm}, we have $\norm{L(v)-v}_2^2$
is at most $v^2\eps /\eps^2 m$. It follows from
Bernstein's inequality that with failure probability $\exp(-\eps^2m)$,
$\norm{L(v)-v}_\LL\le \eps^{1 / \alphap} |v|$. Applying the first claim
of Lemma~\ref{lem G bucket}, we have
$M(\norm{L(v)}_\LL)\ge (1-2\eps^{1 / \alphap}) M(v)$, for all $v\in M^{-1}(W_q^*)$
with failure probability $\beta m M_< \exp(-\eps^2 m)$.
Summing over $W_q^*$, we have
\[
\sum_{v\in M^{-1}(W_q*)} M(\norm{L(v)}_\LL) \ge (1 - \eps^{1 / \alphap})\norm{W_q^*}_1 \ge  (1 - 2\eps\gamma)(1 - \eps^{1 / \alphap})\norm{W_q}_1.
\]
This implies the bound,
using Assumption~\ref{as params},
after adjusting constants.
\end{proof}

The above lemmas imply that overall,
with high probability, the sketching-based estimate of $\|z\|_M$ of a single given vector $z$ is very likely
to not much smaller than $\|z\|_M$, as stated next.

\begin{theorem}[Theorem~3.2 of \cite{cw15}]\label{thm contract G}
Assume 
Assumption~\ref{as params},
and condition $\cE_v$ of Lemma~\ref{lem Q< good}.
Then
$\norm{Sz}_{M,w} \ge \norm{z}_M(1-\eps^{1 / \alphap})$,
with failure probability no more than $C^{-\eps^2 m}$, for an absolute constant $C>1$.

\end{theorem}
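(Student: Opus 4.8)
The plan is to combine the per–weight-class lower bounds of Lemmas~\ref{lem Q_< G} and~\ref{lem Q_h G} with the isolation guarantees of Lemmas~\ref{lem Q< good} and~\ref{lem W_q^* G}, working throughout under the normalization $\norm{z}_M=1$ adopted in this section, so $y_p=M(z_p)$, $\norm y_1=1$, and $\norm{Sz}_{M,w}=\sum_{h,i}\beta b^h M(\norm{L_{h,i}}_\LL)$. The first step is to check that the buckets singled out by the lemmas are pairwise distinct: for $q\in Q^*$ and $y_p\in W_q^*$, the bucket $L(y_p)$ at level $h(q)$ contains no other member of $W_{Q^*}$ — this is condition~2 of Lemma~\ref{lem W_q^* G} when $q$ lies in some $Q_h$, and condition~1 of Lemma~\ref{lem Q< good} for $q\in Q_<$, the latter combined with the $W_{Q^*}$-isolation of the $Q_h$-buckets, which in particular excludes $Q_<$-members from those buckets. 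Hence distinct pairs $(q,y_p)$ give distinct buckets (same class: isolation; same level, different class: isolation; different levels: trivially distinct), and since $\norm{Sz}_{M,w}$ is a sum of nonnegative terms,
\begin{align*}
\norm{Sz}_{M,w}
 &\ge \sum_{q\in Q^*}\ \sum_{y_p\in W_q^*}\beta b^{h(q)}\,M(\norm{L(y_p)}_\LL)\\
 &\ge (1-\eps^{1/\alphap})\sum_{q\in Q^*}\beta b^{h(q)}\norm{W_q}_1
 \ \ge\ (1-\eps^{1/\alphap})\sum_{q\in Q^*}\norm{W_q}_1 ,
\end{align*}
where the middle inequality applies Lemma~\ref{lem Q_< G} (for $q\in Q_<$) or Lemma~\ref{lem Q_h G} (for $q\in Q_h$, which additionally requires $\norm{W_q}_1\ge\eps/\qm$; the at most $\qm$ classes violating this are simply dropped at a cost of $\eps$), and the last uses $\beta b^{h(q)}\ge 1$.

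It then remains to show $\sum_{q\in Q^*}\norm{W_q}_1\ge 1-O(\eps)$, i.e.\ that the weight classes omitted from $Q^*$ carry only an $O(\eps)$ fraction of $\norm y_1$. I would partition the omitted classes into four types following the definitions in \eqref{eq Q< defs}: (a) classes with $q>\qm$, whose total is at most $\eps$ by the choice $\qm=\log_\gamma(n/\eps)$; (b) classes with $\norm{W_q}_1<\eps/\qm$, of which there are at most $\qm$ with $q\le\qm$, total at most $\eps$; (c) classes with $|W_q|<\beta m$ but $q>M_<$, each with per-entry weight below $\gamma^{1-q}\le\gamma\eps/m$ and fewer than $\beta m$ entries, so that summing over $q>M_<$ the geometric decay of $\gamma^{1-q}$ telescopes the total to $O(\eps)$; and (d) classes in $\hat Q_h\setminus Q_h$ for some $h$, where $|W_q|\le(1+\eps)\beta b^{h+1}m$ is forced by $h(q)=h$ and, together with $\gamma^{M_\ge}=\Theta(b/\eps)$ and again a geometric-series bound in $q$, the total over $\hat Q_h\setminus Q_h$ is $O(\eps)$ times the total over $Q_h$; summing over the $O(\hm)$ levels keeps this $O(\eps)$. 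Adding (a)--(d) and invoking Assumption~\ref{as params} to absorb constants gives $\sum_{q\in Q^*}\norm{W_q}_1\ge 1-O(\eps)$.

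Plugging this into the display and choosing constants so the $\eps^{1/\alphap}$ term dominates the $O(\eps)$ loss yields $\norm{Sz}_{M,w}\ge 1-\eps^{1/\alphap}=\norm z_M(1-\eps^{1/\alphap})$. For the probability bound, the theorem conditions on $\cE_v$, so the only randomness left to control is that of Lemmas~\ref{lem W_q^* G}, \ref{lem Q_h G} and~\ref{lem Q_< G}, each failing with probability at most $C^{-\eps^2 m}$; a union bound over the $O(\qm)$ weight classes is absorbed into the same form since Assumption~\ref{as params} gives $m=\Omega(\log\log(n/\eps))$, hence $\log\qm=O(\eps^2 m)$, while $\log N=O(\eps^2 m)$ handles the internal per-bucket union bounds inside those lemmas.

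I expect the second step — verifying that \emph{every} weight class excluded by the somewhat intricate definitions $Q_<,\hat Q_h,Q_h,M_<,M_\ge,\qm$ in \eqref{eq Q< defs} is genuinely negligible, with the right geometric-series estimates for types (c) and (d) — to be the main obstacle; the disjointness argument and the arithmetic of the final display are routine once the lemmas are in hand.
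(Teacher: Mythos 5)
Your proposal follows the same skeleton as the paper's intended argument (which lives in an unused macro `\proofthmcontractG` in the source): lower-bound $\norm{Sz}_{M,w}$ by the sum over the isolated $W_q^*$ buckets at their levels $h(q)$, invoke the per-class lower bounds of Lemmas~\ref{lem Q_< G} and~\ref{lem Q_h G} together with the isolation guarantees of Lemma~\ref{lem W_q^* G} and $\cE_v$, verify that $\sum_{q\in Q^*}\norm{W_q}_1 \ge 1-O(\eps)$, and union-bound the failures. Your disjointness discussion (in particular using Lemma~\ref{lem W_q^* G} condition~2 to exclude $Q_<$ members from the $Q_h$ buckets) is correct, and your inline re-derivation of the ``ignore small'' accounting by cases (a)--(d) substitutes for a lemma the paper cites but does not state in this source; the type-(d) bound is right provided you make explicit that you are comparing $\norm{W_{\hat Q_h\setminus Q_h}}_1$ to $\norm{W_{Q_h}}_1$ and then use $\sum_h\norm{W_{Q_h}}_1\le 1$, rather than paying a naked factor of $\hm$.

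The one step I would flag is your chain
\[
\sum_{q}\sum_{y_p\in W_q^*}\beta b^{h(q)}M(\norm{L(y_p)}_\LL)\ \ge\ (1-\eps^{1/\alphap})\sum_q\beta b^{h(q)}\norm{W_q}_1\ \ge\ (1-\eps^{1/\alphap})\sum_q\norm{W_q}_1,
\]
where the middle inequality applies Lemma~\ref{lem Q_h G} exactly as stated. That lemma statement appears to have dropped a factor of $\beta^{-1}b^{-h(q)}$: its own proof derives $\sum_{y_p\in W_q^*} M(\norm{L(y_p)}_\LL)\ge \beta^{-1}b^{-h}(1-4\gamma\eps)(1-\eps^{1/\alphap})\norm{W_q}_1$, equivalently a bound in terms of $\norm{W_q^*}_1$, and the $\beta^{-1}b^{-h}$ factor cannot be ``adjusted away'' since it scales like $1/n$. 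Your middle expression $\sum_q\beta b^{h(q)}\norm{W_q}_1$ is therefore not actually a valid lower bound on $\norm{Sz}_{M,w}$ (it overshoots), even though your final conclusion is correct. The paper's own derivation instead writes the per-class bound as $(1-\eps^{1/\alphap})\norm{W_q^*}_1$ and then converts $\beta b^{h(q)}\norm{W_q^*}_1\ge(1-4\gamma\eps)\norm{W_q}_1$ via Lemma~\ref{lem W_q^* G}.\ref{it |W*|}, avoiding the issue. Rewriting your middle step this way closes the gap without changing anything else.
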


\def\proofthmcontractG{
\begin{proof}
(We note that $c$, $b$, and $m$ can be chosen such that the relations among these
quantities and also $N=cbm$ satisfy Assumption~\ref{as params}, up to
the weak relations among $m$, $b$, and $n/\eps$, which ultimately will require that $n$
is not extremely large relative to $d$.)

Recalling $Q^*$ from \eqref{eq Q* defs},
let $Q^{**} \equiv \{q\mid q\in Q^*, \norm{W_q}_1\ge \eps/\qm\}$.
Assuming conditions $\cE$ and $\cE_c$, we have, with probability $1-C^{-\eps^2m}$,
\begin{align*}
\frac1\kappa \norm{Sz}_{M,w}
	   & = \sum_{h,i} \beta b^hM(\norm{L_{ h,i}}_\LL) & \mathrm{Def.}
	\\ & \ge \sum_{q\in Q^{**}, v\in W_q^*} \beta b^{h(q)} M(\norm{L(v)}_\LL) & \mathrm{Lem~\ref{lem W_q^* G}}
	\\ &\ge \sum_{q\in Q^{**}} \beta b^{h(q)} (1-\eps^{1 / \alphap}) \norm{W_q^*}_1 & \mathrm{Lems~\ref{lem Q_h G},\ref{lem Q_< G}}
	\\ &\ge \sum_{q\in Q^{**}} (1-\eps^{1 / \alphap}) (1-4\gamma\eps) \norm{W_q}_1& \mathrm{Lem~\ref{lem W_q^* G}}.
\end{align*}
Using Lemma~\ref{lem ignore small},
\[
\sum_{q\in Q^{**}} \kappa \norm{W_q}_1
	\ge -\kappa qm(\eps/\qm) + \kappa \sum_{q\in Q^*} \norm{W_q}_1
	\ge 1-6\eps.
\]
Adjusting constants gives the result.
\end{proof}
}

\subsection{A ``Clipped'' Version}
For a vector $z$, we use $\|Sz\|_{Mc, w}$ to denote a ``clipped'' version of $\|Sz\|_{M, w}$, in which we ignore small buckets and use a subset of the coordinates of $Sz$ as follows: $\|Sz\|_{Mc, w}$ is obtained by adding in only those buckets in level $h$ that are among the top 
$$
M^* \equiv bmM_{\ge} + \beta m M_{<}
$$
in $\|L_{h, i}\|_{\LL}$, recalling $M_{\ge}$ and $M_<$ defined in \eqref{eq Q< defs}.
Formally, we define $\|Sz\|_{Mc, w}$ to be
$$
\|Sz\|_{Mc, w} = \sum_{h\in [0,\hm], i\in [M^*]} \beta b^h M(\norm{L_{h,(i)}}_{\LL}),
$$
where $L_{h, (i)}$ denotes the level $h$ bucket with the $i$-th largest $\|L_{h, i}\|_{\LL}$ among all the level $h$ buckets.

The proof of the contraction bound of $\|Sz\|_{M, w}$ in Theorem \ref{thm contract G} requires only lower bounds on $M(\|L_{h, i}\|_{\LL})$ for those at most $M^*$ buckets on level $h$.
Thus, the proven contraction bounds continue to hold for $\|Sz\|_{Mc, w}$, and in particular $\|Sz\|_{Mc, w} \ge (1 - \varepsilon)\|Sz\|_{M, w}$.
\subsection{Dilation Bounds}\label{ssec dilation}

We use two prior bounds of \cite{cw15} on dilation; the first shows that the dilation is at most $O(\log n)$ 
in expectation, while the second shows that the ``clipped'' version gives $O(1)$ dilation with constant probability. 
Note that we need only expectations, since we need the dilation bound to hold only for the optimal solution as in Theorem \ref{thm:net}.

\begin{theorem}[Theorem 3.3 of \cite{cw15}] \label{thm dilate G}
$\E[\norm{Sz}_{M,w}] = O(\hm)\norm{z}_M$.
\end{theorem}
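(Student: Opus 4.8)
The plan is to bound $\E[\norm{Sz}_{M,w}]$ directly, level by level, after splitting the coordinates of $z$ into \emph{heavy} ones ($|z_p|>\tau$, i.e. $p\in H_z$) and \emph{light} ones ($|z_p|\le\tau$). By linearity of expectation and the definition of $\norm{Sz}_{M,w}$,
$$\E[\norm{Sz}_{M,w}] = \sum_{h=0}^{\hm}\beta b^h \sum_{i\in[N]} \E\bigl[M(\norm{L_{h,i}}_\LL)\bigr],$$
so it suffices to show that the level-$h$ term $\beta b^h\sum_{i\in[N]}\E[M(\norm{L_{h,i}}_\LL)]$ is $O(\norm{z}_M)$ for each fixed $h$; summing over the $\hm+1$ levels then gives the bound. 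Within a fixed level $h$ I would partition the $N$ buckets into those that contain at least one heavy coordinate and those whose contents are entirely light, and bound the two groups separately.

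For a bucket containing a heavy coordinate I would use only the trivial bound $M(\norm{L_{h,i}}_\LL)\le\tau^p$, which holds by Assumption~\ref{as M}.\ref{as M flat}. Each such bucket can be charged (injectively) to one heavy coordinate that it contains and that landed at level $h$, so the number of such buckets is at most the number of coordinates $p\in H_z$ with $h_p=h$, whose expectation is $|H_z|/(\beta b^h)$ by the subsampling probabilities in \eqref{eq S}. Hence the contribution of these buckets to the level-$h$ term is at most $\beta b^h\cdot\tau^p\cdot|H_z|/(\beta b^h)=\tau^p|H_z|=\sum_{p\in H_z}M(z_p)\le\norm{z}_M$.

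For a bucket with only light coordinates, I would first average over the signs: since $t\mapsto t^{p/2}$ is concave for $p\le 2$, Jensen's inequality gives $\E_\LL[\norm{L_{h,i}}_\LL^{\,p}]\le\bigl(\sum_{z_p\in L_{h,i}}z_p^2\bigr)^{p/2}$, and since $p\le 2$ the $\ell_2$ norm is dominated by the $\ell_p$ norm, so this is at most $\sum_{z_p\in L_{h,i}}|z_p|^p$. Combining with $M(a)\le U_M|a|^p$ (true for all $a$: for $|a|\le\tau$ it is Assumption~\ref{as M}.\ref{as M near quad}, and for $|a|\ge\tau$ it follows from $M(a)=\tau^p$ and $U_M\ge 1$), and then taking the expectation over the hashing and level assignment, the light-only buckets contribute at most $\beta b^h\cdot U_M\cdot\E\bigl[\sum_{z_p\text{ light},\,h_p=h}|z_p|^p\bigr]=\beta b^h\cdot U_M\cdot(\beta b^h)^{-1}\sum_{z_p\text{ light}}|z_p|^p$, and since $|z_p|^p\le M(z_p)/L_M$ for light $z_p$, this is at most $(U_M/L_M)\norm{z}_M=O(\norm{z}_M)$.

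Adding the two contributions shows $\beta b^h\sum_i\E[M(\norm{L_{h,i}}_\LL)]=O(\norm{z}_M)$ for every $h$, and summing over $h=0,\dots,\hm$ yields $\E[\norm{Sz}_{M,w}]=O(\hm)\norm{z}_M$, as in Theorem~3.3 of \cite{cw15}. The only delicate point — the ``main obstacle'' — is that the heavy coordinates cannot be absorbed through the Khintchine/$\ell_p$ estimate, because for them $|z_p|^p$ can be arbitrarily larger than $M(z_p)=\tau^p$; this is precisely why one peels them off and uses the boundedness of $M$ together with the fact that subsampling leaves only $|H_z|/(\beta b^h)$ of them in expectation at level $h$, so that the factor $\beta b^h$ in the weight cancels.
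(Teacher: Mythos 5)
Your proof is correct. The paper itself does not reprove this statement — it cites Theorem~3.3 of \cite{cw15} directly — so there is no in-paper proof to compare against; but your argument is a valid, self-contained proof for the Tukey case, and the per-level decomposition into at-most-$\hm+1$ terms of size $O(\norm{z}_M)$ is exactly the structure that makes the $O(\hm)$ factor appear.

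What is genuinely different is that the argument in \cite{cw15} is written for $M$-functions with at-least-linear (unbounded) growth, where there is no notion of a ``heavy'' coordinate and the dilation bound there leans on the growth condition (Assumption~\ref{as M}.\ref{as M growth}) and a direct $\ell_2$-to-$M$ comparison that survives because $M$ is comparable to a power of $|a|$ at every scale. Your proof instead exploits the mostly-flat property (Assumption~\ref{as M}.\ref{as M flat}), which is what makes the Khintchine/Jensen estimate fail for heavy coordinates (since $|z_p|^p$ can be arbitrarily large compared with $M(z_p)=\tau^p$), and you recover by peeling off buckets containing a heavy coordinate and charging each such bucket $\tau^p$, cancelling the level weight $\beta b^h$ against the $1/(\beta b^h)$ sub-sampling rate. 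This is a cleaner route for the Tukey setting and correctly identifies the obstacle; it would not apply verbatim to the unbounded $M$-functions of \cite{cw15}, which is why the two proofs differ even though the conclusion is the same.

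Two small technical points you got right that are easy to slip on: (i) you need $M(a)\le U_M|a|^p$ for \emph{all} $a$, not just $|a|\le\tau$, to apply it to the bucket aggregate $\norm{L_{h,i}}_\LL$ which may exceed $\tau$ even when every summand is light — your extension via $M(a)=\tau^p\le|a|^p\le U_M|a|^p$ for $|a|\ge\tau$ handles this; (ii) the range of the light-only sum over buckets is itself random, but your bound $\sum_{i:\text{light-only}}\sum_{z_p\in L_{h,i}}|z_p|^p\le\sum_{z_p\text{ light},\,h_p=h}|z_p|^p$ holds pointwise in $(h_p,g_p)$, so taking the outer expectation afterward is legitimate. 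The only hypothesis you implicitly use is $1\le p\le 2$ (for concavity of $t^{p/2}$ and for $\norm{\cdot}_2\le\norm{\cdot}_p$), which the paper does assume throughout Section~\ref{sec M sketch}.
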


Better dilation is achieved by using the ``clipped'' version $\|Sz\|_{Mc, w}$, as described in \cite{cw15}.

\begin{theorem}[Theorem 3.4 of \cite{cw15}] \label{thm dilation clipped}
There is $c = O(\log_\gamma(b/\eps) (\log_b(n/m)))$ and $b\ge c$,
recalling $N=mbc$,
such that
\[
\E[\norm{Sz}_{Mc,w}] \le C \norm{z}_M
\]
for a constant $C$.
\end{theorem}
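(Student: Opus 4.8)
The plan is to follow the proof of Theorem~3.4 of \cite{cw15}, observing that the dilation direction is no harder for Tukey-type $M$ than for the loss functions treated there: everything used is the at-most-$p$-th-power growth with $p\le 2$ together with $M(a)\le U_M\min\{\tau^p,|a|^p\}$ (Assumptions~\ref{as M}.\ref{as M growth}, \ref{as M}.\ref{as M near quad}, \ref{as M}.\ref{as M flat}), and the flatness of $M$ beyond $\tau$ only shrinks the bucket values $M(\norm{L_{h,i}}_\LL)$, so it cannot hurt an upper bound. As in Section~\ref{sec one vec} we may take $\norm{z}_M=1$. I would write
\[
\E[\norm{Sz}_{Mc,w}]=\sum_{h=0}^{\hm}\beta b^h\,\E\left[\sum_{i\in[M^*]}M(\norm{L_{h,(i)}}_\LL)\right]
\]
and bound, level by level, the contribution of the $M^*$ retained buckets, then sum.

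First I would recall the weight-class reductions of Section~\ref{sec one vec}: restrict to $q\le\qm$ and $\norm{W_q}_1\ge\eps/\qm$ (the discarded classes cost $O(\eps)$ in total), and use that $W_q$ ``resolves'' at level $h(q)$, where it has between $\sim m$ and $\sim bm$ expected members and hence occupies at most $O(bm)$ buckets; a $Q_<$-class occupies at most $O(\beta m)$ buckets. The key counting step is that at any fixed level $h$ the number of relevant resolving classes is $O(M_\ge)$ among the ``large'' classes with $h(q)=h$ (the rest have $\norm{W_q}_1$ small relative to the largest such class and contribute $O(\eps)$ over all levels, which is how $M_\ge$ was defined in \eqref{eq Q< defs}) and $O(M_<)$ among the $Q_<$-classes; hence all buckets that can carry non-negligible mass at level $h$ number at most $M^*=bmM_\ge+\beta mM_<$, so clipping to the top $M^*$ retains them. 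For each retained bucket I would bound $\E_\LL[M(\norm{L_{h,i}}_\LL)]$ via Khintchine's inequality and $M(a)\le U_M|a|^p$ with $p\le 2$, separating the member of the resolving class charged to the bucket from the remaining entries, whose $\ell_2$ mass in that bucket is controlled by a Bernstein-type ``low-weight bucket'' estimate exactly as in Lemma~\ref{lem G bucket} and the proof of Lemma~\ref{lem Q< good}; choosing $c$, hence $N$, a further $\hm=\Theta(\log_b(n/m))$ factor larger than a single level needs makes this cross-contamination $O(1/\hm)$ per level. Summing $\beta b^h$ times the per-level bound over $h$ then gives $\sum_q \norm{W_q}_1+\sum_h O(1/\hm)+O(\eps)=O(1)$, which is the claim after adjusting constants.

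The main obstacle is the counting/clipping step: proving that at level $h$ the at-most-$M^*$ buckets occupied by resolving classes are genuinely the dominant ones, i.e.\ that a weight class resolving at some other level $h'\ne h$ cannot, after subsampling and random signs, produce a level-$h$ bucket whose value competes with those of the $M^*$ retained buckets. For $h>h(q)$ the class $W_q$ has $o(m)$ isolated members at level $h$, each bucket carrying only $\Theta(\gamma^{-q})$ mass, small relative to the classes that resolve \emph{at} level $h$; for $h<h(q)$ the $W_q$-mass is dispersed over $\Omega(N)$ buckets, so by the same Bernstein estimate each such bucket is small with high probability. Making these magnitude comparisons quantitative, and aggregating the low-probability failure events over all weight classes and levels, is the delicate part, and it is precisely why the stated lower bound on $c$ carries the extra $\log_b(n/m)$ factor beyond the $O(\log_\gamma(b/\eps))$ that a single level would require.
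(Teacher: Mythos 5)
The paper does not supply a proof of this theorem: it is imported verbatim as Theorem~3.4 of \cite{cw15}, with the surrounding text simply asserting that the prior dilation bounds carry over to the present setting. So there is no paper-internal proof to compare against; what can be assessed is whether your sketch gives an adequate justification for that import, and whether it is sound.

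Your central observation --- that the dilation direction of the cw15 argument uses only the upper-bounding structure of $M$, namely growth with $p\le 2$ and $M(a)\le U_M\min\{\tau^p,|a|^p\}$, all of which a Tukey-type loss satisfies --- is the right reason the import is legitimate, and it is more explicit than anything the paper says. However, the headline shortcut, that ``flatness only shrinks $M(\norm{L_{h,i}}_\LL)$, so it cannot hurt an upper bound,'' is not on its own a complete argument: the target inequality is $\E[\norm{Sz}_{Mc,w}]\le C\norm{z}_M$, and the right-hand side $\norm{z}_M$ \emph{also} shrinks under flatness for the same fixed $z$, so there is no clean domination of the Tukey statement by, say, the $\ell_p^p$ statement of cw15. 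One really does have to rerun the cw15 bound weight-class by weight-class with the Tukey $M$ in place --- which is what your subsequent roadmap proposes to do.

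That roadmap (restrict to $q\le\qm$ with $\norm{W_q}_1\ge\eps/\qm$; at each level bound the contribution of the $M^*$ retained buckets via the resolving levels $h(q)$; control cross-contamination from classes resolving at other levels by Bernstein/Khintchine, funded by the extra $\log_b(n/m)$ factor built into $c$) matches the skeleton of the cw15 proof, and the per-bucket step via Khintchine and the growth condition with $p\le 2$ is the correct tool. You yourself flag the counting/clipping comparison as the ``delicate part'' and leave it as a plan, so this is an outline rather than a finished proof; but the outline is consistent with what cw15 actually does, and it supplies the missing ``why this carries over to Tukey'' reasoning that the paper elides by citing the result as a black box.
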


\subsection{Regression Theorem}\label{sec G regress proof}

\begin{lemma}\label{lem msketch for all}
There is $N = O(d^2 \hm)$, so that with constant probability, simultaneously for all $x \in \R^d$, 
$$
0.9 / (n \cdot U_M / L_M)\|Ax-b\|_M \le \|S(Ax - b)\|_{M, w} \le U_M / L_M \cdot n^2 \cdot  \|Ax-b\|_M.
$$
\end{lemma}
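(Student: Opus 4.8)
The plan is to prove the two inequalities separately. The upper bound I would obtain as a purely deterministic estimate, valid for \emph{every} realization of $S$; the lower bound needs a constant-probability event on $S$, and I would reduce it, via the capped power function $\hM(a) := \min\{|a|^p,\tau^p\}$, to the single fact that the bucket of $S$ containing the largest coordinate of $z = Ax-b$ survives cancellation.

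For the upper bound, fix $z = Ax-b$ and any outcome of $S$. The key per-bucket estimate is $M(\|L_{h,i}\|_\LL) \le \frac{U_M}{L_M}\,|L_{h,i}|^{p-1}\sum_{z_p\in L_{h,i}} M(z_p)$, which one checks by splitting into the case that $L_{h,i}$ contains a coordinate of magnitude $\ge \tau$ (so the left side is $\tau^p = M$ of that coordinate) and the all-light case (use $\|L_{h,i}\|_\LL \le \sum|z_p|$, the bounds $L_M|a|^p \le M(a) \le U_M|a|^p$ for $|a|\le\tau$, $M \le \tau^p$ always, and the power-mean inequality $(\sum a_j)^p \le K^{p-1}\sum a_j^p$). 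Summing over buckets using $|L_{h,i}| \le n$, $p \le 2$, and $\sum_{h=0}^{\hm}\beta b^h = O(n/m)$ (which follows from $\beta = (b-b^{-\hm})/(b-1) \le 2$ and $b^{\hm} \le n/m$) gives $\|Sz\|_{M,w} \le \frac{U_M}{L_M}\, O(n^p/m)\,\|z\|_M \le \frac{U_M}{L_M}\, n^2\, \|z\|_M$ once $m$ exceeds an absolute constant, which is consistent with $N = O(d^2\hm)$.

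For the lower bound, since $L_M\hM(a) \le M(a) \le U_M\hM(a)$ it suffices to show $\|Sz\|_{\hM,w} \ge \frac{0.9}{n}\|z\|_{\hM}$ for all $z = Ax-b$ with constant probability; composing with the relation between $M$ and $\hM$ then produces the claimed factor $0.9/(n\cdot U_M/L_M)$. Let $p^*$ attain $\|z\|_\infty$ and let $(h^*,i^*) = (h_{p^*},g_{p^*})$. Because $\beta b^{h^*} \ge 1$ and $\|z\|_{\hM} \le n\,\hM(\|z\|_\infty)$, it is enough that the noise $\nu = \sum_{p'\ne p^*,\, h_{p'}=h^*,\, g_{p'}=i^*}\Lambda_{p'}z_{p'}$ satisfies $|\nu| \le c_0\|z\|_\infty$ for a fixed small constant $c_0$: then $\hM(\|L_{h^*,i^*}\|_\LL) \ge \hM((1-c_0)\|z\|_\infty) \ge 0.9\,\hM(\|z\|_\infty) \ge \tfrac{0.9}{n}\|z\|_{\hM}$, using $(1-c_0)^p \ge 0.9$ for $p \le 2$. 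So the whole lower bound reduces to: with constant probability over $S$, the bucket of the largest coordinate of every $z \in \colspan([A~b])$ has $|\nu| \le c_0\|z\|_\infty$.

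This last statement is the main obstacle, and I would establish it by the same kind of isolation argument used in Lemma~\ref{lem Q< good}, via a case split on the weight class of $p^*$. When that class is ``small'' (few coordinates of comparable $M$-value), the $M$-mass of $z$ is concentrated on few coordinates, so by Corollary~\ref{cor:existence} (with $\alpha = \poly(d\log n)$) all coordinates of large $M$-value lie in a fixed set of size $\poly(d\log n)$; taking $N = O(d^2\hm)$ buckets then makes $p^*$ collide with no other such coordinate by a union bound over that fixed set (exactly as in Lemma~\ref{lem Q< good}), while Lemma~\ref{lem:bound_entry} plus a Markov bound on the random hash controls the total $\ell_2$-mass of the remaining tiny coordinates in $p^*$'s bucket, and a Khintchine/Bernstein bound on the signs turns that into $|\nu| \le c_0\|z\|_\infty$. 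When the weight class of $p^*$ is ``large'', $p^*$ survives to some level $h$ at which a single occupied bucket already carries $\Omega(\|z\|_{\hM}/n)$ of the $\hM$-mass, so the bound follows directly without isolating $p^*$. The genuine difficulty throughout is making this uniform over the infinite family $\{Ax-b\}$ rather than one fixed vector; the structural theorem (Theorem~\ref{thm:struct} / Corollary~\ref{cor:existence}) is exactly what collapses the relevant part of this family to a $\poly(d\log n)$-size set of coordinates, so the union bound goes through with $N = O(d^2\hm)$, whereas a naive net over $\colspan([A~b])$ of size $n^{\Theta(d^3)}$ (Lemma~\ref{lem:net1}) would be far too large for such a small $N$. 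If a net-to-all-$x$ step is needed anywhere, one transfers using Lemma~\ref{lem:perturb}, bounding the perturbation term through the deterministic upper bound just proved.
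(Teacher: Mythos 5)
Your upper-bound argument is essentially the paper's and is correct; in fact your explicit case split (bucket containing a heavy entry vs.\ an all-light bucket) is \emph{needed} to make the chain airtight. The paper's written chain passes through $\|L_{h,i}\|_p^p \le \frac{1}{L_M}\sum_{z_p\in L_{h,i}} M(z_p)$, which uses $L_M|a|^p\le M(a)$ — valid only for $|a|\le\tau$. Your case split repairs this gap cleanly, and the $O(n^p/m)$ constant you get is at least as good as the paper's $n^2$.

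The lower bound, however, is where your route diverges and where it breaks. The paper's proof observes that the sketch restricted to any single subsampling level $h$ is just a CountSketch of the random coordinate subset $L_h$. CountSketch with $O(d^2)$ buckets is an $\ell_2$ \emph{subspace embedding} of a $(d+1)$-dimensional space with failure probability $O(1/\hm)$, so a union bound over the $\hm$ levels gives $\|Sz\|_{2,w}^2 \ge 0.9\|z\|_2^2$ simultaneously for all $z=Ax-b$ with $N=O(d^2\hm)$ — uniformity over $x$ comes \emph{for free} from the subspace-embedding property, with no net. Then: if $\|Sz\|_\infty\ge\tau$, a single surviving bucket already gives $\|Sz\|_{M,w}\ge\tau^p\ge\|z\|_M/(nU_M)$; otherwise $\|Sz\|_{M,w}\ge L_M\|Sz\|_{p,w}^p\ge L_M\|Sz\|_{2,w}^p\ge 0.9L_M\|z\|_2^p\ge 0.9L_M\|z\|_p^p/n\ge \frac{0.9}{nU_M/L_M}\|z\|_M$, using $w_i\ge 1$ and $p\le 2$ for $\|\cdot\|_{p,w}\ge\|\cdot\|_{2,w}$.

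Your plan to show instead that the top coordinate's bucket survives cancellation, $|\nu|\le c_0\|z\|_\infty$ \emph{simultaneously for all $z$}, cannot work with $N=O(d^2\hm)$. Take $z=\mathbf{1}$ (which may lie in $\colspan([A~b])$): $p^*$ lands at level $0$ with probability $\approx 1-1/\beta$, its bucket contains $\approx n/N\gg 1$ coordinates all equal to $1$, and $|\nu|\approx\sqrt{n/N}\gg 1=\|z\|_\infty$. Your ``large weight class'' branch tries to rescue this by looking at a \emph{different} bucket at a higher level, but that is exactly the machinery of Lemma~\ref{lem W_q^* G}/\ref{lem Q_h G}/\ref{lem Q_< G}: a \emph{per-vector} contraction bound with failure probability $C^{-\eps^2 m}$, which to make uniform over $x$ requires a net of size $n^{O(d^3)}$ and hence $m=\Omega(d^3\log n)$, so $N\ge m=\Omega(d^3\log n)$ — blowing the $O(d^2\hm)$ budget in the statement. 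Likewise, the structural theorem (Corollary~\ref{cor:existence}) collapses only the \emph{heavy} coordinates to a fixed small set; it says nothing uniform about the aggregate $\ell_2$ noise from the many light coordinates that hash into $p^*$'s bucket, and a Khintchine/Bernstein bound on the signs controls that noise for one fixed $z$, not for all $z$ in the column span at once. The CountSketch subspace-embedding step in the paper is precisely what sidesteps both difficulties, and is the key idea your proposal is missing.
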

\begin{proof}
For the upper bound,
\[
\norm{S z}_{M, w} = \sum_{h\in [0,\hm], i\in [N]} \beta b^h M(\norm{L_{h,i}}_\LL).
\]
The weights $\beta b^h$ are less than $n$, and
\begin{align*}
& M(\norm{L_{h,i}}_\LL)\\
\le & M(\norm{L_{h,i}}_1)\\
 \le&  M(n^{1 - 1 / p}\norm{L_{h,i}}_p) \tag{Assumption \ref{as M}.\ref{as M inc}} \\
\le & U_M \cdot n^{p - 1}  \norm{L_{h,i}}_p ^p \tag{Assumption \ref{as M}.\ref{as M near quad}} \\ 
\le & U_M  / L_M \cdot n \cdot \sum_{z_p \in L_{h, i}} M(z_p). \tag{Assumption \ref{as M}.\ref{as M near quad}}
\end{align*}
Since any given $z_p$ contributes once to $\norm{S z}_{M,w}$,
$\norm{S z}_{M,w}\le U_M / L_M \cdot n^2 \cdot \norm{z}_M$. 

For the lower bound, notice that
\[
\norm{S z}_{2, w}^2 = \sum_{h\in [0,\hm], i\in [N]} \beta b^h \norm{L_{h,i}}_\LL^2.
\]
For each $h \in [0, \hm]$, since $N = O(d^2 \hm)$, with probability at least $1 - 1 / (10\hm)$, simultaneously for all $z \in \colspan(A)$ we have
$$
\sum_{i \in [N]}  \norm{L_{h,i}}_\LL^2 = (1 \pm 0.1) \sum_{z_p \in L_h} z_p^2,
$$
since the summation on the left-hand side can be equivalently viewed as applying \textsf{CountSketch} \cite{CW13, NN13, MengMahoney}  on $L_h$.
Thus, by applying union bound over all $h \in [0, \hm]$, we have
\begin{equation}\label{equ:l2ose}
\norm{S z}_{2, w}^2 = \sum_{h\in [0,\hm], i\in [N]} \beta b^h \norm{L_{h,i}}_\LL^2 \ge 0.9 \|z\|_2^2.
\end{equation}

If there exists some $i \in H_{Sz}$, since $w_i \ge 1$ for all $i$, we have 
$$
\|Sz\|_{M, w} \ge w_i M((Sz)_i) \ge M((Sz)_i) \ge \tau^p.
$$
On the other hand,
$$
\|z\|_{M} \le n \cdot U_M \cdot \tau^p,
$$
which implies
$$
\|Sz\|_{M, w} \ge \|z\|_M / (n \cdot U_M).
$$

If $H_{Sz} = \emptyset$, then 
\begin{align*}
& \|Sz\|_{M, w}\\
 \ge & \sum_{i} w_i |(Sz)_i|^p \cdot L_M \tag{Assumption \ref{as M}.\ref{as M near quad}} \\
 = & \|Sz\|_{p, w}^p \cdot L_M\\
 \ge & \|Sz\|_{2, w}^p \cdot L_M \tag{$p \le 2$}\\ 
 \ge & 0.9 \|z\|_{2}^p \cdot L_M \tag{\eqref{equ:l2ose}}\\
 \ge & 0.9 \|z\|_{p}^p \cdot L_M / n\\
 \ge & 0.9 \|z\|_M / (n \cdot U_M / L_M).\tag{Assumption \ref{as M}.\ref{as M near quad}} \\
 \end{align*}

\end{proof}
The following theorem states that $M$-sketches can be used for Tukey regression, under the conditions described above.

\begin{theorem}\label{thm G regress}
Under Assumption~\ref{as M} and Assumption~\ref{as main},
there is an algorithm running in $O(\nnz(A))$ time,
that with constant probability
creates a sketched regression problem $\min_x \norm{S(Ax-b)}_{M,w}$
where $SA$ and $Sb$ have $\poly(d\log n)$ rows, and any $C$-approximate solution $\tilde{x}$ of $\min_x \|S(Ax - b)\|_{M, w}$ with $C \le \poly(n)$ satisfies
\[
\norm{A\tilde{x} - b}_M\le O(C \cdot \log_d n)\min_{x\in\R^d}\norm{Ax-b}_M.
\]
Moreover, any $C$-approximate solution $\hat{x}$ of $\min_x \|S(Ax - b)\|_{Mc, w}$ with $C \le \poly(n)$ satisfies
\[
\norm{A\hat{x} - b}_M\le O(C)\min_{x\in\R^d}\norm{Ax-b}_M.
\]

\end{theorem}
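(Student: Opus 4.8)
### Proof Plan for Theorem \ref{thm G regress}

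The plan is to assemble the pieces already proven into an application of the net argument of Theorem~\ref{thm:net}. First I would set up the sketch: run the construction of Section~\ref{sec M sketch} with the parameters of Assumption~\ref{as params}, choosing $m, b, c$ as $\poly(d\log n)$ so that $N = bcm = \poly(d\log n)$ and $\hm = O(\log_b n)$, and so that all the weak parameter relations hold. Building $SA$ and $Sb$ amounts to hashing each row of $[A~b]$ into one bucket per level with a random sign, which takes $O(\nnz(A))$ time since each nonzero entry is touched $\hm = O(\log n)$ times and $\hm$ is absorbed into constants only in the weak sense; more precisely the total work is $O(\nnz(A)\hm)$, but this is still $\widetilde O(\nnz(A))$, and if one insists on truly $O(\nnz(A))$ one uses the single-level {\sf CountSketch}-style implementation as in \cite{cw15}. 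Then I would verify the three hypotheses of Theorem~\ref{thm:net}: (1) the dilation bound on the optimum, (2) the two-sided subspace embedding bound, and (3) the contraction bound on the relevant nets.

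For hypothesis (2), I would invoke Lemma~\ref{lem msketch for all}, which gives $U_A = U_M/L_M\cdot n^2 = \poly(n)$ and $L_A = 0.9/(n U_M/L_M) = 1/\poly(n)$, simultaneously for all $x$, with constant probability, using $N = O(d^2\hm)$ which is within our budget. For hypothesis (3), the contraction bound, I would first condition on event $\cE_v$ of Lemma~\ref{lem Q< good} (which needs $N \ge \max\{O(|M_<| d m^3 \varepsilon), \widetilde O(d^2 m^2/\varepsilon^2)\}$, again within budget) and then apply Theorem~\ref{thm contract G}: for any fixed $z$, with failure probability $C^{-\eps^2 m}$ we get $\norm{Sz}_{M,w} \ge (1-\eps^{1/2})\norm{z}_M$. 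Taking a union bound over the net $\mathcal{N}_{\poly(\varepsilon\tau/n)} \cup \mathcal{M}_{\poly(\varepsilon/n)}^{c, c\poly(n)}$, whose cardinality is $n^{\poly(d)}/\poly(\varepsilon)$ from Lemmas~\ref{lem:net1} and \ref{lem:net2}, the failure probability is at most $n^{\poly(d)} C^{-\eps^2 m}$, which is $o(1)$ provided $m = \poly(d\log n)$ is chosen large enough relative to that net size; this sets $L_N = 1 - \eps^{1/2}$, i.e.\ a constant close to $1$. For hypothesis (1), the dilation on the optimal solution, I would use Theorem~\ref{thm dilate G}: $\E[\norm{S(Ax_M^* - b)}_{M,w}] = O(\hm)\norm{Ax_M^* - b}_M$, so by Markov $U_O = O(\hm) = O(\log_d n)$ holds with constant probability.

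Feeding $U_O = O(\log_d n)$, $L_N = \Omega(1)$, $U_A, 1/L_A = \poly(n)$ into Theorem~\ref{thm:net} then yields that any $C$-approximate solution of $\min_x\norm{S(Ax-b)}_{M,w}$ is a $C\cdot(1+O(\varepsilon))\cdot U_O/L_N = O(C\log_d n)$-approximate solution of the original problem. For the ``clipped'' statement, I would repeat the same argument with two substitutions: replace $\norm{S\cdot}_{M,w}$ by $\norm{S\cdot}_{Mc,w}$ everywhere. The contraction bound survives because, as noted in Section~3.4, the proof of Theorem~\ref{thm contract G} only ever lower-bounds the contributions of the top $M^* = bm M_\ge + \beta m M_<$ buckets per level, so $\norm{Sz}_{Mc,w} \ge (1-\varepsilon)\norm{Sz}_{M,w} \ge (1-O(\varepsilon))\norm{z}_M$ on the net, giving $L_N = \Omega(1)$. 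The upper subspace-embedding bound $U_A$ only gets smaller under clipping (we drop nonnegative terms), and the lower bound $L_A$ requires a small check that the argument of Lemma~\ref{lem msketch for all} still goes through for the clipped version — the $\ell_2$ lower bound via {\sf CountSketch} and the single-heavy-bucket argument both only use a single bucket, which is retained among the top $M^*$. The crucial improvement is hypothesis (1): use Theorem~\ref{thm dilation clipped} with $c = O(\log_\gamma(b/\eps)\log_b(n/m))$ to get $\E[\norm{S(Ax_M^*-b)}_{Mc,w}] \le C\norm{Ax_M^*-b}_M$, so $U_O = O(1)$ by Markov. Theorem~\ref{thm:net} then gives an $O(C)$-approximation.

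The main obstacle I expect is bookkeeping the parameter constraints: the net sizes from Lemmas~\ref{lem:net1} and \ref{lem:net2} are $n^{\poly(d)}$, the per-vector failure probability in the contraction bound is $C^{-\eps^2 m}$, and we need $m, b, c, N$ all simultaneously $\poly(d\log n)$ while satisfying Assumption~\ref{as params}'s relations ($b\ge m$, $b > c$, $\log N \le \eps^2 m$, etc.) \emph{and} large enough for the union bound over the net to succeed — so $m = \Theta(d^3\log n / \eps^2)$ or so, with $b$ polynomially larger, and then checking $\hm = O(\log_b n) = O(\log_d n)$ to get the clean $\log_d n$ in the statement. None of this is deep, but it is the step where the $O(\log_d n)$ versus $O(\log n)$ distinction is actually earned, and where one must be careful that $N = \poly(d\log n)$ really does dominate all of $O(d^2\hm)$, $\max\{O(|M_<|dm^3\eps), \widetilde O(d^2 m^2/\eps^2)\}$, and the various $O(\eps^{-2}m^2 b\qm)$ requirements from the earlier lemmas.
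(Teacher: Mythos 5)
Your proposal follows the paper's own proof essentially step for step: instantiate the $M$-sketch with $m = \poly(d\log n)$, condition on $\cE_v$, union-bound the contraction (Theorem~\ref{thm contract G}) over the nets of Lemmas~\ref{lem:net1} and~\ref{lem:net2}, get $L_A,U_A$ from Lemma~\ref{lem msketch for all}, get $U_O$ from Theorem~\ref{thm dilate G} via Markov (respectively Theorem~\ref{thm dilation clipped} for the clipped case), and plug into Theorem~\ref{thm:net}. One small correction: each coordinate $p$ is mapped to a \emph{single} level $h_p$ by the construction in \eqref{eq S}, so $S$ has exactly one nonzero per column and $SA$ is computed in $O(\nnz(A))$ time directly — there is no $\hm$ factor to absorb.
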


\begin{proof}
We set $S$ to be an $M$-sketch matrix
with large enough
$N= \poly(d\log n)$.
We note that, up to the trivial scaling by $\beta$, $SA$ satisfies
Assumption~\ref{as main} if $A$ does.
We also set $m= O(d^3\log n)$, and $\eps=1/10$.
We apply Theorem \ref{thm:net} to prove the desired result.

The given $N$ is large enough for Theorem~\ref{thm contract G} and Lemma \ref{lem msketch for all} to apply,
obtaining a contraction bound with failure probability $C_1^{-m}$.
By Theorem~\ref{thm contract G}, since the needed contraction bound
holds for all members of $\mathcal{N}_{\poly(\varepsilon \cdot \tau / n)} \cup \mathcal{M}_{\poly(\varepsilon / n)}^{c, c \cdot \poly(n)}$, with
failure probability $n^{O(d^3)} C_1^{-m} < 1$, for $m=O(d^3\log n)$, assuming
the condition $\cE_v$.

Thus, by Theorem~\ref{thm dilate G}, we have $U_O \le O(\log_dn)$.
By Lemma \ref{lem msketch for all}, $L_A = 0.9 / (n \cdot U_M / L_M)$ and $U_A = U_M / L_M \cdot n^2 $.
By Theorem~\ref{thm contract G}, $L_N = 1 - \varepsilon^{1 / 2} = \Omega(1)$.
Thus, by Theorem \ref{thm:net} we have
$$
\norm{A\tilde{x} - b}_M\le O(C \cdot \log_d n)\min_{x\in\R^d}\norm{Ax-b}_M.
$$

A similar argument holds for $C$-approximate solution $\hat{x}$ of $\min_x \|S(Ax - b)\|_{Mc, w}$.
\end{proof}

\newcommand{\tsat}{$3$-$\mathsf{SAT}$}

\section{Hardness Results and Provable Algorithms for Tukey Regression}
\subsection{Hardness Results}\label{sec:hardness}
In this section, we prove hardness results for Tukey regression based on the {\em Exponential Time Hypothesis} \cite{impagliazzo2001complexity}.
We first state the hypothesis.

\begin{conjecture}[Exponential Time Hypothesis \cite{impagliazzo2001complexity}]\label{conj:eth}
For some constant $\delta > 0$, no algorithm can solve \tsat~on $n$ variables and $m = O(n)$ clauses correctly with probability at least $2 / 3$ in $O(2^{\delta n})$ time.
\end{conjecture}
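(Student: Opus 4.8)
The statement labeled here as the \emph{Exponential Time Hypothesis} is a \emph{conjecture}, not a theorem, so there is nothing to prove: the plan is not to establish it but to record it verbatim (as in \cite{impagliazzo2001complexity}) so that the hardness results of \S\ref{sec:hardness} can be stated conditionally. Concretely, I would adopt Hypothesis~\ref{conj:eth} as a standing assumption, and then phrase every lower bound that follows --- in particular the claimed $2^{\widetilde{\Omega}(d)}$-time lower bound for $(1+\eta)$-approximate Tukey regression --- as ``under Hypothesis~\ref{conj:eth}''. The purpose of this paragraph in the paper is purely to fix the precise form of the assumption that the subsequent reduction from \tsat{} (and, via Dinur's PCP \cite{dinur2007pcp}, from gap-\tsat{}) will invoke.

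If one insisted on trying to \emph{prove} the statement, the obstacle is fundamental rather than technical: Hypothesis~\ref{conj:eth} implies $\mathsf{P}\neq\mathsf{NP}$, and in fact implies a $2^{\Omega(n)}$ (hence superpolynomial) lower bound for \tsat{}, whereas no unconditional superlinear time lower bound is currently known for any problem in $\mathsf{NP}$. A proof would therefore be a breakthrough far outside the scope of this paper, and this is precisely why the statement is presented as a hypothesis; the ``main obstacle'' is exactly the unavailability of such a proof with present techniques.

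The only point that genuinely warrants care is the equivalence between the several standard formulations of ETH --- for example the version asserting merely a $2^{o(n)}$ lower bound for \tsat{} with an unrestricted number of clauses, versus the version above that additionally demands $m = O(n)$ clauses, and the randomized versus deterministic variants. Here I would simply cite the Sparsification Lemma of Impagliazzo, Paturi and Zane \cite{impagliazzo2001complexity}, which shows that up to a subexponential overhead one may assume $m = O(n)$, so that the form stated is equivalent to the bare ``$2^{o(n)}$ lower bound for \tsat{}'' formulation and hence to the versions used in \cite{dinur2007pcp}. With the assumption thus pinned down, the substantive work --- the reduction from $\mathsf{MAX}$-$\mathsf{3SAT}$ to Tukey regression and the amplification of inapproximability via the PCP theorem --- proceeds in \S\ref{sec:hardness}.
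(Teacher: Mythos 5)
You are right that this is a hypothesis, not a theorem: the paper gives no proof of it and simply adopts it (citing \cite{impagliazzo2001complexity}) as the standing assumption underlying the conditional hardness results of \S\ref{sec:hardness}, exactly as you describe. Your remark about the Sparsification Lemma justifying the $m=O(n)$ formulation is a reasonable clarification, but no further argument is needed or given in the paper.
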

Using Dinur's PCP Theorem~\cite{dinur2007pcp}, Hypothesis~\ref{conj:eth} implies a hardness result for $\mathsf{MAX}$-$\mathsf{3SAT}$.
\begin{theorem}[\cite{dinur2007pcp}]\label{thm:pcp}
Under Hypothesis~\ref{conj:eth}, for some constant $\varepsilon > 0$ and $c > 0$, no algorithm can, given a \tsat~formula on $n$ variables and $m = O(n)$ clauses, distinguish between the following cases correctly with probability at least $2 / 3$ in $2^{n / \log ^c n }$ time:
\begin{itemize}
\item There is an assignment that satisfies all clauses in $\phi$;
\item Any assignment can satisfy at most $(1 - \varepsilon)m$ clauses in $\phi$.
\end{itemize}
\end{theorem}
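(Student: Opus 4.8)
The plan is to compose Dinur's \emph{quasi-linear-size} PCP theorem \cite{dinur2007pcp} with Hypothesis~\ref{conj:eth}, tracking the $\polylog$ overhead carefully. First I would recall \cite{dinur2007pcp} in the form needed here: there is a deterministic polynomial-time reduction mapping a \tsat~instance $\phi$ of total size $N$ (number of variables plus number of clauses) to a \tsat~instance $\psi$ with $n' = N \cdot (\log N)^{a}$ variables and $m' = O(n')$ clauses, for some absolute constant $a \ge 1$, together with an absolute constant $\varepsilon \in (0,1)$, such that (completeness) if $\phi$ is satisfiable then $\psi$ is satisfiable, and (soundness) if $\phi$ is unsatisfiable then every assignment violates at least an $\varepsilon$ fraction of the clauses of $\psi$. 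The sparsity bound $m' = O(n')$ and the fact that $\psi$ can be taken to be a $3$-CNF both come from the routine observation that Dinur's output is a bounded-degree constraint graph over an $O(1)$-size alphabet; encoding each alphabet symbol in $O(1)$ Boolean variables and each constraint as $O(1)$ width-$3$ clauses (with $O(1)$ auxiliary variables to cut clauses down to width $3$) preserves satisfiability and changes the violation fraction by only a constant factor. Note the ETH instance in Conjecture~\ref{conj:eth} already has $m = O(n)$ clauses, so no Sparsification-Lemma step is required before invoking the reduction.

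Next I would argue by contradiction. Suppose that for the constant $c \equiv a+1$ there were an algorithm $\mathcal{A}$ that, given a \tsat~formula on $n'$ variables with $m' = O(n')$ clauses, distinguishes ``satisfiable'' from ``at most $(1-\varepsilon)m'$ clauses simultaneously satisfiable'' with probability at least $2/3$ in time $2^{n'/\log^{c} n'}$. Given an arbitrary \tsat~formula $\phi$ on $n$ variables with $m = O(n)$ clauses (so its size is $N = \Theta(n)$), I would run the reduction of \cite{dinur2007pcp} to obtain $\psi$ on $n' = N(\log N)^a = \Theta\!\left(n (\log n)^a\right)$ variables with $m' = O(n')$ clauses, and then run $\mathcal{A}$ on $\psi$. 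Since the reduction is deterministic, completeness and soundness imply that this composed procedure decides satisfiability of $\phi$ correctly with probability at least $2/3$, and its running time is $\poly(n)$ for the reduction plus $2^{n'/\log^{c} n'}$ for $\mathcal{A}$.

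Finally I would finish the parameter count. Since $n' \ge n$, we have $\log n' \ge \log n$, hence
\[
\frac{n'}{\log^{c} n'} \;\le\; \frac{\Theta\!\left(n (\log n)^a\right)}{\log^{c} n} \;=\; \Theta\!\left(n (\log n)^{a-c}\right) \;=\; \Theta\!\left(\frac{n}{\log n}\right) \;=\; o(n),
\]
using $c = a+1$. Thus satisfiability of $\phi$ would be decided with probability at least $2/3$ in time $2^{o(n)} = O(2^{\delta n})$ for the constant $\delta$ of Hypothesis~\ref{conj:eth}, contradicting that hypothesis; so no such $\mathcal{A}$ exists, which is exactly the claimed bound with the constants $\varepsilon$ (from \cite{dinur2007pcp}) and $c = a+1$. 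The only non-bookkeeping point — and the main thing to get right — is invoking \cite{dinur2007pcp} with its \emph{quasi-linear} ($N\,\polylog N$, not $\poly(N)$) size blowup and pinning down the exponent $a$: the target time bound $2^{n/\log^{c}n}$ and the choice of $c$ depend on it, and a merely polynomial blowup would send an $n$-variable \tsat~instance to one on $n^{\Omega(1)}$ additional variables, for which a $2^{n'/\log^{c} n'}$-time gap algorithm would \emph{not} yield any $2^{o(n)}$ algorithm for \tsat.
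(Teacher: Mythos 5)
Your proposal is correct and is the standard (and essentially the only natural) derivation of this theorem; the paper itself states it as a consequence of Dinur's PCP theorem combined with Hypothesis~\ref{conj:eth} without spelling out a proof, and your derivation fills in exactly those details. You correctly identify the two load-bearing points: that the quasi-linear ($N\cdot\polylog N$, not $\poly N$) size of Dinur's PCP is what makes the exponent calculation $n'/\log^{c} n' = o(n)$ go through after choosing $c > a$, and that the paper's formulation of ETH already has the $m = O(n)$ sparsity hypothesis built in so no separate Sparsification Lemma step is needed.
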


We make the following assumptions on the loss function $M: \R \rightarrow \R^+$.
Notice that the following assumptions are more general than those in Assumption \ref{as M}.
\begin{assumption}\label{assump:loss_hardness}
There exist real numbers $\tau \ge 0$ and $C > 0$ such that 
\begin{enumerate}
\item $M(x) = C$ for all $|x| \ge \tau$.
\item $0 \le M(x) \le C$ for all $|x| \le \tau$.
\item $M(0) = 0$.
\end{enumerate}
\end{assumption}

Now we give an reduction that transforms a \tsat~formula $\phi$ with $d$
variables and $m = O(d)$ clauses to a Tukey regression instance 
$$
\min_{x} \|Ax - b\|_M,
$$
such that $A \in \mathbb{R}^{n \times d}$ and $b \in \mathbb{R}^{n}$ with $n =
O(d)$, and all entries in $A$ are in $\{0, +1, -1\}$ and all entries in $b$ are in
$\{\pm k \tau \mid k \in \mathbb{N}, k \le O(1)\}$. Furthermore, there are at
most three non-zero entries in each row of $A$.

For each variable $v_i$ in the formula $\phi$, there is a variable $x_i$ in the
Tukey regression that corresponds to $v_i$. For each variable $v_i$, if $v_i$
appears in $\Gamma_i$ clauses in $\phi$, we add $2 \Gamma_i$ rows into $[A~b]$.
These $2 \Gamma_i$ rows are chosen such that when calculating $\|Ax-b\|_M$,
there are $\Gamma_i$ terms of the form $M(x_i)$, and another $\Gamma_i$ terms of
the form $M(x_i - 10\tau)$. This can be done by taking the $i$-th entry of the
corresponding row of $A$ to be $1$ and taking the corresponding entry of $b$ to
be either $0$ or $10\tau$. Since $\sum_{i=1}^d \Gamma_i = 3m$ in a \tsat~formula
$\phi$, we have added $6m = O(d)$ rows into $[A~b]$. We call these rows Part I
of $[A~b]$.

 Now for each clause $\mathcal{C} \in \phi$, we add three rows into $[A~b]$.
 Suppose the three variables in $\mathcal{C}$ are $v_i$, $v_j$ and $v_k$. The
 first row is chosen such that when calculating $\|Ax-b\|_M$, there is a term of
 the form $M(a + b + c - 10\tau)$, where $a = x_i$ if there is a positive
 literal that corresponds to $v_i$ in $\mathcal{C}$ and $a = 10\tau - x_i$ if
 there is a negative literal that corresponds to $v_i$ in $\mathcal{C}$.
Similarly, $b = x_j$ if there is a positive literal that corresponds to $v_j$ in
$\mathcal{C}$ and $b = 10\tau - x_j$ if there is a negative literal that
corresponds to $v_j$ in $\mathcal{C}$. The same holds for $c$, $x_k$, and $v_k$.
 The second and the third row are designed such that when calculating
 $\|Ax-b\|_M$, there is a term of the form $M(a + b + c - 20\tau)$ and another
 term of the form $M(a + b + c - 30\tau)$.
Clearly, this can also be done while satisfying the constraint that all entries
in $A$ are in $\{0, +1, -1\}$ and all entries in $b$ are in $\{\pm k \tau \mid k
\in \mathbb{N}, k \le O(1)\}$. We have added $3m$ rows into $[A~b]$. We call
these rows Part II of $[A~b]$.

This finishes our construction, with $6m+ 3m = O(d)$ rows in total. It also
satisfies all the restrictions mentioned above.

Now we show that when $\phi$ is satisfiable, if we are given any solution
$\overline{x}$ such that
$$
\|A\overline{x}-b\|_M \le (1 + \eta)\min_x\| Ax - b\|_M,
$$
then we can find an assignment to $\phi$ that satisfies at least $(1 - 5\eta) m$ clauses.

We first show that when $\phi$ is satisfiable, the regression instance we constructed satisfies
$$
\min_x\|Ax-b\|_M \le 5C \cdot m.
$$
We show this by explicitly constructing a vector $x$. For each variable $v_i$ in
$\phi$, if $v_i = 0$ in the satisfiable assignment, then we set $x_i$ to be $0$.
Otherwise, we set $x_i$ to be $10 \tau$. For each variable $v_i$, since $x_i \in
\{0, 10\tau\}$, for all the $2\Gamma_i$ rows added for it, there will be
$\Gamma_i$ rows contributing $0$ when calculating $\|Ax-b\|_M$, and another
$\Gamma_i$ rows contributing $C$ when calculating $\|Ax-b\|_M$. The total
contribution from this part will be $3C \cdot m$. For each clause
$\mathcal{C} \in \phi$, for the three rows added for it, there will be one row contributing
$0$ when calculating $\|Ax-b\|_M$, and another two rows contributing $C$ when
calculating $\|Ax-b\|_M$. This is by construction of $[A~b]$ and by the fact
that $\mathcal{C}$ is satisfied. Notice that $M(a + b + c - 10\tau) = 0$ if only one
literal in $\mathcal{C}$ is satisfied, $M(a + b + c - 20\tau) = 0$ if two
literals are satisfied, and $M(a + b + c - 30\tau) = 0$ if all three literals in
$\mathcal{C}$ are satisfied. Thus, we must have
$\min_x\|Ax-b\|_M \le 5C \cdot m$, which implies
$\|A \overline{x} - b\|_M \le (1 + \eta) 5C \cdot m$.

We first show that we can assume each $\overline{x}_i$ satisfies
$\overline{x}_i \in [-\tau, \tau]$ or $\overline{x}_i \in [9\tau, 11\tau]$. This is because we
can set $\overline{x}_i = 0$ otherwise without increasing
$\|A \overline{x} - b\|_M$, as we will show immediately. For any $\overline{x}_i$ that is not in the
two ranges mentioned above, its contribution to $\|A\overline{x}-b\|_M$ in Part
I is at least $C \cdot 2\Gamma_i$. However, by setting $\overline{x}_i = 0$, its
contribution to $\|A\overline{x}-b\|_M$ in Part I will be at most $C \cdot
\Gamma_i$. Thus, by setting $\overline{x}_i = 0$ the total contribution to
$\|A\overline{x}-b\|_M$ in Part I has been decreased by at least $C \cdot
\Gamma_i$. Now we consider Part II of the rows in $[A~b]$. The contribution to
$\|A\overline{x}-b\|_M$ of all rows in $[A~b]$ created for clauses that do not
contain $v_i$ will not be affected after changing $\overline{x}_i$ to be $0$.
For the $3\Gamma_i$ rows in $[A~b]$ created for clauses that contain $v_i$,
their contribution to $\|A\overline{x}-b\|_M$ is lower bounded by $C \cdot
2\Gamma_i$ and upper bounded by $C \cdot 3\Gamma_i$. The lower bound follows
since for any three real numbers $a$, $b$ and $c$, at least two elements in
$\{a+b+c - 10 \tau, a+b+c - 20 \tau,a+b+c - 30 \tau\}$ have absolute value at
least $\tau$, and $M(x) = C$ for all $|x| \ge \tau$. Thus, by setting
$\overline{x}_i = 0$ the total contribution to $\|A\overline{x}-b\|_M$ in Part
II will be increased by at most $C \cdot \Gamma_i$, which implies we can set
$\overline{x}_i = 0$ without increasing $\|A \overline{x} - b\|_M$.

Now we show how to construct an assignment to the \tsat~formula $\phi$ which
satisfies at least $(1 - 5\eta)m$ clauses, using a vector
$\overline{x} \in \mathbb{R}^d$ which satisfies
(i) $\|A\overline{x} - b\|_M \le (1 + \eta) 5C \cdot m$ and
(ii) $\overline{x}_i \in [-\tau, \tau]$ or $\overline{x}_i \in
[9\tau, 11\tau]$ for all $\overline{x}_i$. We set $v_i = 0$ if
$\overline{x}_i \in [-\tau, \tau]$ and set $v_i = 1$ if $\overline{x}_i \in [9\tau, 11\tau]$.
To count the number of clauses satisfied by the assignment, we show that for each
clause $\mathcal{C} \in \phi$, $\mathcal{C}$ is satisfied whenever
$a + b + c \ge 7\tau$. Recall that $a = x_i$ if there is a positive literal that
corresponds to $v_i$ in $\mathcal{C}$ and $a = 10\tau - x_i$ if there is a
negative literal that corresponds to $v_i$ in $\mathcal{C}$. Similarly,
$b = x_j$ if there is a positive literal that corresponds to $v_j$ in $\mathcal{C}$
and $b = 10\tau - x_j$ if there is a negative literal that corresponds to $v_j$
in $\mathcal{C}$. The same holds for $c$, $x_k$, and $v_k$. Since $a$, $b$ and
$c$ are all in the range $[-\tau, \tau]$ or in the range $[9\tau, 11\tau]$,
whenever $a + b + c \ge 7\tau$, we must have $a \ge 9\tau$, $b \ge 9\tau$ or
$c \ge 9\tau$, in which case clause $\mathcal{C}$ will be satisfied. Thus, at least
$(1 - 5\eta) m$ clauses will be satisfied, since otherwise
$\|A\overline{x} - b\|_M$ will be larger than
$3C \cdot m + 2C \cdot m+ 5\eta C \cdot m = (1 + \eta)5C \cdot m$.
Here the first term $3C \cdot m$ corresponds to the
contribution from Part I, since any $\overline{x}_i$ must satisfy
$|\overline{x}_i| \ge \tau$ or $|\overline{x}_i - 10\tau| \ge \tau$. The second
and the third term $2C \cdot m + 5\eta C \cdot m$ corresponds to the
contribution from Part II when at least $5\eta m$ clauses are not satisfied.

Our reduction implies the following theorem.
\begin{theorem}\label{thm:hardness}
Suppose there is an algorithm that runs in $T(d)$ time and succeeds with
probability $2 / 3$ for Tukey regression with approximation ratio $1 + \eta$
when the loss function $M$ satisfies Assumption \ref{assump:loss_hardness} and
the input data satisfies the following restrictions:
\begin{enumerate}
\item $A \in \mathbb{R}^{n \times d}$ and $b \in \mathbb{R}^{n}$ with $n = O(d)$.
\item All entries in $A$ are in $\{0, +1, -1\}$ and all entries in $b$ are in $\{\pm k \tau \mid k \in \mathbb{N}, k \le O(1)\}$.
\item There are at most three non-zero entries in each row of $A$.
\end{enumerate}
Then, there exists an algorithm that runs in $T(d)$ time for a \tsat~formula on
$d$ variables and $m = O(d)$ clauses which distinguishes between the following
cases correctly with probability at least $2 / 3$:
\begin{itemize}
\item There is an assignment that satisfies all clauses in $\phi$.
\item Any assignment can satisfy at most $(1 - 5\eta)m$ clauses in $\phi$.
\end{itemize}
\end{theorem}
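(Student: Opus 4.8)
The plan is to package the reduction and the two claims already established above into a decision procedure for the gap version of \tsat, and then to verify that this procedure answers correctly in both cases of the dichotomy. Assume $\mathcal{A}$ is the hypothesized $T(d)$-time, $2/3$-success algorithm for Tukey regression with approximation ratio $1+\eta$ on instances of the stated restricted form. Given a \tsat~formula $\phi$ on $d$ variables and $m=O(d)$ clauses, I would first build the pair $[A~b]$ exactly as in the construction preceding the theorem. By that construction the instance has $n=O(d)$ rows, every entry of $A$ lies in $\{0,+1,-1\}$, every entry of $b$ lies in $\{\pm k\tau\mid k\in\mathbb{N},\,k\le O(1)\}$, and each row of $A$ has at most three nonzero entries, so all three restrictions of the theorem are met; building it takes $O(d)$ time. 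I would then run $\mathcal{A}$ on $\min_x\|Ax-b\|_M$ with parameter $\eta$ to obtain, with probability at least $2/3$, a vector $\bar x$ with $\|A\bar x-b\|_M\le(1+\eta)\min_x\|Ax-b\|_M$.

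Next I would post-process $\bar x$: for every coordinate $i$ with $\bar x_i\notin[-\tau,\tau]\cup[9\tau,11\tau]$ I would set $\bar x_i\gets 0$, which by the Part~I / Part~II bookkeeping carried out above does not increase $\|A\bar x-b\|_M$. From the resulting vector I would define an assignment to $\phi$ by $v_i=0$ when $\bar x_i\in[-\tau,\tau]$ and $v_i=1$ when $\bar x_i\in[9\tau,11\tau]$, count the number of clauses it satisfies, and output ``$\phi$ is satisfiable'' if and only if this count is at least $(1-5\eta)m$. The post-processing, the extraction of the assignment, and the clause count each take $O(d)$ time, which is absorbed into $T(d)$ (any such algorithm must at least read the $\Theta(d)$-size Tukey instance), so the overall running time is $T(d)$, and the only randomness is that of $\mathcal{A}$, so the procedure succeeds with probability $2/3$.

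For correctness I would check the two cases. If $\phi$ is satisfiable, the explicit vector with entries in $\{0,10\tau\}$ tracking a satisfying assignment was shown above to have cost at most $5Cm$, hence $\|A\bar x-b\|_M\le(1+\eta)5Cm$; the analysis above — Part~I contributing at least $3Cm$ for any $\bar x$ in canonical form, Part~II contributing at least $2Cm$ always and an extra $C$ per clause not satisfied by the extracted assignment (because $a+b+c\ge 7\tau$ forces one of $a,b,c$ to lie in $[9\tau,11\tau]$ and hence forces the corresponding literal true, while otherwise $a+b+c\le 3\tau$ makes all three of its rows contribute $C$) — shows that fewer than $(1-5\eta)m$ satisfied clauses would force cost exceeding $(1+\eta)5Cm$, so the extracted assignment satisfies at least $(1-5\eta)m$ clauses and the procedure answers ``satisfiable''. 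Conversely, if every assignment of $\phi$ satisfies at most $(1-5\eta)m$ clauses, then in particular the extracted one does, and the procedure answers ``not satisfiable''. The only delicate point in the proof is the behaviour at the boundary value $(1-5\eta)m$: to convert the ``$\ge(1-5\eta)m$'' versus ``$\le(1-5\eta)m$'' dichotomy into an unambiguous decision I would run $\mathcal{A}$ with $\eta$ replaced by a slightly smaller fixed constant multiple of itself, which widens the separation between the two bounds to $\Omega(m)\ge 1$ clauses without changing the form of the statement; everything else is a direct consequence of the construction and analysis already given.
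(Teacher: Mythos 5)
Your proof takes the same approach as the paper's: you package the reduction and the pre-established Part~I/Part~II cost accounting into an explicit threshold decision procedure for gap \tsat, which is precisely how the paper argues Theorem~\ref{thm:hardness}. The construction, the canonicalization of $\bar x$, the extraction of the Boolean assignment, and the accounting showing cost at least $5Cm + Ck$ when $k$ clauses are unsatisfied all track the text. Your observation about the boundary case at exactly $(1-5\eta)m$ satisfied clauses is apt and is indeed glossed over by the paper, but the fix you sketch—``run $\mathcal{A}$ with $\eta$ replaced by a slightly smaller fixed constant multiple of itself''—is not an operation you can perform on a black box: the hypothesis hands you an algorithm with a fixed ratio $1+\eta$ and gives you no way to retune it to a strictly better ratio. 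The standard (and essentially cosmetic) resolution is downstream: when combining this theorem with Theorem~\ref{thm:pcp} to obtain the corollary, one chooses $\eta$ strictly smaller than $\varepsilon/5$, so the count in the YES case strictly exceeds the bound in the NO case and the threshold test is unambiguous; equivalently one could carry a constant slightly larger than $5$ in the theorem's conclusion. Modulo that one imprecise remark, your argument is correct and matches the paper's.
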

Combining Theorem~\ref{thm:pcp} and Theorem~\ref{thm:hardness} with the Hypothesis~\ref{conj:eth}, we have the following corollary.
\begin{corollary}
Under Hypothesis~\ref{conj:eth}, for some constant $\eta > 0$ and $C> 0$,
no algorithm can solve Tukey regression with approximation ratio $1 + \eta$ and
success probability $2/ 3$, and runs in $2^{d / \log^C d}$ time, when the loss
function $M$ satisfies Assumption \ref{assump:loss_hardness} and the input data
satisfies the following restrictions:
\begin{enumerate}
\item $A \in \mathbb{R}^{n \times d}$ and $b \in \mathbb{R}^{n}$ with $n = O(d)$.
\item All entries in $A$ are in $\{0, +1, -1\}$ and all entries in $b$ are in $\{\pm k \tau \mid k \in \mathbb{N}, k \le O(1)\}$.
\item There are at most three non-zero entries in each row of $A$.
\end{enumerate}
\end{corollary}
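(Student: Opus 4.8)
The plan is to obtain the corollary as an immediate consequence of chaining Theorem~\ref{thm:hardness} with Theorem~\ref{thm:pcp}, both invoked under Hypothesis~\ref{conj:eth}. First I would fix the absolute constants $\varepsilon > 0$ and $c > 0$ guaranteed by Theorem~\ref{thm:pcp}, and then set $\eta \equiv \varepsilon / 5$ and take $C$ to be any constant with $C \ge c$. I would also verify at the outset that the Tukey loss function of \eqref{eq:tukey_loss} satisfies Assumption~\ref{assump:loss_hardness}: taking the constant in that assumption to be $\tau^2 / 6$, one has $M(x) = \tau^2/6$ for $|x| \ge \tau$, $0 \le M(x) \le \tau^2/6$ for $|x| \le \tau$ (since $M$ is nondecreasing on $[0,\tau]$ and $M(\tau) = \tau^2/6$), and $M(0) = 0$; so the hardness statement applies to the standard Tukey loss, not merely to the broader class of Assumption~\ref{assump:loss_hardness}.

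Next I would argue by contradiction. Suppose some algorithm $\mathcal{A}$ solves Tukey regression to within a multiplicative factor $1 + \eta$ with success probability $2/3$ in time $T(d) = 2^{d / \log^C d}$, on instances satisfying the three structural restrictions of the corollary, which coincide with those of Theorem~\ref{thm:hardness}: $n = O(d)$, entries of $A$ in $\{0, +1, -1\}$, entries of $b$ in $\{\pm k\tau \mid k \in \mathbb{N},\ k \le O(1)\}$, and at most three nonzero entries per row of $A$. Applying Theorem~\ref{thm:hardness} with this $T(d)$ yields an algorithm $\mathcal{B}$ that, in time $2^{d / \log^C d}$, distinguishes with probability $2/3$ between a \tsat~formula on $d$ variables and $m = O(d)$ clauses that is satisfiable and one for which no assignment satisfies more than $(1 - 5\eta) m = (1 - \varepsilon) m$ clauses.

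Finally I would observe that $\mathcal{B}$ contradicts Theorem~\ref{thm:pcp}: the hard gap instances there are exactly ``all clauses satisfiable'' versus ``at most $(1-\varepsilon) m$ clauses satisfiable'', and for $C \ge c$ and all sufficiently large $d$ we have $\log^C d \ge \log^c d$, hence $2^{d / \log^C d} \le 2^{d / \log^c d}$, so $\mathcal{B}$ would solve the gap problem within a time budget that Theorem~\ref{thm:pcp} rules out under Hypothesis~\ref{conj:eth}. This contradiction establishes the corollary. There is no substantive obstacle here --- the argument is a routine composition --- so the only points requiring care are the quantitative alignment of the gap parameters ($5\eta \le \varepsilon$, so that the completeness and soundness cases of Theorem~\ref{thm:hardness} refine those of Theorem~\ref{thm:pcp} in the correct direction) and confirming that the input restrictions stated in the corollary are precisely those produced by the reduction underlying Theorem~\ref{thm:hardness}.
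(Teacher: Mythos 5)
Your proposal is correct and takes the same approach the paper intends: the paper explicitly states the corollary follows by "combining Theorem~\ref{thm:pcp} and Theorem~\ref{thm:hardness} with Hypothesis~\ref{conj:eth}," and your argument supplies exactly the routine details of that composition --- choosing $\eta = \varepsilon/5$ so the gap $(1-5\eta)m$ matches $(1-\varepsilon)m$, choosing $C \ge c$ so the time budgets align, and checking the input restrictions agree. Your verification that the standard Tukey loss \eqref{eq:tukey_loss} satisfies Assumption~\ref{assump:loss_hardness} is a harmless extra (the corollary is stated for the whole class of losses satisfying that assumption), but it confirms the result is not vacuous.
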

\subsection{Provable Algorithms}\label{sec:provable}
In this section, we use the polynomial system verifier to develop provable algorithms for Tukey regression.

\begin{theorem}[\cite{renegar1992computational,basu1996combinatorial}]\label{thm:poly_sys}
Given a real polynomial system $P(x_1, x_2, \cdots, x_d)$ with $d$ variables and $n$ polynomial constraints $\{f_i (x_1, x_2, \cdots, x_d) \Delta_i 0\}_{i = 1}^n$, where $\Delta_i$ is any of the ``standard relations'': $\{ >, \geq, =, \neq, \leq, < \}$, let $D$ denote the maximum degree of all the polynomial constraints and let $H$ denote the maximum bitsize of the coefficients of all the polynomial constraints. Then there exists an algorithm that runs in
\begin{equation*}
(Dn)^{O(d)} \poly(H)
\end{equation*}
time that can determine if there exists a solution to the polynomial system $P$.
\end{theorem}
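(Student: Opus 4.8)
The plan is to prove a statement of this form via the \emph{critical point method} for the existential theory of the reals, which is the standard route to $(Dn)^{O(d)}\poly(H)$-type bounds (cf. Renegar; Basu--Pollack--Roy). First I would reduce feasibility of the polynomial system $P$ to deciding emptiness of the semialgebraic set $S\subseteq\R^d$ it cuts out. By a standard device I would eliminate strict inequalities: replace $f_i>0$ (resp.\ $f_i\ne 0$) by $f_i\ge\eps$ (resp.\ $f_i^2\ge\eps$), working over the real closed field $\R\langle\eps\rangle$ of algebraic Puiseux series in an infinitesimal $\eps$, and recover the statement over $\R$ at the end by a limit (``$\eps\to 0^+$'') argument. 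This keeps the number of polynomials $O(n)$, the degrees $O(D)$, and multiplies the coefficient bit sizes by only a polynomial factor.

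Next I would make the instance \emph{bounded and smooth}. Intersecting with the ball of radius $1/\eps'$ for a fresh infinitesimal $\eps'$ makes $S$ bounded without changing emptiness; then, perturbing the defining (in)equalities by small generic terms $\eps''\cdot(\text{monomials})$ and passing to sums of squares, I would arrange (via Sard-type genericity and Thom's lemma) that the relevant real variety is a smooth bounded complete intersection. On such a set, the critical points of the squared distance to a generic point $a$ — equivalently, of a generic linear form — meet every connected component. The critical-point equations ($f=0$ together with $\nabla f \parallel (x-a)$) form a system of $d$ polynomials of degree $O(D)$ in $d$ variables whose real solution set is finite.

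Then I would \emph{solve} this zero-dimensional system and read off signs. Using a multivariate ($u$-)resultant or a rational univariate representation, I would produce one univariate polynomial of degree $D^{O(d)}$ together with rational parametrizations of the coordinates, so every solution is encoded by a real root of that univariate polynomial; Thom encoding of the roots then lets me evaluate, exactly in the ordered tower $\R\langle\eps,\eps',\eps''\rangle$, the sign of each $f_i$ at each sample point. If some sample point satisfies all constraints $f_i\,\Delta_i\,0$ the system is feasible; if none does, the ``meets every component'' property shows it is infeasible. Counting: the resultant/RUR step costs $(nD)^{O(d)}$ field operations, each field element has bit size $\poly(H)\cdot D^{O(d)}$, and the nested limit arguments add only lower-order terms, yielding total time $(Dn)^{O(d)}\poly(H)$.

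The main obstacle I expect is the genericity and deformation bookkeeping: guaranteeing that after the infinitesimal perturbations the variety really is smooth, bounded, and a complete intersection, so that both the finiteness of critical points and the ``one critical point per connected component'' claims hold, and then carefully tracking how the nested infinitesimals propagate through the resultant computation into the final degree and — especially — bit-size bounds. The algebraic-number manipulation (Thom encoding and sign evaluation over a tower of Puiseux-series extensions) is routine in principle, but it is exactly where the $\poly(H)$ and $D^{O(d)}$ factors must be controlled to land at the claimed complexity.
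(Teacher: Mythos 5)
The paper does not prove this theorem; it is imported as a black box from Renegar~\cite{renegar1992computational} and Basu--Pollack--Roy~\cite{basu1996combinatorial}. Your sketch is a faithful high-level account of exactly the route those references take: infinitesimal perturbation to eliminate strict inequalities and force boundedness and smoothness, the critical-point method to obtain a zero-dimensional system of degree $D^{O(d)}$ whose solutions hit every connected component, a rational univariate representation (or $u$-resultant) to reduce to one univariate polynomial, and Thom encodings to do exact sign evaluation over the resulting tower of real closed extensions, with the bit-size and degree bookkeeping giving the stated $(Dn)^{O(d)}\poly(H)$ bound. You correctly identify where the real work lies: the genericity/deformation arguments that make the critical-point step legitimate, and propagating the infinitesimals through the algebraic manipulations without blowing up the bit complexity. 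Since the paper treats this as a citation rather than re-deriving it, there is no internal proof to compare against; your outline agrees with the cited sources and would be the right starting point if one wanted to fill in a self-contained proof, but it is not a substitute for the detailed treatment in Basu--Pollack--Roy's book, which is what the bound really rests on.
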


Besides Assumption \ref{as M}, we further assume that the loss function $M(x)$ can be approximated by a polynomial $P(x)$ with degree $D$, when $|x| \le \tau$.
Formally, we assume there exist two constants $L_P \le 1 \le U_P$ such that when $|x| \le \tau$, we have
$$
L_P P(|x|) \le M(|x|) \le U_P P(|x|).
$$
Indeed, Assumption \ref{as M} already implies we can take $P(x) = x^p$, with $L_P = L_M$ and $U_P = U_M$ when $p$ is an integer. 
However, for some loss function (e.g., the one defined in \eqref{eq:tukey_loss}), one can find a better polynomial to approximate the loss function.
Since the approximation ratio of our algorithm depends on $U_P / L_P$, for those loss functions we can get an algorithm with better approximation ratio. 
We also assume Assumption \ref{as main} and all entries in $A$ and $b$ are integers. 

We first show that under Assumption \ref{as main} and the assumption that all entries in $A$ and $b$ are integers, either $\|Ax - b\|_M = 0$ for some $x \in \mathbb{R}^d$, or $\|Ax - b\|_M \ge 1 / 2^{\poly(nd)}$ for all $x \in \mathbb{R}^d$.

\begin{lemma}\label{lem:reg_lb}
Suppose all entries in $A$ and $b$ are integers, under Assumption \ref{as M} and Assumption \ref{as main}, either $\|Ax - b\|_M = 0$ for some $x \in \mathbb{R}^d$, or $\|Ax - b\|_M \ge 1 / 2^{\poly(nd)}$ for all $x \in \mathbb{R}^d$.
\end{lemma}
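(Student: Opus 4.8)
The plan is to reduce this dichotomy to a purely linear-algebraic lower bound on the distance from $b$ to $\colspan(A)$. First I would record that, under Assumption~\ref{as M}, $M(a)=0$ if and only if $a=0$: indeed $M(0)=0$ by Assumption~\ref{as M}.\ref{as M near quad}, while $M(a)\ge L_M|a|^p>0$ for $0<|a|\le\tau$ and $M(a)=\tau^p>0$ for $|a|\ge\tau$ (and $\tau>0$ by Assumption~\ref{as main}.\ref{as tau size}). Hence $\|Ax-b\|_M=0$ for some $x$ exactly when $b\in\colspan(A)$, and it remains to assume $b\notin\colspan(A)$ and establish the uniform lower bound $\|Ax-b\|_M\ge 1/2^{\poly(nd)}$ for every $x\in\R^d$.

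Fix an arbitrary $x$, write $y=Ax-b$, and split into two cases according to $\|y\|_\infty$. If $\|y\|_\infty>\tau$, then some entry of $y$ contributes $\tau^p$ to $\|y\|_M$ by Assumption~\ref{as M}.\ref{as M flat}; since $\tau=\Omega(1/n^{O(d)})$ by Assumption~\ref{as main}.\ref{as tau size} and $p$ is an absolute constant, this already gives $\|y\|_M\ge\tau^p\ge 1/2^{\poly(nd)}$. Otherwise $\|y\|_\infty\le\tau$, so $M(y_i)\ge L_M|y_i|^p$ for every $i$ by Assumption~\ref{as M}.\ref{as M near quad}, whence
\[
\|y\|_M\;\ge\;L_M\|y\|_p^p\;\ge\;L_M\,n^{-p/2}\|y\|_2^p\;\ge\;L_M\,n^{-p/2}\Big(\min_{x'\in\R^d}\|Ax'-b\|_2^2\Big)^{p/2},
\]
using the crude comparison $\|y\|_p\ge n^{-1/2}\|y\|_2$, valid for every $p\ge 1$. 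So the lemma follows once I show $\min_{x'}\|Ax'-b\|_2^2\ge 1/2^{\poly(nd)}$ whenever $b\notin\colspan(A)$, since $L_M$ and $p$ are constants and therefore change $\poly(nd)$ only by a constant factor in the exponent.

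For that final step I would invoke the Gram-determinant formula for the least-squares residual. As in the proof of Lemma~\ref{lem A+b size}, replace $A$ by the submatrix $A'=A_{*,\cT}$ indexed by a maximal set $\cT$ of linearly independent columns; this changes neither $\colspan(A)$ nor the optimal $\ell_2$ cost, and $A'$ has full column rank, so
\[
\min_{x'}\|A'x'-b\|_2^2 \;=\; \frac{\det\!\big([A'~b]^\top[A'~b]\big)}{\det\!\big(A'^\top A'\big)}.
\]
Both Gram matrices are integral because $A$ and $b$ are integral; the numerator is a nonnegative integer that is nonzero precisely because the columns of $[A'~b]$ are linearly independent, i.e. because $b\notin\colspan(A')=\colspan(A)$, so it is at least $1$; and the denominator is at most $\prod_{j\in\cT}\|A_{*,j}\|_2^2\le n^{O(d^2)}=2^{\poly(nd)}$ by Hadamard's inequality and Assumption~\ref{as main}.\ref{as main size}. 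Combining the two cases yields $\|Ax-b\|_M\ge 1/2^{\poly(nd)}$ for all $x$.

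The argument is essentially bookkeeping once this skeleton is in place; the only points needing care are the reduction to a full-rank column submatrix and the verification of the Gram-determinant identity (both already present in Lemmas~\ref{lem A+b size} and~\ref{lem small x}), together with tracking that the absolute constants $L_M,p$ and the polynomial factors $n^{O(d)},n^{O(d^2)}$ affect the exponent $\poly(nd)$ only by constant factors. I do not anticipate a genuine obstacle.
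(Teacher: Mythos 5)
Your proof is correct, and the key step differs slightly from the paper's. Both arguments reduce the Tukey bound to an $\ell_2$ lower bound (passing from $\|y\|_M$ to $\|y\|_\infty$ or $\|y\|_p$ via Assumption~\ref{as M} and then to $\min_{x'}\|Ax'-b\|_2$), but for the linear-algebra core the paper writes $x^* = (A^\top A)^{-1}A^\top b$, invokes Cramer's rule to argue all entries of $x^*$ (and hence of $Ax^*-b$) are rationals with integer numerator and denominator $\det(A^\top A)\le n^{O(d^2)}$, concluding the residual is either zero or has an entry of magnitude at least $1/\det(A^\top A)$. You instead apply the Gram determinant identity $\min_{x'}\|A'x'-b\|_2^2 = \det([A'~b]^\top[A'~b])/\det(A'^\top A')$ to the full-column-rank submatrix $A'$, observing that the integral numerator is at least $1$ when $b\notin\colspan(A)$ and bounding the denominator by Hadamard. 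Your route has a small advantage: it gives the residual lower bound in one determinant computation, without the intermediate bound on entries of $x^*$ that the paper's terse argument leaves to the reader; and it handles the degenerate $\cT=\emptyset$ case (all-zero $A$) without special pleading. Both arguments end up needing the same full-column-rank reduction and the same $n^{O(d^2)}$ Hadamard estimate, so they are essentially interchangeable in strength.

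One stylistic remark: in the case $\|y\|_\infty\le\tau$ the crude comparison $\|y\|_p \ge n^{-1/2}\|y\|_2$ is fine for all $p\ge 1$, but the paper's route through $\|y\|_\infty$ (i.e., use that $\|y\|_2\ge 1/2^{\poly(nd)}$ forces some $|y_i|\ge 1/(2^{\poly(nd)}\sqrt{n})$, and then $M(y_i)\ge L_M|y_i|^p$) avoids having to justify the $\ell_p$-vs-$\ell_2$ inequality for both the $p\le 2$ and $p\ge 2$ regimes; this is a trivial simplification, not a correctness issue in your proposal.
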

\begin{proof}
We show that either there exists $x \in \mathbb{R}^d$ such that $Ax = b$, or $\|Ax - b\|_2 \ge 1 / 2^{\poly(nd)}$ for all $x \in \mathbb{R}^d$.
Notice that $\|Ax - b\|_2 \ge 1 / 2^{\poly(nd)}$ implies $\|Ax - b\|_{\infty} \ge 1 / 2^{\poly(nd)} / \sqrt{n}$, and thus the claimed bound follows from Assumption \ref{as M}.

Without loss of generality we assume $A$ is non-singular. 
By the normal equation, we know $x^* = (A^T A)^{-1} (A^T b)$ is an optimal solution to $\min_x\|Ax - b\|_2$.
By Cramer's rule, all entries in $x^*$ are either $0$ or have absolute value at least $1 / 2^{\poly(nd)}$.
This directly implies either $Ax^* - b = 0$ or $\|Ax^* - b\|_2 \ge1 / 2^{\poly(nd)}$.
\end{proof}

Lemma \ref{lem:reg_lb} implies that either $\|Ax -b\|_M = 0$ for some $x \in \mathbb{R}^d$, or $\|Ax - b\|_M \ge  1 / 2^{\poly(nd)}$ for all $x \in \mathbb{R}^d$.
The former case can be solved by simply solving the linear system $Ax = b$.
Thus we assume $\|Ax - b\|_M \ge  1 / 2^{\poly(nd)}$ for all $x \in \mathbb{R}^d$ in the rest part of this section. 

To solve the Tukey regression problem $\min_x \|Ax - b\|_M$, we apply a binary search to find the optimal solution value $\mathrm{OPT}$. 
Since $1 / 2^{\poly(nd)} \le \mathrm{OPT} \le n \cdot \tau^p \le 2^{\poly(nd)}$ by Assumption \ref{as M} and Assumption \ref{as main}, the binary search makes at most $\log(2^{\poly(nd)} / \varepsilon) = \poly(nd) + \log(1 / \varepsilon)$ guesses to the value of $\mathrm{OPT}$ to find a $(1 + \varepsilon)$-approximate solution. 

For each guess $\lambda$, we need to decide whether there exists $x \in \mathbb{R}^d$ such that $\|Ax - b\|_M \le \lambda$ or not.
We use the polynomial system verifier in Theorem \ref{thm:poly_sys} to solve this problem. 
We first enumerate a set of coordinates $S \subseteq [n]$, which are the coordinates with $|(Ax^* - b)_i| \ge \tau$, 
where $x^* = \argmin_x \|Ax - b\|_M$, 
and then solve the following decision problem:
\begin{align*}
 & ~\sum_{i \in [n] \setminus S} P(\sigma_i (Ax - b)_i) + |S| \cdot \tau^p \le \lambda\\
\text{s.t}& ~\sigma_{i}^2 =1, \forall {i \in [n] \setminus S}  \\
& ~0\le  \sigma_i (Ax - b)_i \le \tau, \forall {i \in [n] \setminus S}.
\end{align*}

Clearly, $\sigma_i (Ax - b)_i = |(Ax - b)_i|$, and thus $L_P P(\sigma_i (Ax - b)_i) \le M((Ax - b)_i) \le U_P P(\sigma_i (Ax - b)_i)$. 
Thus by Assumption \ref{as M}, for all $x \in \mathbb{R}^d$ and $S \subseteq [n]$, 
$$L_P \|Ax - b\|_M \le \sum_{i \in [n] \setminus S} P(\sigma_i (Ax - b)_i) + |S| \cdot \tau^p.$$
Moreover, 
$$\sum_{i \in [n] \setminus S} P(\sigma_i (Ax^* - b)_i) + |S| \cdot \tau^p \le U_P \|Ax^* - b\|_M$$ 
when $S = \{i \in [n] \mid |(Ax^* - b)_i| \ge \tau\}$, which implies the binary search will return a $((1 + \varepsilon) \cdot U_P / L_P)$-approximate solution. 

Now we analyze the running time of the algorithm. 
We make at most $ \poly(nd) + \log(1 / \varepsilon)$ guesses to the value of $\mathrm{OPT}$.
For each guess, we enumerate a set of coordinates $S$, which takes $O(2^n)$ time. 
For each set $S \subseteq [n]$, we need to solve the decision problem mentioned above, which has $n + d$ variables and $O(n)$ polynomial constraints with degree at most $D$.
By Theorem \ref{thm:poly_sys} this decision problem can be solved in $(nD)^{O(n)}$ time.  
Thus, the overall time complexity is upper bounded by $(nD)^{O(n)} \cdot \log(1 / \varepsilon)$.

Notice that we can apply the row sampling algorithm in Theorem \ref{thm:sample} to reduce the size of the problem before applying this algorithm.
This reduces the running time from $(nD)^{O(n)}  \cdot \log(1 / \varepsilon)= 2^{O(n \cdot (\log n + \log D))} \cdot \log(1 / \varepsilon)$ to $2^{\widetilde{O}(\log D \cdot d^{p / 2} \poly(d \log n) / \varepsilon^2)}$.
Formally, we have the following theorem.
\begin{theorem}
Under Assumption \ref{as M} and \ref{as main}, and suppose all entries in $A$ and $b$ are integers, and there exists a polynomial $P(x)$ with degree $D$ and two constants $L_P \le 1 \le U_M$ such that when $|x| \le \tau$, we have
$$
L_P P(|x|) \le M(|x|) \le U_P P(|x|).
$$
Then there exists an algorithm that returns a $((1 + \varepsilon) \cdot U_P / L_P)$-approximate solution to $\min_x \|Ax - b\|_M$ and runs in $2^{\widetilde{O}(\log D \cdot d^{p / 2} \poly(d \log n) / \varepsilon^2)}$ time. 
\end{theorem}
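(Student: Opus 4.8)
The plan is to apply the polynomial-system-verifier procedure developed above not to the original instance but to the much smaller weighted instance produced by the row sampling algorithm of Theorem~\ref{thm:sample}. First I would invoke Theorem~\ref{thm:sample} with accuracy parameter $\varepsilon/C_0$ for a large enough absolute constant $C_0$, producing in $\widetilde{O}(\nnz(A)+d^{p/2}\poly(d\log n)/\varepsilon^2)$ time a weight vector $w_{\mathsf{final}}$ with $n'\equiv\|w_{\mathsf{final}}\|_0=\widetilde{O}(d^{p/2}\poly(d\log n)/\varepsilon^2)$, with $\|w_{\mathsf{final}}\|_\infty\le n^2$ and $w_{\mathsf{final},i}\ge 1$ whenever nonzero, and such that any $C$-approximate solution of $\min_x\|Ax-b\|_{M,w_{\mathsf{final}}}$ with $C\le\poly(n)$ is a $C(1+\varepsilon/C_0)$-approximate solution of $\min_x\|Ax-b\|_M$. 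Dropping the zero-weight rows leaves a weighted problem on $n'$ rows whose underlying submatrix $A'$ and vector $b'$ are still integral.

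Next I would run the binary-search-plus-polynomial-system algorithm of this section on the reduced weighted instance $(A',b',w_{\mathsf{final}})$. Two points must be checked so that this algorithm still applies. First, the quantitative separation of Lemma~\ref{lem:reg_lb} survives: since $A',b'$ are integral, either $A'x=b'$ is solvable (in which case solve the linear system directly) or $\|A'x-b'\|_\infty\ge 1/2^{\poly(n'd)}$ for every $x$; then Assumption~\ref{as M}.\ref{as M near quad} together with $\tau=\Omega(1/n^{O(d)})$ from Assumption~\ref{as main}.\ref{as tau size} forces some summand $M((A'x-b')_i)$ to be at least $1/2^{\poly(n'd)}$, and multiplying by $w_{\mathsf{final},i}\ge 1$ shows the weighted optimum is either $0$ or at least $1/2^{\poly(n'd)}$; it is also at most $\|w_{\mathsf{final}}\|_\infty\,n'\,\tau^p\le 2^{\poly(n'd)}$, so the binary search over this range needs only $\poly(n'd)+\log(1/\varepsilon)$ guesses. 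Second, for each guess $\lambda$ and each candidate set $S\subseteq[n']$ of saturated coordinates, the feasibility problem ``$\sum_{i\notin S}w_{\mathsf{final},i}P(\sigma_i(A'x-b')_i)+(\sum_{i\in S}w_{\mathsf{final},i})\tau^p\le\lambda$, $\sigma_i^2=1$, $0\le\sigma_i(A'x-b')_i\le\tau$'' has $n'+d$ variables, $O(n')$ constraints of degree at most $D$, and --- after rounding the weights $w_{\mathsf{final},i}$ to $\poly(n'd)$-bit rationals, which changes the objective by at most $2^{-\poly(n'd)}$ --- coefficient bitsize $\poly(n'd)$, so Theorem~\ref{thm:poly_sys} decides it in $(n'D)^{O(n'+d)}\poly(n'd)$ time. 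Exactly as in the argument preceding the theorem statement, the binary search then returns a $((1+\varepsilon/C_0)U_P/L_P)$-approximate solution of the weighted problem, which by the guarantee of Theorem~\ref{thm:sample} is a $((1+\varepsilon/C_0)^2U_P/L_P)$-approximate solution of $\min_x\|Ax-b\|_M$, and hence a $((1+\varepsilon)U_P/L_P)$-approximate solution once $C_0$ is chosen large enough.

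For the running time: the sampling step is $\widetilde{O}(\nnz(A)+\poly(d\log n/\varepsilon))$, and the polynomial-system phase costs $2^{n'}\cdot(n'D)^{O(n')}\cdot(\poly(n'd)+\log(1/\varepsilon))=2^{O(n'(\log n'+\log D))}\log(1/\varepsilon)$; substituting $n'=\widetilde{O}(d^{p/2}\poly(d\log n)/\varepsilon^2)$ and absorbing the $\log n'$, $\poly(n'd)$, and $\log(1/\varepsilon)$ factors into the exponent gives the claimed $2^{\widetilde{O}(\log D\cdot d^{p/2}\poly(d\log n)/\varepsilon^2)}$ bound, which dominates the sampling cost. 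I expect the only genuinely delicate point to be verifying that the sampled weighted instance inherits precisely the structural hypotheses the polynomial-system verifier relies on --- integrality up to bounded-bitsize weights, and the $2^{-\poly}$ lower bound on a nonzero optimum --- since everything else is a straightforward composition of Theorem~\ref{thm:sample} with the already-established algorithm for small instances.
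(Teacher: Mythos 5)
Your proposal is correct and takes essentially the same route as the paper: apply the row sampling algorithm of Theorem~\ref{thm:sample} to reduce to a weighted instance of size $n'=\widetilde{O}(d^{p/2}\poly(d\log n)/\varepsilon^2)$, then run the binary-search-plus-polynomial-system-verifier procedure on that smaller instance. The paper itself dispatches this combination in two sentences; you supply the natural supporting details it leaves implicit (the $2^{-\poly}$ lower bound survives since the surviving rows are still integral and nonzero weights are $\ge 1$, the binary-search range is still $2^{\poly(n'd)}$, the weights must be expressible in bounded bitsize for Theorem~\ref{thm:poly_sys}, and the two $(1+\varepsilon)$ losses compose), and these checks are all sound.
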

\begin{corollary}
Under Assumption \ref{as main}, and suppose all entries in $A$ and $b$ are integers, for the loss function $M$ defined in \eqref{eq:tukey_loss} there exists an algorithm that returns a $(1 + \varepsilon)$-approximate solution to $\min_x \|Ax - b\|_M$ and runs in $2^{\widetilde{O}(\poly(d \log n) / \varepsilon^2)}$ time. 
\end{corollary}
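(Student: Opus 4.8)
The plan is to obtain this corollary as a direct instantiation of the preceding theorem, taking $p = 2$ and $M$ the loss of \eqref{eq:tukey_loss}. First I would dispose of the harmless scalar: multiplying $M$ by $6$ does not change which $x$ is a $(1+\varepsilon)$-approximate minimizer of $\min_x\|Ax-b\|_M$, so it suffices to treat $\tilde M(x) \equiv \tau^2\bigl(1-(1-(x/\tau)^2)^3\bigr)$ for $|x|\le\tau$ and $\tilde M(x)\equiv\tau^2$ for $|x|\ge\tau$. The remaining hypotheses of the theorem are: $\tilde M$ satisfies Assumption~\ref{as M} with $p=2$; $A$ and $b$ have integer entries (assumed in the corollary); Assumption~\ref{as main} holds (assumed in the corollary, and in any case implied by Lemma~\ref{lem small x}); and $\tilde M$ on $[-\tau,\tau]$ is, up to constant factors, a polynomial of bounded degree. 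The last point is immediate: expanding $(1-(x/\tau)^2)^3$ shows $\tilde M$ agrees on $[-\tau,\tau]$ with a polynomial of degree $D=6$ that is even, hence a polynomial $P$ in $|x|$; so $P(|x|)=\tilde M(x)$ for $|x|\le\tau$ and we may take $L_P = U_P = 1$.

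The only computational step is Assumption~\ref{as M} for $\tilde M$. Substituting $u=(x/\tau)^2\in[0,1]$ and using $1-(1-u)^3 = u(3-3u+u^2)$, one gets $\tilde M(x)/x^2 = 3-3u+u^2$ on $(0,\tau]$. Now all the conditions follow from elementary facts about the quadratic $g(u)=3-3u+u^2$ on $[0,1]$: symmetry and flatness are clear; $g$ is non-increasing on $[0,1]$ (as $g'(u)=2u-3<0$ there), which gives that $\tilde M$ is non-decreasing and that $\tilde M(t)/t^2$ is non-increasing on $(0,\tau]$ — the latter, together with the flat part for $|t|\ge\tau$, is exactly the growth condition (Assumption~\ref{as M}.\ref{as M growth}) with $p=2$; and $g$ takes values in $[1,3]$ on $[0,1]$, so $t^2 \le \tilde M(t)\le 3t^2$ for $|t|\le\tau$, i.e.\ Assumption~\ref{as M}.\ref{as M near quad} with $L_{\tilde M}=1$, $U_{\tilde M}=3$.

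With these verified I would apply the theorem with $p=2$, $D=6$, $L_P=U_P=1$: it produces a $\bigl((1+\varepsilon)\,U_P/L_P\bigr)=(1+\varepsilon)$-approximate solution in time $2^{\widetilde{O}(\log D\cdot d^{p/2}\,\poly(d\log n)/\varepsilon^2)}$. Since $\log D=O(1)$, $d^{p/2}=d$, and $d\cdot\poly(d\log n)=\poly(d\log n)$, this is $2^{\widetilde{O}(\poly(d\log n)/\varepsilon^2)}$, the claimed bound. I do not expect a real obstacle: the substantive observation is just that the Tukey loss is literally a degree-$6$ polynomial below the threshold, which is what lets us take $U_P/L_P=1$ and bring the approximation factor down to exactly $1+\varepsilon$ (the generic choice $P(x)=x^2$ permitted by Assumption~\ref{as M} would only yield a constant-factor, not a $(1+\varepsilon)$, guarantee).
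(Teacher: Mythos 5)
Your proposal is correct and is, up to a small detail, exactly the intended instantiation of the preceding theorem (the paper states this corollary without an explicit proof): take $p=2$, observe that after discarding the harmless factor of $1/6$ the Tukey loss is \emph{literally} a degree-$6$ polynomial in $|x|$ on $[-\tau,\tau]$, set $P$ equal to that polynomial with $L_P=U_P=1$, and read off the $(1+\varepsilon)$ factor and the $2^{\widetilde{O}(\poly(d\log n)/\varepsilon^2)}$ running time from the theorem since $\log D = O(1)$ and $d^{p/2}=d$.

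One small logical slip in the verification of Assumption~\ref{as M}: you write that $g$ non-increasing on $[0,1]$ ``gives that $\tilde M$ is non-decreasing,'' but that implication does not hold by itself, since $\tilde M(t)=t^2\,g(t^2/\tau^2)$ is the product of an increasing and a decreasing factor. The conclusion is still true, and is easiest to see directly: as $|x|$ increases on $[0,\tau]$, $(1-(x/\tau)^2)^3$ decreases, so $1-(1-(x/\tau)^2)^3$ increases; equivalently $\tilde M'(x)=6x\bigl(1-(x/\tau)^2\bigr)^2\ge 0$, or $g(u)+ug'(u)=3(1-u)^2\ge 0$. The rest of the verification (symmetry, flatness, the growth condition via monotonicity of $\tilde M(t)/t^2$ on $(0,\tau]$ matched with $\tau^2/t^2$ for $t\ge\tau$, and $1\le g\le 3$ giving $L_M=1$, $U_M=3$) is correct.
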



\section{Experiments}\label{sec exper}
In this section we provide experimental results to illustrate the practicality of our dimensionality reduction methods. 
Figure~\ref{fig:exp} shows the approximation ratio of our dimensionality reduction methods, when applied to synthetic and real datasets. 
For all datasets, the number of data points is $n = 10000$.
The dimension $d$ is different for different datasets and is marked in Figure~\ref{fig:exp}.
We adopt the loss function defined in \eqref{eq:tukey_loss} and use different values of $\tau$ for different datasets.
To calculate the approximation ratio of our dimensionality reduction methods, 
we solve the full problems and their sketched counterparts by using the {\sf LinvPy} software \cite{linvpy}.
This software uses iteratively re-weighted least squares (IRLS), and we modify it for $\norm{\cdot}_{M,w}$,
which requires only to include a ``fixed'' weighting from $w$ into the IRLS solver.

The Random Gaussian dataset is a synthetic dataset, whose entries are sampled i.i.d. from the standard Gaussian distribution. 
The remaining datasets are chosen from the UCI Machine Learning Repository.
The $\tau$ values were chosen roughly so that there would be significant clipping of the residuals.
For each dataset, we also randomly select $5\%$ of the entries of the $b$ vector and change them to $10^4$, to model outliers.
Such modified datasets are marked as ``with outliers'' in Figure~\ref{fig:exp}.

We tested both the row sampling algorithm and the oblivious sketch. 
We varied the size of the sketch from $2d$ to $10d$ ($d$ is the dimension of the dataset) and calculated the approximation ratio
$
\|A\hat{x} - b\|_M / \|Ax^* - b\|_M
$
using the modified {\sf LinvPy} software, where $x^*$ is the solution returned by solving the full problem and $\hat{x}$ is the solution returned by solving the sketched version. 
We repeated each experiment ten times and took the best result among all repetitions. 

\begin{figure}
	\subfigure[Random Gaussian. $d = 20, \tau = 10$.]
	{
		\includegraphics[width=0.233\textwidth]{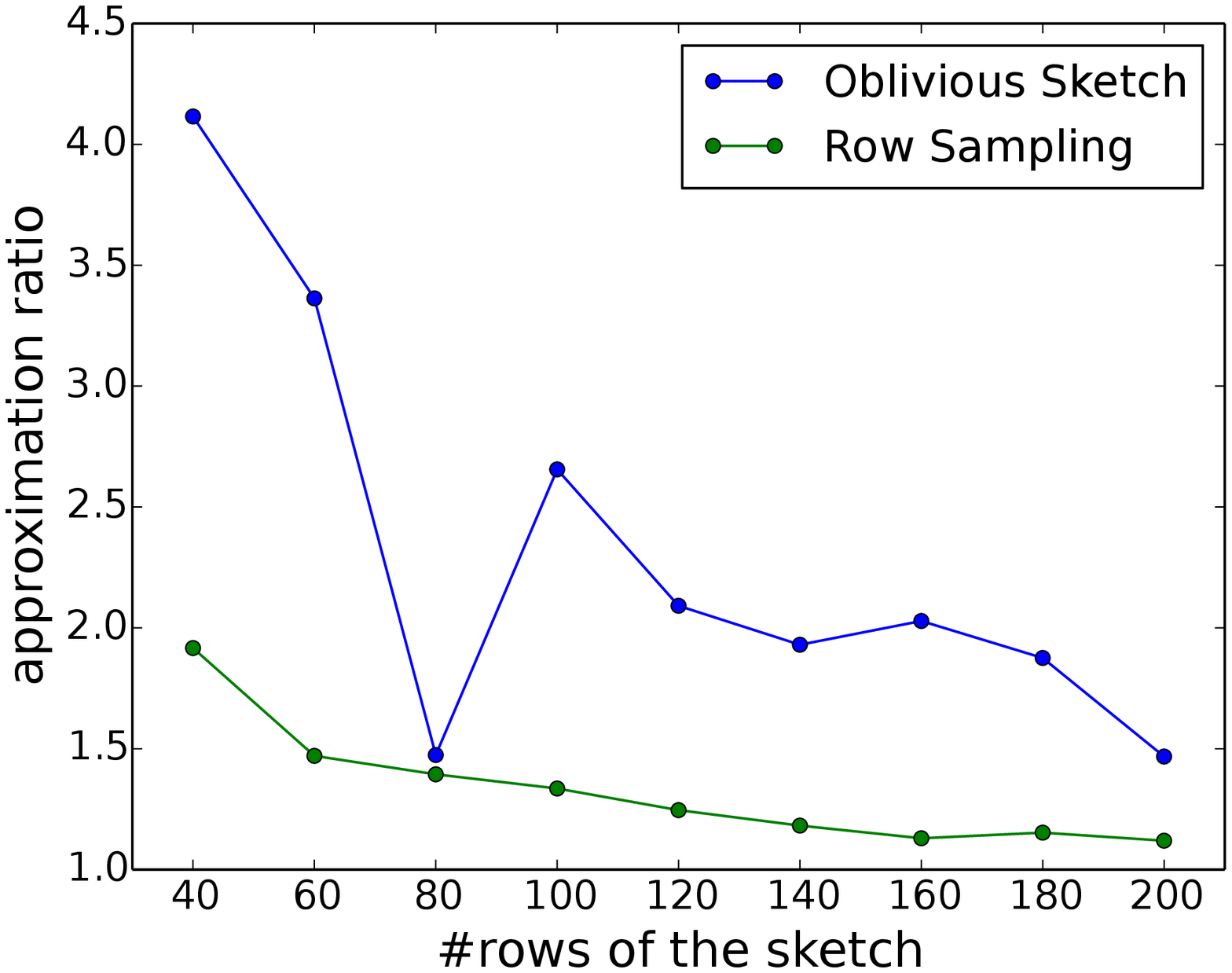}}
	\subfigure[Random Gaussian (with outliers). $d = 20$, $\tau = 10$.]
	{
		\includegraphics[width=0.233\textwidth]{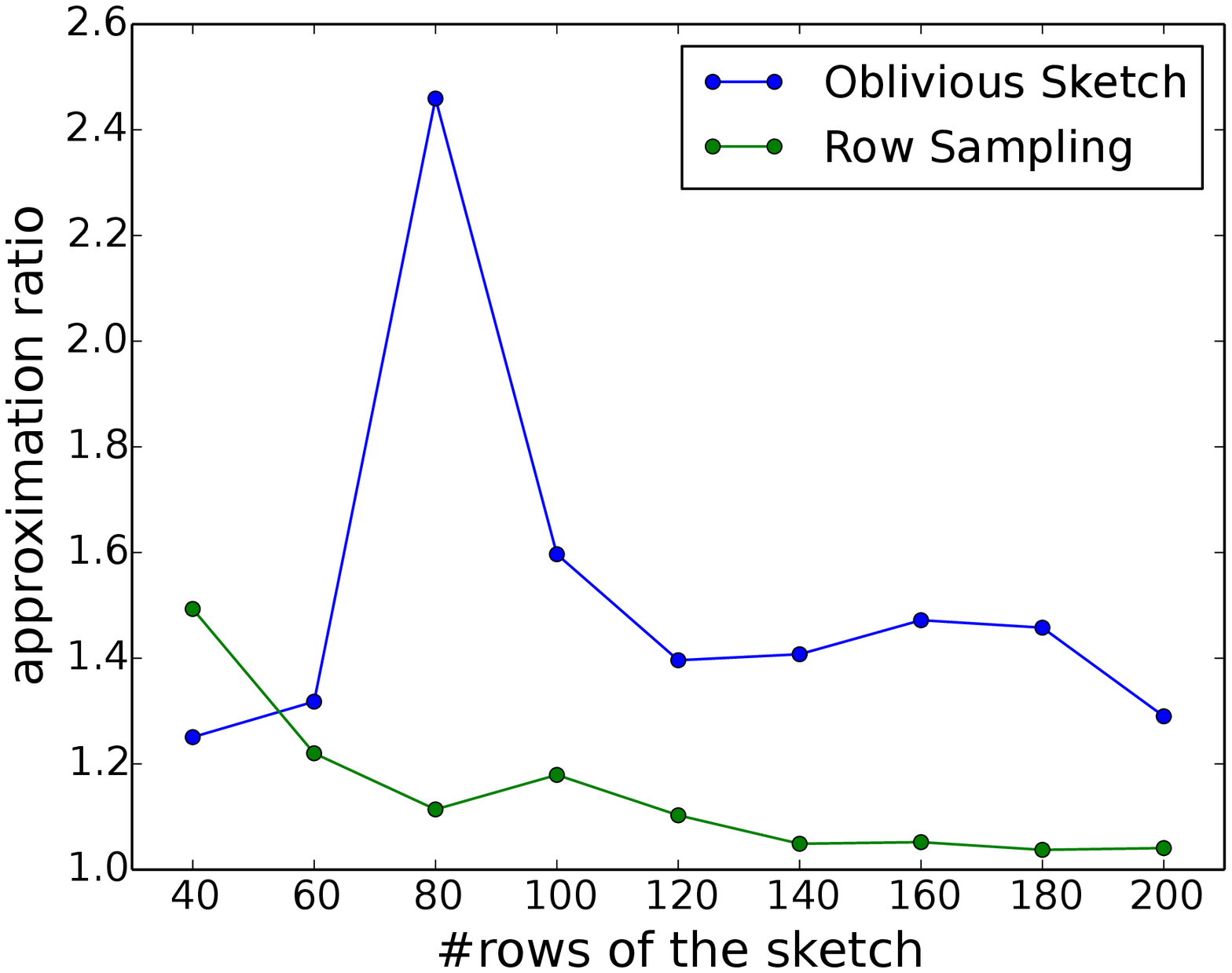}}  
	\subfigure[Facebook Comment Volume. $d = 53$, $\tau = 10$.]
	{
		\includegraphics[width=0.233\textwidth]{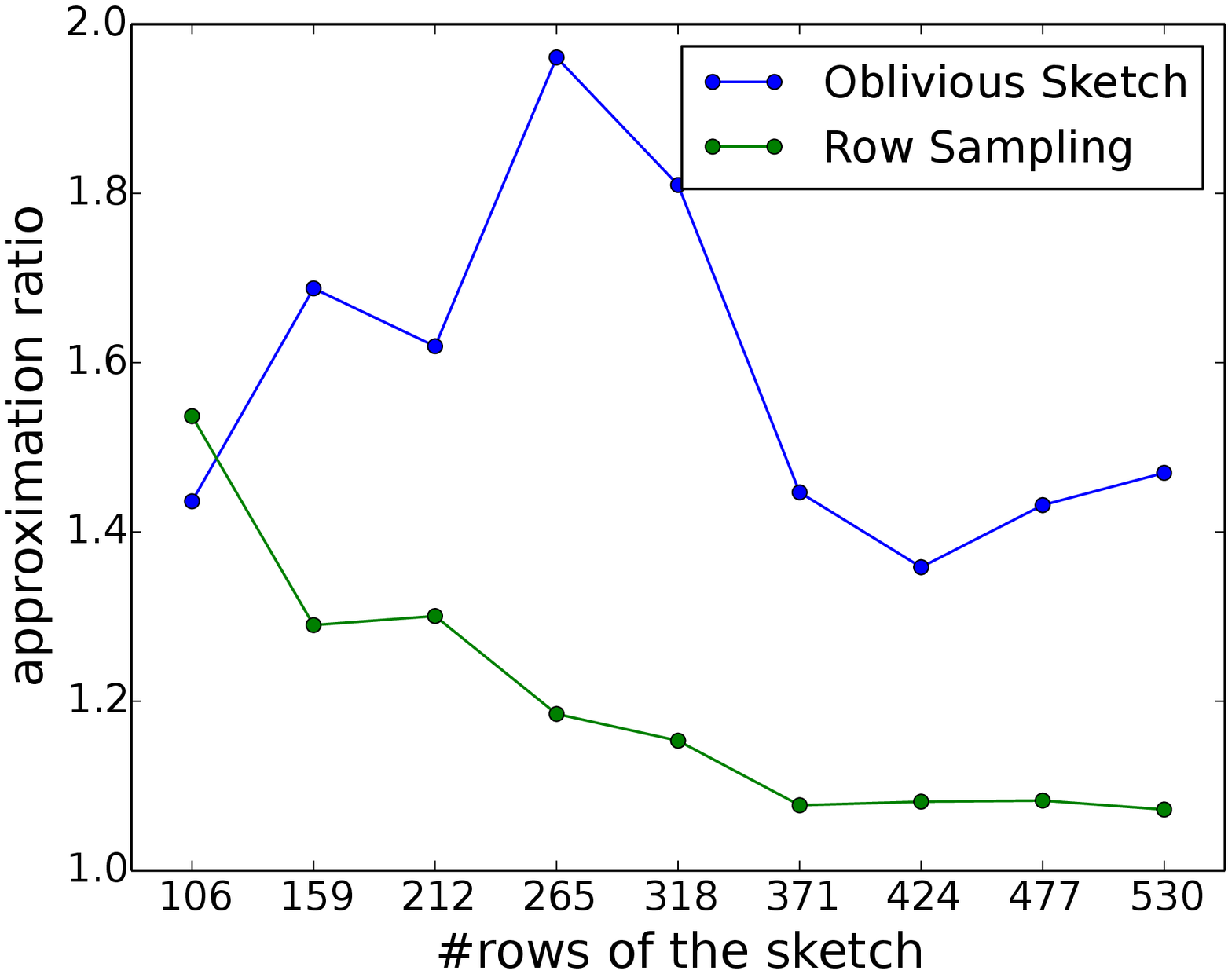}}  
	\subfigure[Facebook Comment Volume (with outliers). $d = 53$, $\tau = 100$.]
	{
		\includegraphics[width=0.233\textwidth]{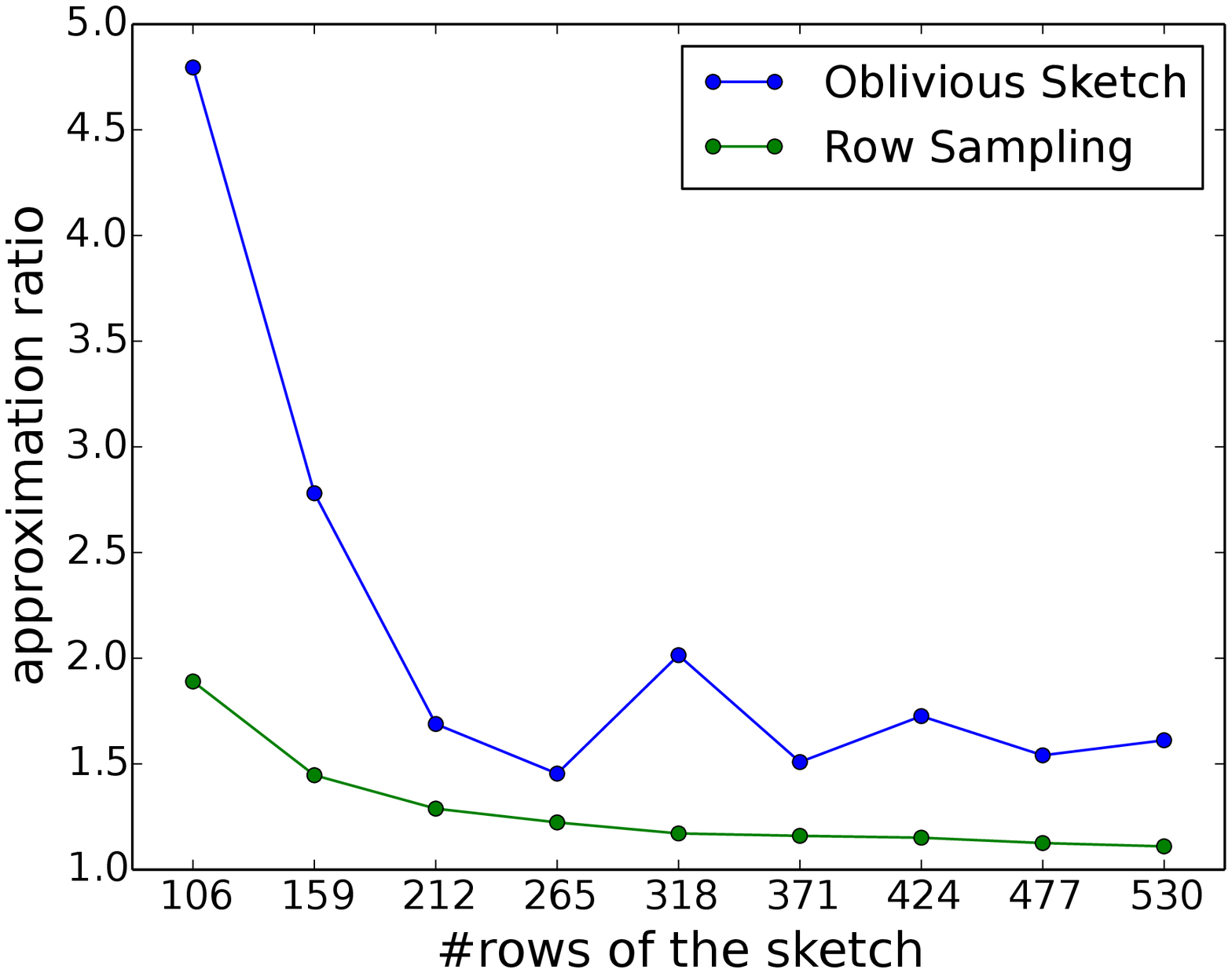}}   
	\subfigure[Appliances Energy Prediction. $d = 25$, $\tau = 1000$.]
	{
		\includegraphics[width=0.233\textwidth]{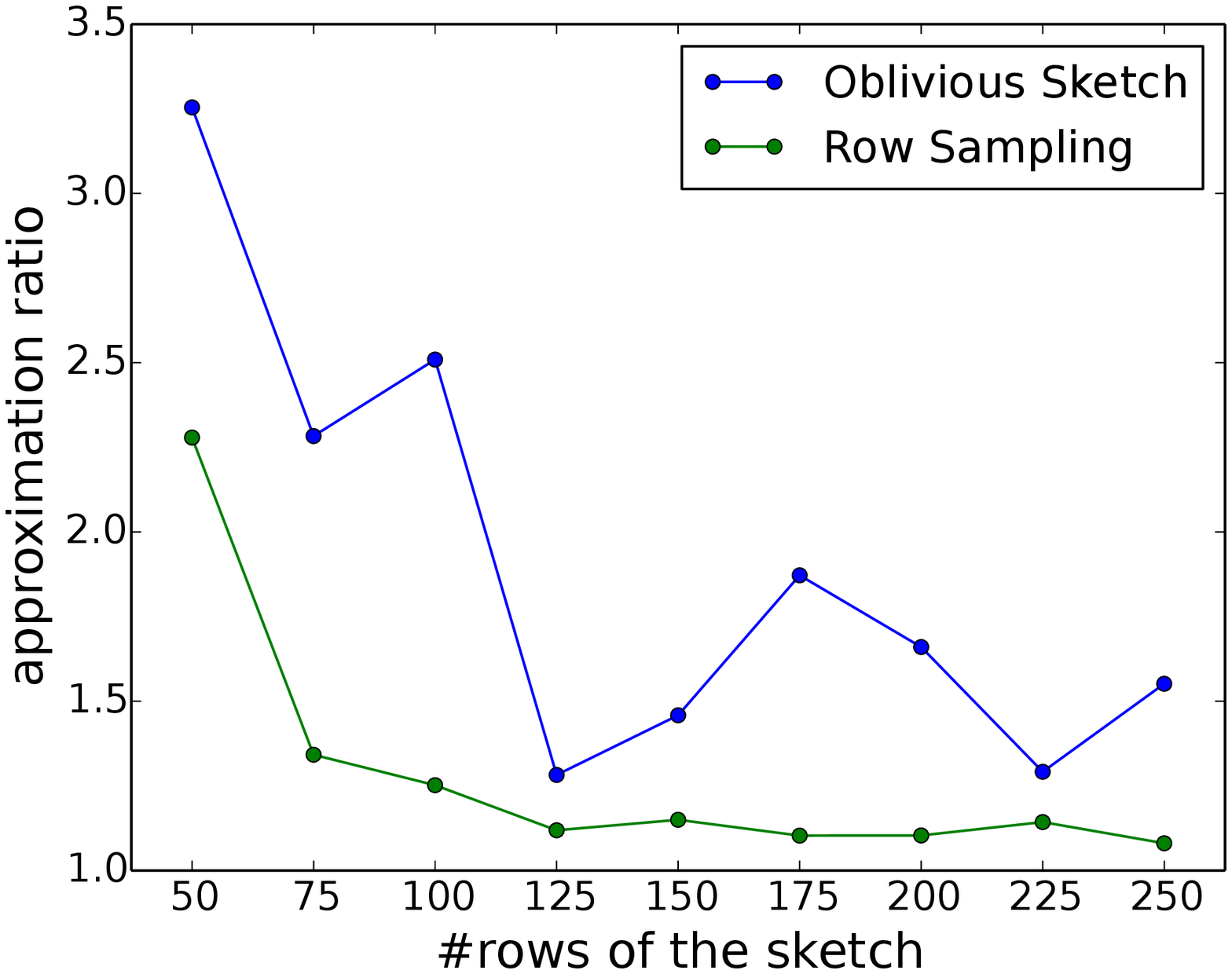}} 
	\subfigure[Appliances Energy Prediction (with outliers). $d = 25$, $\tau = 100$. ]
	{
		\includegraphics[width=0.233\textwidth]{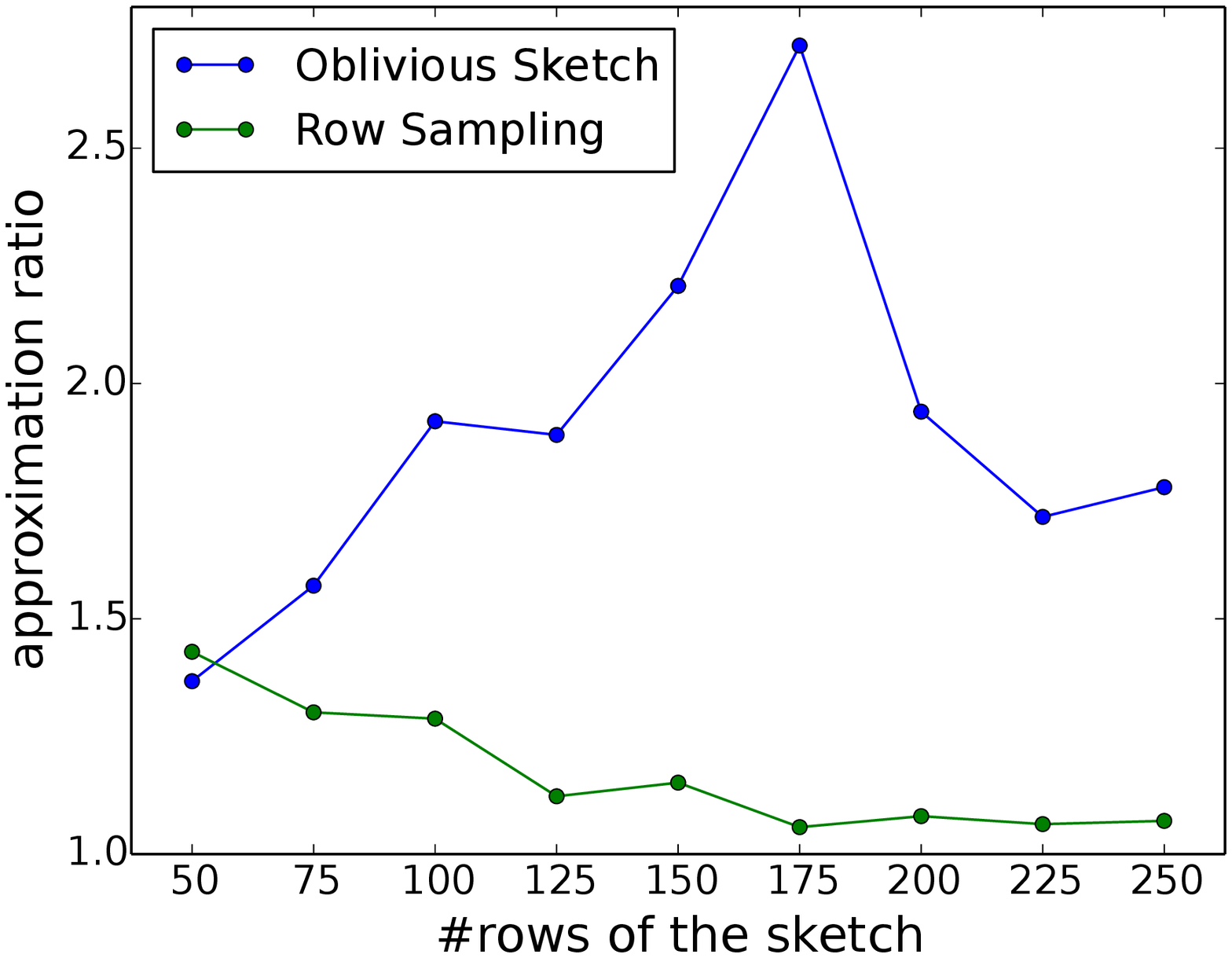}}      
	\subfigure[CT Slice Localization. $d = 384$, $\tau = 100$. ]
	{
		\includegraphics[width=0.233\textwidth]{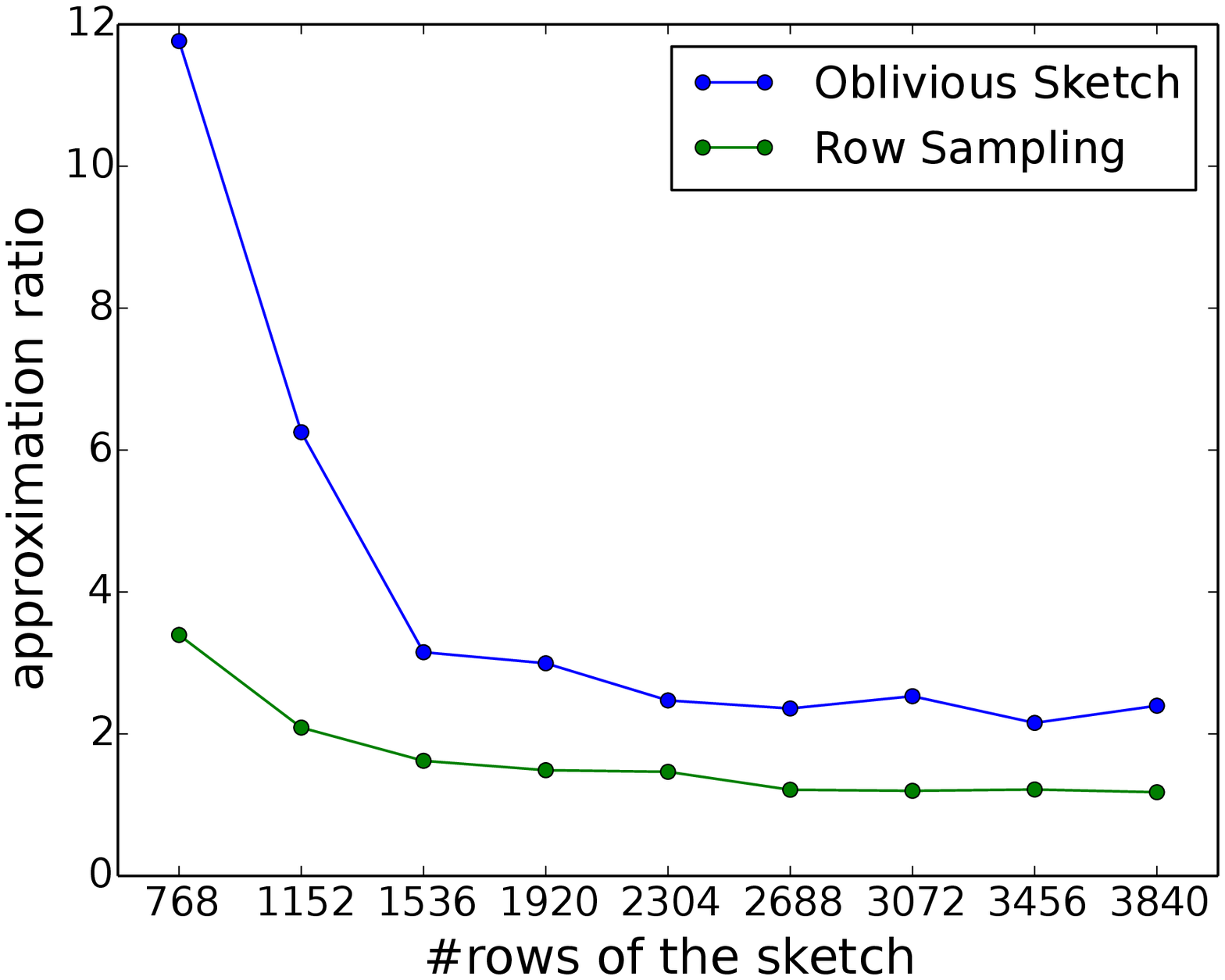}}   
	\subfigure[CT Slice Localization (with outliers). $d = 384$, $\tau = 1000$.]
	{
		\includegraphics[width=0.233\textwidth]{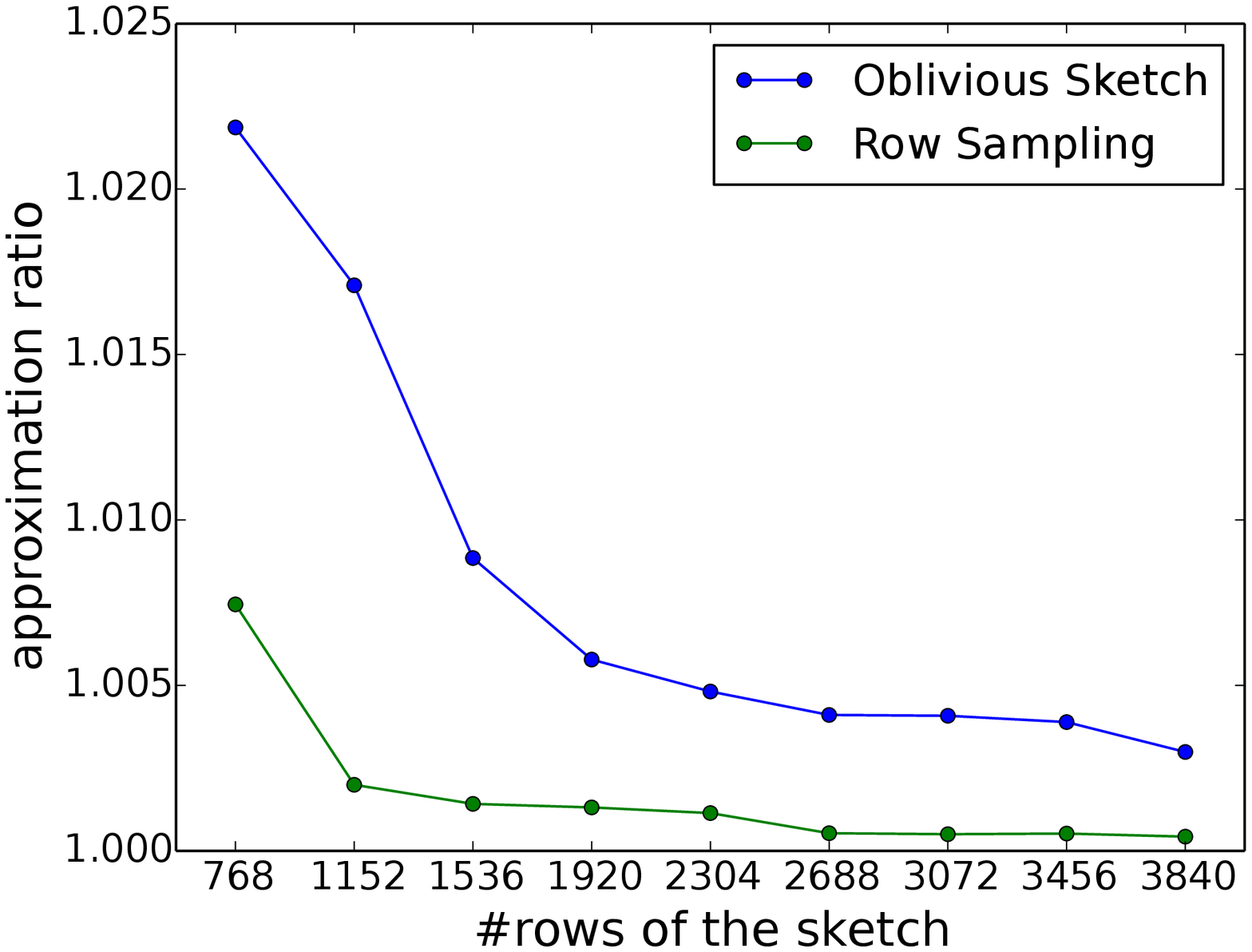}}   
	\caption{Approximation ratio of our dimensionality reduction methods. }
	\label{fig:exp}
\end{figure}

\paragraph{Discussions}
As can be seen from Figure \ref{fig:exp}, the row sampling algorithm has better approximation ratio as we increase the sketch size. 
The same is not always true for the oblivious sketch, since the oblivious sketch only guarantees an $O(\log n)$ approximation instead of a $(1 + \varepsilon)$-approximate solution, as returned by the row sampling algorithm. 
Moreover, the row sampling algorithm consistently outperforms the oblivious sketch in the experiments, except for extremely small sketch sizes (around $2d$). 
However, applying the oblivious sketch requires only one pass over the input, and the distribution of the sketching matrices does not depend on the input.
These advantages make the oblivious sketch preferable in streaming and distributed settings. 
Another advantage of the oblivious sketch is its simplicity.

Our empirical results demonstrate the practicality of our dimensionality reduction methods.
Our methods successfully reduce a Tukey regression instance of size $10000 \times d$ to another instance with $O(d)$ rows, without much sacrifice in the quality of the solution.
In most cases, the row sampling algorithm reduces the size to $3d$ rows while retaining an approximation ratio of at most $2$.

\section{Conclusions}
We give the first dimensionality reduction methods
for the overconstrained Tukey regression problem. 
We first give a row sampling algorithm which takes $\widetilde{O}(\nnz(A) + \poly(d \log n /\varepsilon))$ time
to return a weight vector with $\poly(d \log n /\varepsilon)$ non-zero entries,
such that the solution of the resulting weighted Tukey regression problem gives a 
$(1 + \varepsilon)$-approximation to the original problem. 
We further give another way to reduce Tukey regression problems
to smaller weighted versions, via an
oblivious sketching matrix $S$, applied in a single pass over the data.
Our dimensionality reduction methods are simple and easy to implement,
and we give empirical results demonstrating their practicality.
We also give hardness results showing that the Tukey regression problem cannot be efficiently solved in the worst-case. 

From a technical point of view, our algorithms for finding heavy coordinates and our structural theorem seem to be of independent interest. 
We leave it as an intriguing open problem to find more applications of them.

\section*{Acknowledgements} The authors would like to thank Lijie Chen and Peilin Zhong for helpful discussions, and the anonymous ICML reviewers for their insightful comments.

\bibliography{main}
\end{document}